\documentclass[11pt]{article}

\usepackage[margin=1in]{geometry}
\usepackage{fullpage}
\usepackage{amsmath,amsthm,amssymb,bm}
\usepackage{mleftright}
\usepackage{algorithm}
\usepackage{algpseudocode}
\usepackage{textcomp}
\usepackage{array}
\usepackage{hyperref,url,xcolor}

\newcommand{\oA}{\bar{A}}
\newcommand{\oB}{\overline{B}}
\newcommand{\getMinor}{\operatorname{getMinor}}
\newcommand{\coker}{\operatorname{coker}}
\newcommand{\rMDS}{\operatorname{rMDS}}
\newcommand{\rLDMDS}{\operatorname{rLD-MDS}}
\newcommand{\rMR}{\operatorname{rMR}}
\newcommand{\MR}{\operatorname{MR}}

\newcommand{\MDS}{\operatorname{MDS}}

\newcommand{\LDMDS}{\operatorname{LD-MDS}}
\newcommand{\wt}{\operatorname{wt}}
\newcommand{\rank}{\operatorname{rank}}
\newcommand{\supp}{\operatorname{supp}}
\newcommand{\eps}{\epsilon}

\newcommand{\F}{\mathbb F}

\newcommand{\Proj}{\mathbb P}

\renewcommand{\P}{\mathbb P}

\newcommand{\cG}{\mathcal G}
\newcommand{\cH}{\mathcal H}

\newcommand{\balpha}{{\bm \alpha}}

\newcommand{\mult}{\mathrm{mult}}

\newcommand{\Eval}{\operatorname{Eval}}

\newcommand{\divi}{\operatorname{div}}
\newcommand{\AG}{\operatorname{AG}}

\newtheorem{theorem}{Theorem}[section]

\newtheorem{lemma}[theorem]{Lemma}
\newtheorem{corollary}[theorem]{Corollary}
\newtheorem{proposition}[theorem]{Proposition}

\newtheorem{claim}[theorem]{Claim}
\newtheorem{definition}[theorem]{Definition}
\newtheorem{remark}{Remark}[section]

\theoremstyle{definition}
\newtheorem{framework}{Framework}[section]

\hypersetup{
	colorlinks=true,
	linkcolor=blue,%
	citecolor=blue,
	linktoc=page
}

%%% https://tex.stackexchange.com/questions/51286/recalling-a-theorem
\usepackage{thmtools}
\usepackage{thm-restate}

\title{AG Codes Achieve List-decoding Capacity\\over Constant-sized Fields\thanks{An abridged version of this manuscript appeared in the 2024 \emph{ACM Symposium on Theory of Computing (STOC 2024)}. This full version includes proofs of all claims and an expanded discussion in Section 4.}}

\author{Joshua Brakensiek\thanks{University of California, Berkeley. Email: \texttt{josh.brakensiek@berkeley.edu}. Research supported in part by a Microsoft Research PhD fellowship while a student at Stanford University.} \and
Manik Dhar\thanks{Department of Mathematics, Massachusetts Institute of Technology. Email: \texttt{dmanik@mit.edu}. Part of this work was done while this author was a graduate student at the Department of Computer Science, Princeton University where his research was supported by NSF grant DMS-1953807.} \and
Sivakanth Gopi\thanks{Microsoft Research. Email: \texttt{sigopi@microsoft.com}} \and Zihan Zhang\thanks{Department of Computer Science and Engineering, The Ohio State University. Email: \texttt{zhang.13691@osu.edu}}}
\date{}

\begin{document}
\maketitle

\begin{abstract}
The recently-emerging field of higher order MDS codes has sought to unify a number of concepts in coding theory. Such areas captured by higher order MDS codes include maximally recoverable (MR) tensor codes, codes with optimal list-decoding guarantees, and codes with constrained generator matrices (as in the GM-MDS theorem).

By proving these equivalences, Brakensiek-Gopi-Makam (\cite{bgm2022}) showed the existence of optimally list-decodable Reed-Solomon codes over exponential sized fields.
Building on this, recent breakthroughs by Guo-Zhang (\cite{guo2023randomly}) and Alrabiah-Guruswami-Li (\cite{alrabiah2023randomly}) have shown that randomly punctured Reed-Solomon codes achieve list-decoding capacity (which is a relaxation of optimal list-decodability) over linear size fields. We extend these works by developing a formal theory of \emph{relaxed higher order MDS codes}. In particular, we show that there are two inequivalent relaxations which we call \emph{lower} and \emph{upper} relaxations. The lower relaxation is equivalent to relaxed optimal list-decodable codes and the upper relaxation is equivalent to relaxed MR tensor codes with a single parity check per column.

We then generalize the techniques of Guo-Zhang and Alrabiah-Guruswami-Li to show that both these relaxations can be constructed by randomly puncturing suitable algebraic-geometric codes over \emph{constant size} fields. For this, we crucially use the generalized GM-MDS theorem for polynomial codes recently proved by Brakensiek-Dhar-Gopi (\cite{bdg2023a}). We obtain the following corollaries from our main result:
\begin{itemize}
	 \item Randomly punctured algebraic-geometric codes of rate $R$ are list-decodable up to radius $\frac{L}{L+1}(1-R-\eps)$ with list size $L$ over fields of size $\exp(O(L/\eps))$. In particular, they achieve list-decoding capacity with list size $O(1/\eps)$ and field size $\exp(O(1/\eps^2))$. Prior to this work, AG codes were not even known to achieve list-decoding capacity.
	 \item By randomly puncturing algebraic-geometric codes, we can construct relaxed MR tensor codes with a single parity check per column over \emph{constant-sized} fields, whereas (non-relaxed) MR tensor codes require exponential field size.
\end{itemize}
\end{abstract}

\newpage
\tableofcontents
\newpage
\section{Introduction}
MDS (maximum distance separable) codes meet the Singleton bound \cite{singleton1964maximum}, which is the optimal rate-distance tradeoff for codes over large alphabet. It states that an $(n,k)$-code has distance $d\le n-k+1$. Reed-Solomon codes \cite{reed1960polynomial} are a simple and explicit construction of linear MDS codes over fields of linear size (i.e., $q=O(n)$). Due to their optimality, MDS codes are extensively used in practice such as in communication, data storage, cryptography etc.. An $(n,k)$-code is MDS iff any $k$ columns of its generator matrix are linearly independent.\footnote{In this paper, we will only focus on linear codes. Unless explicitly mentioned, all codes are linear.} Higher order MDS codes were recently introduced by Brakensiek-Gopi-Makam (\cite{bgm2021mds}) as a natural generalization of MDS codes.
\begin{definition}\label{def:mds}
    An $(n,k)$-code with generator matrix $V_{k\times n}$ is order-$\ell$ higher order MDS, denoted by $\MDS(\ell)$, if for every $\ell$ subsets $A_1,A_2,\dots,A_\ell \subset [n]$, we have $\dim(V_{A_1}\cap V_{A_2} \cap \dots \cap V_{A_\ell}) = \dim(W_{A_1}\cap W_{A_2} \cap \dots \cap W_{A_\ell})$ where $W_{k \times n}$ is a generic\footnote{One suitable definition of generic is that the entries of $W$ are chosen randomly from a field of sufficiently large size, like $q^{nk}$.} $k\times n$ matrix. 
\end{definition}
Here $V_{A}$ is the span of the columns indexed by $A$. In other words, $\ell$ subspaces spanned by subsets of columns should intersect as minimally as possible. When $\ell\le 2$, $\MDS(1)$ and $\MDS(2)$ codes are equivalent to MDS codes (\cite{bgm2021mds}).
Higher order MDS codes have since been shown to be equivalent to many concepts independently studied in coding theory (\cite{bgm2022}). These include maximally recoverable (MR) tensor codes, codes with optimal list-decoding guarantees, and codes with constrained generator matrices (as in the GM-MDS theorem). We refer the reader to \cite{bgm2022} for a detailed survey of these connections. \cite{bgm2022} showed that random Reed-Solomon codes over exponentially large fields are higher order MDS codes with high probability. A major challenge is to give (explicit) constructions of higher order MDS codes over small fields. But in a recent work, \cite{bdg2023size} showed that even $\MDS(3)$ codes of constant rate require exponential field size, in sharp contrast to MDS codes (or $\MDS(2)$ codes) for which we have explicit constructions over linear size fields (for example Reed-Solomon codes). Therefore, it is natural to look for some kind of relaxation of higher order MDS codes which would allow for constructions over small fields. In this work, we introduce two ways to relax higher order MDS codes called \emph{lower} and \emph{upper} relaxation (these are defined in Section~\ref{sec:relaxed}). We then establish equivalences between these relaxations and suitable relaxations of optimal list-decodable codes and MR tensor codes. In particular, the lower relaxation is equivalent to relaxed version of optimal list-decodable codes and the upper relaxation is equivalent to the relaxation of MR tensor codes with a single parity check per column. Finally we show that one can indeed construct such (upper and lower) relaxed higher order MDS codes over small fields. We start by briefly defining some of these concepts and discuss prior work along the way.

\paragraph{Optimal list-decodable codes ($\LDMDS(\le L)$)} List-decoding is an important concept in coding theory which allows for correction beyond half the minimum distance \cite{elias1957list,wozencraft1958list}. An $(n,k)$ code $C$ is $(\rho,L)$-list decodable if any Hamming ball of radius $\rho n$ contains at most $L$ codewords. The generalized Singleton bound due to \cite{shangguan2020combinatorial,roth2021higher,goldberg2021singleton} generalizes the well-known Singleton bound to the setting of list-decoding. The generalized Singleton bound states that for every $(n,k)$ code $C$ that is $(\rho,L)$-list decodable, we have 
\begin{equation}
    \label{eq:gen-singleton}
    \rho \le \frac{L}{L+1}(1-k/n).
\end{equation}
Note that when $L=1$, we recover the Singleton bound. The same bound also holds for average-radius list-decoding which is a strengthening of list-decoding. An $(n,k)$-code $C$ is $(\rho,L)$-average-radius list-decodable if for any $y\in \F^n$ and any $L+1$ codewords $c_0,c_1,\dots,c_L\in C$, we have that $\frac{1}{L+1}\sum_{i=0}^L \wt(y-c_i) \le \rho n$. We now define the notion of optimal list-decodable codes.
\begin{definition}
    We say that $C$ is $\LDMDS(L)$ if it is $(\rho,L)$-average-radius list-decodable with radius $\rho = \frac{L}{L+1}(1-k/n)$, matching the generalized Singleton bound (\ref{eq:gen-singleton}). 
\end{definition}
A code is $\LDMDS(\le L)$ if it is $\LDMDS(\ell)$ for all list sizes $\ell \le L$. We would like to have explicit constructions of $\LDMDS(\le L)$ codes over small alphabets. \cite{bgm2022} showed that random Reed-Solomon codes over exponentially large fields are $\LDMDS(\le L)$, i.e., optimally list-decodable for all list sizes $L$, proving a conjecture of \cite{shangguan2020combinatorial}. Meanwhile, as discussed before, \cite{bdg2023size} showed that even $\LDMDS(\le 2)$ codes of constant rate, i.e., optimally list-decodable with list size 2, require exponential field size. Thus it is natural to look at relaxations of optimal list-decodability if we want constructions over small fields. A natural relaxation is to ask that a rate $R$ code is $(\rho,L)$-average-radius list-decodable for 

\begin{equation}
    \label{eq:relaxedrho}
    \rho=\frac{L}{L+1}(1-R-\eps)    
\end{equation}
 for some $\eps>0$. We call such codes \emph{relaxed $\LDMDS$ codes} (see Section~\ref{sec:relaxed} for a formal definition) and denote them by $\rLDMDS$. We show that these relaxed $\LDMDS$ codes are equivalent to the lower relaxation of higher order MDS codes. 
 
 Guo-Zhang (\cite{guo2023randomly}) were the first to observe that such a relaxation leads to much improved constructions. They showed that random Reed-Solomon codes over fields of size $O_{L,\eps}(n^2)$ are relaxed $\LDMDS$ codes with $\rho$ given by (\ref{eq:relaxedrho}).\footnote{Here random Reed-Solomon codes over $\F_q$ refers to choosing the $n$ evaluation points at random from $\F_q$. Alternatively, one can think of it as the code of length $n$ obtained by randomly puncturing Reed-Solomon code of length $q$ evaluated at all points of $\F_q$.} Alrabiah-Guruswami-Li (\cite{alrabiah2023randomly}) further improved the field size to $O_{L,\eps}(n)$ for random Reed-Solomon codes. They also showed that random linear codes achieve the same relaxed list-decoding radius in (\ref{eq:relaxedrho}) with $\exp(O(L/\eps))$ field size. In subsequent work \cite{alrabiah2023ag}, they also showed a lower bound that any (not necessarily linear) code which achieves the bound in (\ref{eq:relaxedrho}) must have field size at least $\exp(\Omega_{L,R}(1/\eps))$. For average-radius list-decoding, they obtain a better lower bound of $\exp(\Omega_{R}(1/\eps))$ independent of list size $L\ge 2$.

One can further relax the list-decodability requirement to get what is known as \emph{list-decoding capacity achieving codes}. Note that as $L\to \infty$, the list-decoding radius $\rho \to  1-R$ where $R=k/n$ is the rate of the code, i.e., any non-trivial list-decoding cannot be done beyond $\rho=1-R$. This is the called the \emph{list-decoding capacity}. We say that a code family achieves list-decoding capacity if for every $\eps>0$ and $R\in (0,1)$, there exists a code $C$ from this family of rate $R$ which is $(\rho,L)$-list decodable with $\rho = 1 - R - \eps$ and $L=O_\eps(1)$.\footnote{In some works, even getting $L=n^{O_\eps(1)}$ is considered enough.} In addition, we would also want the alphabet size of the code to be as small as possible. Random codes of rate $R$ are $(1-R-\eps,O(1/\eps))$-list-decodable (with alphabet size $2^{O(1/\epsilon)}$, see \cite{guruswami2012essential}), this was recently shown to hold for random linear codes as well (with larger alphabet size $2^{O(1/\eps^2)}$) in \cite{alrabiah2023randomly}. There is a long line of work trying to construct explicit code families achieving list-decoding capacity. After an initial breakthrough by \cite{parvaresh2005correcting}, \cite{guruswami2008explicit} constructed the first known family of codes achieving list-decoding capacity called Folded Reed-Solomon codes. Though the initial list size and alphabet size were of the form $n^{O_\eps(1)}$, later works have produced many constructions which have reduced the alphabet size and list size to $\exp(poly(1/\eps))$ (or lower) \cite{dvir2012subspace,guruswami2012folded,guruswami2013list,guruswami2013linear, kopparty2015list,guruswami2016explicit,rudra2017average,kopparty2018improved,hemenway2019local,guruswami2022optimal,guo2021efficient,srivastava2025improved,cz24}. See \cite{guo2021efficient} for an explicit construction matching these bounds. They obtain this code by starting with an AG code over $\F_{q^m}$ evaluated only on points of $\F_q$. Then they take a subcode of this by restricting the messages to a BTT evasive subspace (where BTT stands for Block Triangular Toeplitz). Similar to folded Reed-Solomon codes, their code is not linear over $\F_{q^m}$ which is the alphabet--it is only linear over the base field $\F_q$. They also give an efficient list-decoding algorithm.  \cite{guruswami2022optimal} give a randomized construction of codes achieving list-decoding capacity with alphabet size $\exp(\tilde{O}(1/\eps^2))$ and a much better list size of $O(1/\eps)$. They obtain this by folding AG codes using automorphisms of the underlying function field. The message space is also restricted to a hierarchical subspace-evasive set (of which we don't have explicit constructions, so they give a pseudorandom construction) to obtain the list size of $O(1/\eps)$. In particular, their codes are non-linear because of this restriction. But they give an efficient list-decoding algorithm for their codes.

We show that by simple random puncturing of an AG code, we can achieve relaxed $\LDMDS$ codes over constant size fields. In fact, the field size is also nearly optimal and matches the lower bound shown in \cite{alrabiah2023ag}. This result can be thought of as a partial derandomization of random linear codes which achieve the same bounds.

\begin{theorem}{(Informal - See Theorem~\ref{thm:AGpuncture} and Corollary~\ref{cor:AGListcap})}
    \label{thm:relaxedldmds_informal}
    By randomly puncturing suitable AG codes, with high probability, we get a code of rate $R$ which is $(\rho,L)$-list-decodable with $\rho = \frac{L}{L+1}(1-R-\eps)$ (as in (\ref{eq:relaxedrho})) over fields of size $\exp(O(L/\eps))$. The same result also holds for the stronger average-radius list-decoding.
\end{theorem}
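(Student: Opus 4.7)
The plan is to reduce the statement to showing that a randomly punctured AG code satisfies, with high probability, the \emph{lower} relaxation of $\MDS(\ell)$ for every $\ell \le L$, and then invoke the equivalence between this lower relaxation and relaxed average-radius $\LDMDS$ (which is established earlier in the paper). Concretely, I would start from an AG code over $\F_q$ with $q = \exp(O(L/\eps))$, drawn from a function field with many rational points and sub-linear genus (e.g.\ a Garcia-Stichtenoth tower), and a Riemann-Roch space whose dimension is calibrated so that after puncturing to length $n$ the rate is $R$. Random puncturing then amounts to choosing the $n$ evaluation points uniformly at random from the $N = \Theta(q)$ available rational places.

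The core of the argument proceeds in three steps. First, unpack the failure event: the lower relaxation of $\MDS(\ell)$ fails iff there exist subsets $A_1,\dots,A_\ell \subseteq [n]$ whose intersection $V_{A_1}\cap\dots\cap V_{A_\ell}$ is strictly larger than the generic dimension by more than the slack permitted by $\eps$; by standard duality this translates into the existence of a nonzero codeword in the dual (or equivalently, a nontrivial Riemann-Roch element) with a prescribed "overly sparse" support pattern. Second, apply the generalized GM-MDS theorem for polynomial codes of \cite{bdg2023a} to characterize, in algebraic-geometric terms, exactly which such support patterns can arise in a generic AG code: this converts each bad event into a containment statement saying that the random set of evaluation points lies in a specific subvariety of $X^n$ determined by a divisor-theoretic constraint. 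Third, for each such subvariety estimate the probability, using Weil-type bounds for the number of $\F_q$-points and a Schwartz-Zippel-style argument at the level of the curve, that the random puncturing lands inside it; the ``extra'' slack $\eps n$ forces the codimension of the relevant subvariety to be at least $\Omega(\eps n)$, so each bad event has probability $\le q^{-\Omega(\eps n)}$. A union bound over all combinatorial types of bad support patterns, whose count is bounded by $\exp(O(L n))$ after collapsing symmetries, then closes provided $q = \exp(\Omega(L/\eps))$.

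The main obstacle will be the union bound. For Reed-Solomon codes the counting in \cite{alrabiah2023randomly} already requires a careful bookkeeping of which subsets $A_1,\dots,A_\ell$ are ``essential,'' and in the AG setting this bookkeeping must be compatible with the divisor-theoretic side of the generalized GM-MDS theorem rather than with mere degree counting of polynomials. I expect the crux to be showing that, despite the richer algebraic structure of AG codes, the combinatorial complexity of bad configurations is still controlled by the same $L/\eps$ parameters as in the Reed-Solomon case, so that the $\exp(-\Omega(\eps n))$ probability of each bad event can absorb the combinatorial count. Once this is established, the field size $\exp(O(L/\eps))$ follows directly from balancing the two exponential rates, and the average-radius version is handled in parallel because the lower relaxation of $\MDS$ already encodes the stronger average-radius guarantee via the same equivalence.
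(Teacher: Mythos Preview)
Your high-level strategy is right: reduce to the lower relaxation $\rMDS_{\eps n}(L+1)$, invoke the equivalence with $\rLDMDS$, use the generalized GM-MDS theorem of \cite{bdg2023a} to know the relevant rank condition holds generically on the curve, and use the Garcia--Stichtenoth tower to get enough rational points over a constant-size field. Where your sketch departs from the paper, and where it has a real gap, is step three.

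You propose to bound each bad event by saying the slack $\eps n$ forces the bad locus to have codimension $\Omega(\eps n)$ in $X^n$, and then to apply Weil-type point counts to get probability $q^{-\Omega(\eps n)}$. This is not how the argument runs, and a direct execution along those lines would run into trouble: the bad locus is cut out by the simultaneous vanishing of many large minors, each a polynomial of total degree $\Theta(Lk)$ in the $nk$ coordinates, and Lang--Weil error terms scale with the degree of the variety, which here is enormous. The paper does not use Weil bounds at all. Instead it adapts the iterative \emph{faulty-index} mechanism of \cite{guo2023randomly,alrabiah2023randomly}: the slack $r=\eps n$ means that after deleting up to $r$ columns the matrix $\cH$ is still generically full rank, so whenever a specific minor vanishes you can locate a faulty column index, swap it out (using the type-preserving swap trick of \cite{alrabiah2023randomly} so the union bound over fault sequences stays $\binom{n}{r/2}2^{O(Lr)}$ rather than $n^r$), and repeat at least $r/2$ times. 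Each fault is a single-variable event: a polynomial of degree at most $L-1$ in one curve point vanishes, and by B\'ezout on the degree-$s$ curve $X$ this has at most $sL$ solutions among the $|S|$ available rational points. The per-fault probability is therefore $sL/|S|$, not $1/q$; for the Garcia--Stichtenoth tower both $s$ and $|S|$ are $\Theta(n)$, and their ratio is $\Theta(L/p)$ with $q=p^2$. This is the entire reason a constant field works here and not for Reed--Solomon: the curve carries $\Theta(pn)$ rational points against degree $\Theta(n)$, whereas the moment curve has only $q$ points against degree $k-1$.

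So the missing ingredient is the faulty-index algorithm together with B\'ezout, packaged in the paper as Framework~\ref{frmwrk:relax} and Theorem~\ref{thm:mainSlackProb}. Your ``divisor-theoretic constraint'' language is also a red herring: after the Riemann--Roch embedding (Lemma~\ref{lem:evalEmbedProj}) the argument is purely about polynomial rank conditions on an irreducible $\MDS$ curve in affine space, and no further divisor bookkeeping is needed.
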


We summarize the prior results and our work in Table~\ref{tab:listdecoding}.

\begin{table}[h]
\centering
\begin{tabular}{|m{0.16 \linewidth}|c|c|c|c|c|}
\hline
\textbf{Code Family} & \textbf{List Size} & \textbf{Radius} ($\rho$)  & \textbf{Field Size} & \textbf{Construction} & \textbf{Reference} \\
\hline
Random non-linear code & $O(1/\eps)$ & $1-R-\eps$ & $\exp(O(1/\eps))$ & randomized & \cite{elias1957list,wozencraft1958list} \\
\hline
Random linear code & $L$ & $\frac{L}{L+1}(1-R-\eps)$ & $\exp(O(L/\eps))$ &randomized & \cite{alrabiah2023randomly} \\
\hline
Random Reed-Solomon & $L$ & $\frac{L}{L+1}(1-R-\eps)$ & $\exp(O(L/\eps)) \cdot n$ & randomized & \cite{alrabiah2023randomly} \\
\hline
Folded AG subcode + hierarchical evasive sets (non-linear code) & $O(1/\eps)$ & $1-R-\eps$ & $\exp(\tilde{O}(1/\eps^2))$ & randomized & \cite{guruswami2022optimal} \\
\hline
AG codes with subfield evaluation + BTT evasive subspace & $2^{(1/\eps)^{O(1)}}$ & $1-R-\eps$ & $\exp(\tilde{O}(1/\eps^2))$ & explicit & \cite{guo2021efficient} \\
\hline
Random AG code & $L$ & $\frac{L}{L+1}(1-R-\eps)$ & $\exp(O(L/\eps))$ &randomized & Thm~\ref{thm:AGpuncture} \\
\hline
\end{tabular}
\caption{Summary of results on nearly optimal list-decodable codes. Here $R$ is the rate and $n$ is the length of the code. For a more comprehensive comparison of prior work, see Table 1 of \cite{kopparty2018improved}.}
\label{tab:listdecoding}
\end{table}

\paragraph{Maximally Recoverable Tensor Codes}
An $(m,n,a,b)$-tensor code is a linear code formed by the tensor product of two codes, an $(n,n-b)$-code $C_{row}$ (called the row code) and an $(m,m-a)$-code $C_{col}$ (called the column code), i.e., $C=C_{col}\otimes C_{row}$. Equivalently, the codewords of $C$ are $m\times n$ matrices whose rows belong $C_{row}$ and columns belong to $C_{col}$. Such a code satisfies `$a$' parity checks per column and `$b$' parity checks per row. Tensor codes have good \emph{locality}, which means that we can recover an erased symbol by reading a small number of remaining symbols (called a repair group). They also have good \emph{availability} which means that there are two such disjoint repair groups, one along the row and one along the column. Thus tensor codes are well-suited in coding for distributed storage. For example, Facebook's (now Meta) f4 storage architecture uses an $(m=3,n=14,a=1,b=4)$ tensor code to store data.

An $(m,n,a,b)$-tensor code is called maximally recoverable (MR) if it can recover from any erasure pattern that is information theoretically possible to recover from (for that particular field characteristic). Thus MR tensor codes are optimal codes in terms of their ability to recover from erasure patterns. The notion of maximal recoverability is introduced by \cite{chen2007maximally,gopalan2012locality} to design optimal codes for distributed storage. Because of their optimality, MR codes are being used in large scale distributed storage  s such as in Microsoft's data centers \cite{huang2012erasure}. MR tensor codes were first studied in the work of \cite{Gopalan2016}, with special emphasis on the case of $a=1$. When $a=1$, the column code is a simple parity check code. In this case, \cite{Gopalan2016} give an explicit condition called \emph{regularity} to check when an erasure pattern is correctable.\footnote{Giving such a condition for general $a,b$ is still open.} \cite{bgm2021mds} showed that an $(m,n,a=1,b)$-tensor code being MR is equivalent to the row code $C_{row}$ being higher order MDS of order $m$, i.e., $\MDS(m)$. Thus MR tensor codes in the regime of $a=1$ are exactly equivalent to higher order MDS codes.

In distributed storage applications, having small field size is extremely important as encoding and decoding involves finite field arithmetic. Therefore constructions of MR tensor codes over small fields is a very important problem. Unfortunately, recent lower bounds due to \cite{bdg2023size} imply that MR tensor codes with $a=1$ and just three rows (i.e., $m=3$) already require exponential field size. Thus it is again natural to look for relaxations of MR tensor codes and hope that we can construct them over smaller fields. In fact, there is a very natural relaxation for MR tensor codes. We say that an $(m,n,a,b)$-tensor code is \emph{$(a',b')$-relaxed MR} if it can correct every erasure pattern that an $(m,n,a',b')$-MR tensor code can correct (here $a'\le a$ and $b'\le b$). In this work, since we are focusing only on MR tensor codes with $a=1$, we will also only look at $(a'=1,b')$-relaxed MR tensor codes. We show that such codes are equivalent to the upper relaxation of higher order MDS codes. We then show that one can indeed construct such codes over small fields by randomly puncturing Reed-Solomon codes or AG codes. Our main result is as follows:

\begin{theorem}{(Informal)}
    \label{thm:relaxedmr_informal}
    Let $C_{col}$ be a simple $(m,m-1)$-parity check code where $m$ is some fixed constant and let $C_{row}$ be an $(n,n-b)$ code sampled as follows for the two different regimes. 
    \begin{enumerate}
        \item Row code has constant rate $R\in (0,1)$, i.e., $b=(1-R)n$: In this case, setting $C_{row}$ to be a randomly punctured AG code over an alphabet of size $\exp(O(m/\eps))$ will make $C_{col}\otimes C_{row}$ into an $(1,(1-\eps) b)$-relaxed $(m,n,1,b)$-MR tensor code with high probability. (Theorem~\ref{thm:AGpunctureMRRate})
        \item Row code has small codimension, i.e., $b\ll n$: In this case, setting $C_{row}$ to be a randomly punctured Reed-Solomon code over an alphabet of constant size $n^{O(m/\eps)}$ will make $C_{col}\otimes C_{row}$ into an $(1,(1-\eps) b)$-relaxed $(m,n,1,b)$-MR tensor code with high probability. (Corollary~\ref{cor:lowCodimMRRS})
    \end{enumerate}
\end{theorem}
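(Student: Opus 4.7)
The plan is to reduce Theorem~\ref{thm:relaxedmr_informal} to the upper-relaxed higher order MDS property for the row code $C_{row}$, and then to invoke the puncturing results that produce such row codes. Since the column code $C_{col}$ is a single parity-check code (so $a=1$), the equivalence established earlier in the paper between $(1,b')$-relaxed $(m,n,1,b)$-MR tensor codes and the upper relaxation of $\MDS(m)$ reduces both parts of the theorem to showing that $C_{row}$ is upper-relaxed $\MDS(m)$ with slack $\eps b$.

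For part (1), with $b = (1-R)n$ of constant rate, I would apply Theorem~\ref{thm:AGpunctureMRRate}, which shows that a randomly punctured AG code of length $n$ over a field of size $\exp(O(m/\eps))$ satisfies the upper relaxation of $\MDS(m)$ with slack $\eps b$ with high probability. For part (2), the small-codimension regime ($b \ll n$), I apply Corollary~\ref{cor:lowCodimMRRS}, which specializes the puncturing construction to randomly punctured Reed-Solomon codes over alphabets of size $n^{O(m/\eps)}$. In both cases, tensoring the resulting row code with the single parity-check column code immediately yields the desired $(1,(1-\eps)b)$-relaxed MR tensor code.

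The heart of the argument is the underlying puncturing theorem (Theorem~\ref{thm:AGpuncture}), whose proof I would organize in three steps. First, use the generalized GM-MDS theorem for polynomial codes of Brakensiek-Dhar-Gopi to certify that a \emph{generic} choice of evaluation points on the underlying algebraic curve yields a code satisfying the exact upper-relaxation condition for every relevant configuration of column subsets $A_1,\ldots,A_m \subset [n]$. Second, translate this generic statement into a non-vanishing condition on certain minor polynomials on the curve, and use a Schwartz-Zippel-style argument together with Riemann-Roch to bound the probability that a uniformly random puncture fails the condition at a fixed configuration. Third, take a union bound over all configurations whose failure would violate the upper relaxation.

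The main obstacle lies in the second step. Unlike the Reed-Solomon case, the AG setting forces one to phrase the independence and intersection conditions in divisor-theoretic language, to invoke Riemann-Roch to bound dimensions of relevant Riemann-Roch spaces and control a genus contribution, and then to verify that the resulting vanishing polynomial is not identically zero (using the generic result from the first step). The slack $\eps b$ in the upper relaxation is essential here: it both limits the number of dangerous configurations that need to be considered in the union bound and absorbs the genus contribution, ultimately allowing the field size to depend only on $m$ and $\eps$ rather than on $n$.
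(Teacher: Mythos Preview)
Your high-level reduction is exactly right and matches the paper: Theorem~\ref{thm:mds-tensor-equiv} converts the $(1,(1-\eps)b)$-relaxed MR requirement on $C_{col}\otimes C_{row}$ into the upper relaxation $\rMDS^{\eps b}(m)$ for $C_{row}$, after which one simply invokes Theorem~\ref{thm:AGpunctureMRRate} for part (1) and Corollary~\ref{cor:lowCodimMRRS} for part (2). One bookkeeping point: the ``underlying puncturing theorem'' you want is Theorem~\ref{thm:probBoundSlackUpper} (the upper-relaxation version, via Theorem~\ref{thm:mainSlackProb} and Framework~\ref{frmwrk:relax}), not Theorem~\ref{thm:AGpuncture}, which is the lower/list-decoding version.

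There is, however, a genuine gap in your sketch of that puncturing argument. A per-configuration ``Schwartz--Zippel-style'' bound followed by a union bound does \emph{not} work: in the constant-rate regime there are $\exp(\Theta(mn))$ saturated configurations $(A_1,\hdots,A_m)$, while the relevant minor has degree polynomial in $n$, so this route forces field size exponential in $n$. The mechanism the paper actually uses (adapting \cite{guo2023randomly,alrabiah2023randomly}) is the iterated \emph{faulty-index} argument, and this is precisely where the slack enters. Because the configuration has the $(k+r)$-dimensional saturation property with $r=\eps b$, deleting any $\le r$ columns keeps $\cG$ generically of full row rank (property~7 in Framework~\ref{frmwrk:relax}); hence whenever a random instance fails one can extract a sequence of at least $r/2$ faulty indices, each costing a factor of at most $P_q(X,m-1)/|S|$. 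It is this $(r/2)$-fold product that beats the union bound, not any reduction in the number of configurations. Your description of the slack as ``limiting the number of dangerous configurations'' and ``absorbing the genus contribution'' mislocates its role.

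Relatedly, the divisor-theoretic and Riemann--Roch content is lighter than you suggest: it is used only (via Lemma~\ref{lem:evalEmbedProj}) to realize the columns of the AG code as $\F_q$-points on an irreducible projective curve of degree $s$, so that Bezout (Lemma~\ref{lem:bezout}) gives $P_q(X,m-1)\le s(m-1)$. One does not rephrase the saturation/intersection conditions in divisor language; the argument runs entirely through the rank of $\cG$ on the embedded curve.
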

The lower bounds from \cite{bdg2023size} imply that in regime (1) of Theorem \ref{thm:relaxedmr_informal}, (non-relaxed) MR tensor codes require fields of size $\exp(\Omega(n))$, whereas the theorem gives constant field size. While in regime (2) of Theorem \ref{thm:relaxedmr_informal}, (non-relaxed) MR tensor codes require fields of size $n^{\Omega(b)}$, whereas the theorem gives polynomial field size independent of codimension $b$.

\subsection{Technical Overview}

We now give an overview of the main techniques needed to prove our main results. We start with motivating the relaxed notions of higher order MDS codes we use to abstract the important properties of our constructions. Then, we discuss how to adapt the state-of-the-art techniques for constructing list-decoding capacity-achieving codes to build more general techniques for constructing relaxed higher order MDS codes.

\subsubsection{Relaxed Higher Order MDS Codes}

As established in \cite{bgm2022}, there is an equivalence between higher order MDS codes, codes with optimal average-case list decoding guarantees, and maximally recoverable tensor codes (in the $a=1$ regime). Thus, in a theory of \emph{relaxed} higher order MDS codes, one would hope that natural relaxations of each of these quantities is also equivalent. However, such an equivalence is not possible (see discussion in Section~\ref{subsec:cmp}). Instead, we define two relaxations of Definition~\ref{def:mds}, which we call the ``lower'' and ``upper'' relaxations of higher order MDS codes.

Consider a $k\times n$ matrix $V$. Informally, we say that $V$ is the \emph{$d$-dimensional lower relaxation} of an $\MDS(\ell)$ code, which we denote by $\rMDS_d(\ell)$, if it behaves like a $(n, k-d)$-$\MDS(\ell)$ code. Likewise we say that $V$ is a \emph{$d$-dimensional upper relaxation} of an $\MDS(\ell)$ code, which we denote by $\rMDS^d(\ell)$, if it behaves like a $(n, k+d)$-$\MDS(\ell)$ code. The precise definition of ``behaves like'' requires a bit of care.

As observed in \cite{tian2019formulas,bgm2021mds}, computing the intersections of spaces is closely related to the following block matrix (see Proposition~\ref{prop:tian-rank}).
\[\mathcal G_{A_1, \hdots, A_\ell}[V] :=  \begin{pmatrix}
I_k & V|_{A_1} & & &\\
I_k & & V|_{A_2} & &\\
\vdots & & & \ddots &\\
I_k & & & & V|_{A_\ell}
\end{pmatrix}.\]
We suppress the sets $A_i$ in the notation if it is clear from context what they are. In fact, an alternative definition of a higher order MDS code is that $\rank \cG[V]$ is always equal to $\rank \cG[W]$, where $W$ is a generic $k\times n$ matrix (see Corollary~\ref{cor:rank-equiv}). Further, one need not check this rank equality for all $A_1, \hdots, A_\ell$, but rather only needs to check when $\cG[W]$ has full column rank (Corollary~\ref{cor:full-column-rank-mds}) or full row rank (Proposition~\ref{prop:MDS-sat}). These alternative definitions of higher order MDS codes lead to our definitions of relaxed MDS codes.

\begin{itemize}
\item We say that a $k \times n$ matrix $V$ is a \emph{lower} relaxation of a higher order MDS code if $\cG_{A_1, \hdots, A_\ell}[V]$ has full column rank whenever $\cG_{A_1, \hdots, A_\ell}[W]$ has full column rank, where $W$ is a generic $(k-d) \times n$ matrix.  (Definition~\ref{def:lower-mds})

\item We say that a $k \times n$ matrix $V$ is an \emph{upper} relaxation of a higher order MDS code if $\cG_{A_1, \hdots, A_\ell}[V]$ has full row rank whenever $\cG_{A_1, \hdots, A_\ell}[W]$ has full row rank, where $W$ is a generic $(k+d) \times n$ matrix.  (Definition~\ref{def:upper-mds})
\end{itemize}

In the process of building the theory of these two relaxations, we prove two equivalence theorems: Theorem~\ref{thm:ldmds-lower-mds} and Theorem~\ref{thm:mds-tensor-equiv}. Theorem~\ref{thm:ldmds-lower-mds} shows that a code $C$ is $\rLDMDS$ if and only if its dual code $C^{\perp}$ is a lower relaxed higher order MDS code. This is a direct generalization of the equivalence theorem of \cite{bgm2022}. Likewise, Theorem~\ref{thm:mds-tensor-equiv} shows that, if $C'$ is a code with a single parity check involving all coordinates, $C' \otimes C$ is an relaxed MR tensor code if and only if $C$ is an upper relaxed higher order MDS code. This generalized an equivalence theorem of \cite{bgm2021mds}.

The proofs of these equivalences are relatively straightforward, and mostly mimic arguments in \cite{bgm2021mds} and \cite{bgm2022}. However, some care is needed in the proofs as we can no longer assume that the matrix $V$ is $\MDS$. The main utility of these equivalencies is that the notions of $\rLDMDS$ and $\rMR$ can be checked using the relatively simple matrix conditions used to check the lower and upper MDS conditions--see Corollary~\ref{cor:slackLDMDSrank} and Definition~\ref{def:saturation}, respectively. This simplicity helps in proving our main results.

\subsubsection{Constructions Using Punctured Codes Coming from Polynomials and Varieties}

The GM-MDS theorem~\cite{yildiz2019gmmds,lovett2018gmmds} using the equivalence in \cite{bgm2022} shows that generic Reed-Solomon (RS) codes are higher order MDS codes (and hence $\LDMDS$ as well). Using this insight \cite{guo2023randomly} and \cite{alrabiah2023randomly} showed that a randomly punctured RS code can achieve list decoding capacity over a linear sized field.

The key idea in \cite{guo2023randomly} was that the average list decoding condition was equivalent to checking the rank of a `Reduced intersection matrix' $M$. As \cite{bgm2022} showed that RS codes achieve list decoding capacity, it was known that $M$ had the right rank for a generic RS code. List decoding close to capacity was then shown to translate to a lower rank condition on $M$. If an RS code was randomly initialized then $M$ not having the right rank would need many `faults' (as generically $M$ had a much higher rank). This leads to a significant advantage in calculating the failure probability and \cite{guo2023randomly} showed that randomly punctured RS codes achieve list decoding capacity over quadratic sized fields. \cite{alrabiah2023randomly} improved this analysis at a few key technical points to get linear field sizes.

Our main result vastly generalizes this and gives results for both the upper and lower relaxations (giving us close to capacity list decodable and relaxed MR-tensor codes). To initialize the \cite{guo2023randomly,alrabiah2023randomly} strategy we need the GM-MDS theorem to hold in a more general setting. In the paper \cite{bdg2023a}, it was shown that the GM-MDS theorem holds for any code where the points are generically from an irreducible variety which contains no hyperplanes through the origin (a particular example would be anything generated by linearly independent polynomials). Using the relaxations discussed earlier we get two matrix conditions for the two relaxations. They generically have much higher rank by the generalized GM-MDS theorem and we are able to follow the argument of \cite{guo2023randomly} and \cite{alrabiah2023randomly} using simple theorems from algebraic geometry (for instance using Bezout's theorem instead of Schwartz-Zippel~\cite{Sch80,Zip79}).

Note, the argument of \cite{guo2023randomly} and \cite{alrabiah2023randomly} cannot be applied directly to the setting of AG codes. Their arguments use the fact that Reed Solomon and Random linear codes are images of polynomial maps. A failure to achieve list decoding capacity implied certain matrices having low rank which gave an algebraic condition on the generator matrix of a code. Substituting the polynomial maps generating the code then allows us to use Schwartz-Zippel to control the number of failures caused by bad columns. AG codes cannot be represented as an image of polynomial maps so we use an algebraic geometric perspective to adapt the arguments of \cite{guo2023randomly} and \cite{alrabiah2023randomly}. We use facts proven using the Reimann-Roch theorem (which is also needed to prove classical 
 distance bounds for AG codes) to show that the columns of an AG code can be treated as points on an irreducible variety of controllable degree. We can then use Bezout's theorem to control the number of bad columns causing an algebraic condition for list-decoding to vanish. This perspective gives us a general statement which works for any code whose columns are sampled from an irreducible variety and recovers the statement for random Reed Solomon and random linear codes from \cite{alrabiah2023randomly}.

\subsection{Open Problems}

We conclude the introduction with a few open problems.

\begin{itemize}
\item One question that remains open is if we can achieve (even existentially) the parameters of the random non-linear code for getting $\eps$-close to list-decoding capacity, i.e., a field size of $\exp(O(1/\eps))$ and list size of $O(1/\eps)$, with a linear code. This would also match the known lower bound of $\exp(\Omega(1/\eps))$ on the field size required to even get polynomial list size \cite{wozencraft1958list,elias1957list}. For non-linear codes $\exp(O(1/\eps))$ is known to be both optimal and attainable~\cite{alrabiah2023ag}. We note that this question is closely related to the current gap in the optimal field size for $(n,k)$-$\MDS(L)$ codes whose currently field size lower/upper bounds are of the form $\exp(\Omega(n))$ and $\exp(O(nL))$, respectively, in the regime that $k/n$ converges to a constant $R \in (0,1)$~\cite{bgm2021mds,bdg2023size}.

\item Do random puncturings of Reed-Solomon or AG codes have efficient list decoding algorithms up to list decoding capacity? As discussed in~\cite{bgm2022}, current hardness results for list decoding do not apply to randomly punctured Reed-Solomon or AG codes.
\item Can we generalize the results in this paper to the duals of punctured AG codes? In particular, does the dual of a punctured AG code achieve list decoding capacity over small fields? Does the dual of a punctured AG code give relaxed MR tensor code with suitable parameters over small fields? Though the generalized GM-MDS theorem of \cite{bdg2023a} applies to duals of polynomial codes, there are further technical difficulties in generalizing the approach in this paper.
\item The upper relaxation of a higher order MDS code corresponds to a relaxed MR$(a,b)$ code in the $a=1$ case. To what extent can we generalize these results for larger $a$? Seemingly a road block is a combinatorial characterization of which patterns are correctable by an $\MR(m,n,a,b)$ tensor code when $a,b\ge 2$. See \cite{bdg2023a} for further discussion. But a more approachable open problem is to show that if $C_{row}$ and $C_{col}$ are both random linear codes of constant rate, then $C_{col}\otimes C_{row}$ is a $((1-\eps)a,(1-\eps)b)$-relaxed $(m,n,a,b)$-MR tensor code over small fields, maybe even fields of size $\exp(O_\eps(1))$ independent of $m,n,a,b$.
\end{itemize}

\subsection*{Organization}

In Section~\ref{sec:prelim} we go over essential background on higher order MDS codes. In Section~\ref{sec:relaxed}, we define various relaxations of higher order MDS codes and show that some of these are equivalent. In Section~\ref{sec:agcode} we show that randomly-punctured algebraic-geometry codes satisfy these relaxed notions over small fields.

\section{Preliminaries}\label{sec:prelim}

In this section, we discuss various background material needed to prove our main results.

\subsection{Notation}

Given a matrix $M \in \F^{m \times n}$, we say that $M$ has \emph{full row rank} if $\rank M = m$ and that $M$ has \emph{full column rank} if $\rank M = n$. When discussing the properties of higher order MDS codes and their relaxations, we need to consider the ranks of various block matrices (e.g.,~\cite{bgm2021mds}). Here, we introduce some succinct notation to discuss these matrices. Given a matrix $V \in \F^{k \times n}$ and sets $A_1, \hdots, A_\ell \subseteq [n]$, we define the following (affine) operators:

\begin{align}\label{eq:Gop}
  \mathcal G_{A_1, \hdots, A_\ell}[V] &:= \begin{pmatrix}
I_k & V|_{A_1} & & &\\
I_k & & V|_{A_2} & &\\
\vdots & & & \ddots &\\
I_k & & & & V|_{A_\ell}
\end{pmatrix},
\end{align}
and
\begin{align}\label{eq:Hop}
  \mathcal H_{A_1, \hdots, A_\ell}[V] &:=  \begin{pmatrix}
    I_n|_{\oA_1} & I_n|_{\oA_2} & \cdots & I_n|_{\oA_\ell}\\
    V|_{\oA_1} & & & \\
     & V|_{\oA_2} & & \\
    & &\ddots & \\
     & & & V|_{\oA_\ell}\\
  \end{pmatrix},
\end{align}
where $\oA_i := [n] \setminus A_i$.

\subsection{Properties of Higher Order MDS Codes}

\begin{proposition}[\cite{tian2019formulas}, as stated in \cite{bdg2023a}]\label{prop:tian-rank}
Let $V$ be a $k \times n$-matrix. For any $A_1, \hdots, A_\ell \subseteq [n]$, we have that
\begin{align}
  \dim(V_{A_1} \cap \cdots \cap V_{A_\ell}) = k + \sum_{i=1}^{\ell} \dim(V_{A_i}) - \rank \cG_{A_1, \hdots, A_\ell}[V].\label{eq:tian}
\end{align}
\end{proposition}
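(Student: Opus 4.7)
The cleanest route is to compute $\rank \mathcal G_{A_1,\hdots,A_\ell}[V]$ via the rank-nullity theorem, by explicitly identifying the kernel of the linear map defined by $\mathcal G$ with a fiber bundle over $V_{A_1}\cap\cdots\cap V_{A_\ell}$. The identity (\ref{eq:tian}) will then fall out by bookkeeping of dimensions.

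\textbf{Step 1: Identify the kernel.} Write a generic element of the domain as $(x,y_1,\dots,y_\ell)$ with $x\in\F^k$ and $y_i\in\F^{|A_i|}$. By the block structure of $\mathcal G$, this vector lies in the kernel if and only if
\[
  x + V|_{A_i}\,y_i = 0 \quad\text{for every } i\in[\ell].
\]
In particular $x = -V|_{A_i} y_i$ lies in $V_{A_i}$ for every $i$, so $x\in V_{A_1}\cap\cdots\cap V_{A_\ell}$; and conversely every $x$ in that intersection arises this way by picking, for each $i$, some $y_i$ with $V|_{A_i} y_i = -x$ (such a $y_i$ exists because $-x\in V_{A_i}$).

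\textbf{Step 2: Count the kernel dimension.} Once $x$ is fixed, the set of admissible $y_i$ is an affine translate of $\ker V|_{A_i}$, which has dimension $|A_i|-\dim V_{A_i}$, and the choices across different $i$ are independent. Hence
\[
  \dim \ker \mathcal G_{A_1,\dots,A_\ell}[V] = \dim\bigl(V_{A_1}\cap\cdots\cap V_{A_\ell}\bigr) + \sum_{i=1}^{\ell}\bigl(|A_i|-\dim V_{A_i}\bigr).
\]

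\textbf{Step 3: Apply rank-nullity.} The total number of columns of $\mathcal G$ is $k + \sum_{i=1}^\ell |A_i|$, so rank-nullity gives
\[
  \rank \mathcal G_{A_1,\dots,A_\ell}[V] = k + \sum_{i=1}^{\ell}|A_i| - \dim\bigl(V_{A_1}\cap\cdots\cap V_{A_\ell}\bigr) - \sum_{i=1}^{\ell}\bigl(|A_i|-\dim V_{A_i}\bigr),
\]
which simplifies to $k + \sum_i \dim V_{A_i} - \dim(V_{A_1}\cap\cdots\cap V_{A_\ell})$. Rearranging yields (\ref{eq:tian}).

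\textbf{Anticipated difficulty.} There is essentially no obstacle: the argument is a direct kernel computation plus rank-nullity. The only place to be mildly careful is the independent choice of the $y_i$'s in Step 2 once $x$ is fixed, which uses only that the constraints on different blocks $y_i$ do not interact (the $V|_{A_i}$ block lives only in the $i$-th row block of $\mathcal G$). So the proof is a few lines, and matches the derivation in \cite{tian2019formulas}.
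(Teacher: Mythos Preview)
Your proof is correct. The paper does not actually prove this proposition; it is cited as a known result from \cite{tian2019formulas} (as restated in \cite{bdg2023a}) and used as a black box throughout. Your kernel computation via rank--nullity is exactly the standard derivation of this identity, so there is nothing to compare against.
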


We note the following corollary. 
\begin{corollary}\label{cor:rank-equiv}
Let $V$ be a $k \times n$-matrix and $W$ a generic $k \times n$-matrix. For all $\ell \ge 2$, we have that $V$ is $\MDS(\ell)$ if and only if for all $A_1, \hdots, A_\ell \subseteq [n]$, 
\begin{align}
\rank \mathcal G_{A_1, \hdots, A_\ell}[V] = \rank \cG_{A_1, \hdots, A_\ell}[W] \label{eq:V-W}.
\end{align}
\end{corollary}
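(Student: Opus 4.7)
The plan is to obtain this corollary as a direct bookkeeping consequence of Tian's rank formula (Proposition~\ref{prop:tian-rank}). Applying that formula to both $V$ and $W$ and subtracting gives
\[
\dim(V_{A_1}\cap\cdots\cap V_{A_\ell}) - \dim(W_{A_1}\cap\cdots\cap W_{A_\ell}) = \sum_{i=1}^{\ell}\bigl(\dim V_{A_i} - \dim W_{A_i}\bigr) - \bigl(\rank \cG_{A_1,\dots,A_\ell}[V] - \rank \cG_{A_1,\dots,A_\ell}[W]\bigr).
\]
Thus, once I can guarantee that $\dim V_{A_i} = \dim W_{A_i}$ for each $i$, the intersection-dimension equality in Definition~\ref{def:mds} and the rank equality \eqref{eq:V-W} become equivalent conditions on the remaining term. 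The key observation is that in either direction of the corollary, the hypothesis can be specialized to the diagonal tuple $A_1=\cdots=A_\ell=A$ to extract exactly this single-set dimension information.

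For the forward direction, I will assume $V$ is $\MDS(\ell)$. Specializing Definition~\ref{def:mds} to $A_1=\cdots=A_\ell=A$ collapses the intersection to $V_A$, immediately yielding $\dim V_A = \dim W_A$ for every $A\subseteq[n]$. Substituting this into the displayed identity above, the right-hand side reduces to $-(\rank \cG[V]-\rank \cG[W])$, while the left-hand side vanishes by the $\MDS(\ell)$ hypothesis applied to the original tuple $(A_1,\dots,A_\ell)$. Hence \eqref{eq:V-W} holds.

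For the converse, I will assume \eqref{eq:V-W} for every tuple and first extract single-set dimensions. Applying Tian's formula to $(A,A,\dots,A)$ gives $\rank \cG_{A,\dots,A}[V] = k + (\ell-1)\dim V_A$ and likewise for $W$; the hypothesis with all sets equal thus forces $\dim V_A = \dim W_A$. Feeding this back into the displayed identity for an arbitrary tuple now makes the right-hand side zero, so $\dim(V_{A_1}\cap\cdots\cap V_{A_\ell}) = \dim(W_{A_1}\cap\cdots\cap W_{A_\ell})$, confirming that $V$ is $\MDS(\ell)$.

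There is essentially no obstacle to this proof; the only subtlety worth flagging is the role of the hypothesis $\ell\ge 2$. It enters precisely in the converse, where solving $\rank \cG_{A,\dots,A}[V] = k + (\ell-1)\dim V_A$ for $\dim V_A$ requires the coefficient $\ell-1$ to be nonzero (for $\ell=1$ the matrix $\cG_A[V] = (I_k \mid V|_A)$ trivially has rank $k$ and carries no information about $\dim V_A$). Under $\ell\ge 2$ both directions go through with only Tian's formula and the diagonal specialization trick.
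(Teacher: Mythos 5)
Your proof is correct and follows essentially the same route as the paper's: apply Proposition~\ref{prop:tian-rank} to both $V$ and $W$, use the diagonal specialization $A_1=\cdots=A_\ell=A$ to extract $\dim V_A=\dim W_A$, and then convert between the intersection-dimension equality and the rank equality. Your remark on where $\ell\ge 2$ is needed is a nice additional observation, but the argument is the same as the paper's.
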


As we believe this specific observation hasn't been noted before in the literature, we include a proof.

\begin{proof}
  If $V$ is $\MDS(\ell)$, then (\ref{eq:V-W}) follows immediately from Proposition~\ref{prop:tian-rank}.

  Conversely, assume (\ref{eq:V-W}) holds for all $A_1, \hdots, A_\ell \subseteq [n]$. To show that $\dim(V_{A_1} \cap \cdots \cap V_{A_\ell}) = \dim(W_{A_1} \cap \cdots \cap W_{A_\ell})$, it suffices by Proposition~\ref{prop:tian-rank} to prove that $\dim(V_{A_i}) = \dim(W_{A_i})$ for all $i \in [\ell]$. To prove this, consider any $A \subseteq [n]$, if we set $A_1 = \cdots = A_\ell = A$ and apply Proposition~\ref{prop:tian-rank} with (\ref{eq:V-W}), we get that $\dim(V_{A}) = \dim(W_A)$ for all $A \subseteq [n]$, as desired. 
\end{proof}

We now state what we believe is a novel adaptation of Proposition~\ref{prop:tian-rank} for parity-check matrices. The proof is in Appendix~\ref{app:omit}. Note that this formula also relates the rank of $\cG(G)$ to the rank of $\cH(G^{\perp})$.

\begin{restatable}{lemma}{lemdual}\label{lem:dual-inter-rank}
 Let $C$ be an $(n,k)$-code (not necessarily MDS) with generator matrix $G$ and parity-check matrix $H$. For any $A_1, \hdots, A_\ell \subseteq [n]$, we have that
\begin{align}
  \dim(H_{A_1} \cap \cdots \cap H_{A_\ell}) = n - k +  \sum_{i=1}^{\ell} \rank(G|_{\oA_i}) - \rank \cH_{A_1,\hdots,A_\ell}[G].\label{eq:dim-dual}
\end{align}
\end{restatable}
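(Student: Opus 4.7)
The plan is to compute the right nullspace of $\cH_{A_1, \ldots, A_\ell}[G]$ directly in terms of the dual code $C^\perp$ and then apply rank-nullity. A vector in this nullspace has the form $(x_1, \ldots, x_\ell)$ with $x_i \in \F^{\oA_i}$, and the block structure of $\cH[G]$ imposes two constraints: the diagonal blocks $G|_{\oA_i}$ force $G|_{\oA_i} x_i = 0$, equivalently the zero-extension $\tilde x_i \in \F^n$ (supported on $\oA_i$) lies in $\ker G = C^\perp$; and the top row of identity blocks forces $\sum_i \tilde x_i = 0$. Writing $D_i := \{v \in C^\perp : \supp(v) \subseteq \oA_i\}$, the nullspace of $\cH[G]$ is therefore isomorphic to the kernel of the sum map $\bigoplus_{i=1}^\ell D_i \to C^\perp$, $(v_1, \ldots, v_\ell) \mapsto \sum_i v_i$, and so has dimension $\sum_i \dim D_i - \dim(D_1 + \cdots + D_\ell)$.

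Next I would compute both of these quantities. For $\dim D_i$, each $v \in D_i$ is uniquely determined by its restriction $x \in \F^{\oA_i}$, which must satisfy $G|_{\oA_i} x = 0$, giving $\dim D_i = |\oA_i| - \rank(G|_{\oA_i})$. For $\dim(D_1 + \cdots + D_\ell)$, identify $C^\perp$ with $\F^{n-k}$ via $u \mapsto u^\top H$ (an isomorphism, since $H$ has full row rank); under this identification, $D_i$ corresponds to the left kernel of $H|_{A_i}$, which is the orthogonal complement $H_{A_i}^\perp$ inside $\F^{n-k}$. By the standard duality $(V_1 \cap \cdots \cap V_\ell)^\perp = V_1^\perp + \cdots + V_\ell^\perp$, we obtain $D_1 + \cdots + D_\ell \cong (H_{A_1} \cap \cdots \cap H_{A_\ell})^\perp$, of dimension $(n-k) - \dim(H_{A_1} \cap \cdots \cap H_{A_\ell})$.

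Finally, rank-nullity yields $\rank \cH[G] = \sum_i |\oA_i| - \sum_i \dim D_i + \dim(D_1 + \cdots + D_\ell)$; substituting the two expressions above and simplifying gives exactly the identity (\ref{eq:dim-dual}). The main subtlety I anticipate is bookkeeping across the three different identifications of $D_i$ (as a subspace of $\F^n$ supported on $\oA_i$, as the kernel of $G|_{\oA_i}$ acting on $\F^{\oA_i}$, and as a subspace of $C^\perp \cong \F^{n-k}$); once these are kept straight, no deeper obstacle arises, and the argument is essentially the proof of Proposition~\ref{prop:tian-rank} run from the parity-check side.
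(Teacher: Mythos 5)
Your argument is correct, and it is genuinely the dual of the proof in the paper rather than the same one. The paper works with the \emph{left} null space of $\cH_{A_1,\hdots,A_\ell}[G]$: it writes $\dim\coker \cH[G] = n+\ell k - \rank \cH[G]$, then builds an explicit surjection $\psi(y,x_1,\hdots,x_\ell) = Hy$ from $\coker \cH[G]$ onto $H_{A_1}\cap\cdots\cap H_{A_\ell}$ and identifies its kernel as $Y + X_1 + \cdots + X_\ell$ with $\dim Y = k$ and $\dim X_i = k - \rank(G|_{\oA_i})$; this requires separately verifying well-definedness, the kernel computation, and surjectivity (the last by lifting each $z \in H_{A_i}$ to some $y_i$ supported on $A_i$). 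You instead work with the \emph{right} null space, which has the very transparent description as the kernel of the sum map $\bigoplus_i D_i \to C^{\perp}$ with $D_i = \{v \in C^{\perp} : \supp v \subseteq \oA_i\}$, and you convert $\dim(D_1+\cdots+D_\ell)$ into $\dim(H_{A_1}\cap\cdots\cap H_{A_\ell})$ via the identification $D_i \cong H_{A_i}^{\perp}$ inside $\F^{n-k}$ together with $\bigl(\bigcap_i H_{A_i}\bigr)^{\perp} = \sum_i H_{A_i}^{\perp}$ (valid over any field for the standard bilinear form, even though it is not an inner product). Your route trades the paper's hands-on map-checking for one clean appeal to orthogonality duality, and arguably makes it more visible why the formula is the parity-check analogue of Proposition~\ref{prop:tian-rank}; the paper's route avoids invoking the annihilator identity and produces the explicit isomorphism $\psi$, which is occasionally useful in its own right. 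All the individual steps you list check out, including $\dim D_i = |\oA_i| - \rank(G|_{\oA_i})$ and the final bookkeeping $\rank\cH[G] = \sum_i|\oA_i| - \sum_i \dim D_i + \dim(\sum_i D_i)$, which rearranges to \eqref{eq:dim-dual}.
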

\subsubsection{Null Intersection}

An important structural property of higher order MDS codes is characterizing when the intersection of spaces is the null space.

\begin{definition}[\cite{bgm2021mds}]
  We say that $A_1, \hdots, A_\ell \subseteq [n]$ have the $k$-dimensional null intersection property if for any generic $k\times n$ matrix $W$, we have that $W_{A_1} \cap \cdots \cap W_{A_\ell} = 0$.
\end{definition}

Combinatorial characterizations of the null intersection property were found in \cite{bgm2021mds,bgm2022}.

\begin{proposition}[\cite{bgm2022}]\label{prop:null}
  Consider sets $A_1, \hdots, A_\ell \subseteq [n]$ with $|A_i| \le k$ for all $i \in [n]$. These sets have the $k$-dimensional null intersection property if and only if for all partitions $P_1 \cup \cdots \cup P_s = [\ell]$, we have that
  \begin{align}
    \sum_{i=1}^{s} \left|\bigcap_{j \in P_i} A_j\right| \le (s-1)k \label{eq:part-sum}
  \end{align}
\end{proposition}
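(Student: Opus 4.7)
The plan is to translate the null intersection property into a rank condition on $\cG_{A_1,\ldots,A_\ell}[W]$ via Proposition~\ref{prop:tian-rank}, and then attack the two implications separately. Since $|A_i|\le k$ and $W$ is generic, we have $\dim W_{A_i} = |A_i|$, so Proposition~\ref{prop:tian-rank} simplifies to
\[\dim(W_{A_1}\cap\cdots\cap W_{A_\ell}) \;=\; k + \sum_{i=1}^\ell |A_i| - \rank\,\cG_{A_1,\ldots,A_\ell}[W],\]
and the $k$-dimensional null intersection property is equivalent to $\cG_{A_1,\ldots,A_\ell}[W]$ having full column rank $k+\sum_i |A_i|$.

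For the forward direction ($\Rightarrow$), I fix a partition $P_1\sqcup\cdots\sqcup P_s = [\ell]$, set $B_i := \bigcap_{j\in P_i} A_j$, and note $|B_i|\le k$ since $B_i\subseteq A_j$ for any $j\in P_i$. The same containment gives the trivial inclusion $W_{B_i}\subseteq \bigcap_{j\in P_i}W_{A_j}$, whence
\[\bigcap_{i=1}^s W_{B_i} \;\subseteq\; \bigcap_{i=1}^s\bigcap_{j\in P_i} W_{A_j} \;=\; \bigcap_{j=1}^\ell W_{A_j} \;=\; 0.\]
Applying Proposition~\ref{prop:tian-rank} to $B_1,\ldots,B_s$ (of sizes at most $k$) and using that $\cG_{B_1,\ldots,B_s}[W]$ has only $sk$ rows, hence rank at most $sk$, gives $0 = k+\sum_i|B_i|-\rank\,\cG_{B_1,\ldots,B_s}[W]\ge k+\sum_i|B_i|-sk$, i.e., $\sum_i|B_i|\le (s-1)k$.

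The reverse direction ($\Leftarrow$) is the main technical step. The cleanest route is to establish the generic dimension formula
\[\dim(W_{A_1}\cap\cdots\cap W_{A_\ell}) \;=\; \max_{P}\Bigl(\sum_i |B_{P,i}| - (s(P)-1)k\Bigr)^{\!+},\]
where the maximum is over all partitions $P=(P_1,\ldots,P_{s(P)})$ of $[\ell]$ and $B_{P,i}=\bigcap_{j\in P_i}A_j$; under the partition condition the right side is $0$, giving null intersection. The inequality $\ge$ is exactly the computation used in the forward direction (each partition gives a lower bound on the intersection dimension, and dimension is nonnegative). The hard inequality $\le$ can be proved either by induction on $\ell$ using the identity $\dim(V\cap W_{A_\ell}) = \dim V + |A_\ell| - \dim(V+W_{A_\ell})$ with $V=\bigcap_{i<\ell} W_{A_i}$, together with an inductive analysis of the extremal partition realizing $\dim V$, or more uniformly by observing that linear dependencies among columns of $\cG_{A_1,\ldots,A_\ell}[W]$ correspond to elements of $\bigcap_i W_{A_i}$, and then recognizing the combinatorial structure on these columns as a transversal/union-of-matroids matroid whose generic independence is governed exactly by a Hall-type criterion matching the partition inequalities (Edmonds/Rado matroid-union theorem). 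I expect the main obstacle to be this upper bound: the partition condition packages a Hall-type criterion across multiple subspaces simultaneously, and identifying the extremal partition achieving equality requires the combinatorial heart of the argument.
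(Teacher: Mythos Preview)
The paper does not prove this proposition: it is quoted from \cite{bgm2022} as a preliminary result and no proof appears anywhere in the manuscript (neither in Section~\ref{sec:prelim} nor in Appendix~\ref{app:omit}). So there is no ``paper's own proof'' to compare against.

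As for your proposal on its own merits: the forward direction is clean and correct. The reverse direction, however, is not a proof but a sketch of two possible strategies (induction on $\ell$, or a matroid-union/Hall argument), and you explicitly flag the hard inequality as the ``main obstacle'' without carrying it out. In particular, the inductive route you describe requires understanding $\dim(V+W_{A_\ell})$ for $V=\bigcap_{i<\ell}W_{A_i}$, and relating the extremal partition for $\ell-1$ sets to one for $\ell$ sets is exactly the nontrivial combinatorial step; you have not shown how to do it. The matroid-union route is the one actually taken in \cite{bgm2022} (via Rado's theorem applied to a transversal matroid), but again you have only named it, not executed it. If you want a self-contained proof here, you would need to fill in one of these arguments completely; as written, the reverse implication is a gap.
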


We note that it suffices to verify Definition~\ref{def:mds} for $A_1, \hdots, A_\ell$ has the $k$-dimensional null intersection property.

\begin{proposition}[\cite{bgm2021mds}]\label{prop:2.2}
For $\ell \ge 2$, a generator matrix $V \in \F^{k \times n}$ with every column nonzero is $\MDS(\ell)$ if and only if for all $A_1, \hdots, A_\ell \subseteq [n]$ with the $k$-dimensional null intersection property and $|A_i| \le k$ we have $V_{A_1} \cap \cdots \cap V_{A_\ell} = 0$.
\end{proposition}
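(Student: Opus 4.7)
The forward direction ($\Rightarrow$) is immediate from Definition~\ref{def:mds}: if $V$ is $\MDS(\ell)$ and $(A_1,\ldots,A_\ell)$ has the $k$-dimensional null intersection property, then $\dim(V_{A_1}\cap\cdots\cap V_{A_\ell}) = \dim(W_{A_1}\cap\cdots\cap W_{A_\ell}) = 0$, so the $V$-intersection is trivial.

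For the reverse direction ($\Leftarrow$), my plan is to first bootstrap the hypothesis to show that $V$ is itself MDS. Suppose for contradiction that some minimal $T\subseteq [n]$ with $|T|\le k$ supports a nonzero linear dependence among the columns of $V$. Fix $i_0\in T$, set $B_2 := T\setminus\{i_0\}$, and set $B_j := \{i_0\}$ for $j\ne 2$; then $v_{i_0}$, nonzero by the column-nonzero hypothesis, lies in every $V_{B_j}$ (trivially for $j\ne 2$, and via the dependence for $j=2$). Verifying the partition inequalities of Proposition~\ref{prop:null} for $(B_1,\ldots,B_\ell)$ using $|T|\le k$ and $\ell\ge 2$ is routine, so this tuple has the $k$-dimensional null intersection property with $|B_j|\le k$, contradicting the hypothesis.

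With $V$ now known to be MDS, the next task is to show $\dim(V_{A_1}\cap\cdots\cap V_{A_\ell})=\dim(W_{A_1}\cap\cdots\cap W_{A_\ell})$ for every tuple $(A_1,\ldots,A_\ell)$. MDS-ness of both $V$ and $W$ allows us to truncate each $A_i$ to size at most $k$ without affecting either intersection (any excess indices contribute the full ambient space). For truncated tuples, Proposition~\ref{prop:tian-rank} reduces the claim to $\rank \cG_{A_1,\ldots,A_\ell}[V] = \rank \cG_{A_1,\ldots,A_\ell}[W]$, which I would prove by induction on $\sum_i|A_i|$. The base case, where $(A_1,\ldots,A_\ell)$ already has the null intersection property, follows immediately: both intersections are trivial, the left by hypothesis and the right by definition. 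Otherwise Proposition~\ref{prop:null} supplies a partition $P_1\cup\cdots\cup P_s=[\ell]$ violating $\sum_i |\bigcap_{j\in P_i} A_j|\le (s-1)k$, which forces a coordinate $x$ to be common to $A_j$ for every $j$ in some part $P_i$. Quotienting $\F^k$ by $\langle v_x\rangle$ and removing $x$ from the $A_j$'s in that part produces a strictly smaller instance over a $(k-1)$-dimensional ambient space, to which induction applies.

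The main obstacle is carrying out the inductive reduction cleanly. The quotient step demands both (i) that $\dim(V_{A_1}\cap\cdots\cap V_{A_\ell})$ and $\dim(W_{A_1}\cap\cdots\cap W_{A_\ell})$ drop by the same amount when passing to the quotient, and (ii) that the resulting $(k-1)\times(n-1)$ matrix $\bar V$ inherits a null intersection hypothesis appropriate for ambient dimension $k-1$. Item (ii) is the delicate step: one must verify that every null-intersecting tuple for $\bar V$ lifts (after reinserting $x$) to a null-intersecting tuple for $V$, so that the inductive hypothesis is actually available. A direct comparison of the partition sums at levels $k$ and $k-1$ should make this work, but the bookkeeping around the coordinate $x$ is where the technical effort is concentrated.
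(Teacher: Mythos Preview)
The paper does not prove Proposition~\ref{prop:2.2}; it is cited from \cite{bgm2021mds}. Your forward direction and your bootstrap to ``$V$ is MDS'' are both correct (the partition check for your tuple $(B_1,\ldots,B_\ell)$ goes through exactly as you say).

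The gap is in the inductive quotient step, and it is exactly the obstacle you flag as item (ii). To apply induction to $\bar V := V/\langle v_x\rangle$ you need $\bar V$ to satisfy the null-intersection hypothesis at ambient dimension $k-1$. Unwinding the quotient, $\bar V_{B_1}\cap\cdots\cap\bar V_{B_\ell}=0$ is equivalent to $\bigcap_i V_{B_i\cup\{x\}}=\langle v_x\rangle$, i.e.\ to the $V$-intersection of the \emph{lifted} tuple $(B_1\cup\{x\},\ldots,B_\ell\cup\{x\})$ having dimension exactly one. But this lifted tuple never has the $k$-dimensional null intersection property (it has $x$ in common, so the $s=1$ partition inequality fails), so the hypothesis on $V$ says nothing about it. Adding $x$ to only one set does give a $k$-null-intersecting tuple, but the resulting $V$-intersection is $(V_{B_1}+\langle v_x\rangle)\cap V_{B_2}\cap\cdots\cap V_{B_\ell}$, which is not what you need. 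So the lift you propose does not close, and I do not see a straightforward repair that stays within an induction on $\sum_i|A_i|$.

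There is a much shorter route, exactly dual to the paper's own proof of Proposition~\ref{prop:MDS-sat}. After your MDS bootstrap and the truncation to $|A_i|\le k$, set $r=\rank\cG_{A_1,\ldots,A_\ell}[W]$. The $k$ identity columns of $\cG[W]$ are independent, so extend them to a column basis: there exist $A'_i\subseteq A_i$ with $k+\sum_i|A'_i|=r$ and $\cG_{A'_1,\ldots,A'_\ell}[W]$ of full column rank. By Proposition~\ref{prop:tian-rank} this forces $|A'_i|\le k$ and the $k$-dimensional null intersection property for $(A'_1,\ldots,A'_\ell)$. Your hypothesis then gives $\bigcap_i V_{A'_i}=0$, and MDS-ness gives $\dim V_{A'_i}=|A'_i|$, so Proposition~\ref{prop:tian-rank} yields $\rank\cG_{A'_1,\ldots,A'_\ell}[V]=r$. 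Since this is a column-submatrix of $\cG_{A_1,\ldots,A_\ell}[V]$, you get $\rank\cG_{A_1,\ldots,A_\ell}[V]\ge r$; the reverse inequality is genericity of $W$. Corollary~\ref{cor:rank-equiv} finishes. This avoids the quotient entirely and uses only what you have already established.
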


We also note the following simple observation will help motivate the lower relaxation of higher order MDS codes.

\begin{corollary}\label{cor:full-column-rank-mds}
Let $V$ be a $k \times n$-matrix and $W$ a generic $k \times n$-matrix. For all $\ell \ge 2$, we have that $V$ is $\MDS(\ell)$ if and only if for all $A_1, \hdots, A_\ell \subseteq [n]$, such that $\cG_{A_1, \hdots, A_\ell}[W]$ has full column rank, we have that (\ref{eq:V-W}) holds.
\end{corollary}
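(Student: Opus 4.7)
The plan is to handle the two directions separately. The forward direction is immediate from Corollary~\ref{cor:rank-equiv}: if $V$ is $\MDS(\ell)$, then $\rank \cG[V] = \rank \cG[W]$ for \emph{every} choice of $A_1,\hdots,A_\ell$, and in particular for those where $\cG[W]$ has full column rank. For the reverse direction, I would aim to apply Proposition~\ref{prop:2.2}, which characterizes $\MDS(\ell)$ in terms of the null intersection property. The main bridge between the hypothesis and that proposition is a translation of the full-column-rank condition on $\cG[W]$ into the language of null intersection.

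The key observation is that, for a generic $W$, $\cG_{A_1,\hdots,A_\ell}[W]$ has full column rank if and only if $|A_i|\le k$ for all $i$ and $A_1,\hdots,A_\ell$ satisfies the $k$-dimensional null intersection property. The bound $|A_i|\le k$ is forced, since otherwise the columns contributed by the single block $W|_{A_i}$ are already linearly dependent within $\cG[W]$. Granted $|A_i|\le k$, the generic matrix satisfies $\dim W_{A_i}=|A_i|$, and Proposition~\ref{prop:tian-rank} gives $\rank \cG[W] = k+\sum_i|A_i|-\dim(W_{A_1}\cap\cdots\cap W_{A_\ell})$, so full column rank is exactly equivalent to the intersection being trivial. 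Using this equivalence, the hypothesis of the corollary implies that for every $A_1,\hdots,A_\ell$ with $|A_i|\le k$ and the null intersection property, $\rank \cG[V]=k+\sum_i|A_i|$; since this is the number of columns, each block $V|_{A_i}$ has independent columns, and Proposition~\ref{prop:tian-rank} applied to $V$ then forces $V_{A_1}\cap\cdots\cap V_{A_\ell}=0$.

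Before invoking Proposition~\ref{prop:2.2}, I would verify its hypothesis that every column of $V$ is nonzero by a short direct argument: for any $j\in[n]$, take $A_1=\{j\}$ and $A_2=\cdots=A_\ell=\emptyset$. A quick inspection shows $\cG[W]$ has $k+1$ columns and full column rank (the rank-$k$ block of stacked $I_k$'s plus the column $(W_j,0,\hdots,0)^T$ with $W_j\ne 0$ generically). By hypothesis $\cG[V]$ has rank $k+1$ as well, which forces $V_j\ne 0$. With nonzero columns established and the null-intersection conclusion from the previous paragraph, Proposition~\ref{prop:2.2} immediately gives $V\in\MDS(\ell)$. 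The main obstacle here is essentially bookkeeping: correctly identifying the combinatorial meaning of full column rank of $\cG[W]$ and checking Proposition~\ref{prop:2.2}'s nonzero-column precondition. No genuinely new algebraic input is required; the argument simply threads Proposition~\ref{prop:tian-rank} through Proposition~\ref{prop:2.2}.
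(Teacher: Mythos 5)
Your proof is correct and follows essentially the same route as the paper's: forward direction from Corollary~\ref{cor:rank-equiv}, and reverse direction by translating full column rank of $\cG[W]$ into the null intersection property via Proposition~\ref{prop:tian-rank} and then invoking Proposition~\ref{prop:2.2}. In fact you are slightly more careful than the paper, which silently skips the verification that every column of $V$ is nonzero (a stated hypothesis of Proposition~\ref{prop:2.2}); your $A_1=\{j\}$, $A_2=\cdots=A_\ell=\emptyset$ argument supplies exactly the missing check, matching what the paper does in its appendix proof of Proposition~\ref{prop:lower-mds-alt}.
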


\begin{proof}
  The ``only if'' direction follows from Corollary~\ref{cor:rank-equiv}. For the ``if'' direction, consider any $A_1, \hdots, A_\ell \subseteq [n]$ with $|A_i| \le k$ with the $k$-dimensional null intersection property. Note by Proposition~\ref{prop:tian-rank} that $\cG_{A_1, \hdots, A_\ell}[W]$ has full column rank. Thus, by (\ref{eq:V-W}), we have that $\cG_{A_1, \hdots, A_\ell}[V]$ has full column rank. By Proposition~\ref{prop:tian-rank}, this can only happen if $V_{A_1} \cap \cdots \cap V_{A_\ell} = 0$. Thus, $V$ is $\MDS(\ell)$ by Proposition~\ref{prop:2.2}.
\end{proof}

\subsubsection{Saturation}

As noted, the null intersection property is closely related to $\cG$ having full column rank. We now consider the ``dual'' situation in which $\cG$ has full row rank.  We call this the \emph{saturation property}. This property is useful for motivating the upper relaxation of higher order MDS codes.

\begin{definition}\label{def:saturation}
Let $V \in \F^{k \times n}$.  We say that $A_1, \hdots, A_\ell \subseteq [n]$ are \emph{$V$-saturated} if
\[
  \rank \mathcal G_{A_1, \hdots, A_\ell}[V] = \ell k,
\]
i.e., $\cG_{A_1, \hdots, A_\ell}[V]$ has full row rank. We say the sets have the \emph{$k$-dimensional saturation property} if they are $W$-saturated, where $W$ is a $k\times n$ generic matrix.
\end{definition}

By Proposition~\ref{prop:tian-rank}, we have that $A_1, \hdots, A_\ell$ are $V$-saturated if and only if
\begin{align}
  \dim(V_{A_1} \cap \cdots \cap V_{A_\ell}) = \sum_{i=1}^{\ell} \dim(V_{A_i}) - (\ell-1)k.\label{eq:sat-dim}
\end{align}

We note that Proposition~\ref{prop:2.2} for null intersection has an analogous result for saturation.

\begin{proposition}\label{prop:MDS-sat}
For $\ell \ge 2$, a generator matrix $V \in \F^{k \times n}$ is $\MDS(\ell)$ if and only if for all $A_1, \hdots, A_\ell \subseteq [n]$ with the $k$-dimension saturation property we have that $A_1, \hdots, A_\ell$ are $V$-saturated.
\end{proposition}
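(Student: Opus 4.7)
The $(\Rightarrow)$ direction is immediate from Corollary~\ref{cor:rank-equiv}: if $V$ is $\MDS(\ell)$, then $\rank \cG_A[V] = \rank \cG_A[W]$ for every tuple $A = (A_1, \ldots, A_\ell)$, so $W$-saturation (i.e.\ $\rank \cG_A[W] = \ell k$) forces $\rank \cG_A[V] = \ell k$, which is $V$-saturation.

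For the converse, my plan is a two-step argument. First I would show that $V$ is MDS in the classical sense. Given any $B \subseteq [n]$ with $|B| = k$, apply the hypothesis to the tuple $(B, B, \ldots, B)$: since $W_B = \F^k$ for a generic $W$, Proposition~\ref{prop:tian-rank} gives $\rank \cG_{B,\ldots,B}[W] = k + \ell k - k = \ell k$, so this tuple is $W$-saturated. The hypothesis then yields $V$-saturation, and reapplying Proposition~\ref{prop:tian-rank} to $V$ gives $\dim V_B = k + \ell \dim V_B - \ell k$, which forces $\dim V_B = k$. Hence every $k$ columns of $V$ are linearly independent, so $V$ is MDS.

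Next, I would assume for contradiction that $V$ is not $\MDS(\ell)$. By Proposition~\ref{prop:2.2}, there exist sets $A_1, \ldots, A_\ell$ with $|A_i| \le k$ having the null intersection property and $\bigcap_i V_{A_i} \ne 0$. I plan to extend these to $B_i \supseteq A_i$ still satisfying $|B_i| \le k$ and the null intersection property, but now with $\sum_i |B_i| = (\ell-1)k$, by iteratively adding fresh indices from $[n] \setminus \bigcup_j A_j$ to sets $A_i$ with $|A_i| < k$ (such an $i$ always exists, since $|A_i| = k$ for all $i$ would violate the trivial-partition bound in Proposition~\ref{prop:null}). For this extended tuple $\cG_B$ is a square $\ell k \times \ell k$ matrix, and the null intersection property gives it full column rank, hence also full row rank; by Proposition~\ref{prop:tian-rank}, $\rank \cG_B[W] = \ell k$, so $(B_1, \ldots, B_\ell)$ is $W$-saturated. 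On the other hand, $V_{B_i} \supseteq V_{A_i}$ implies $\bigcap_i V_{B_i} \supseteq \bigcap_i V_{A_i} \ne 0$, and since $V$ is MDS we have $\dim V_{B_i} = |B_i|$, so Proposition~\ref{prop:tian-rank} gives $\rank \cG_B[V] = \ell k - \dim(\bigcap_i V_{B_i}) < \ell k$. Thus $(B_1, \ldots, B_\ell)$ fails $V$-saturation, contradicting the hypothesis.

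The main technical obstacle is verifying the extension step. Concretely, while $\sum |A_i| < (\ell-1)k$ one must locate an $i$ with $|A_i| < k$ and a fresh $x \in [n] \setminus \bigcup_j A_j$ whose addition to $A_i$ does not violate any partition bound in Proposition~\ref{prop:null}. Because a fresh element only increments those partition sums listing $\{i\}$ as a singleton part, the slack from the trivial partition together with a short case analysis of which non-trivial partitions are tight should guarantee a valid choice of $i$; I plan to isolate this as a small combinatorial lemma. Feasibility also tacitly uses that enough fresh indices exist (e.g.\ $n \ge (\ell-1)k$), which holds under the standing regime where $V$ is the generator matrix of a nontrivial $(n,k)$-code.
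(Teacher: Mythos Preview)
Your forward direction matches the paper. Your converse takes a genuinely different route (via Proposition~\ref{prop:2.2} and a null-intersection extension), but the extension step as you describe it does not work.

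The restriction to \emph{fresh} indices $x \in [n] \setminus \bigcup_j A_j$ is fatal, and your parenthetical that ``$n \ge (\ell-1)k$ holds under the standing regime'' is simply false: the proposition carries no such hypothesis. Concretely, take $n=3$, $k=2$, $\ell=3$, and $A_1=\{1\}$, $A_2=\{2\}$, $A_3=\{3\}$. These have the $2$-dimensional null intersection property with $\sum|A_i|=3<4=(\ell-1)k$, yet $[n]\setminus\bigcup_j A_j=\emptyset$, so there are no fresh indices at all. An extension does exist here (e.g.\ enlarge $A_1$ to $\{1,2\}$), but it requires reusing an index already present in another $A_j$. Your ``only singleton parts are affected'' bookkeeping breaks down once you allow non-fresh additions, so the case analysis you sketch will not go through as stated.

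The clean fix is to argue the extension at the level of ranks rather than combinatorially: since $\cG_{A_1,\ldots,A_\ell}[W]$ has full column rank and $\cG_{[n],\ldots,[n]}[W]$ has full row rank $\ell k$, one can greedily add columns (each corresponding to inserting some $x\in[n]\setminus A_i$ into some $A_i$ with $|A_i|<k$) until the matrix is square of rank $\ell k$. But once you do this, you have essentially reproduced the paper's argument, which carries it out directly for an arbitrary tuple $A_1,\ldots,A_\ell$: extend to $A'_i\supseteq A_i$ with $\sum_i|A'_i\setminus A_i|=\ell k - r$ reaching saturation, apply the hypothesis to get $\rank\cG_{A'}[V]=\ell k$, and deduce $\rank\cG_A[V]\ge r$; genericity of $W$ gives equality, and Corollary~\ref{cor:rank-equiv} finishes. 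This avoids the detour through establishing MDS and invoking Proposition~\ref{prop:2.2} entirely.
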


\begin{proof}
First, if $V$ is $\MDS(\ell)$, then for any $A_1, \hdots, A_\ell \subseteq [n]$ with the $k$-dimension saturation property we have by Corollary~\ref{cor:rank-equiv} that $A_1, \hdots, A_\ell$ are $V$-saturated.

Conversely, for all $A_1, \hdots, A_\ell \subseteq [n]$ with the $k$-dimension saturation property we have that $A_1, \hdots, A_\ell$ are $V$-saturated. By Corollary~\ref{cor:rank-equiv}, we seek to show that (\ref{eq:V-W}) holds for all $A_1, \hdots, A_\ell \subseteq [n]$. Let $r = \rank \cG_{A_1, \hdots, A_\ell}[W]$ and observe that $\ell k = \rank \cG_{[n], \hdots, [n]}[W]$. Thus, there exist $A'_1, \hdots, A'_\ell \subseteq [n]$ with $\sum_{i=1}^{\ell} |A'_i \setminus A_i| = \ell k - r$ such that $\ell k = \rank \cG_{A'_1, \hdots, A'_\ell}[W]$. Hence, $A'_1, \hdots, A'_\ell$ have the $k$-dimensional saturation property.

Thus, by assumption, we have that $\ell k = \rank \cG_{A'_1, \hdots, A'_\ell}[V]$, so $\rank \cG_{A_1, \hdots, A_\ell}[V] \ge \ell k - (\ell k - r) = r$. Since a generic matrix maximizes rank, we have that $\rank \cG_{A_1, \hdots, A_\ell}[V] = r$, as desired.
\end{proof}

\subsubsection{MR Tensor Codes}

Let $\mathcal E_{a,b,p}^{m,n}$ be the set of all subsets $E \subseteq [m] \times [n]$ which are correctable by a $(m, n, a, b)$-MR-tensor code over a field of characteristic $p$. If the base field is clear from context, we drop mention of $p$. It is easy to see that this family is monotone in $a$ and $b$: for all $a' \le a$ and $b' \le b$, we have that $\mathcal E_{a',b'}^{m,n} \subseteq \mathcal E_{a,b}^{m,n}$. 

We also let $\mathcal P_{a,b}^{m,n} \subset \mathcal E_{a,b}^{m,n}$ be the set of maximal patterns. In particular, if $m \ge a$ and $n \ge b$, then $\mathcal P_{a,b}^{m,n} = \{E \in \mathcal E_{a,b}^{m,n} : |E| = bm+an-ab\}$. We recall the regularity theorem of \cite{Gopalan2016}.

\begin{theorem}[\cite{Gopalan2016}]\label{thm:regularity}
  For $E \subseteq [m] \times [n]$, we have that $E \in \mathcal E_{a,b}^{m,n}$ only if for all $S \subseteq [m]$ with $|S| \ge a$ and $T \subseteq [n]$ with $|T| \ge b$, we have that
  \[
    |E \cap (S \times T)| \ge b|S| + a|T| - ab.
  \]
  Further, this description of $\mathcal E_{a,b}^{m,n}$ is complete if $a=1$.
\end{theorem}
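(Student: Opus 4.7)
The plan is to prove the regularity theorem by a careful rank analysis of the parity-check structure of the tensor code $C = C_{col} \otimes C_{row}$, exploiting its tensor-product structure with respect to the MDS component codes. The key structural observation is that restricting $C$ to coordinates in $S \times T$ yields $C|_{S \times T} = C_{col}|_S \otimes C_{row}|_T$, and the $b|S|$ row-based parity constraints (for rows in $S$) together with the $a|T|$ column-based parity constraints (for columns in $T$), minus the $ab$ dependencies arising from the tensor-dual intersection $C_{col}^\perp \otimes C_{row}^\perp$, yield $b|S| + a|T| - ab$ independent parity-check constraints associated with the rectangle. I would first nail down this counting lemma using the duality description of $C^\perp$ as $C_{col}^\perp \otimes \F^n + \F^m \otimes C_{row}^\perp$.

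For the necessity direction, I would reformulate correctability of $E \in \mathcal{E}_{a,b}^{m,n}$ as the statement that the parity-check matrix $H$ of $C$ has full column rank on the columns indexed by $E$ (equivalently, no nonzero codeword of $C$ is supported on $E$). Restricting $H$ to the rows corresponding to the $b|S| + a|T| - ab$ parity checks associated with the rectangle $S \times T$ and to the columns indexed by $E \cap (S \times T)$, a rank inequality gives the desired bound relating $|E \cap (S \times T)|$ to $b|S| + a|T| - ab$. A careful lifting argument is needed to ensure that a failure of the regularity bound actually produces a nonzero codeword of $C$ whose support lies in $E$ (not merely a local rank deficiency on the rectangle); this lifting uses the MDS freedom in $C_{col}$ and $C_{row}$ to extend a deficient local configuration into a consistent global codeword by inductively filling in coordinates row-by-row and column-by-column.

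For sufficiency when $a = 1$, the column code $C_{col}$ is a single parity-check code, converting the problem into a tractable combinatorial one. I would construct a peeling-style iterative decoder: rows whose number of remaining erasures is at most $b$ can be decoded using the MDS property of $C_{row}$, and once a row is decoded each column parity-check of $C_{col}$ (a single linear constraint) can potentially resolve one further erasure in another row. I would model this as a bipartite matching problem between rows and residual erasure positions, and apply Hall's marriage theorem to translate the regularity condition into the existence of a valid decoding order that avoids getting stuck. The main obstacle will be precisely setting up the bipartite graph so that Hall's condition coincides exactly with the regularity bound (not merely implies or is implied by it), and verifying inductively that the iterative peeling always has a row to decode next until the entire codeword is recovered.
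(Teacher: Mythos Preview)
The paper does not prove this theorem; it is quoted from \cite{Gopalan2016} and used as a black box. So there is no ``paper's own proof'' to compare against, and your task is really to reconstruct the original argument.

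\textbf{Necessity.} Your parity-check approach has a concrete gap. The $b|S|$ row checks you single out (those coming from $C_{row}^\perp$ applied to rows in $S$) are supported on $S\times[n]$, not on $S\times T$; likewise the $a|T|$ column checks are supported on $[m]\times T$. Restricting $H$ simultaneously to these $b|S|+a|T|-ab$ rows and to the columns $E\cap(S\times T)$ therefore discards the contribution of coordinates in $E\cap\bigl((S\times\bar T)\cup(\bar S\times T)\bigr)$, and the resulting submatrix rank tells you nothing directly about $|E\cap(S\times T)|$. The ``lifting'' you sketch does not repair this, because there is no reason a local rank deficiency on those checks extends to a global dependency.

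The clean argument is on the primal side. Let $C_{col}^{(S)}$ and $C_{row}^{(T)}$ denote the \emph{shortenings} of the component codes to $S$ and $T$. For any linear $(m,m-a)$ code, $\dim C_{col}^{(S)}\ge |S|-a$ whenever $|S|\ge a$ (and similarly for $C_{row}^{(T)}$). The tensor $C_{col}^{(S)}\otimes C_{row}^{(T)}$, padded by zeros outside $S\times T$, sits inside $C_{col}\otimes C_{row}$ and is supported on $S\times T$. If $E$ is correctable then restriction to $(S\times T)\setminus E$ is injective on this subcode, forcing
\[
(|S|-a)(|T|-b)\ \le\ \dim\bigl(C_{col}^{(S)}\otimes C_{row}^{(T)}\bigr)\ \le\ |S||T|-|E\cap(S\times T)|,
\]
which rearranges to the regularity bound. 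No inductive filling-in is needed; the ``lift'' is just zero-padding.

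\textbf{Sufficiency for $a=1$.} Your peeling-plus-Hall plan is the right flavor, but be warned that naive peeling can stall even on regular patterns: it is easy to build examples where every undecoded row has $\ge b+1$ erasures and every undecoded column has $\ge 2$ erasures, yet the regularity inequalities all hold. So the content really is in the Hall/matching step, not in the peeling. One way that works is to set up a bipartite graph between the $bm$ ``row-parity slots'' (rows $\times$ $[b]$) and the erasure positions, and show that the regularity condition is exactly Hall's condition for a system of distinct representatives after accounting for the single column parity; equivalently, use the equivalence (which the present paper records as Proposition~\ref{prop:sat-equiv}) between $E\in\mathcal E^{m,n}_{1,b}$ and the $(n-b)$-dimensional saturation property of the complementary row sets, and then characterize saturation combinatorially via a Rado/Hall-type argument. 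Getting Hall's condition to line up \emph{exactly} with the regularity inequalities is where the work is; your proposal correctly flags this as the main obstacle.
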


The following restate some results in \cite{bgm2021mds}, their proofs are in Appendix~\ref{app:omit}. Correctable erasure patterns in the case of $a=1$ are related to sets with the saturation property.

\begin{restatable}{proposition}{propmrtensor}\label{prop:mr-tensor}
Let $C_1$ be a $(m,m-a)$-code with generator matrix $U$. Likewise, let $C_2$ be an $(n,n-b)$-code with generator matrix $V$. Let $E \subseteq [m] \times [n]$. Also let $A_1, \hdots, A_m \subseteq [n]$ such that $\bar{E} = \bigcup_{i=1}^{m} \{i\} \times A_i$ the following are equivalent:

\begin{enumerate}
\item[(a)] $E$ is a correctable pattern.
\item[(b)] $\sum_{(i,j) \in \bar{E}} \langle U_i \otimes V_j\rangle = \F^{m-a} \otimes \F^{n-b}.$
\item[(c)] If $a = 1$ and $C_1$ is MDS, $\rank \mathcal G_{A_1, \hdots, A_m}[V] = m(n-b)$, i.e., $A_1, \hdots, A_m$ are $V$-saturated.
\end{enumerate}
\end{restatable}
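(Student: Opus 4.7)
The proposition asserts a three-way equivalence, of which (a)$\Leftrightarrow$(b) holds in general and (b)$\Leftrightarrow$(c) holds under the extra hypotheses $a=1$ and $C_1$ MDS. I would treat the two equivalences separately.

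For (a)$\Leftrightarrow$(b), I would use the standard identification of codewords of $C_1\otimes C_2$ with $m\times n$ matrices of the form $M = U^\top X V$ for a message $X\in\F^{(m-a)\times(n-b)}$. The $(i,j)$-entry is then $U_i^\top X V_j=\langle X,\,U_i\otimes V_j\rangle$ under the Frobenius/tensor pairing, where $U_i$ is the $i$-th column of $U$ and $V_j$ is the $j$-th column of $V$. Correctability of $E$ means that the evaluation map $X\mapsto (U_i^\top X V_j)_{(i,j)\in\bar E}$ is injective, and since its image is dual to the span $\sum_{(i,j)\in\bar E}\langle U_i\otimes V_j\rangle\subseteq\F^{m-a}\otimes\F^{n-b}$, injectivity of the map is equivalent to that span being the whole tensor space. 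This gives (a)$\Leftrightarrow$(b).

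For (b)$\Leftrightarrow$(c) I would rewrite the sum as $\sum_{i=1}^m U_i\otimes V_{A_i}$ and compute its dimension via the linear map
\[\phi:\bigoplus_{i=1}^{m} V_{A_i}\longrightarrow \F^{m-1}\otimes\F^{n-b},\qquad (x_1,\dots,x_m)\longmapsto\sum_{i=1}^m U_i\otimes x_i.\]
Since $a=1$ and $C_1$ is MDS, the $m$ columns $U_1,\dots,U_m\in\F^{m-1}$ satisfy a unique (up to scalar) linear relation $\sum_i c_iU_i=0$ with every $c_i\neq 0$. Reading $\sum_i U_i\otimes x_i=0$ coordinate-wise in $\F^{n-b}$ shows that $(x_1,\dots,x_m)\in\ker\phi$ iff there exists $y\in\F^{n-b}$ with $x_i=c_iy$ for all $i$; since $c_i\neq 0$, this forces $y\in V_{A_1}\cap\cdots\cap V_{A_m}$ and conversely any such $y$ gives an element of $\ker\phi$. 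Hence $\dim\ker\phi = \dim(V_{A_1}\cap\cdots\cap V_{A_m})$, and by the rank-nullity argument
\[\dim\sum_{i=1}^m U_i\otimes V_{A_i} \;=\; \sum_{i=1}^m\dim V_{A_i}-\dim(V_{A_1}\cap\cdots\cap V_{A_m}).\]
Condition (b) is that this dimension equals $(m-1)(n-b)$, which after rearrangement is exactly equation (\ref{eq:sat-dim}) for $V$ with $k=n-b$ and $\ell=m$, i.e., $V$-saturation of $A_1,\dots,A_m$; this is (c).

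The only place requiring genuine care is the identification of $\ker\phi$ with the intersection $V_{A_1}\cap\cdots\cap V_{A_m}$: one must simultaneously use the MDS hypothesis on $U$ (to pin down a unique nonzero relation with all coefficients nonzero) and the assumption $x_i\in V_{A_i}$ (to conclude that the common vector $y$ lies in every $V_{A_i}$). All other steps are bookkeeping and a direct application of Proposition~\ref{prop:tian-rank} in the form (\ref{eq:sat-dim}).
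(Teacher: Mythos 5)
Your argument is correct. The (a)$\Leftrightarrow$(b) step is essentially identical to the paper's: both reduce correctability to injectivity of the restriction map on $C_1\otimes C_2$ and then to the columns $\{U_i\otimes V_j\}_{(i,j)\in\bar E}$ spanning $\F^{m-a}\otimes\F^{n-b}$. For (b)$\Leftrightarrow$(c), however, you take a genuinely different route. The paper works directly with the block matrix: it augments $(U\otimes V)|_{\bar E}$ by an identity block, uses the unique relation $\sum_i\lambda_iU_i=0$ (all $\lambda_i\neq 0$) to perform explicit row and column operations, and literally transforms the resulting matrix into $\cG_{A_1,\hdots,A_m}[V]$, so that the rank condition (c) appears by construction. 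You instead compute $\dim\sum_iU_i\otimes V_{A_i}$ abstractly: identifying $\ker\phi$ with $V_{A_1}\cap\cdots\cap V_{A_m}$ via the same unique relation (writing the kernel condition as $UX=0$, so each column of $X$ is proportional to the relation vector), applying rank--nullity, and then invoking Proposition~\ref{prop:tian-rank} in the form (\ref{eq:sat-dim}) to translate the resulting dimension identity into the rank statement. Your version is shorter and more conceptual, and as a bonus it re-derives the dimension formula that underlies the remark following the proposition; the paper's version is more constructive, in that it exhibits the concrete matrix equivalence between the tensor-code check matrix and $\cG_{A_1,\hdots,A_m}[V]$ rather than only matching dimensions. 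Both arguments use the MDS hypothesis on $C_1$ in exactly the same place (to guarantee a one-dimensional relation space with all coefficients nonzero), and I see no gap in yours.
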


\begin{remark}
  Likewise, from the results of \cite{bgm2021mds}, one can prove that the vectors $\{U_i \otimes V_j : (i,j) \in \bar{E}\}$ are independent if and only if $G_{A_1, \hdots, A_m}[V]$ has full column rank, i.e., $V_{A_1} \cap \hdots \cap V_{A_\ell} = 0$ and $\dim V_{A_i} = |A_i|$ for all $i \in [m]$.

  Thus, sets with the null intersection property are related to independent sets in the tensor product matroid, i.e., $A_1,A_2,\dots,A_m$ have $k$-dimensional null-intersection property if generically $\{U_i \otimes V_j : j\in A_i\}$ are independent in $\F^{m-1} \otimes \F^{k}$. Likewise, sets with the saturation property are related to spanning sets in the tensor product matroid, i.e., $A_1,A_2,\dots,A_m$ have $k$-dimensional saturation property if $\{U_i \otimes V_j : j\in A_i\}$ span $\F^{m-1} \otimes \F^{k}$.
\end{remark}

\begin{restatable}[Implicit in \cite{bgm2021mds}]{lemma}{lemnoOfMRpatterns}\label{lem:noOfMRpatterns}
  Let $C_1 \subseteq \F^m$ and $C_2 \subseteq \F^n$ be codes with distance at least $a+1$ and $b+1$ respectively. Then, there exists a set $\mathcal P_{a,b}^{m,n} \subseteq \mathcal E^{m,n}_{a,b}$ such that $C_1 \otimes C_2$ is $(a,b)$-$\MR$ if and only if for all $P \in \mathcal P_{a,b}^{m,n}$ we have that $C_1 \otimes C_2$ can correct erasure pattern $P$ and
\[
  |\mathcal P_{a,b}^{m,n}| \le \min\left[2^{mn}, \binom{n}{\le b(m-a)} \binom{bm(m-a)}{\le b(a+1)(m-a)}, \binom{m}{\le a(n-b)} \binom{an(n-b)}{\le a(b+1)(n-b)}\right].
\]
\end{restatable}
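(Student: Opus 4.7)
The plan is to reduce checking the MR condition to a small set of canonical maximal correctable patterns, and then bound its size three different ways: a trivial set-theoretic bound, a column-based bound, and a symmetric row-based bound. The heart of the argument is to ``quotient out'' choices of erasure pattern that do not affect correctability via the MDS property of $C_1$ (resp. $C_2$).

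First I would argue it suffices to populate $\mathcal P_{a,b}^{m,n}$ with only maximal patterns (those $E\in\mathcal E_{a,b}^{m,n}$ with $|E|=bm+an-ab$): the family $\mathcal E_{a,b}^{m,n}$ is downward-closed, and Proposition~\ref{prop:mr-tensor}(b) supplies the extension in the other direction---if $|\bar E|>(m-a)(n-b)$, the spanning set $\{U_i\otimes V_j:(i,j)\in\bar E\}$ has strictly more than $\dim(\F^{m-a}\otimes\F^{n-b})$ elements, so some vector is redundant and removing it enlarges $E$ while keeping it in $\mathcal E_{a,b}^{m,n}$. Next, I would canonicalize each maximal $E$ via its heavy columns. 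Define $T^*(E):=\{j\in[n]:|E\cap([m]\times\{j\})|\ge a+1\}$ and $F(E):=E\cap([m]\times T^*)$. The regularity condition (Theorem~\ref{thm:regularity}) applied to $(S,T)=([m],T^*)$ gives $|F|\le bm+a|T^*|-ab$, which combined with the trivial $|F|\ge(a+1)|T^*|$ forces $|T^*|\le b(m-a)$ and hence $|F|\le b(a+1)(m-a)$; maximality of $E$ then forces each column of $[n]\setminus T^*$ to contain exactly $a$ erasures. Using that $C_1$ is MDS of distance $\ge a+1$, any $m-a$ rows of its generator matrix $U$ span $\F^{m-a}$, so the per-column contributions from $[n]\setminus T^*$ collapse and
\[
\Span\{U_i\otimes V_j:(i,j)\in\bar E\}=\Span\{U_i\otimes V_j:(i,j)\in\bar F\}+\F^{m-a}\otimes\Span\{V_j:j\in[n]\setminus T^*\}.
\]
The right-hand side depends only on $(T^*,F)$; by Proposition~\ref{prop:mr-tensor}(b) the correctability of $E$ is then determined by $(T^*,F)$ alone, and we may take $\mathcal P_{a,b}^{m,n}$ to contain one canonical representative per valid pair, e.g. $E_{T^*,F}:=F\cup([a]\times([n]\setminus T^*))$. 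Counting the pairs yields $|\mathcal P_{a,b}^{m,n}|\le\binom{n}{\le b(m-a)}\binom{bm(m-a)}{\le b(a+1)(m-a)}$, since $[m]\times T^*$ has at most $bm(m-a)$ cells.

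The symmetric bound $\binom{m}{\le a(n-b)}\binom{an(n-b)}{\le a(b+1)(n-b)}$ follows by swapping the roles of $C_1,C_2$ (rows and columns) and using that $C_2$ is MDS of distance $\ge b+1$; together with the trivial $2^{mn}$ bound from the power set of $[m]\times[n]$, choosing $\mathcal P_{a,b}^{m,n}$ to be whichever of the three resulting sets is smallest delivers the desired $\min$. The main technical hurdle is justifying the displayed span identity: one must verify that the choice of which $a$ rows are erased within each column of $[n]\setminus T^*$ is immaterial. This reduces to the fact that the $m-a$ surviving rows in any such column always form a basis of $\F^{m-a}$ under $U$ by the MDS property of $C_1$, making each per-column contribution equal to $\F^{m-a}\otimes\langle V_j\rangle$ regardless of which rows are erased, so the sum over $j\in[n]\setminus T^*$ collapses cleanly to $\F^{m-a}\otimes\Span\{V_j:j\in[n]\setminus T^*\}$.
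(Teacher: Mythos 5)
Your argument is correct and the heart of it coincides with the paper's: both proofs derive the bound on the number of heavy columns by playing the regularity upper bound $|E\cap([m]\times T)|\le bm+a|T|-ab$ against the trivial lower bound of at least $a+1$ erasures per heavy column, concluding $|T^*|\le b(m-a)$ and $|F|\le b(a+1)(m-a)$, and then enumerate a choice of at most $b(m-a)$ columns together with at most $b(a+1)(m-a)$ cells inside them (the $2^{mn}$ bound and the row-based bound by symmetry are handled identically). Where you genuinely differ is the reduction that precedes the count. The paper passes to \emph{reduced} patterns: since $C_1$ and $C_2$ have distance $a+1$ and $b+1$, any row with at most $b$ erasures and any column with at most $a$ erasures can be locally decoded and hence emptied, so it suffices to check patterns in which every nonempty column carries at least $a+1$ erasures, and these are counted directly. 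You instead extend every pattern to a \emph{maximal} one (via redundancy in the spanning set of Proposition~\ref{prop:mr-tensor}(b)) and canonicalize maximal patterns by the pair $(T^*,F)$, using the MDS property of $C_1$ to show that each light column contributes exactly $\F^{m-a}\otimes\langle V_j\rangle$ regardless of which $a$ rows are erased in it. Both reductions rest on the same MDS fact; yours needs the two extra (correct, but worth spelling out) observations that maximal extensions exist and that every column of a maximal pattern has at least $a$ erasures, which follows from regularity applied to $([m],[n]\setminus\{j\})$, whereas the paper's local-decoding reduction is a one-liner. Two small points to tidy: in your displayed span identity $\bar F$ must be read as the complement of $F$ inside $[m]\times T^*$ (as the complement in $[m]\times[n]$ the stated equality is false, though your surrounding prose makes the intent clear), and the regularity application to $(S,T)=([m],T^*)$ requires $|T^*|\ge b$; when $|T^*|<b$ the bounds $|T^*|\le b(m-a)$ and $|F|\le m|T^*|\le b(a+1)(m-a)$ hold trivially, so the count is unaffected.
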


\section{Relaxed Higher Order MDS Codes}\label{sec:relaxed}

In this section, we define lower and upper relaxations of higher order MDS codes. We then show that lower relaxation is equivalent (up to duality) to relaxed version of optimal list-decodable codes ($\LDMDS$), and upper relaxation is equivalent to relaxed version of MR tensor codes (with single column parity check).

\subsection{Lower and Upper Relaxations of Higher Order MDS Codes}

As discussed in Section~\ref{sec:prelim}, there are many equivalent criteria for checking if a code is a higher order MDS code, including Corollary~\ref{cor:full-column-rank-mds} and Proposition~\ref{prop:MDS-sat}. Each of these criteria is a logical conjunction of many algebraic conditions (e.g., a matrix is of a specific rank, etc.). As such, these formulations of a higher order MDS code can be relaxed by only imposing that a (suitable) subset of the algebraic conditions hold. The precise relaxation needs to be carefully chosen so that many ``natural'' properties of higher order MDS codes continue to hold in these relaxed versions.

More precisely, our \emph{lower} relaxation (Definition~\ref{def:lower-mds}) relaxes Corollary~\ref{cor:full-column-rank-mds} so that our code looks like a higher order MDS code of \emph{smaller} rate; while our \emph{upper} relaxation (Definition~\ref{def:upper-mds}) relaxes Proposition~\ref{prop:MDS-sat} to look like a higher order MDS code of \emph{larger} rate.

\subsubsection{Lower Relaxation} The lower relaxation checks the condition of Corollary~\ref{cor:full-column-rank-mds} on a smaller number of families of sets.

\begin{definition}[Lower Relaxation]\label{def:lower-mds}
  Let $\ell \ge 2$ and $n \ge k \ge d \ge 0$. Let $W$ be a $(k-d) \times n$ generic matrix. We say that a $(n,k)$-code $C$ with generator matrix $G$ is $\rMDS_d(\ell)$ if $\mathcal G_{A_1, \hdots, A_\ell}[G]$ has full column rank whenever $\mathcal G_{A_1, \hdots, A_\ell}[W]$ has full column rank.
\end{definition}

We now note an equivalent definition, whose equivalence we prove in Appendix~\ref{app:omit}.

\begin{restatable}{proposition}{propA}\label{prop:lower-mds-alt}
  Let $\ell \ge 2$ and $n \ge k \ge d \ge 0$. We say that a $(n,k)$-code $C$ with generator matrix $G$ is $\rMDS_d(\ell)$ if and only if (1) every column of $G$ is nonzero and (2) for all $A_1, \hdots, A_\ell \subseteq [n]$ of size at most $k-d$ with the $k-d$-dimensional null intersection property we have that $G_{A_1} \cap \cdots \cap G_{A_\ell} = 0.$
\end{restatable}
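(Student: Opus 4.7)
The plan is to apply Proposition~\ref{prop:tian-rank} to translate the full-column-rank condition on $\mathcal{G}_{A_1,\ldots,A_\ell}[V]$ into two combinatorial conditions. Specifically, the rank formula gives that $\mathcal{G}_{A_1,\ldots,A_\ell}[V]$ attains its maximum column rank $k + \sum_i |A_i|$ if and only if both $\dim V_{A_i} = |A_i|$ for every $i$ and $V_{A_1} \cap \cdots \cap V_{A_\ell} = 0$. When $V = W$ is a generic $(k-d) \times n$ matrix, the first condition holds precisely when $|A_i| \le k-d$ for all $i$, and the second is exactly the $(k-d)$-dimensional null intersection property. Thus $\mathcal{G}_{A_1,\ldots,A_\ell}[W]$ has full column rank if and only if $|A_i| \le k-d$ for all $i$ and $A_1, \ldots, A_\ell$ have the $(k-d)$-dimensional null intersection property; this reformulation drives both directions of the proof.

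For the forward direction, I assume $C$ is $\rMDS_d(\ell)$. To verify (1), for each $j \in [n]$ I consider the sets $A_1 = \{j\}$ and $A_2 = \cdots = A_\ell = \emptyset$, which trivially satisfy the reformulated hypothesis; hence $\mathcal{G}[G]$ has full column rank, forcing $G_j \ne 0$. Condition (2) is then immediate: sets of the form appearing in (2) satisfy the reformulated hypothesis, so $\mathcal{G}[G]$ has full column rank and in particular $G_{A_1} \cap \cdots \cap G_{A_\ell} = 0$.

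For the backward direction, I assume (1) and (2) and fix $A_1, \ldots, A_\ell$ for which $\mathcal{G}[W]$ has full column rank. The reformulation yields $|A_i| \le k-d$ for all $i$ together with the $(k-d)$-null intersection property, so (2) gives $G_{A_1} \cap \cdots \cap G_{A_\ell} = 0$. It remains to show $\dim G_{A_i} = |A_i|$ for each $i$. Assume toward contradiction that $\{G_j : j \in A_i\}$ is linearly dependent for some $i$, and select a minimal dependent subset $A' \subseteq A_i$; since every column of $G$ is nonzero by (1), we must have $|A'| \ge 2$. Pick any $j^* \in A'$; by minimality $G_{j^*} \in \Span\{G_j : j \in A' \setminus \{j^*\}\}$. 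Set $B_1 := A' \setminus \{j^*\}$ and $B_2 = \cdots = B_\ell := \{j^*\}$. These sets all have size at most $k-d$, and a direct rank computation shows they satisfy the $(k-d)$-null intersection property (the nontrivial piece is $W_{B_1} \cap W_{\{j^*\}} = 0$, which holds generically since $B_1$ and $\{j^*\}$ are disjoint with $|B_1|+1 = |A'| \le k-d$). However $G_{B_1} \cap G_{\{j^*\}}$ contains the nonzero vector $G_{j^*}$, contradicting condition~(2).

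The main obstacle is this last independence argument: one must construct auxiliary sets $B_1, \ldots, B_\ell$ that simultaneously satisfy the $(k-d)$-null intersection property (so condition~(2) applies) and witness a forced nonzero intersection of the corresponding column spans of $G$. Everything else reduces to routine rank bookkeeping via Proposition~\ref{prop:tian-rank}.
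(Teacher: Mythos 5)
Your proposal is correct and follows essentially the same route as the paper: both directions hinge on Proposition~\ref{prop:tian-rank} to translate full column rank of $\cG_{A_1,\hdots,A_\ell}[W]$ into the size bound $|A_i|\le k-d$ plus the $(k-d)$-dimensional null intersection property, and your forward direction (including the choice $A_1=\{j\}$, $A_2=\cdots=A_\ell=\emptyset$ for condition (1)) is identical to the paper's. The only difference is cosmetic: to show $\dim G_{A_i}=|A_i|$ you extract a minimal dependent subset $A'$ and contradict condition (2) applied to $B_1=A'\setminus\{j^*\}$ and copies of $\{j^*\}$, whereas the paper proves the same fact by induction on $|A|$ via a disjoint bipartition $A=A_1\cup A_2$ --- both reduce to condition (2) on a disjoint pair of sets of total size at most $k-d$.
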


We now state a few basic properties of this relaxation, many of which are analogous to results in \cite{bgm2021mds,roth2021higher}. These are proved in Appendix~\ref{app:omit}.

\begin{restatable}{proposition}{propB}\label{prop:lower-mds}
Let $C$ be an $(n,k)$-code with generator matrix $G$. Let $\ell \ge 2$.
\begin{itemize}
\item[(a)] $C$ is $\rMDS_0(\ell)$ if and only if $C$ is $\MDS(\ell)$.  
\item[(b)] If $C$ is $\rMDS_d(\ell)$, then $C$ is $\rMDS_{d'}(\ell)$ for all $d' \in \{d+1, \hdots, k\}$.
\item[(c)] If $C$ is $\rMDS_d(\ell)$, then $C$ is $\rMDS_d(\ell')$ for all $\ell' \in \{2, \hdots, \ell-1\}$.
\item[(d)] $C$ is $\rMDS_d(2)$ if and only if every $k-d$ columns of $G$ are linearly independent. 
\end{itemize}
\end{restatable}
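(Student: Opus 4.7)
The plan is to reduce each of the four claims, via the alternative characterization in Proposition~\ref{prop:lower-mds-alt}, to a statement about the combinatorial null intersection property of Proposition~\ref{prop:null}. Part (a) is essentially a rewording: setting $d=0$ in Definition~\ref{def:lower-mds} uses a generic $k \times n$ matrix $W$, and the condition ``$\cG[G]$ has full column rank whenever $\cG[W]$ has full column rank'' matches the characterization of $\MDS(\ell)$ in Corollary~\ref{cor:full-column-rank-mds}, because (\ref{eq:V-W}) reduces to full column rank of $\cG[G]$ exactly when $\cG[W]$ is already of full column rank.

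For (b), I would take any set family $A_1,\ldots,A_\ell$ of sizes at most $k-d'$ having the $(k-d')$-dimensional null intersection property. Since $d' \ge d$, these sizes are also at most $k-d$, and the partition inequality $\sum_i \left|\bigcap_{j\in P_i} A_j\right| \le (s-1)k'$ in Proposition~\ref{prop:null} only weakens as $k'$ grows from $k-d'$ to $k-d$. Hence the same family also has the $(k-d)$-dimensional null intersection property, the hypothesis $\rMDS_d(\ell)$ gives $G_{A_1} \cap \cdots \cap G_{A_\ell} = 0$, and the nonzero column condition in Proposition~\ref{prop:lower-mds-alt} is inherited directly. Iterating gives the full range $d' \in \{d+1,\ldots,k\}$.

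For (c), given $\ell' < \ell$ sets $A_1,\ldots,A_{\ell'}$ of size at most $k-d$ with the $(k-d)$-dimensional null intersection property, the plan is to pad to $\ell$ sets by setting $A_{\ell'+1} = \cdots = A_\ell = A_1$. For a generic $(k-d)\times n$ matrix $W$, repeating $A_1$ does not change the intersection, so $W_{A_1} \cap \cdots \cap W_{A_\ell} = W_{A_1} \cap \cdots \cap W_{A_{\ell'}} = 0$; hence the padded tuple also has the null intersection property. The $\rMDS_d(\ell)$ hypothesis then gives $G_{A_1} \cap \cdots \cap G_{A_\ell} = 0$, and the same collapse for $G$ yields $G_{A_1} \cap \cdots \cap G_{A_{\ell'}} = 0$. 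Part (d) is then a direct calculation: for $\ell=2$, Proposition~\ref{prop:null} applied to the two partitions of $\{1,2\}$ reduces null intersection to the conditions $A_1 \cap A_2 = \emptyset$ and $|A_1|+|A_2|\le k-d$, so $\rMDS_d(2)$ becomes the assertion that $G_{A_1} \cap G_{A_2} = 0$ for every such disjoint pair (together with the nonzero column condition); a standard minimal-dependent-subset argument then converts this to the statement that every $k-d$ columns of $G$ are linearly independent.

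I expect the only genuinely nontrivial step to be the padding in part (c), where the choice must simultaneously preserve the null intersection property and the actual column-space intersection; simply repeating $A_1$ handles both at once. The remaining parts are essentially bookkeeping once Proposition~\ref{prop:lower-mds-alt} and Proposition~\ref{prop:null} are in hand.
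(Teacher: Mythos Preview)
Your proposal is correct and follows essentially the same approach as the paper: both reduce each part to Proposition~\ref{prop:lower-mds-alt} and the combinatorics of Proposition~\ref{prop:null}. The only cosmetic differences are that in (c) the paper repeats $A_{\ell'}$ rather than $A_1$ when padding (equivalent), and in (d) the paper proves the converse direction by a short induction on $|A|$ showing $\dim G_A = |A|$, whereas you invoke a minimal-dependent-subset argument; these are dual formulations of the same idea.
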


\subsubsection{Upper Relaxation}

We now define the upper relaxation of a higher order MDS code based on the saturation property (Definition~\ref{def:saturation}).

\begin{definition}[Upper Relaxation]\label{def:upper-mds}
Let $\ell \ge 2$. We say that a $(n,k)$-code $C$ with generator matrix $G$ is $\rMDS^d(\ell)$ with $0 \le d \le n - k$ if for all $A_1, \hdots, A_\ell \subseteq [n]$ with the $k+d$-dimensional saturation property, we have that $A_1, \hdots, A_\ell$ are $G$-saturated.
\end{definition}

We note that unlike the lower relaxation, some columns of $G$ may equal zero. The following properties are proved in Appendix~\ref{app:omit}.

\begin{restatable}{proposition}{propC}\label{prop:upper-mds}
Let $C$ be an $(n,k)$-code with generator matrix $G$. Let $\ell \ge 2$.
\begin{itemize}
\item[(a)] $C$ is $\rMDS^0(\ell)$ if and only if $C$ is $\MDS(\ell)$.  
\item[(b)] If $C$ is $\rMDS^d(\ell)$, then $C$ is $\rMDS^{d'}(\ell)$ for all $d' \ge d+1$.
\item[(c)] If $C$ is $\rMDS^d(\ell)$, then $C$ is $\rMDS^d(\ell')$ for all $\ell' \in \{2, \hdots, \ell-1\}$.
\item[(d)] $C$ is $\rMDS^d(2)$ if and only if every $k+d$ columns of $G$ span $\F^k$.
\end{itemize}
\end{restatable}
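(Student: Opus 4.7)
Plan for Proposition~\ref{prop:upper-mds}. I will address the four parts in the order (a), (d), (c), (b), building on easier facts first.

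Part (a) is immediate from Proposition~\ref{prop:MDS-sat}, since $\rMDS^0(\ell)$ with $d=0$ is literally the saturation-based characterization of $\MDS(\ell)$ given there. For part (d), I specialize to $\ell=2$: combining Proposition~\ref{prop:tian-rank} with the identity $\dim V_{A_1} + \dim V_{A_2} - \dim(V_{A_1}\cap V_{A_2}) = \dim V_{A_1 \cup A_2}$ yields $\rank \cG_{A_1, A_2}[V] = k + \dim V_{A_1 \cup A_2}$. Applied to a generic $(k+d)\times n$ matrix the $(k+d)$-saturation condition for $A_1, A_2$ becomes $|A_1\cup A_2|\ge k+d$, while applied to $G$ the $G$-saturation condition becomes $\dim G_{A_1 \cup A_2}=k$. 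Hence $\rMDS^d(2)$ is equivalent to every $k+d$ columns of $G$ spanning $\F^k$.

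For part (c), given $A_1, \ldots, A_{\ell'}$ with the $(k+d)$-saturation property, I pad by setting $A_{\ell'+1}=\cdots=A_\ell=[n]$. Using Proposition~\ref{prop:tian-rank} and the fact that $W_{[n]}=\F^{k+d}$ for a generic $(k+d)\times n$ matrix $W$, a direct rank computation gives
\[
\rank \cG_{A_1,\ldots,A_\ell}[W] = \rank \cG_{A_1,\ldots,A_{\ell'}}[W] + (\ell-\ell')(k+d),
\]
so the padded family still has the $(k+d)$-saturation property. Invoking $\rMDS^d(\ell)$ shows the padded family is $G$-saturated, and the analogous identity for $G$ (valid because $G_{[n]}=\F^k$) then gives that $A_1, \ldots, A_{\ell'}$ themselves are $G$-saturated.

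Part (b) is the substantive part. I reduce it to the following key lemma: if $A_1, \ldots, A_\ell$ have the $(m+1)$-dimensional saturation property, then the same $A_1, \ldots, A_\ell$ also have the $m$-dimensional saturation property. Given this lemma, (b) follows by iterating $d'-d$ times and then applying $\rMDS^d(\ell)$. To prove the lemma, write a generic $(m+1)\times n$ matrix $W'$ as the stacking of a generic $m\times n$ matrix $W$ on top of a generic row $w$. After reordering the rows of $\cG_{A_1,\ldots,A_\ell}[W']$ so that the first $m$ rows of each identity block come before the $(m+1)$-th rows, and moving the $(m+1)$-th identity column to the last position, the matrix takes the block form
\[
\begin{pmatrix} \cG_{A_1,\ldots,A_\ell}[W] & 0 \\ R & \mathbf{1}_\ell \end{pmatrix},
\]
where $R$ is the $\ell \times (m + \sum_i |A_i|)$ matrix whose $j$-th row has $w|_{A_j}$ in the columns corresponding to the $j$-th block of $\cG_{A_1,\ldots,A_\ell}[W]$ and zeros elsewhere. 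Quotienting the bottom $\ell$ rows by the row space of $\cG_{A_1,\ldots,A_\ell}[W]$ (padded with a zero in the new column) yields a span of dimension at most $\ell$, so $\rank \cG_{A_1,\ldots,A_\ell}[W']\le \rank \cG_{A_1,\ldots,A_\ell}[W]+\ell$. The $(m+1)$-saturation hypothesis gives $\rank \cG_{A_1,\ldots,A_\ell}[W']=\ell(m+1)=\ell m+\ell$, forcing $\rank \cG_{A_1,\ldots,A_\ell}[W]\ge \ell m$, which combined with the trivial upper bound $\rank \cG_{A_1,\ldots,A_\ell}[W]\le \ell m$ yields the $m$-saturation property.

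The main obstacle is establishing the tight bound $\rank \cG[W']\le \rank \cG[W]+\ell$. A naive count of ``$\ell$ new rows plus one new column'' would give $\ell+1$, which is insufficient since it would only force $\rank \cG[W]\ge \ell m - 1$. The savings come from recognizing that the new identity column and the $\ell$ new rows are tightly coupled via the all-ones vector $\mathbf{1}_\ell$: the new column is nonzero only in positions occupied by the new rows, so adjoining both together is, modulo the existing row space, equivalent to adjoining at most $\ell$ vectors rather than $\ell+1$.
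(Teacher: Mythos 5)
Your proof is correct and follows essentially the same route as the paper's: (a) by direct appeal to Proposition~\ref{prop:MDS-sat}, (c) by padding with $A_{\ell'+1}=\cdots=A_\ell=[n]$, (d) via the rank formula $\rank \cG_{A_1,A_2}[V]=k+\dim V_{A_1\cup A_2}$ from Proposition~\ref{prop:tian-rank}, and (b) by deleting rows from the generic block matrix and noting the rank drops by at most the number of deleted rows (the orphaned identity columns being zero). The only cosmetic difference is that you descend from $k+d'$ to $k+d$ one row at a time via an explicit $2\times 2$ block decomposition, whereas the paper deletes all $d'-d$ rows per block in one step.
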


\subsubsection{Comparison Between Relaxations}\label{subsec:cmp}

A natural question is if the distinction between upper and lower relaxations is truly necessary. Note that one cannot directly compare the upper and lower relaxations: if $C$ is an $(n,k)$-code. There are $k+1$ values of $d$ for which one can ask whether $C$ is an $\rMDS_d(\ell)$ code, but there are $n-k+1$ values of $d$ for which one can ask whether $C$ is an $\rMDS^d(\ell)$ code.
For this comparison to ``type check'' we need to compare whether $C$ is $\rMDS_d(\ell)$ with whether the dual code $C^{\perp}$ is $\rMDS^d(\ell)$. From this perspective, these notions are equivalent for $\ell = 2$, but not necessarily for large $\ell$. We first demonstrate the equivalence part of this statement.

\begin{proposition}
  Let $C$ be an $(n,k)$-code. We have that $C$ is $\rMDS_d(2)$ if and only if $C^{\perp}$ is $\rMDS^d(2)$.
\end{proposition}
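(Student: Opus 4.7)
The plan is to reduce both sides of the equivalence to concrete linear-algebraic conditions on the generator matrices via the $\ell=2$ case of Propositions~\ref{prop:lower-mds} and \ref{prop:upper-mds}, and then invoke standard parity-check duality. By Proposition~\ref{prop:lower-mds}(d), $C$ being $\rMDS_d(2)$ is equivalent to every $k-d$ columns of the generator matrix $G$ being linearly independent. Since the generator matrix of $C^{\perp}$ is the parity-check matrix $H$ of $C$ (an $(n-k) \times n$ matrix), Proposition~\ref{prop:upper-mds}(d) says that $C^{\perp}$ being $\rMDS^d(2)$ is equivalent to every $(n-k)+d$ columns of $H$ spanning $\F^{n-k}$.

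It therefore suffices to show: for every $S \subseteq [n]$ with $|S| = k-d$ and complement $T = [n] \setminus S$ (so $|T| = n-k+d$), the columns of $G|_S$ are linearly independent if and only if the columns of $H|_T$ span $\F^{n-k}$. This is a classical rank-nullity duality. One has $\rank(G|_S) = k - \dim(C \cap \F^T)$, where $\F^T \subseteq \F^n$ denotes the subspace of vectors supported on $T$, since the kernel of the coordinate projection $C \to \F^S$ is exactly $C \cap \F^T$. Similarly, viewing $H|_T$ as a linear map $\F^T \to \F^{n-k}$, its kernel consists of codewords of $C$ supported on $T$, so $\rank(H|_T) = |T| - \dim(C \cap \F^T)$.

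Combining these two equalities, $\rank(G|_S) = k - d = |S|$ iff $\dim(C \cap \F^T) = d$ iff $\rank(H|_T) = (n-k+d) - d = n-k$. Quantifying over all $S$ of size $k-d$ (equivalently, all $T$ of size $n-k+d$) then yields the desired equivalence. The argument is entirely routine; the main conceptual point is that the $\ell=2$ case collapses to a condition about linear independence versus spanning of columns, rather than the richer intersection-dimension conditions governing $\ell \ge 3$, so no genuine obstacle arises. (One could equivalently obtain the same identity by specializing Lemma~\ref{lem:dual-inter-rank} with $\ell=2$, $A_1 = A_2 = S$, but the direct rank-nullity computation above is cleaner.)
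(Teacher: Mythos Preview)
Your proof is correct and follows essentially the same approach as the paper: both reduce via Propositions~\ref{prop:lower-mds}(d) and~\ref{prop:upper-mds}(d) to the standard duality between linear independence of columns of $G$ and spanning of the complementary columns of $H$. Your execution through the single invariant $\dim(C \cap \F^T)$ is arguably cleaner than the paper's explicit chain of equivalences, but the content is the same.
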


\begin{proof}
  Let $G \in \F^{k \times n}$ be a generator matrix of $C$ and let $H \in \F^{(n-k) \times n}$ be a parity-check matrix of $C$. By Proposition~\ref{prop:lower-mds}(d), we have that $C$ is $\rMDS_d(2)$ if and only if every $k-d$ columns of $G$ are linearly independent. Likewise, by Proposition~\ref{prop:upper-mds}(d), we have that $C^{\perp}$ is $\rMDS^d(2)$ if and only if every $n-k+d$ columns of $H$ span $\F^k$.

To see why these are equivalent, first note if $k=d$, these equivalences are vacuous. Otherwise, fix any $S \subseteq [n]$ of size $k-d$ and consider the following chain of equivalences:
\begin{itemize}
\item The columns of $G|_S$ are linearly dependent.
\item There exists $S \subseteq [n]$ of size at most $k-d$ and nonzero $x \in \F^n$ with $\supp x \subseteq S$ such that $Gx = 0$.
\item There exists nonzero $x \in \F^n$ supported by $S$ and nonzero $y \in \F^k$ such that $x = y^{\perp} H$.
\item There exists nonzero $y \in \F^k$ such that for every vector $z \in H_{\bar S}$, we have that $y^{\perp} z \neq 0$.
\item The columns of $H|_{\bar S}$ fail to span $\F^k$.
\end{itemize}
This completes the proof.
\end{proof}

However, for larger $\ell$ this equivalence breaks down.

\begin{proposition} The following statements hold.
  \begin{itemize}
    \item[(a)] There exists an $\rMDS_d(4)$ code $C$ such that $C^{\perp}$ fails to be $\rMDS^d(4).$
    \item[(b)] There exists an $\rMDS^d(4)$ code $C$ such that $C^{\perp}$ fails to be $\rMDS_d(4)$.
  \end{itemize}
\end{proposition}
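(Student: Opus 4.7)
The plan is to exhibit explicit small-parameter counterexamples for each statement. The conceptual guide is that the $\ell=2$ equivalence $C\in\rMDS_d(2)\iff C^{\perp}\in\rMDS^d(2)$ rests on the matroid duality between ``every $k-d$ columns of $G$ are linearly independent'' and ``every $n-k+d$ columns of $H$ span $\F^{n-k}$,'' which is a statement about single subsets of $[n]$. For $\ell\geq 3$ the defining conditions involve genuine multi-way intersections (for the lower relaxation) and multi-way saturations (for the upper relaxation); Proposition~\ref{prop:null} and Definition~\ref{def:saturation} describe these by distinct combinatorial criteria that do not transform into one another under passing to dual codes, leaving room for genuine asymmetry that one would then convert into the desired counterexamples.

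For (a), I would fix modest parameters with $d=1$ and look for a generator matrix $G\in\F^{k\times n}$ with the following two properties. First, for every $4$-tuple $A_1,\ldots,A_4\subseteq[n]$ of size at most $k-1$ with the $(k-1)$-dimensional null intersection property, we have $G_{A_1}\cap\cdots\cap G_{A_4}=0$, which delivers $C\in\rMDS_1(4)$ via Proposition~\ref{prop:lower-mds-alt}. Second, some specific $4$-tuple $B_1,\ldots,B_4\subseteq[n]$ has the $(n-k+1)$-dimensional saturation property for the dual code $C^{\perp}$ yet fails to be $H$-saturated, breaking $\rMDS^1(4)$ for $C^{\perp}$ by Definition~\ref{def:upper-mds}. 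The search would be guided by the small-field non-$\MDS(\ell)$ configurations from \cite{bgm2021mds,bdg2023size}, which provide natural candidates whose higher-order failures can be placed on the saturation side of the dichotomy rather than being symmetric under duality. For (b), I would run the symmetric procedure, swapping the roles of null intersection and saturation, or equivalently take the dual of a code produced by the same search strategy after reversing the two quantifiers.

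The main obstacle is balancing the universal positive condition with the existential negative witness. The universal condition requires that $G$ avoid null-intersection failures across every $4$-tuple of a given combinatorial type, while the existential witness must pick out one carefully chosen tuple of a different type. I would keep the parameters small enough that the universal quantifier reduces, up to the symmetric-group action on $[n]$, to a bounded number of essentially distinct combinatorial types, which can be checked either by hand or by a short computer calculation; the existential witness is then constructed by identifying a tuple whose saturation behavior under duality is not constrained by the null-intersection behavior that the primal already satisfies. A systematic computer search over small codes is a natural fallback if a closed-form structural construction proves elusive, and the same search engine would serve for both (a) and (b) by simply swapping the roles of $G$ and its parity-check matrix $H$.
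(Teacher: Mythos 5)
There is a genuine gap: you never actually produce a counterexample. Both parts of the proposition are pure existence statements, so a proof must exhibit a concrete code $C$ (or give a nonconstructive existence argument); your write-up instead describes a search strategy --- ``I would fix modest parameters\dots and look for a generator matrix,'' with a computer search as fallback --- and asserts without verification that such a search would terminate successfully at small parameters. The heuristic you give (that the null-intersection and saturation criteria for $\ell\ge 3$ are not interchanged by passing to the dual code, so there is ``room'' for asymmetry) is reasonable motivation for why one expects a counterexample, but it is not evidence that one exists at the parameters you propose, and no witness is checked.

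The paper closes this gap by citation rather than by search: since $\rMDS_0(4)$ and $\rMDS^0(4)$ both coincide with $\MDS(4)$ (Propositions~\ref{prop:lower-mds}(a) and \ref{prop:upper-mds}(a)), it suffices to take $d=0$ and invoke the explicit counterexamples to the self-duality of the $\MDS(4)$ property already constructed in \cite[Appendix B.3]{bgm2021mds}. If you want to make your approach into a proof, the shortest repair is the same one: specialize to $d=0$ and either quote those known examples or reproduce one of them explicitly and verify the two defining conditions. As written, your argument establishes neither (a) nor (b).
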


\begin{proof}
  These equivalences break down due to counterexamples appeared in \cite[Appendix B.3]{bgm2021mds}.
\end{proof}

\subsection{Relaxed LD-MDS Codes}

To understand codes achieving a list-decoding capacity over small fields, Guo and Zhang~\cite{guo2023randomly} used a relaxed Singleton bound. We abstract that into a definition.

\begin{definition}[Relaxed LD-MDS Codes]\label{def:relaxed-ld-mds}
Given $d \ge 0$, let $\rho_{n,k,d}(L) := \frac{L}{L+1}\frac{n-k-d}{n}$. We say that a $(n,k)$-code $C$ is $\rLDMDS_d(L)$ if it is $(\rho_{n,k,d}(L), L)$ average-radius list-decodable.  We say a code is $\rLDMDS_d(\le L)$ if it is $\rLDMDS_d(\ell)$ for all $\ell \in \{1, \hdots, L\}$.
\end{definition}

Note that $\rLDMDS_0(L)$ coincides with $\LDMDS(L)$. However, when $d \ge 1$, a $\rLDMDS_d(1)$ code may not be MDS. However, we can prove such a code has large distance (analogous to a result of \cite{roth2021higher}).

\begin{proposition}\label{prop:rldmds-1}
  Let $C$ be an $(n,k)$-code. Then, $C$ is $\rLDMDS_d(1)$ if and only if its distance is at least $n-k-d+1$.
\end{proposition}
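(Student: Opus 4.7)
The plan is to unpack the definition of $\rLDMDS_d(1)$ into a concrete condition on pairs of codewords and then match it to the minimum distance via the triangle inequality. By Definition~\ref{def:relaxed-ld-mds}, $C$ is $\rLDMDS_d(1)$ exactly when $C$ is $(\rho_{n,k,d}(1), 1)$-average-radius list-decodable, where $\rho_{n,k,d}(1)\cdot n = \tfrac{1}{2}(n-k-d)$. Thus after multiplying through by $2$, the condition becomes: for every $y \in \F^n$ and every pair of distinct codewords $c_0, c_1 \in C$,
\[
\wt(y-c_0) + \wt(y-c_1) > n-k-d.
\]

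For the forward direction, I would specialize $y := c_0$, which gives $\wt(c_0-c_1) > n-k-d$ for every pair of distinct codewords, and hence the minimum distance of $C$ is at least $n-k-d+1$. For the reverse direction, I would assume the distance is at least $n-k-d+1$ and note that for any $y \in \F^n$ and any distinct $c_0, c_1 \in C$, the triangle inequality for Hamming weight yields
\[
\wt(y-c_0) + \wt(y-c_1) \;\ge\; \wt(c_0-c_1) \;\ge\; n-k-d+1 \;>\; n-k-d,
\]
which gives the desired average-radius bound.

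There is no serious obstacle here: the only thing to be slightly careful about is that the average-radius version of list-decoding is a priori stronger than the usual list-decoding notion, so one might worry the distance bound is not enough. The triangle inequality resolves this immediately because for $L=1$ the average-radius inequality involves only two codewords, so it reduces to an inequality on $\wt(c_0-c_1)$, which is exactly a minimum-distance statement. Both directions are therefore elementary, and the proposition can be proved in a few lines.
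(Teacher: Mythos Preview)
Your proposal is correct and essentially identical to the paper's own proof: both directions proceed by specializing $y$ to one of the codewords for the forward implication and by the Hamming triangle inequality for the reverse implication. The only cosmetic difference is that the paper briefly notes the $\dim C = 0$ case is vacuous, which is immaterial.
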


\begin{proof}
  First, assume $C$ is $\rLDMDS_d(1)$. If $\dim C = 0$, the result is vacuous. Otherwise, consider any pair $c_1, c_2 \in C$. Since $C$ is $(\rho_{n,k,d}(1), 1)$ average-radius list-decodable, we have for $y = c_1$ that $\wt(c_1 - c_2) > n-k-d$, as desired.

  Second, assume that $C$ has distance at least $n-k-d+1$. For any $c_1, c_2 \in C$ and any $y \in \F^n$, we have that
\[
\wt(c_1 - y) + \wt(c_2 - y) \ge \wt(c_1 - c_2) \ge n-k-d+1.
\]
Thus, $C$ is $(\rho_{n,k,d}(1), 1)$ average-radius list-decodable, as desired.
\end{proof}
As a corollary, we have that $\rLDMDS_d(1)$ is dual to $\rMDS_d(2)$.

\begin{corollary}\label{cor:dual-ldmds-1}
  Let $C$ be an $(n,k)$-code. Then, $C$ is $\rLDMDS_d(1)$ if and only if $C^{\perp}$ is $\rMDS_d(2)$.
\end{corollary}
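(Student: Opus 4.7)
The plan is to chain together two characterizations already available in the paper: Proposition~\ref{prop:rldmds-1}, which translates $\rLDMDS_d(1)$ into a distance condition on $C$, and Proposition~\ref{prop:lower-mds}(d), which translates $\rMDS_d(2)$ into an independence condition on the columns of a generator matrix. The bridge between the two is the classical coding-theoretic fact that the minimum distance of $C$ equals the smallest number of linearly dependent columns of any parity-check matrix $H$ of $C$, together with the observation that $H$ is precisely a generator matrix of the dual code $C^{\perp}$.

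Concretely, I would proceed as follows. First, invoke Proposition~\ref{prop:rldmds-1} to rewrite the statement ``$C$ is $\rLDMDS_d(1)$'' as ``the minimum distance of $C$ is at least $n-k-d+1$.'' Second, recall the standard equivalence that the distance of $C$ is at least $n-k-d+1$ if and only if every $n-k-d$ columns of a parity-check matrix $H \in \F^{(n-k) \times n}$ of $C$ are linearly independent. Third, note that $H$ is a generator matrix of $C^{\perp}$, and $C^{\perp}$ is an $(n,n-k)$-code, so Proposition~\ref{prop:lower-mds}(d) applied with parameters $(n,n-k,d)$ says exactly that $C^{\perp}$ is $\rMDS_d(2)$ if and only if every $(n-k)-d$ columns of $H$ are linearly independent. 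Chaining these three equivalences yields the corollary.

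There is essentially no obstacle: every nontrivial ingredient has already been stated and proved in the excerpt, and the classical distance-vs-parity-check characterization is routine. The only thing to be a touch careful about is the edge case $d = n-k$ (so that $n-k-d = 0$), where both conditions become vacuous and one should check that the corresponding interpretations still agree, as well as the case where $C^{\perp}$ has rank $0$, but Proposition~\ref{prop:lower-mds}(d) and Proposition~\ref{prop:rldmds-1} both degenerate consistently in these corners.
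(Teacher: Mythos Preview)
Your proposal is correct and matches the paper's proof essentially line for line: the paper also invokes Proposition~\ref{prop:rldmds-1} and Proposition~\ref{prop:lower-mds}(d) to reduce to showing that $C$ has distance at least $n-k-d+1$ if and only if every $n-k-d$ columns of $H$ are linearly independent, and then verifies this classical equivalence directly.
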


\begin{proof}
  Let $G$ be a generator matrix of $C$, and let $H$ be a parity-check matrix of $C$.  By Proposition~\ref{prop:rldmds-1} and Proposition~\ref{prop:lower-mds}, it suffices to prove that $C$ has distance at least $n-k-d+1$ if and only if every $n-k-d$ columns of $H$ are linearly independent. To see this equivalence, we have that the columns $A \subseteq [n]$ of $H$ are linearly independent if and only if there is no $c \in \F^n$ with $\supp c \subseteq A$ and $Hc = 0$, or equivalently, $c \in C$. Thus, every $n-k-d$ columns of $H$ are linearly independent if and only if every nonzero codeword of $C$ has Hamming weight at least $n-k-d+1$, as desired.
\end{proof}

We now generalize this duality for all values of $L$. In particular, our result is a generalization to relaxed higher order MDS codes of the main theorem from \cite{bgm2022} that $\LDMDS(\le L)$ is dual to $\MDS(L+1)$.

\begin{theorem}\label{thm:ldmds-lower-mds}
  Let $C$ be an $(n,k)$-code. For any $d \in \{0, \hdots, n-k\}$, we have that $C$ is $\rLDMDS_d(\le L)$ if and only if $C^{\perp}$ is $\rMDS_d(L+1)$.
\end{theorem}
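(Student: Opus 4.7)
The proof generalizes the equivalence $\LDMDS(\le L) \Leftrightarrow C^{\perp}$ is $\MDS(L+1)$ from \cite{bgm2022}, tracking the slack parameter $d$ throughout. The central tool is Lemma~\ref{lem:dual-inter-rank}, which relates $\dim \bigcap_i H_{B_i}$ on the dual side to $\rank \cH_{B_0,\ldots,B_L}[G]$ on the primal side, where $G$ and $H$ are a generator and parity-check matrix of $C$. Via Proposition~\ref{prop:lower-mds-alt}, the assertion ``$C^{\perp}$ is $\rMDS_d(L+1)$'' becomes: (i) every column of $H$ is nonzero, and (ii) for every $B_0,\ldots,B_L \subseteq [n]$ of size $\le n-k-d$ with the $(n-k-d)$-dimensional null intersection property, $\bigcap_i H_{B_i} = 0$.

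For the forward direction $(\Rightarrow)$, assume $C$ is $\rLDMDS_d(\le L)$. Condition (i) follows because $\rLDMDS_d(1)$ forces the distance of $C$ to be at least $n-k-d+1$ (Proposition~\ref{prop:rldmds-1}), which is $\ge 2$ in the nontrivial range. For condition (ii), suppose some nonzero $s \in \bigcap_i H_{B_i}$ for sets $B_i$ of size $\le n-k-d$ satisfying the null intersection property; write $s = H z_i$ with $\supp(z_i) \subseteq B_i$, pick any $y$ with $H y = s$, and set $c_i := y - z_i \in C$. Then $\wt(y - c_i) \le |B_i|$, and the singleton-partition case of Proposition~\ref{prop:null} gives $\sum_i |B_i| \le L(n-k-d)$, matching the $\rLDMDS_d(L)$ threshold and yielding an immediate contradiction whenever the $c_i$'s are distinct. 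For the reverse direction $(\Leftarrow)$, given $C^{\perp}$ is $\rMDS_d(L+1)$ and a list-decoding failure $(y, c_0, \ldots, c_\ell)$ with $\ell \le L$, let $\beta_i = \supp(y - c_i)$ and $s = Hy$. If $s = 0$ the $c_i - y$ are distinct codewords of total weight $\le L(n-k-d)$, producing a nonzero codeword of weight $\le n-k-d$ and contradicting the distance bound $n-k-d+1$ inherited from $\rMDS_d(2)$ via Proposition~\ref{prop:lower-mds}(c,d). If $s \ne 0$, then $s \in \bigcap_i H_{\beta_i}$; we pad the family to size $L+1$ by duplication (preserving the intersection) and, using the distance lower bound to ensure $|\beta_i \cup \beta_j| \ge n-k-d+1$ for all $i \ne j$, refine the family---trimming oversized $\beta_i$'s down to size $\le n-k-d$ while keeping $s$ in the intersection---to one satisfying the null intersection property, contradicting the $\rMDS_d(L+1)$ hypothesis.

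The main obstacle is a pair of bookkeeping issues: the distinctness requirement in the list-decoding definition (coincident $c_i$'s must be eliminated by induction on $L$, with base case $L = 1$ being Corollary~\ref{cor:dual-ldmds-1}, using the partition-merge bounds from the null intersection property to pass to shorter failures at smaller list sizes), and the trimming-to-NIP step (reducing the $\beta_i$'s to sizes $\le n-k-d$ without losing the nonzero intersection). Both are handled by combining the partition-bound characterization of null intersection (Proposition~\ref{prop:null}) with the distance lower bound $n-k-d+1$ derived from $\rMDS_d(2)$ through monotonicity (Proposition~\ref{prop:lower-mds}(c)), mirroring the bookkeeping implicit in \cite{bgm2022} for the $d = 0$ case.
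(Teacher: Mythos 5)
Your forward direction is essentially the paper's argument: lift a nonzero $z\in H_{A_1}\cap\cdots\cap H_{A_{L+1}}$ to codewords, and when some coincide, group them and use the partition bounds of Proposition~\ref{prop:null} to extract an average-radius violation at the (smaller) list size equal to the number of distinct codewords. The gap is in the reverse direction, at the step you call ``trimming-to-NIP.'' You treat the only obstruction to applying the $\rMDS_d(L+1)$ hypothesis as the sets $\beta_i=\supp(y-c_i)$ possibly being too large, and you claim that the pairwise bounds $|\beta_i\cup\beta_j|\ge n-k-d+1$ (from the distance of $C$) together with $\sum_i|\beta_i|\le \ell(n-k-d)$ let you refine the family to one with the null intersection property while keeping $Hy\ne 0$ in the intersection. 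That is false: these bounds only control singleton parts and pairwise intersections, not the partition inequalities of Proposition~\ref{prop:null} for parts of size $\ge 3$. Concretely, with $n-k-d=10$ and $\ell=3$, take $\beta_0,\beta_1,\beta_2$ of size $8$ pairwise meeting exactly in a common core of size $5$, and $\beta_3$ of size $6$ meeting each other set in at most $3$ points: every set already has size $\le 10$, every pairwise union has size $\ge 11$, and the total is $30\le 3\cdot 10$, yet the partition $\{\{0,1,2\},\{3\}\}$ gives $5+6=11>10$, so the family fails NIP. No trimming repairs this, and a nonzero vector in $\bigcap_i H_{\beta_i}$ for such a family is perfectly consistent with $C^{\perp}$ being $\rMDS_d(L+1)$, so your intended contradiction never materializes.

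What is actually needed here --- and what the paper's Case~(c) supplies --- is a different idea: take $L'$ minimal among failing list sizes, locate a partition violating \eqref{eq:part-sum} with a ``clustered'' part $P_i$ (by subtracting the total-weight bound and pigeonholing), and \emph{re-center}: replace $y$ by a word $z$ agreeing with one of the $x_{j_0}$ on $S=\bigcap_{j\in P_i}A_j$ and with $y$ off $S$, which turns the cluster $\{c_j\}_{j\in P_i}$ into an average-radius failure at list size $|P_i|-1<L'$, contradicting minimality. Your proposal invokes ``passing to shorter failures at smaller list sizes'' only for the coincidence issue in the forward direction; in the reverse direction, where it is the essential mechanism, it is absent. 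A secondary, smaller issue: even the pure size reduction you describe is not automatic, since $Hy\in H_{\beta_i}$ only exhibits a preimage supported on all of $\beta_i$, and shrinking $\beta_i$ below $n-k-d$ while keeping $Hy$ in the span of the remaining columns would require a sparser representative $y-c_i'$ for some other codeword $c_i'$, which need not exist.
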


\begin{proof}
  We prove both directions via contrapositive. Let $H$ be the parity-check matrix of $C$.

  First, assume $C$ is $\rLDMDS_d(\le L)$ but $C^{\perp}$ is not $\rMDS_d(L+1)$. Thus, there exists $A_1, \hdots, A_{L+1} \subseteq [n]$ with the $n-k-d$-dimensional null intersection property, but there exists $z \neq 0$ such that $z \in H_{A_1} \cap \cdots \cap H_{A_{L+1}}$.  Thus, there are $x_1, \hdots, x_{L+1} \in \F^{n}$ such that $Hx_i = z$ and $\supp x_i \subseteq A_i$. Let $c_i := x_i - x_1 \in C$. Then, $A_i \supseteq \supp x_i = \supp (c_i - (-x_1)).$ Let $P_1, \hdots, P_s$ be a partition of $[{L+1}]$ such that $j,j' \in P_i$ if and only if $c_j = c_{j'}$ (i.e., $x_j = x_{j'}$). For each $i \in [s]$ and $j \in P_i$, we have that $\supp x_j \subseteq \bigcap_{j' \in P_i} A_{j'}$.

If $s = 1$, then by Proposition~\ref{prop:null}, we have that $\supp x_j = \emptyset$ for all $j \in [{L+1}]$. This contradicts that $Hx_j = z \neq 0$. Thus, $s \ge 2$. We can thus pick representatives $j_i \in P_i$ for all $i \in [s]$, and observe that
\[
  \sum_{i=1}^{s} \wt(c_{j_i} - (-x_1)) \le \sum_{i=1}^s \left|\bigcap_{j \in P_i} A_j\right| \le (s-1) (n - k - d).
\]
Thus,  $c_{j_1}, \hdots, c_{j_s}$ are distinct points in a ball with center $-x_1$ of average radius $\rho_{n,k,d}(s-1)$, a contradiction of the fact that $C$ is $\rLDMDS_d(\le L)$.

Second, assume that $C^{\perp}$ is $\rMDS_d(L+1)$ but $C$ is not $\rLDMDS_d(L')$ for some $L' \in \{1, \hdots, L\}$. Assume further that the choice of $L'$ is minimal. By Proposition~\ref{prop:lower-mds} and Corollary~\ref{cor:dual-ldmds-1}, we have that $C$ is $\rLDMDS_d(1)$. Therefore, $L' \ge 2$ and $C$ has distance at least $n-k-d+1$ by Proposition~\ref{prop:rldmds-1}. 

Hence, there are distinct $c_1, \hdots, c_{L'+1} \in C$ and $y \in \F^n$ such that if $A_i := \supp (c_i - y)$, then \begin{align}|A_1| + \cdots + |A_{L'+1}| \le L'(n - k - d).\label{eq:sum-ldmds}\end{align} If some $i \in A_1 \cap \cdots \cap A_{L'+1}$, then we can pick $y_i = (c_1)_i$ without increasing the size of any $A_i$. Thus, we may WLOG that $A_1 \cap \cdots \cap A_{L'+1} = \emptyset$. Also observe that $Hy \in H_{A_1} \cap \cdots \cap H_{A_{L'+1}}$. We now break the proof into three cases.

\begin{itemize}
\item[(a)] $Hy = 0$.
\item[(b)] $A_1, \hdots, A_{L'+1}$ satisfy the $n-k-d$-dimensional null intersection property. 
\item[(c)] Otherwise.
\end{itemize}

\paragraph{Case (a).}
 If $Hy = 0$, then $y$ is a codeword, so if $y \neq c_i$, then $|A_i| \ge n - k - d + 1$. Therefore, $\sum_{i=1}^{L'+1} |A_i| \ge L'(n - k - d + 1)$, a contradiction.

\paragraph{Case (b).}
Since $Hy \neq 0$ and $Hy \in H_{A_1} \cap \cdots \cap H_{A_{L'+1}}$, we must have that $C^{\perp}$ is not $\rMDS_d(L'+1)$, a contradiction (by Proposition~\ref{prop:lower-mds}).

\paragraph{Case (c).} By Proposition~\ref{prop:null} there must exist partition $P_1, \hdots, P_s$ of $[L'+1]$ for which
\begin{align}
  \sum_{i=1}^{s} \left|\bigcap_{j \in P_i} A_j\right| > (s-1)(n-k-d) \label{eq:532}
\end{align}
 with $s$ maximal. Since $A_1 \cap \cdots \cap A_\ell = \emptyset$, we may assume that $s \ge 2$. Subtracting (\ref{eq:sum-ldmds}) from (\ref{eq:532}), we then have that
\[
  \sum_{i=1}^{s} \left[\sum_{j \in P_i} |A_j| - \left|\bigcap_{j \in P_i} A_j\right|\right] < (L' + 1 - s)(n - k - d).
\]
By the pigeonhole principle, there is $i \in [s]$ such that
\[
   \sum_{j \in P_i} |A_j| - \left|\bigcap_{j \in P_i} A_j\right| <  (|P_i| - 1)(n - k(\lambda + 1)).
\]
Both the LHS and RHS are $0$ when $|P_i| = 1$, so $|P_i| \ge 2$. Further, since $s \ge 2$, we have that $|P_i| \le L'$.

Let $S := \bigcap_{j \in P_i} A_j$. Let $z \in \F^n$ be such that $z|_S = (x_{j_0})|_S$ for some arbitrary $j_0 \in P_i$, but $z|_{\bar{S}} = y|_{\bar{S}}$. Then, since $S \subseteq A_j$ for all $j \in P_i$, we have that
\begin{align*}
  \sum_{j \in P_i}\wt(c_j - z) &= \sum_{j \in P_i \setminus \{j_0\}} |A_j| + |A_{j_0} \setminus S|\\
  &= \sum_{j \in P_i} |A_j| - \left|\bigcap_{j \in P_i} A_j\right|\\
  &<  (|P_i| - 1)(n - k(\lambda + 1)).
\end{align*}
Therefore,  $C$ is not $\rLDMDS_d(|P_i| - 1)$. This contradicts that $L'$ is minimal. Therefore, $C$ is $\rLDMDS_d(\le L)$, as desired.
\end{proof}

The follow corollary will be useful for analyzing the list decoding properties of algebraic-geometry codes. This result is essentially equivalent to Lemma 3.5 of \cite{guo2023randomly}.

\begin{corollary}\label{cor:slackLDMDSrank}
 Let $C$ be a $(n,k)$-code. We have that $C$ is $\LDMDS_d(\le L)$ if and only if for all $\ell \in \{2, \hdots, L+1\}$, and for all $A_1, \hdots A_\ell \subseteq [n]$ with the $n-k-d$-dimensional null intersection property we have that
  \begin{align}
   \rank \cH_{A_1, \hdots, A_\ell}[G] = n+(\ell-1)k. \label{eq:643}
  \end{align}
\end{corollary}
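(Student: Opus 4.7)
The plan is to chain together Theorem~\ref{thm:ldmds-lower-mds}, Proposition~\ref{prop:lower-mds-alt}, and Lemma~\ref{lem:dual-inter-rank}. First, by Theorem~\ref{thm:ldmds-lower-mds}, $C$ is $\rLDMDS_d(\le L)$ if and only if $C^{\perp}$ (with generator $H$) is $\rMDS_d(L+1)$; by Proposition~\ref{prop:lower-mds}(c), this is in turn equivalent to $C^{\perp}$ being $\rMDS_d(\ell)$ for every $\ell \in \{2, \ldots, L+1\}$. Applying Proposition~\ref{prop:lower-mds-alt} to the $(n,n-k)$-code $C^{\perp}$ reformulates this as the conjunction of (a) every column of $H$ is nonzero, and (b) for every such $\ell$ and every $A_1, \ldots, A_\ell \subseteq [n]$ with the $(n-k-d)$-dimensional null intersection property (which forces $|A_i| \le n-k-d$), we have $H_{A_1} \cap \cdots \cap H_{A_\ell} = 0$.

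Next I would convert condition (b) into a rank condition using Lemma~\ref{lem:dual-inter-rank}:
\[
\dim(H_{A_1} \cap \cdots \cap H_{A_\ell}) = (n-k) + \sum_{i=1}^{\ell} \rank(G|_{\bar{A}_i}) - \rank \cH_{A_1, \ldots, A_\ell}[G].
\]
Because $\rank(G|_{\bar{A}_i}) \le k$ and $\dim \ge 0$, one always has $\rank \cH \le n + (\ell-1)k$, with equality holding precisely when the intersection vanishes and simultaneously $\rank(G|_{\bar{A}_i}) = k$ for every $i$. Hence the corollary's rank identity is equivalent to the pair of statements ``(b) holds'' and ``each $\rank(G|_{\bar{A}_i}) = k$''.

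It remains to cross-check the extra data on both sides. For the forward direction, the $\rLDMDS_d(\le L)$ hypothesis subsumes $\rLDMDS_d(1)$, so by Proposition~\ref{prop:rldmds-1} the code $C$ has distance at least $n-k-d+1$; equivalently, $\rank(G|_{\bar{S}}) = k$ whenever $|S| \le n-k-d$, which supplies the needed column-rank equalities (and in particular rules out weight-one codewords, hence condition (a) when $n-k-d \ge 1$). For the backward direction, the rank identity directly yields both the column-rank equalities and the vanishing of the intersection; further, its $\ell = 2$ instance with $A_1 = \emptyset$, $A_2 = \{j\}$ (which has the null intersection property as long as $n-k-d \ge 1$) gives $\rank(G|_{[n]\setminus\{j\}}) = k$, equivalent to $h_j \neq 0$, recovering condition (a).

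The only real nuance is administrative: the clause ``every column of $H$ is nonzero'' required by Proposition~\ref{prop:lower-mds-alt} must be extracted from the rank identity via the specific $\ell = 2$ case rather than assumed, and the degenerate regime $n-k-d \le 0$ should be treated separately since both sides then become essentially trivial. Otherwise the corollary is a direct synthesis of the three cited results.
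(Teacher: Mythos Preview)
Your proof is correct and follows essentially the same approach as the paper: both chain Theorem~\ref{thm:ldmds-lower-mds} with Lemma~\ref{lem:dual-inter-rank}, invoke Proposition~\ref{prop:rldmds-1} for the distance bound in the forward direction, and extract the remaining structural fact from an $\ell=2$ instance of the rank identity in the backward direction (the paper takes $A_1$ of size $n-k-d$ and $A_2=\emptyset$ to recover the full distance bound, whereas you take $A_1=\emptyset$, $A_2=\{j\}$ to recover nonzero columns of $H$---same mechanism, slightly different bookkeeping). Your explicit use of Proposition~\ref{prop:lower-mds-alt} and Proposition~\ref{prop:lower-mds}(c) just unwinds what the paper invokes in one breath via Theorem~\ref{thm:ldmds-lower-mds}.
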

\begin{proof}

By Theorem~\ref{thm:ldmds-lower-mds}, we have that $C$ is $\rLDMDS_d(\le L)$ if and only if for all $\ell \in \{2, \hdots, L+1\}$, and for all $A_1, \hdots A_\ell \subseteq [n]$ with the $n-k-d$-dimensional null intersection property,  we have that $H_{A_1} \cap \cdots \cap H_{A_\ell} = 0,$ where $H$ is the parity-check matrix of $C$. Assume the distance of $C$ is at least $n - k - d + 1$. Therefore, no codeword of $C$ can have support inside $A_i$. Thus, the map $x \mapsto x|_{\oA_i}$ is an injection for $C$. Thus, $\rank(G|_{\oA_i}) = k$ for all $i \in [\ell]$. Applying Lemma~\ref{lem:dual-inter-rank}, we have that $H_{A_1} \cap \cdots \cap H_{A_\ell} = 0$ is equivalent to (\ref{eq:643}), as desired.

To prove the distance assumption, if $C$ is $\rLDMDS_d(\le L)$, then $C$ has distance at least $n - k - d+1$ from Proposition~\ref{prop:rldmds-1}. Conversely, if (\ref{eq:643}) holds for all specified tuples of sets, consider $\ell = 2$, $A_1$ be any set of size $n-k-d$ and $A_2 = \emptyset$. Applying Lemma~\ref{lem:dual-inter-rank}, we must have that \[
0 \le \dim(H_{A_1} \cap \cdots \cap H_{A_\ell}) = n - k + \rank(G|_{\oA_1}) + \rank(G) - (n+k) = \rank(G|_{\oA_1}) - k.
\]
Thus, $\rank(G|_{\oA_1}) = k$, so nonzero codeword of $C$ can be supported on $A_1$, as desired.
\end{proof}

For applications to list decoding, we need to understand how (\ref{eq:643}) changes when some columns of $G$ are deleted.

\begin{proposition}\label{prop:LDMDSrank-puncture}
 Let $C$ be a $(n,k)$-code with generator matrix $G$. Let $A_1, \hdots A_\ell \subseteq [n]$. For all $B \subseteq \oA_1 \cup \cdots \oA_\ell$, we have that
  \begin{align}
   \rank \cH_{A_1, \hdots, A_\ell}[G] \ge |B| + \rank \cH_{A_1\setminus B, \hdots, A_\ell\setminus B}[G|_{\oB}]. \label{eq:644}
  \end{align}
\end{proposition}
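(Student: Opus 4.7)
The plan is to exhibit a submatrix $N$ of $\cH := \cH_{A_1, \hdots, A_\ell}[G]$ whose rank is exactly $|B| + \rank \cH'$, where $\cH' := \cH_{A_1 \setminus B, \hdots, A_\ell \setminus B}[G|_{\oB}]$. Recall that the columns of $\cH$ are naturally indexed by pairs $(i,j)$ with $i\in[\ell]$ and $j \in \oA_i$; the column $(i,j)$ is the concatenation of the standard basis vector $e_j \in \F^n$ (on top) with the length-$\ell k$ block vector having $G_j$ in its $i$th block and zeros elsewhere (on the bottom).

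Since $B \subseteq \oA_1 \cup \cdots \cup \oA_\ell$, for each $j \in B$ I can choose some $i_j \in [\ell]$ with $j \in \oA_{i_j}$. I would then let $N$ be the submatrix whose column index set is $C_1 \sqcup C_2$, where $C_1 := \{(i_j, j) : j \in B\}$ and $C_2 := \{(i,j) : j \in \oA_i \setminus B\}$. Next, partition the rows of $\cH$ into three blocks: top rows indexed by $B$, top rows indexed by $\overline{B}$, and the $\ell k$ bottom rows. Since $e_j$ is supported only in row $j$, the columns of $C_1$ (where $j \in B$) vanish in the top-$\overline{B}$ rows and yield an identity in the top-$B$ rows, while the columns of $C_2$ (where $j \notin B$) vanish in the top-$B$ rows. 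Consequently $N$ has the block form
\[
N \;=\; \begin{pmatrix} I_{|B|} & 0 \\ 0 & T \\ X & S \end{pmatrix},
\]
where $X$ collects the bottom entries of $C_1$ and $T,S$ collect the top-$\overline{B}$ and bottom entries of $C_2$. Using the top-left $I_{|B|}$ as pivots, one row operation would eliminate $X$ and leave $N$ row-equivalent to the block-diagonal matrix with blocks $I_{|B|}$ and $\bigl(\begin{smallmatrix}T\\S\end{smallmatrix}\bigr)$, so $\rank N = |B| + \rank \bigl(\begin{smallmatrix}T\\S\end{smallmatrix}\bigr)$.

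The last step is to identify $\bigl(\begin{smallmatrix}T\\S\end{smallmatrix}\bigr)$ with $\cH'$. The columns of $\cH'$ are indexed by pairs $(i,j)$ with $j$ in the complement of $A_i \setminus B$ inside $[n]\setminus B$, i.e.\ $j \in \oA_i \setminus B$, matching $C_2$; for such a column the identity part of $\cH'$ contributes $e_j$ viewed in coordinates $[n]\setminus B$ (matching $T$), and the bottom part contributes $G_j$ at block $i$ drawn from $G|_{\oB}$ (matching $S$). Then $\rank \cH \ge \rank N = |B| + \rank \cH'$, as desired. The only mildly delicate part of the argument is this last identification: confirming that the apparent restriction coincides on the nose with $\cH_{A_1\setminus B, \hdots, A_\ell \setminus B}[G|_{\oB}]$ rather than with something merely isomorphic. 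This amounts to the set-theoretic fact that complementation inside $[n] \setminus B$ sends $A_i \setminus B$ to $\oA_i \setminus B$, and there are no algebraic steps beyond the single row reduction above.
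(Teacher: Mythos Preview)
Your proof is correct and follows essentially the same approach as the paper. Both arguments choose, for each $j\in B$, a single column of $\cH$ indexed by some $(i_j,j)$ with $j\in\oA_{i_j}$, and then exploit that these columns have a $1$ in the top row $j\in B$ and zeros in the top rows indexed by $\oB$, while all columns with $j\notin B$ vanish in the top rows indexed by $B$. The only cosmetic difference is that the paper first fixes a full-rank $r\times r$ minor of $\cH'$ and extends it to an $(r+|B|)\times(r+|B|)$ minor of $\cH$ with nonzero determinant via Laplace expansion along the $|B|$ new rows, whereas you keep all of $\cH'$ intact inside $N$ and use a single row reduction to read off the block-triangular rank; the underlying structural observation is identical.
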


\begin{proof}
  Let $M$ be the matrix on the LHS of (\ref{eq:644}) and $M_B$ be the matrix on the RHS of (\ref{eq:644}). Let $r = \rank M_B$. In particular, there must exist an $r \times r$ submatrix $X$ of $M_B$ with nonzero determinant.
For each $i \in B$, let $j_i \in [\ell]$ be such that $i \in \oA_{j_i}$ (which must exist by the hypothesis for $B$).

Let $Y$ be the $(r + |B|) \times (r + |B|)$ submatrix of $M$ such that $r$ rows and $r$ columns correspond to $X$. The additional $|B|$ rows correspond to the rows of the identity row block corresponding to $B$ and the additional $|B|$ columns correspond to $G_i$ in the $G|_{\oA_{j_i}}$ block for $i \in B$. Note that $\det Y = \pm \det X \neq 0$, as expanding the determinant along the $|B|$ new rows of $Y$ just leave a single term corresponding to $X$. Thus, $\rank M \ge |B| + \rank M_B$.
\end{proof}
\subsection{Relaxed MR Tensor Codes}

\begin{definition}[Relaxed MR Tensor Codes]\label{def:mr-tensor-codes}
  Let $0 \le a' \le a \le m$ and $0 \le b' \le b \le n$. Let $C_{col}$ be a $(m,m-a)$-code and $C_{row}$ be a $(n,n-b)$-code. We say that $C_{col} \otimes C_{row}$ is a $(a',b')$-relaxed $(m,n,a,b)$-MR tensor code if $C_{col} \otimes C_{row}$ can correct any erasure pattern $E \in \mathcal E_{a',b'}^{m,n}$.
\end{definition}

Unlike for MR-tensor codes, $C_{col}$ and $C_{row}$ need not be MDS codes. We show for $a=1$ this condition can be made equivalent to a suitable upper relaxation of a higher order MDS code.

\begin{theorem}\label{thm:mds-tensor-equiv}
  Let $C$ be an $(n,n-b)$-code. Let $d \in \{0, 1, \hdots, b\}$. Let $C'$ be any $(m,m-1)$-MDS-code. The following are equivalent
  \begin{itemize}
    \item[(a)] $C$ is $\MDS^d(m)$.
    \item[(b)] $C' \otimes C$ is $(1,b-d)$-$\rMR(m,n,1,b)$.
  \end{itemize}
\end{theorem}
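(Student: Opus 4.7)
The plan is to reduce the stated equivalence to two applications of Proposition~\ref{prop:mr-tensor}(c), which for the regime $a=1$ with MDS column code translates correctability of erasure patterns directly into $V$-saturation of the corresponding column-index sets. Let $V \in \F^{k \times n}$ be a generator matrix of $C$, where $k := n-b$. Every erasure pattern $E \subseteq [m] \times [n]$ corresponds bijectively to a tuple $(A_1, \ldots, A_m)$ of subsets of $[n]$ via $\overline{E} = \bigcup_{i=1}^{m} \{i\} \times A_i$, i.e.\ $A_i = \{j \in [n] : (i,j) \notin E\}$. Under this identification, Definition~\ref{def:upper-mds} says that condition (a) is exactly: for every tuple $(A_1, \ldots, A_m)$ having the $(k+d)$-dimensional saturation property, $A_1, \ldots, A_m$ are $V$-saturated.

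Next I will translate condition (b) into the same language. Since $a=1$ and $C'$ is MDS, Proposition~\ref{prop:mr-tensor}(c) applied to the tensor code $C' \otimes C$ yields that $E$ is correctable by $C' \otimes C$ if and only if $A_1, \ldots, A_m$ are $V$-saturated. To characterize the membership $E \in \mathcal E^{m,n}_{1, b-d}$, I apply Proposition~\ref{prop:mr-tensor}(c) a second time, now to a tensor code $C' \otimes C^{\star}$ whose row code is a $(n, n-(b-d))$-code with generator matrix $W \in \F^{(k+d) \times n}$. This tells us that $E$ is correctable by $C' \otimes C^{\star}$ iff $A_1, \ldots, A_m$ are $W$-saturated. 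Since $\mathcal E^{m,n}_{1, b-d}$ is defined to be the set of patterns correctable by \emph{some} $(m,n,1,b-d)$-MR tensor code, and generic choices of $W$ over a sufficiently large field realize the MR property, we obtain that $E \in \mathcal E^{m,n}_{1, b-d}$ iff $A_1, \ldots, A_m$ have the $(k+d)$-dimensional saturation property.

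Combining the two translations, condition (b), ``$C' \otimes C$ corrects every $E \in \mathcal E^{m,n}_{1, b-d}$,'' becomes: for every tuple $(A_1, \ldots, A_m)$ with the $(k+d)$-dimensional saturation property, the sets are $V$-saturated. This is exactly condition (a), so both directions of the equivalence follow simultaneously.

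The bulk of the argument is bookkeeping; the only subtle point I expect to defend is the characterization $E \in \mathcal E^{m,n}_{1,b-d} \Leftrightarrow (k+d)\text{-saturation of }(A_1,\ldots,A_m)$ in arbitrary characteristic $p$. One can either observe that the maximum rank of $\cG_{A_1,\ldots,A_m}[W]$ over $W \in \F^{(k+d)\times n}$ is attained by a generic matrix over the algebraic closure $\overline{\F_p}$, so some specific $W$ realizes it iff a generic one does; or, more cleanly, invoke Theorem~\ref{thm:regularity}, which gives a characteristic-free combinatorial description of $\mathcal E^{m,n}_{1,b-d}$ for $a=1$ and which (via Proposition~\ref{prop:tian-rank}) is equivalent to the $(k+d)$-dimensional saturation property. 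Either route is routine.
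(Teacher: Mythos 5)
Your proof is correct and takes essentially the same route as the paper: the paper combines Proposition~\ref{prop:mr-tensor} (applied to $C'\otimes C$) with its Proposition~\ref{prop:sat-equiv} --- which is precisely your second application of Proposition~\ref{prop:mr-tensor} to a generic $(k+d)\times n$ row-code generator --- to reduce both (a) and (b) to the statement that every $(k+d)$-dimensionally saturated tuple is $G$-saturated. Your closing remark on genericity versus characteristic addresses a point the paper leaves implicit, but the argument is the same.
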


We first observe the following corollary of Proposition~\ref{prop:mr-tensor}

\begin{proposition}\label{prop:sat-equiv}
  Consider $E \subseteq [m] \times [n]$. Let $A_1, \hdots, A_m \subseteq [n]$ be such that $\bar{E} = \bigcup_{i=1}^{m} \{i\} \times A_i$. The following are equivalent.
  \begin{enumerate}
    \item $E \in \mathcal E_{1,b}^{m,n}$.
    \item $A_1, \hdots, A_m$ have the $(n-b)$-dimensional saturation property.
  \end{enumerate}
\end{proposition}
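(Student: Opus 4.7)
The plan is to reduce to Proposition~\ref{prop:mr-tensor}(a)$\Leftrightarrow$(c) applied to a concrete MR tensor code, and then upgrade the code-specific saturation condition to the generic saturation condition via the known equivalence between MR tensor codes (in the $a=1$ regime) and higher order MDS row codes.

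First I would fix an $(m,n,1,b)$-MR tensor code $C_1 \otimes C_2$, where $C_1$ is the single-parity-check $(m,m-1)$-code (automatically MDS, since its distance is $a+1=2$) and $C_2$ is some $(n,n-b)$-code with generator matrix $V$ chosen so that the tensor product is MR; such codes exist over sufficiently large fields, and only their existence matters for this argument. By the definition of maximal recoverability, $C_1 \otimes C_2$ corrects exactly the patterns in $\mathcal E_{1,b}^{m,n}$. Applying Proposition~\ref{prop:mr-tensor}(a)$\Leftrightarrow$(c) with this particular $C_1$ and $C_2$, correctability of $E$ by $C_1 \otimes C_2$ becomes $\rank \mathcal G_{A_1,\hdots,A_m}[V] = m(n-b)$, i.e., $A_1,\hdots,A_m$ are $V$-saturated. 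This handles the ``code-dependent'' half of the equivalence.

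Second, I would invoke the equivalence from \cite{bgm2021mds} recalled in the introduction: an $(m,n,1,b)$-tensor code $C_1 \otimes C_2$ is MR if and only if $C_2$ is higher order MDS of order $m$, i.e., $V$ is $\MDS(m)$. Corollary~\ref{cor:rank-equiv} then gives $\rank \mathcal G_{A_1,\hdots,A_m}[V] = \rank \mathcal G_{A_1,\hdots,A_m}[W]$ for a generic $W \in \F^{(n-b) \times n}$ and any choice of $A_1,\hdots,A_m \subseteq [n]$. Combining this with Definition~\ref{def:saturation}, $A_1,\hdots,A_m$ are $V$-saturated precisely when they are $W$-saturated, which is precisely the $(n-b)$-dimensional saturation property. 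Chaining the two reductions closes the equivalence.

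The only subtlety to be careful about is a potential characteristic-dependence: $\mathcal E_{a,b,p}^{m,n}$ is a priori a function of $p$, whereas saturation is defined via generic matrices. This does not cause trouble because in the $a=1$ regime Theorem~\ref{thm:regularity} gives a complete, characteristic-free combinatorial description of $\mathcal E_{1,b}^{m,n}$; so once I exhibit the equivalence over any single field where an MR instantiation exists, it holds universally. Beyond this bookkeeping, no serious obstacle is anticipated, as the proof is essentially a two-step rewrite: Proposition~\ref{prop:mr-tensor} converts correctability into a rank condition on $V$, and Corollary~\ref{cor:rank-equiv} (via the $\MDS(m)$ property of $V$) converts that rank condition into its generic counterpart.
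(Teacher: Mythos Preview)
Your proposal is correct and follows essentially the same approach as the paper: both apply Proposition~\ref{prop:mr-tensor}(a)$\Leftrightarrow$(c) to an MR instantiation and identify $V$-saturation with the generic $(n-b)$-dimensional saturation property. The paper's version is just slightly more direct---it takes $V=W$ to be generic from the outset (so that $C_1\otimes C_2$ is automatically MR and $W$-saturation is by definition the saturation property), rather than fixing a concrete MR code and then transferring $V$-saturation to $W$-saturation via Corollary~\ref{cor:rank-equiv}.
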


\begin{proof}
  Apply Proposition~\ref{prop:mr-tensor} (a) and (c) with for a generic $(n-b)\times n$ matrix $W$.
\end{proof}

\begin{proof}[Proof of Theorem~\ref{thm:mds-tensor-equiv}]
  Consider $E \subseteq [m] \times [n]$. Let $A_1, \hdots, A_m \subseteq [n]$ be such that $\bar{E} = \bigcup_{i=1}^{m} \{i\} \times A_i$. Let $G$ be a generator matrix for $C$.

  By Proposition~\ref{prop:mr-tensor}, we have that $E$ is correctable for $C' \otimes C$ if and only if $A_1, \hdots, A_m$ is $G$-saturated. By Proposition~\ref{prop:sat-equiv}, we have that $C' \otimes C$ corrects every pattern in $\mathcal E_{1,b}^{m,n}$ if and only if every $A_1, \hdots, A_m$ with the $n-b$-dimensional saturation property is $G$-saturated.
\end{proof}

The following lemma is useful in proving that relaxed MR tensor codes exist over small fields.

\begin{lemma}\label{lem:mr-pad}
Let $E \in \mathcal E_{a,b}^{m,n}$. Pick $A \subseteq [m]$ and $B \subseteq [n]$. Let $E' = E \cup A \times [n] \cup [m] \times B$. Then, $E' \in \mathcal E_{a',b'}^{m,n}$, where $a' = \min(a+|A|, m)$ and $b' = \min(b+|B|, n)$.
\end{lemma}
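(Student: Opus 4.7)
The plan is to split the claim into its row part and column part: I first prove that if $F \in \mathcal E^{m,n}_{a,b}$ and $A \subseteq [m]$, then $F \cup (A \times [n]) \in \mathcal E^{m,n}_{\min(a+|A|,m),\, b}$, and then apply the symmetric column version (with the roles of rows and columns exchanged) to $F = E \cup (A \times [n])$ to obtain $E' \in \mathcal E^{m,n}_{a',b'}$. Both parts reduce to the same argument up to swapping rows and columns. When $a + |A| \ge m$ the target column code has dimension $0$ and every erasure pattern is trivially correctable, so I assume henceforth that $|A| < m - a$.

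By hypothesis, $E$ is correctable by an MR tensor code of parameters $(m,n,a,b)$, whose row and column codes are necessarily MDS. Proposition~\ref{prop:mr-tensor} then supplies MDS generator matrices $U \in \F^{(m-a) \times m}$ and $V \in \F^{(n-b) \times n}$ satisfying
\[
\sum_{(i,j) \in \bar E} \langle U_i \otimes V_j \rangle = \F^{m-a} \otimes \F^{n-b}.
\]
Because $U$ is MDS and $|A| \le m-a$, the columns $\{U_i : i \in A\}$ are linearly independent and span a subspace $X \subseteq \F^{m-a}$ of dimension exactly $|A|$. I choose any linear surjection $\pi\colon \F^{m-a} \twoheadrightarrow \F^{m-a-|A|}$ with $\ker \pi = X$ and define $U' \in \F^{(m-a-|A|) \times m}$ by $U'_i := \pi(U_i)$. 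Then $U'_i = 0$ for $i \in A$, while $U'$ retains full row rank (since $\pi$ is surjective and $U$ has full row rank), so $U'$ generates an $(m, m-a-|A|)$-code (typically non-MDS).

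The key step is to apply $\pi \otimes \mathrm{id}$ to the spanning identity: the right-hand side maps surjectively onto $\F^{m-a-|A|} \otimes \F^{n-b}$, while on the left all $i \in A$ terms vanish, yielding
\[
\sum_{(i,j) \in \bar E,\ i \notin A} \langle U'_i \otimes V_j \rangle = \F^{m-a-|A|} \otimes \F^{n-b}.
\]
Since $\overline{E \cup (A \times [n])} = \{(i,j) \in \bar E : i \notin A\}$, Proposition~\ref{prop:mr-tensor} now implies that the tensor code $U' \otimes V$ corrects $E \cup (A \times [n])$; hence this pattern is information-theoretically correctable with parameters $(m,n,a+|A|,b)$ in the relevant characteristic, which places it in $\mathcal E^{m,n}_{a+|A|, b}$ (an MR tensor code of these parameters exists over a sufficiently large field of the same characteristic and corrects every information-theoretically correctable pattern). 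The only mild subtlety, which I expect to be the main obstacle requiring attention, is precisely this last observation: although $U'$ is not MDS and $U' \otimes V$ is not itself MR, membership of a pattern in $\mathcal E^{m,n}_{a+|A|, b}$ requires only that the pattern be correctable by \emph{some} $(m,n,a+|A|, b)$-tensor code, which the construction produces.
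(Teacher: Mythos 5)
Your proof is correct and follows essentially the same route as the paper's: both reduce to modifying rows (then columns by symmetry), invoke the spanning characterization of Proposition~\ref{prop:mr-tensor}(b), and quotient the column-code side by the span of $\{U_i : i\in A\}$ to get a smaller-dimensional tensor code correcting the enlarged pattern. The only difference is cosmetic — the paper peels off one element of $A$ at a time by induction, whereas you kill all of $A$ with a single surjection — and you correctly flag the same final step the paper uses implicitly, namely that correctability by \emph{some} tensor code of the given parameters suffices for membership in $\mathcal E^{m,n}_{a',b'}$.
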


\begin{proof}
By a suitable inductive argument, we may without loss of generality assume that $|A| = 1$ and $|B| = 0$. Also assume that $A = \{m\}$, so we have that $\bar{E'} = \bar{E} \cap [m-1] \times [n]$. If $a' = m$, this result follows from the fact that every pattern is correctable. Thus, assume $a' = a+|A|$.

Let $U$ be a generic $(m-a) \times m$ matrix and $V$ a generic $(n-b) \times n$ matrix.  By Proposition~\ref{prop:mr-tensor}, we have that $\sum_{(i,j) \in \bar{E}} U_i \otimes V_j = \F^{m-a} \otimes \F^{n-b}$ Let $\Pi : \F^{m-a} \to \F^{m-a-1}$ be a surjective linear map such that $\Pi(U_m) = 0$. Thus,
\[
\sum_{(i,j) \in \bar{E'}} \langle \Pi(U_i) \otimes V_j\rangle = \sum_{(i,j) \in \bar{E}} \langle \Pi(U_i) \otimes V_j\rangle = \F^{m-a-1} \otimes \F^{n-b}
\]
Thus, by Proposition~\ref{prop:mr-tensor}, $E'$ is correctable for $C_1 \otimes C_2$ for some $(m,m-a-1)$-code $C_1$ and $(n,n-b)$-code $C_2$. Thus, $E' \in \mathcal E_{a',b'}^{m,n}$, as desired.
\end{proof}

As a corollary, we get the following fact about the saturation property.

\begin{corollary}
Let $A_1, \hdots, A_m \subseteq [n]$ have the $k+d$-dimensional saturation property. Then, for any $B \subseteq [n]$ of size at most $d$, we have that $A_1 \setminus B, \hdots, A_m \setminus B$ have the $k$-dimensional saturation property.
\end{corollary}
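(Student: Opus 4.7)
The plan is to reduce this corollary to Lemma~\ref{lem:mr-pad} via the MR-tensor-code translation of saturation given by Proposition~\ref{prop:sat-equiv}. The point is that having the $(k+d)$-dimensional saturation property is just a disguised way of saying that a certain erasure pattern is $(1,\,n-k-d)$-MR-correctable, and Lemma~\ref{lem:mr-pad} already tells us exactly what happens to such a pattern when we adjoin $[m] \times B$ to it.

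Concretely, I will first define $\bar E := \bigcup_{i=1}^m \{i\} \times A_i \subseteq [m] \times [n]$, which is a disjoint union across $i$, and invoke Proposition~\ref{prop:sat-equiv} to rephrase the hypothesis as $E \in \mathcal E_{1,\,n-k-d}^{m,n}$. Next I set $E' := E \cup ([m] \times B)$, so that $\overline{E'} = \bigcup_{i=1}^m \{i\} \times (A_i \setminus B)$, still disjoint across $i$. Applying Lemma~\ref{lem:mr-pad} with $A = \emptyset$ and the given $B$ yields $E' \in \mathcal E_{1,\,b'}^{m,n}$ with $b' = \min(n-k-d+|B|,\,n)$. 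Using $|B| \le d$, we get $b' \le n-k$, so the monotonicity of $\mathcal E_{a,b}^{m,n}$ in $b$ (noted just before Theorem~\ref{thm:regularity}) upgrades this to $E' \in \mathcal E_{1,\,n-k}^{m,n}$. A second application of Proposition~\ref{prop:sat-equiv}, now in the reverse direction, translates this membership back into the statement that $A_1 \setminus B, \hdots, A_m \setminus B$ have the $\bigl(n - (n-k)\bigr) = k$-dimensional saturation property, as required.

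There is essentially no obstacle here, since Lemma~\ref{lem:mr-pad} is doing all the heavy lifting. The only bookkeeping points worth flagging are: (i) checking that $b' = n - k - d + |B|$ indeed lies in $[n-k-d,\,n-k]$ when $|B| \in [0,d]$, so that the monotonicity step lands in exactly the MR family whose saturation translation gives dimension $k$; and (ii) verifying that the unions defining $\bar E$ and $\overline{E'}$ are in fact disjoint, which is immediate since they are indexed by distinct first coordinates in $[m]$, so that Proposition~\ref{prop:sat-equiv} applies cleanly in both directions.
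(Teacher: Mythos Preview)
Your proof is correct and follows essentially the same route as the paper's own argument: both translate the saturation hypothesis into the statement $E \in \mathcal E_{1,n-k-d}^{m,n}$ via Proposition~\ref{prop:sat-equiv}, apply Lemma~\ref{lem:mr-pad} with $A=\emptyset$ to the pattern $E' = E \cup ([m]\times B)$, and translate back. You spell out the monotonicity step (needed when $|B| < d$) that the paper leaves implicit, but the structure is identical.
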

\begin{proof}
  Let $E = \bigcup_{i=1}^m \{i\} \times \oA_i$ and $E' = \bigcup_{i=1}^m \{i\} \times (\oA_i \cup B)$. By Proposition~\ref{prop:sat-equiv}, it suffices to prove that if $E \in \mathcal E_{1,n-k-d}^{m,n}$ then $E' \in \mathcal E_{1,n-k}^{m,n}$. This follows form Lemma~\ref{lem:mr-pad}.
\end{proof}

\section{Constructions from Randomly Punctured Algebraic Codes}
\label{sec:agcode}
In this section, we will show that by randomly puncturing algebraic codes (such as Reed-Solomon codes or AG codes or even random linear codes), we can construct both upper and lower relaxed higher order MDS codes over small fields. We first recall some standard facts from algebraic geometry similar to \cite{bdg2023a}.

\subsection{Algebraic Preliminaries}

Let $\F$ be an algebraically closed field. An ideal $I\subset \F[x_1,\dots,x_k]$ is a subset of polynomials closed under addition and multiplication with arbitrary polynomials in $\F[x_1,\dots,x_k]$. The set of common zeros of a collection of polynomials in $\F[x_1,x_2,\dots,x_k]$ is called a \emph{variety}. The variety $V(I)$ denotes the common zeros of polynomials in an ideal $I$. The set of all polynomials vanishing on a subset $V\subset \F^k$ is an ideal in $\F[x_1,x_2,\dots,x_k]$ and is denoted by $I(V)$. 
An ideal $I\subseteq \F[x_1,\hdots,x_k]$ is said to be radical if $f^r\in I$ for any $r$ implies $f\in I$. 
By Hilbert's Nullstellensatz, there is a one-to-one correspondence between radical ideals and varieties, i.e., for a variety $X$, $I(X)$ is a radical ideal and $V(I(X))=X$. 
An irreducible variety $X$ is a variety for which $I(X)$ is a prime ideal.

A Zariski open set over a variety $X$ is a set of the form $U_f = \{x | f(x) \ne 0, x \in X\}$, where $f\in \F[x_1,\hdots,x_k]$. The Zariski topology over a variety $X$ is generated by the Zariski open sets over $X$. Note that $X$ is irreducible if and only if no subset of $X$ can be partitioned into two non-empty Zariski open sets. If $U_f$ is Zariski dense in $X$, then we say that $f$ is generically non-vanishing. For irreducible $X$, it is equivalent check that $f$ is nonzero for one point of $X$ or that $f \not\in I(X)$. We now state some simple facts about varieties.

\begin{lemma}[Product of varieties (See 10.4.H in \cite{raviVakilFOAG})]\label{lem-prodIrred}
If $X_1\subseteq \F^{k_1}$ and $X_2\subseteq \F^{k_2}$ are two varieties then $X_1\times X_2\subseteq \F^{k_1+k_2}$ is also a variety. Furthermore, if $X_1$ and $X_2$ are irreducible then so is $X_1\times X_2$.
\end{lemma}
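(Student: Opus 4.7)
The plan is to handle the two claims of the lemma separately. For the first claim, I would start by writing $X_1 = V(I_1)$ with $I_1 \subseteq \F[x_1,\dots,x_{k_1}]$ and $X_2 = V(I_2)$ with $I_2 \subseteq \F[y_1,\dots,y_{k_2}]$. Embedding both ideals into the combined polynomial ring $\F[x_1,\dots,x_{k_1},y_1,\dots,y_{k_2}]$ (treating the $y$'s as dummy variables for polynomials in $I_1$ and the $x$'s as dummy variables for polynomials in $I_2$), a direct unpacking of definitions shows that
\[
X_1 \times X_2 \;=\; V(I_1 \cup I_2) \;\subseteq\; \F^{k_1+k_2},
\]
so $X_1 \times X_2$ is a variety.

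For irreducibility, the plan is a standard topological fiber argument. Suppose for contradiction that $X_1 \times X_2 = Y_1 \cup Y_2$ for two proper Zariski-closed subsets $Y_1, Y_2$. For any fixed $a \in X_1$, the fiber $\{a\} \times X_2$ is (via the obvious bijection) homeomorphic to $X_2$, hence irreducible; since the two closed subsets $Y_i \cap (\{a\} \times X_2)$ cover it, irreducibility of $X_2$ forces the whole fiber to lie inside exactly one of the $Y_i$. I would then define $U_i := \{a \in X_1 : \{a\} \times X_2 \subseteq Y_i\}$, so that $X_1 = U_1 \cup U_2$, and observe that
\[
U_i \;=\; \bigcap_{b \in X_2}\, \{a \in X_1 : (a,b) \in Y_i\}.
\]
Each set on the right is the preimage of the Zariski-closed set $Y_i$ under the polynomial (hence continuous) section $a \mapsto (a,b)$, so is closed, and an arbitrary intersection of Zariski-closed sets is still closed. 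Irreducibility of $X_1$ now forces $U_1 = X_1$ or $U_2 = X_1$, which means that one of the $Y_i$ equals all of $X_1 \times X_2$, a contradiction.

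The main obstacle is essentially bookkeeping: one has to verify that each $U_i$ is Zariski-closed in $X_1$, for which it suffices that each section $a \mapsto (a,b)$ be a polynomial map (hence continuous in the Zariski topology) and that arbitrary intersections of closed sets remain closed. Once these observations are in place, everything else follows immediately from the definitions of variety and irreducibility, and no further algebraic machinery beyond Hilbert's Nullstellensatz (already invoked in the excerpt) is needed.
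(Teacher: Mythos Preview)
Your argument is correct and is the standard topological fiber argument for this fact. Note, however, that the paper does not actually give a proof of this lemma: it is stated with a citation to Vakil's \emph{Foundations of Algebraic Geometry} (exercise 10.4.H) and used as a black box, so there is no ``paper's own proof'' to compare against. Your write-up supplies exactly the classical proof one would find in a first course (e.g., Hartshorne I.3.15 or Shafarevich), and the only cosmetic point is that irreducibility of $X_2$ forces the fiber to lie in \emph{at least} one $Y_i$, not exactly one---but this is harmless since you only need $U_1 \cup U_2 = X_1$, not disjointness.
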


A non-empty variety $X$ is said to have dimension $r$ if intersecting $X$ with $r$ generic hyperplanes gives a finite set of points. If we count the number of points projectively, that is take $\F_q^n$ as a subset of $\P \F_q^{n}$ and consider hyperplanes in projective space, then this number of points is generically constant and is said to be the degree of $X$. Bezout's theorem for curves states two irreducible curves of degree $d_1$ and $d_2$ intersect in at most $d_1d_2$ points. The following is a generalization to varieties. 

\begin{lemma}[Bezout's theorem (see 18.6.J in \cite{raviVakilFOAG})]\label{lem:bezout}
If $X$ is an irreducible curve of degree $d_1$ cut out by $I$ and $f$ is a polynomial of degree $d_2$ not vanishing on it then the $f$ has at most $d_1d_2$ zeros on $X$.

In general, if $X$ is an irreducible variety of degree $d_1$ and $f$ is a polynomial of degree $d_2$ not vanishing on it then $\{x|f(x)=0\}\cap X$ has degree at most $d_1d_2$.
\end{lemma}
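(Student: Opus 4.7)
The plan is to invoke the standard textbook proof of Bezout's theorem based on the Hilbert polynomial, which is what the cited reference (Vakil's \emph{Foundations of Algebraic Geometry}) does. The overall strategy is to define the degree of a projective variety as the leading term (times $r!$) of its Hilbert polynomial, then track how intersecting with a degree-$d_2$ hypersurface changes this polynomial.

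More concretely, first I would projectivize $X$ (pass to its projective closure in $\P^n$) and recall that for an irreducible projective variety of dimension $r$ and degree $d_1$, the Hilbert polynomial satisfies $P_X(t) = (d_1/r!)\,t^r + O(t^{r-1})$. Next, given a homogeneous polynomial $f$ of degree $d_2$ with $f \notin I(X)$, the short exact sequence of sheaves
\[
0 \to \mathcal{O}_X(-d_2) \xrightarrow{\cdot f} \mathcal{O}_X \to \mathcal{O}_{X \cap V(f)} \to 0
\]
gives $P_{X \cap V(f)}(t) = P_X(t) - P_X(t - d_2)$. Comparing leading coefficients yields that $X \cap V(f)$ has dimension $r-1$ and degree exactly $d_1 d_2$. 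Specializing to the curve case $r=1$, the intersection is a zero-dimensional scheme of degree $d_1 d_2$, and the number of reduced points is at most the scheme-theoretic degree, giving the bound on zeros of $f$ on $X$.

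The main subtlety will be justifying that the intersection is \emph{proper}, i.e., that it has the expected codimension one in $X$ rather than all of $X$. This is where irreducibility of $X$ enters: if some component of $V(f) \cap X$ had full dimension $r$, then by irreducibility it would equal $X$, forcing $f \in I(X)$ and contradicting the hypothesis. A secondary issue is that the statement of the lemma is phrased as an inequality on point counts, rather than an equality on degrees, so we are free to let projectivization add points at infinity and to let non-reduced intersection structure inflate the scheme-theoretic count; both only strengthen the upper bound $d_1 d_2$. Since all of these ingredients are entirely standard, in practice the proof of the lemma is simply the citation to Vakil.
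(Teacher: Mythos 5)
The paper offers no proof of this lemma at all---it is imported verbatim as a standard fact with the citation to Vakil (18.6.J), so there is nothing to diverge from. Your sketch is exactly the standard Hilbert-polynomial argument behind that citation (exact sequence for a non-zerodivisor, comparison of leading coefficients, irreducibility guaranteeing the intersection is proper), and it is correct, including the observation that projectivization and non-reducedness only strengthen the stated inequality.
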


We also need the following lemma. It can proved by noting that any irreducible variety has a local parametrization using power series in some (Zariski) open neighborhood of a point in it. So to check if a polynomial is generically non-vanishing on an irreducible variety $X$ it is enough to check if it is non-vanishing on the power series parametrization. A proof can be found in the appendix of \cite{bdg2023a}.

\begin{lemma}[See Lemma 6.2 in \cite{bdg2023a}]\label{lem-powerS}
For an algebraically closed $\F$ and an irreducible variety $X\subseteq \F^k$, there exists a $d$ ($d$ will be the dimension of $X$) such that we can find functions $f_1,\hdots,f_k$ (depending on $X$) in the formal power series ring $\F[[z_1,\hdots,z_d]]$ such that for any polynomial $g\in \F[x_1,\hdots,x_k]$ over $\F^k$, it is generically non-vanishing over $X$ if and only if $g(f_1,\hdots,f_k)\in \F[[z_1,\hdots,z_d]]$ is non-zero.

We also have that for $X^n\subseteq \F^{kn}$ any polynomial $g\in \F[x_{1,1},\hdots,x_{1,k},\hdots,x_{n,1},\hdots,x_{n,k}]$ is generically non-vanishing over $X^n$ if and only if 
$$g(f_1(z_{1,1},\hdots,z_{1,d}),\hdots,f_k(z_{1,1},\hdots,z_{1,d}),\hdots,f_1(z_{n,1},\hdots,z_{n,d}),\hdots,f_k(z_{n,1},\hdots,z_{n,d}))$$ is non-zero in $\F[[z_{1,1},\hdots,z_{1,d},\hdots,z_{n,1},\hdots,z_{n,d}]]$. 
\end{lemma}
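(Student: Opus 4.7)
The plan is to reduce the statement to the existence of formal coordinates at a smooth point of $X$ together with the faithfulness of completion. Recall that for an irreducible variety $X\subseteq\F^k$ of dimension $d$, a polynomial $g\in\F[x_1,\hdots,x_k]$ is generically non-vanishing on $X$ precisely when its image $\bar g$ in the coordinate ring $\F[X]=\F[x_1,\hdots,x_k]/I(X)$ is nonzero. Hence the lemma reduces to producing an injective $\F$-algebra homomorphism $\Phi:\F[X]\hookrightarrow \F[[z_1,\hdots,z_d]]$ and setting $f_i:=\Phi(x_i)$; then $g(f_1,\hdots,f_k)=\Phi(\bar g)$, which is nonzero in the power series ring iff $\bar g\neq 0$ in $\F[X]$, as required.

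To build $\Phi$, I would first pick a smooth point $p\in X$. Smooth points form a nonempty Zariski-open subset of any variety over an algebraically closed field, so this is possible. At such a point, the local ring $\mathcal O_{X,p}$ is a regular local Noetherian $\F$-algebra of Krull dimension $d$, with residue field $\F$ (using that $\F$ is algebraically closed). By Cohen's structure theorem for complete regular local rings containing a coefficient field, the $\mathfrak m_p$-adic completion $\widehat{\mathcal O}_{X,p}$ is isomorphic as an $\F$-algebra to $\F[[z_1,\hdots,z_d]]$. I would then chain the canonical maps
\[
\F[X]\;\xrightarrow{\;\alpha\;}\;\mathcal O_{X,p}\;\xrightarrow{\;\beta\;}\;\widehat{\mathcal O}_{X,p}\;\cong\;\F[[z_1,\hdots,z_d]],
\]
where $\alpha$ is localization at the maximal ideal of $p$ and $\beta$ is the completion map. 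Since $X$ is irreducible, $\F[X]$ is an integral domain and $\alpha$ is injective. Injectivity of $\beta$ is Krull's intersection theorem: in a Noetherian local ring, $\bigcap_n \mathfrak m_p^n=0$. Composing, $\Phi:=\beta\circ\alpha$ is the desired injection, and $f_i:=\Phi(x_i)$ are the promised power series.

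For the second part concerning $X^n\subseteq \F^{kn}$, I would apply Lemma~\ref{lem-prodIrred} to conclude that $X^n$ is irreducible. The point $(p,\hdots,p)$ is smooth on $X^n$ since smoothness is preserved under products, and the completion of $\mathcal O_{X^n,(p,\hdots,p)}$ is $\F[[z_{1,1},\hdots,z_{n,d}]]$ by Cohen's theorem again (or equivalently by taking the completed tensor product of $n$ copies of $\F[[z_1,\hdots,z_d]]$). The resulting injection sends the $(i,j)$-th coordinate $x_{i,j}$ to $f_j(z_{i,1},\hdots,z_{i,d})$, exactly the substitution pattern asserted in the statement. So the same injectivity argument applied to $X^n$ gives the product version verbatim.

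The main technical obstacle is the combined invocation of Cohen's structure theorem (to identify the completion with a formal power series ring) and Krull's intersection theorem (to ensure the completion map is injective); both are standard but do rely on Noetherianity and on smoothness of $p$. Everything else — existence of smooth points, the irreducibility of $X^n$, and the translation between "generically non-vanishing on $X$" and "nonzero in $\F[X]$" — is formal given the irreducibility hypothesis.
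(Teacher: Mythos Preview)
Your argument is correct and matches the approach the paper sketches (and attributes to \cite{bdg2023a}): both pass through a local power-series parametrization at a smooth point, which you make precise via completion at $p$, Cohen's structure theorem to identify $\widehat{\mathcal O}_{X,p}\cong\F[[z_1,\hdots,z_d]]$, and Krull's intersection theorem for injectivity. The product case is handled identically in both, via the completed tensor product at the smooth point $(p,\hdots,p)$ of $X^n$.
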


 We need the following consequence of the above lemma.

\begin{lemma}\label{lem:partInit}
Let $f$ be a degree $d$ polynomial over $n_1+n_2$ variables such that $f$ is generically non-vanishing over $X_1\times X_2$ where $X_1$ and $X_2$ are defined over $n_1$ and $n_2$ variables respectively. 

Then there exists a degree at most $d$ polynomial $g$ over $n_1$ variables such that $g$ is generically non-vanishing over $X_1$ and if $g(x_1)$ is non-zero then setting the $n_1$ variables in $f$ to $x_1$ gives a polynomial which is generically non-vanishing on $X_2$. 
\end{lemma}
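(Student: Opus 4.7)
The plan is to exploit the formal power series characterization of ``generically non-vanishing'' given by Lemma~\ref{lem-powerS}. Apply that lemma to each of $X_1$ and $X_2$ to obtain power series parametrizations $\phi_1 \in \F[[z]]^{n_1}$ and $\phi_2 \in \F[[w]]^{n_2}$ in disjoint variable tuples $z$ and $w$ such that a polynomial is generically non-vanishing on $X_1$, on $X_2$, or on $X_1 \times X_2$ if and only if the corresponding substitution is nonzero in the appropriate formal power series ring.

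First I would expand $f(x_1, y)$ as a polynomial in the $y$-variables,
\[
f(x_1, y) = \sum_{\alpha} c_\alpha(x_1)\, y^\alpha,
\]
where each $c_\alpha \in \F[x_1]$ has degree at most $d - |\alpha| \le d$. Substituting $y = \phi_2(w)$ and collecting by powers of $w$ yields
\[
f(x_1, \phi_2(w)) \;=\; \sum_{\beta} h_\beta(x_1)\, w^\beta, \qquad h_\beta(x_1) \;:=\; \sum_\alpha c_\alpha(x_1)\,\bigl[\phi_2(w)^\alpha\bigr]_\beta,
\]
where $[\cdot]_\beta$ extracts the $w^\beta$-coefficient. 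Since those extracted coefficients are scalars in $\F$, each $h_\beta$ is an $\F$-linear combination of the $c_\alpha$'s, hence a polynomial in $x_1$ of degree at most $d$.

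Next, the hypothesis that $f$ is generically non-vanishing on $X_1 \times X_2$ says, via Lemma~\ref{lem-powerS}, that $f(\phi_1(z), \phi_2(w)) = \sum_\beta h_\beta(\phi_1(z))\, w^\beta$ is a nonzero element of the formal power series ring in $(z,w)$. Hence some $h_{\beta^*}(\phi_1(z))$ is nonzero in $\F[[z]]$, which by the same lemma means $h_{\beta^*}$ is generically non-vanishing on $X_1$. I would then take $g := h_{\beta^*}$. By construction $g$ has degree at most $d$ and is generically non-vanishing on $X_1$; moreover, whenever $g(x_1) \neq 0$ the $w^{\beta^*}$-coefficient of $f(x_1, \phi_2(w))$ is nonzero, so $f(x_1, \phi_2(w)) \neq 0$ in $\F[[w]]$, which by Lemma~\ref{lem-powerS} is exactly the statement that $f(x_1, y)$ is generically non-vanishing on $X_2$.

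There is no serious obstacle: the whole argument is a careful bookkeeping of a two-variable power series expansion. The only point that deserves a moment of thought is the degree bound on $g$, which is automatic once one observes that extracting the $w^\beta$-coefficient of $\phi_2(w)^\alpha$ is a scalar operation, so every $h_\beta$ lives in the $\F$-linear span of the $c_\alpha$'s and therefore inherits their degree bound of at most $d$.
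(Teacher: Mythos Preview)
Your proof is correct and follows essentially the same approach as the paper: substitute the power series parametrization of $X_2$ into $f$, expand in the new formal variables, and take $g$ to be a coefficient that is generically non-vanishing on $X_1$. Your write-up is more detailed than the paper's terse version---in particular you make explicit (via the $X_1$ parametrization and the product form of Lemma~\ref{lem-powerS}) why such a coefficient must exist, and why the degree bound holds.
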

\begin{proof}
Using Lemma~\ref{lem-powerS} there exists power series $h_1,\hdots,h_{n_2}$ such that a polynomial is generically non-vanishing over $X_2$ if on substitution it remains non-vanishing. 

Substituting $h_1,\hdots,h_{n_2}$ for $X_2$ in $f$ has to give a non-zero polynomial and one of its coefficients has to be generically non-vanishing over $X_1$ by assumption. Picking any such coefficient gives us $g$. The degree bound follows from the construction.
\end{proof}

\subsection{Generalized GM-MDS Theorem for Polynomial Codes}
We will be crucially using the generalized GM-MDS theorem for polynomial codes due to Brakensiek-Dhar-Gopi (\cite{bdg2023a}). We first give the relevant definitions from their work.

\begin{definition}[$\MDS(\ell)$ property for varieties, \cite{bdg2023a}]\label{def-MDSlvar}

For an algebraic closed $\F$, we say a variety $X\subseteq \F^k$ satisfies the $\MDS(\ell)$ property if and only if for every $k\times n$ matrix $V$ with columns of $V$ initialized from $X$, $V$ is generically a $[n,k]$-$\MDS(\ell)$ code.

Concretely this means for every $A_1,\hdots,A_\ell\subseteq [n]$ with $|A_i| \le k$ and $|A_1|+\hdots+|A_\ell|=(\ell-1)k$ satisfies  $W_{A_1}\cap\hdots\cap W_{A_\ell}=0$ (equivalently $A_1,\hdots,A_\ell$ satisfies the $k$-dimensional null intersection property) if and only if $det(\cG_{A_1,\hdots,A_\ell}(V))$ is generically non-vanishing on $X^{(\ell-1)k}$, where $W$ is a $k\times n$ generic matrix (that is a matrix whose every entry is an independent formal variable).

\end{definition}
Since $\MDS(2)$ is equivalent to the usual MDS condition, we call an $\MDS(2)$ variety just an $\MDS$ variety. We will use the following simple characterization of $\MDS$ varieties.

\begin{proposition}[$\MDS$ varieties are not contained in hyper-planes passing through the origin (Theorem 6.3 in \cite{bdg2023a})]\label{thm:MDS2VarChar}
A variety $X\subseteq \F^k$ is $\MDS$ if and only if each of its irreducible components is not contained in any hyper-plane passing through the origin.
In particular, an irreducible variety is $\MDS$ if and only if it is not contained in any hyperplane passing through the origin.
\end{proposition}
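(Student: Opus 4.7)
The plan is to first unpack the $\MDS(2)$ condition into a concrete linear-algebraic statement, then prove the two directions separately. I would begin with a direct computation of $\det\cG_{A_1,A_2}[V]$: after eliminating the top $I_k$ row block by subtracting it from the second, the block-triangular form shows that the determinant equals $\pm\det V|_{A_1\cup A_2}$ when $A_1$ and $A_2$ are disjoint with $|A_1|+|A_2|=k$, and vanishes identically otherwise (a shared index contributes two columns that are negatives of each other). Since for a generic $k\times n$ matrix $W$ the null-intersection property $W_{A_1}\cap W_{A_2}=0$ holds precisely when $A_1\cap A_2=\emptyset$ and $|A_1|+|A_2|=k$, Definition~\ref{def-MDSlvar} with $\ell=2$ collapses to the following: for every $k$-subset $A\subseteq [n]$, the polynomial $\det V|_A$ (in the entries of $k$ columns drawn from $X$) must be generically non-vanishing on $X^k$. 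Using Lemma~\ref{lem-prodIrred}, the irreducible components of $X^k$ are exactly the products $X_{i_1}\times\cdots\times X_{i_k}$ where each $X_{i_j}$ is an irreducible component of $X$, so what actually needs to be shown is non-vanishing on each such product.

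For the forward direction I would argue by contrapositive. If some irreducible component $X_i$ of $X$ lies in a hyperplane $H=\{x:\langle a,x\rangle=0\}$ with $a\neq 0$, then taking $A=[k]$ and drawing every column of $V|_A$ from $X_i$ forces each column to be orthogonal to $a$, so the columns span a subspace of dimension at most $k-1$ and $\det V|_{[k]}$ vanishes identically on $X_i^k$. Since $X_i^k$ is an irreducible component of $X^k$, this precludes generic non-vanishing, contradicting $\MDS$.

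For the backward direction, assuming no irreducible component of $X$ sits inside any hyperplane through the origin, I would construct, by induction on $j\in\{1,\dots,k\}$, a linearly independent tuple $(v_1,\dots,v_j)$ with $v_l\in X_{i_l}$, for any fixed tuple of components $(X_{i_1},\dots,X_{i_k})$. The base case is immediate since $X_{i_1}\neq\{0\}$ (otherwise it would lie in every hyperplane through the origin), so any nonzero $v_1\in X_{i_1}$ works. For the inductive step, given independent $v_1,\dots,v_{j-1}$ with $j\le k$, their span has dimension $j-1<k$ and hence lies inside some hyperplane $H$ through the origin; by hypothesis $X_{i_j}\not\subseteq H$, so a $v_j\in X_{i_j}\setminus H$ exists and extends the tuple. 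Evaluating $\det V|_{[k]}$ at $(v_1,\dots,v_k)$ gives a nonzero value on $X_{i_1}\times\cdots\times X_{i_k}$, so the determinant is not identically zero on any irreducible component of $X^k$. The same argument, applied after relabeling, handles any $k$-subset $A\subseteq [n]$, establishing $\MDS(2)$; the in-particular statement for irreducible $X$ is simply the case where $X$ has a single component.

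The main conceptual point I anticipate is the correct interpretation of ``generically non-vanishing'' on a reducible variety: it must mean non-vanishing on each irreducible component separately (equivalently, not contained in the ideal of any component), rather than merely non-vanishing on some dense open subset of $X^k$ as a whole. Once this is pinned down via Lemma~\ref{lem-prodIrred}, both directions reduce to transparent linear-algebra arguments about containment of spans in hyperplanes, so no heavy algebraic-geometry machinery (Bezout, power-series parametrizations) is required for this particular characterization.
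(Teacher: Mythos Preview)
The paper does not include a proof of this proposition; it is quoted as Theorem~6.3 of \cite{bdg2023a} and stated here without argument. Your proof is correct and complete. The block-triangular reduction showing $\det\cG_{A_1,A_2}[V]=\pm\det V|_{A_1\cup A_2}$ for disjoint $A_1,A_2$ with $|A_1|+|A_2|=k$ cleanly collapses the $\MDS(2)$ condition to generic non-vanishing of a single $k\times k$ determinant on $X^k$, after which both directions are elementary linear algebra. Your closing remark about the correct reading of ``generically non-vanishing'' on a reducible variety---namely, non-vanishing on every irreducible component, with Lemma~\ref{lem-prodIrred} identifying the components of $X^k$ as products of components of $X$---is exactly what makes the ``each irreducible component'' quantifier in the statement emerge, and is the only point requiring care.
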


\begin{definition}[$\LDMDS(\le \ell)$ property for varieties, \cite{bdg2023a}]\label{def-LDMDSlvar}
For an algebraic closed $\F$, we say a variety $X\subseteq \F^k$ satisfies the $\LDMDS(\le \ell)$ property if and only if for ever $k\times n$ matrix $V$ with columns of $V$ initialized from $X$, the dual of $V$ is generically a $[n,n-k]$-$\MDS(\ell+1)$ code.

Concretely this means for any $A_1,\hdots,A_{\ell+1}\subseteq [n]$ with $|A_i| \le n-k$ and $|A_1|+\hdots+|A_{\ell+1}|=\ell(n-k)$, satisfies $W_{A_1}\cap\hdots\cap W_{A_{\ell+1}}=0$ if and only if $\det(\cG_{A_1,\hdots,A_\ell}(W^\perp))$
is generically non-vanishing on $X^{\ell(n-k)}$ where $W$ is a $(n-k) \times n$ generic matrix (that is a matrix whose every entry is an independent formal variable).
\end{definition}
The original GM-MDS theorem first conjectured by \cite{dau2014gmmds} and proved independently by \cite{lovett2018gmmds,yildiz2019gmmds} is equivalent to saying that generic Reed-Solomon codes are higher order MDS for all $\ell$, which was shown in \cite{bgm2022}. We can now restate the original GM-MDS theorem as follows.
\begin{theorem}[GM-MDS \cite{dau2014gmmds,lovett2018gmmds,yildiz2019gmmds,bgm2022}]\label{thm:gm-mds}
  The curve $X=\{(1,t,t^2,\dots,t^{k-1}): t\in \F\}$ is $\MDS(\ell)$ and $\LDMDS(\le\ell)$ for all $\ell\ge 1$.
\end{theorem}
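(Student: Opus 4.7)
The plan is to reduce both assertions to the (equivalent form of the) classical GM-MDS theorem. First, I would observe that initializing the columns of a $k \times n$ matrix $V$ from the curve $X$ simply means choosing parameters $t_1, \ldots, t_n \in \F$ and setting $V = [v_1 \mid \cdots \mid v_n]$ with $v_i = (1, t_i, t_i^2, \ldots, t_i^{k-1})^T$. Consequently $V$ is the standard Vandermonde generator matrix of the Reed-Solomon code $\RS_k(t_1, \ldots, t_n)$, and ``generic initialization'' from $X^n$ corresponds exactly to treating the $t_i$ as algebraically independent formal variables (for instance via the power-series parametrization of Lemma~\ref{lem-powerS} applied to the parametrization $t \mapsto (1, t, \ldots, t^{k-1})$ of the irreducible curve $X$).

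For the $\MDS(\ell)$ part, I would then invoke the main equivalence theorem of \cite{bgm2022}, which shows that the GM-MDS conjecture (proved in \cite{lovett2018gmmds, yildiz2019gmmds}) is equivalent to the statement that $\RS_k(t_1, \ldots, t_n)$ is $\MDS(\ell)$ for every $\ell \ge 2$ when $t_1, \ldots, t_n$ are generic. Translating this back using Definition~\ref{def-MDSlvar} (and Corollary~\ref{cor:rank-equiv} to rephrase $\MDS(\ell)$ as the generic nonvanishing of the relevant determinant) gives exactly the $\MDS(\ell)$ property for $X$.

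For the $\LDMDS(\le \ell)$ part, I would use the standard fact that the dual of an $(n,k)$-Reed-Solomon code is a generalized Reed-Solomon code of length $n$ and dimension $n-k$ with the same evaluation points but with some nonzero column multipliers $c_i$ (rational functions in $t_1, \ldots, t_n$ that are generically nonzero). In particular, the $i$-th column of a generator matrix of the dual code is $c_i \cdot (1, t_i, t_i^2, \ldots, t_i^{n-k-1})^T$. Since the higher-order MDS properties depend only on the spans of subsets of columns, they are invariant under scaling individual columns by nonzero scalars. Hence the dual is $\MDS(\ell+1)$ if and only if $\RS_{n-k}(t_1, \ldots, t_n)$ is $\MDS(\ell+1)$, which again follows generically from the first part applied to the analogous curve $X' = \{(1, t, \ldots, t^{n-k-1}) : t \in \F\}$. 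By Definition~\ref{def-LDMDSlvar}, this yields $\LDMDS(\le \ell)$ for $X$.

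The main (and essentially only) obstacle is bookkeeping in the translation: I would need to verify carefully that the notion of ``generic'' used in Definitions~\ref{def-MDSlvar} and \ref{def-LDMDSlvar} (nonvanishing of a polynomial on $X^n$) matches the genericity of evaluation points in the statement of the GM-MDS theorem from \cite{bgm2022}, and that the column-rescaling argument for the dual is valid as a polynomial identity---so that the nonvanishing of the determinants defining $\MDS(\ell+1)$ really is preserved when the scalars $c_i$ are themselves rational functions in the $t_j$'s which are generically nonzero. Both of these are routine once the parametrization of $X$ is fixed.
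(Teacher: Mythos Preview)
Your proposal is correct and matches the paper's treatment: the paper does not give its own proof but simply cites \cite{dau2014gmmds,lovett2018gmmds,yildiz2019gmmds,bgm2022}, noting in the surrounding text that the GM-MDS theorem is equivalent (via \cite{bgm2022}) to generic Reed--Solomon codes being $\MDS(\ell)$ for all $\ell$. Your sketch supplies exactly the details behind that citation, including the standard dual-is-GRS observation needed for the $\LDMDS(\le \ell)$ half.
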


\cite{bdg2023a} generalized the GM-MDS theorem for any irreducible $\MDS$ variety $X$, i.e., as long as $X$ is not contained in a hyperplane through the origin.

\begin{theorem}[Generalized GM-MDS \cite{bdg2023a} (Theorems 6.4 and 6.5 in \cite{bdg2023a})]\label{thm-mdsIrred}
For an algebraic closed $\F$, if an irreducible $X$ is $\MDS$ (i.e., $X$ is not contained in a hyperplane through the origin) then it is $\MDS(\ell)$ for all $\ell\ge 1$ and also $\LDMDS(\le \ell)$ for all $\ell \ge 1$.
\end{theorem}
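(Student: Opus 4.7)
My plan is to reduce the non-vanishing statements underlying $\MDS(\ell)$ and $\LDMDS(\le \ell)$ for a general irreducible $\MDS$ variety $X$ to the Reed-Solomon case already handled by Theorem~\ref{thm:gm-mds}. Concretely, by Definition~\ref{def-MDSlvar} I must show that for every $A_1,\ldots,A_\ell$ satisfying the $k$-dimensional null intersection property, the polynomial $\det \cG_{A_1,\ldots,A_\ell}(V)$ is generically non-vanishing on $X^{(\ell-1)k}$. Using Lemma~\ref{lem-powerS}, this is equivalent to checking that after substituting a local power series parametrization $(f_1,\ldots,f_k)$ of $X$ — one independent copy of the parameters per column — the resulting formal power series is nonzero. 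The key property inherited from the $\MDS$ hypothesis via Proposition~\ref{thm:MDS2VarChar} is that no nontrivial $\F$-linear combination $a_1 f_1 + \cdots + a_k f_k$ vanishes identically; in particular, the $f_i$ are $\F$-linearly independent in $\F[[z_1,\ldots,z_d]]$.

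The main step is an inductive column-by-column specialization argument driven by Lemma~\ref{lem:partInit}. I would start from the original GM-MDS theorem (Theorem~\ref{thm:gm-mds}) applied to the Reed-Solomon curve $X_0 = \{(1,t,\ldots,t^{k-1}) : t \in \F\}$, which is a one-dimensional irreducible $\MDS$ variety and on which $\det \cG_{A_1,\ldots,A_\ell}(V)$ is known to be generically non-vanishing. I then iteratively replace each "RS column" by a generic column of $X$: by Lemma~\ref{lem:partInit}, if the polynomial $\det \cG_{A_1,\ldots,A_\ell}(V)$ is generically non-vanishing on $X^{j-1} \times X_0^{(\ell-1)k-j+1}$, then specializing the $(j-1)$-st column produces a polynomial in the $j$-th column's entries whose leading coefficients are themselves non-vanishing on $X$. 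Here the $\MDS$ hypothesis on $X$ is what ensures these leading coefficients (which are linear combinations of the $f_i$ or more general polynomials in them) survive the specialization, because the $f_i$ are linearly independent. Iterating across all $(\ell-1)k$ columns transfers non-vanishing from $X_0^{(\ell-1)k}$ to $X^{(\ell-1)k}$, giving $\MDS(\ell)$ for $X$.

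For the $\LDMDS(\le \ell)$ direction the strategy is the same: work instead with $\det \cG_{A_1,\ldots,A_{\ell+1}}(W^\perp)$ from Definition~\ref{def-LDMDSlvar}, whose entries are minor expressions in the columns of $V$ and hence polynomials of bounded degree in each column. Theorem~\ref{thm:gm-mds} provides the $\LDMDS(\le \ell)$ property for $X_0$, and the same column-by-column specialization argument via Lemma~\ref{lem:partInit} transfers the non-vanishing to $X^{\ell(n-k)}$.

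The main obstacle I anticipate is precisely this transfer step: showing that generic non-vanishing on the Reed-Solomon curve $X_0$ forces generic non-vanishing on an arbitrary irreducible $\MDS$ variety $X$. The difficulty is that the determinants involved are high-degree polynomials in many variables, and non-vanishing on one curve does not automatically pass to another. The resolution should hinge on the fact that the $\MDS$ hypothesis on $X$ is equivalent to the linear independence of the coordinate functions of $X$ (Proposition~\ref{thm:MDS2VarChar}), which via the power series parametrization gives the linear independence of the $f_i$. Combined with a careful degree-tracking via Lemma~\ref{lem:partInit}, this lets the column-by-column specialization go through. The $\LDMDS$ case carries some additional bookkeeping because each column now appears to degree $k$ rather than linearly in the relevant determinant, so one must be more careful to identify a coefficient that is both non-vanishing on $X_0^{(\ell-1)k}$ (by the GM-MDS base case) and non-vanishing on $X$ (by the $\MDS$ property).
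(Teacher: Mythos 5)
The paper does not prove Theorem~\ref{thm-mdsIrred} at all: it is imported verbatim from \cite{bdg2023a} (Theorems 6.4 and 6.5 there), so there is no in-paper argument to compare against. Judged on its own terms, your proposal has a genuine gap at its central step, the cross-variety transfer. You want to move generic non-vanishing of $\det \cG_{A_1,\hdots,A_\ell}(V)$ from $X_0^{(\ell-1)k}$ (the Reed--Solomon curve) to $X^{(\ell-1)k}$ one column at a time, and you claim the $\MDS$ hypothesis on $X$ --- linear independence of the local coordinate functions $f_1,\hdots,f_k$ --- is what makes each specialization survive. That reasoning is only valid when the polynomial being restricted to a single copy of $X$ is \emph{linear} in that column's entries. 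For $\ell \ge 3$ a column index $j$ may lie in up to $\ell-1$ of the sets $A_i$ (Proposition~\ref{prop:null} only forbids a common element of all $\ell$ sets), so the column $V_j$ occurs in up to $\ell-1$ blocks of $\cG$ and $\det\cG$ has degree up to $\ell-1$ in the entries of $V_j$. The coefficient that Lemma~\ref{lem:partInit} hands you is then a polynomial of degree up to $\ell-1$ in one copy of $X$, and Proposition~\ref{thm:MDS2VarChar} only guarantees that $I(X)$ contains no \emph{linear} forms through the origin. A nonzero quadratic can vanish identically on an irreducible $\MDS$ variety: already for $X_0=\{(1,t,t^2)\}$ the polynomial $x_1x_3-x_2^2$ is identically zero, so the degree-$2$ monomials restricted to an $\MDS$ variety can satisfy linear relations, and these relations differ from variety to variety. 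Consequently, generic non-vanishing of the relevant coefficient polynomial with $x_j$ ranging over $X_0$ neither implies nor is implied by its generic non-vanishing with $x_j$ ranging over $X$, and your inductive step from $X^{j-1}\times X_0^{N-j+1}$ to $X^{j}\times X_0^{N-j}$ does not go through. The same obstruction appears in worse form in the $\LDMDS$ half, where the entries of the dual generator are $k\times k$ minors of $V$ and each column enters with even higher degree. Your argument is essentially correct for $\ell=2$ (where $\det\cG$ is multilinear in the columns), but that case is just Proposition~\ref{thm:MDS2VarChar}.

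This is precisely why the generalized GM-MDS theorem requires its own proof rather than a reduction to Theorem~\ref{thm:gm-mds}: a fixed high-degree polynomial being generically non-vanishing on $X_0^N$ carries no information about its behaviour on $X^N$ for a different variety $X$, because the $\MDS$ hypothesis controls only the degree-one part of $I(X)$. The actual proof in \cite{bdg2023a} is an induction on the combinatorial structure of the set system $A_1,\hdots,A_\ell$ (in the spirit of the Lovett and Yildiz--Hassibi proofs of the original GM-MDS theorem), organized so that the only facts about $X$ ever invoked are about linear forms; it is not a specialization from the Reed--Solomon curve. To repair your approach you would need, at minimum, a reason why every coefficient polynomial produced by Lemma~\ref{lem:partInit} factors into linear forms in the distinguished column, and no such structure is available for $\ell\ge 3$.
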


Unlike \cite{guo2023randomly} and \cite{alrabiah2023randomly} which used Theorem~\ref{thm:gm-mds} to drive their results, our results will be based on Theorem~\ref{thm-mdsIrred}.

\subsection{Adapting \cite{guo2023randomly} and \cite{alrabiah2023randomly}}

Extending the argument of \cite{guo2023randomly} and \cite{alrabiah2023randomly}, we want to prove that random puncturing of AG codes give us codes which satisfy relaxed versions of the higher order MDS property (as seen earlier the lower relaxation gives us list decoding and upper relaxation imply results for MR tensor codes).
The main theorems we want to prove are the following.

\begin{theorem}[Lower Relaxation]\label{thm:probBoundSlackLower}
Let $X\subseteq \overline{\F}_q^{k}$ be an irreducible $\MDS$ variety and $S$ be a set of points on $X \cap \F_q^k$. Let $G$ be $k\times n$ matrix with columns initialized uniformly at random from $S$. Then $G$ is $\rLDMDS_{r}(\le L)$ (equivalently $G^\perp$ is a $\rMDS_r(L+1)$-code)
with probability at least
\begin{align*}
1-2^{(L+1)n}\binom{n}{r/2}2^{r(L+1)/2}\left(\frac{P_q(X,L)}{|S|}\right)^{r/2},&\quad \text{\em (when repeated columns are allowed in $G$)}\\
1-2^{(L+1)n}\binom{n}{r/2}2^{r(L+1)/2}\left(\frac{P_q(X,L)}{|S|-n}\right)^{r/2},&\quad \text{\em (when repeated columns are {\bf not} allowed in $G$)}
\end{align*} 

where $P_q(X,L)$ is the maximum number of zeros a degree $L$ polynomial generically non-vanishing on $X$ can have over $X\cap \F_q^k$.
\end{theorem}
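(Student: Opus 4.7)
The plan is to reduce the theorem to a polynomial non-vanishing condition on the columns of $G$ via Corollary~\ref{cor:slackLDMDSrank}, then bound the failure probability by extracting ``bad'' columns whose contributions are controlled using the generalized GM-MDS theorem and Bezout's theorem on $X$.

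First, by Corollary~\ref{cor:slackLDMDSrank}, the code with generator matrix $G$ is $\rLDMDS_r(\le L)$ if and only if $\rank \cH_{A_1, \ldots, A_\ell}[G] = n + (\ell-1)k$ for every $\ell \in \{2, \ldots, L+1\}$ and every tuple $(A_1, \ldots, A_\ell)$ of subsets of $[n]$ with the $(n-k-r)$-dimensional null intersection property. A union bound over all such tuples will contribute the factor $2^{(L+1)n}$ in the probability bound. Fix one such tuple. Since $X$ is an irreducible $\MDS$ variety, the generalized GM-MDS theorem (Theorem~\ref{thm-mdsIrred}) implies that $X$ is $\LDMDS(\le L)$, and combining this with Lemma~\ref{lem:dual-inter-rank} yields an $(n+(\ell-1)k) \times (n+(\ell-1)k)$ minor $\Delta$ of $\cH_{A_1, \ldots, A_\ell}[G]$---viewed as a polynomial in the entries of $G$---that is generically non-vanishing on $X^n$ and whose non-vanishing certifies the required rank condition. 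By inspection of the block structure of $\cH$, $\Delta$ has degree at most $L$ in the coordinates of any individual column of $G$.

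The heart of the argument is showing that if $\Delta$ vanishes on $G$, then at least $r/2$ columns of $G$ must be ``bad''---each lying in the zero locus of a polynomial of degree $\le L$ that is generically non-vanishing on $X$. I plan to do this by iteratively applying Lemma~\ref{lem:partInit}: initializing column by column, at each column either the remaining polynomial stays non-zero on the remaining product of copies of $X$, or I extract a per-column witness polynomial $g_j$ of degree $\le L$, generically non-vanishing on $X$, that certifies the column is a zero; in the latter case I delete the column using Proposition~\ref{prop:LDMDSrank-puncture} and recurse on the reduced configuration $(A_1 \setminus \{j\}, \ldots, A_\ell \setminus \{j\})$. The relaxation parameter $r$ enters because the $(n-k-r)$-dimensional null intersection assumption leaves enough slack that one can remove roughly $r/2$ columns before the reduced configuration ceases to satisfy an analogous null intersection property at parameter $n-k$---this is where the factor-of-two in $r/2$ arises, via a careful combinatorial accounting using Proposition~\ref{prop:null}.

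Finally, by Bezout's theorem (Lemma~\ref{lem:bezout}) and the definition of $P_q(X, L)$, each extracted witness $g_j$ vanishes on at most $P_q(X, L)$ points of $X \cap \F_q^k$, so each bad column contributes a probability of at most $P_q(X, L)/|S|$ (with repetition) or $P_q(X, L)/(|S|-n)$ (without repetition). Multiplying together the $\binom{n}{r/2}$ choices of bad column set, the $2^{r(L+1)/2}$ choices accounting for which witness $g_j$ is extracted at each bad column (summed over $\ell \in \{2, \ldots, L+1\}$ and over sign/branch of the factorization), the $2^{(L+1)n}$ choices of tuple $(A_1, \ldots, A_\ell)$, and the per-column zero probabilities, yields exactly the stated bound. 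The main obstacle I foresee is executing the iterated extraction cleanly: in particular, ensuring that the $g_j$'s remain generically non-vanishing on $X$ and of degree $\le L$ uniformly throughout the recursion, and correctly bookkeeping how the null intersection slack translates through Proposition~\ref{prop:LDMDSrank-puncture} to yield the precise $r/2$ threshold.
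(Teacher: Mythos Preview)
Your overall reduction via Corollary~\ref{cor:slackLDMDSrank} and the generalized GM-MDS theorem is right, and the idea of iteratively extracting faulty columns is the correct starting point. However, what you describe is essentially the Guo--Zhang argument, and it cannot on its own produce the bound stated in the theorem: the union bound over \emph{ordered} sequences of faulty indices is of size roughly $n^{r/2}$ (or $n^{r}$), not $\binom{n}{r/2}2^{r(L+1)/2}$. Your explanations for those two factors are incorrect. The factor $2^{r(L+1)/2}$ has nothing to do with ``sign/branch of the factorization'' or with which witness $g_j$ is selected; and the $r/2$ threshold does not arise from the null intersection property degrading as you delete columns (removing columns only \emph{preserves} null intersection, so there is no obstruction there at all).

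What the paper actually does (following Alrabiah--Guruswami--Li) is the type-swapping trick: before searching for faulty indices, one \emph{reserves} $r/2^{\ell+1}$ indices of each of the $2^{\ell}$ possible types $\tau\subseteq[\ell]$, removing them temporarily from the matrix. When a faulty index $\beta$ of type $\tau$ is found, instead of choosing a brand-new minor one swaps $\beta$ for a reserved index $\gamma$ of the same type (Property~8 of Framework~\ref{frmwrk:relax} ensures the minor stays generically non-vanishing). A full ``refresh'' to a new minor happens only when some reservoir empties. This forces the sequence $B$ of faulty indices to decompose into at most $2^{\ell}$ \emph{increasing} runs, so the number of possible $B$'s is only $\binom{n}{r/2}\cdot 2^{\ell r/2}$ --- choose the $r/2$-element set, then assign each element to one of $2^{\ell}$ runs. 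The $r/2$ (rather than $r$) appears precisely because half the slack budget is spent on the reservoirs ($|S|\le r/2$) and the other half on the accumulated faulty set $B$, keeping $|B|+|S|\le r$ so that $\getMinor$ remains valid throughout. Your proposal is missing this mechanism entirely; without it the claimed counting of bad-column configurations does not go through.
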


\begin{theorem}[Upper Relaxation]\label{thm:probBoundSlackUpper}
Let $X\subseteq \overline{\F}_q^{k}$ be an irreducible $\MDS$ variety and $S$ be a set of points on $X\cap \F_q^k$. Let $G$ be $k\times n$ matrix with columns initialized uniformly at random from $S$ and $C$ be the code spanned by it. 
 $C$ is $\rMDS^r(L)$ with probability at least 
 \begin{align*}
1-C_{n,k,r,L}\binom{n}{r/2}2^{rL/2}\left(\frac{P_q(X,L-1)}{|S|}\right)^{r/2},&\quad \text{\em (when repeated columns are allowed in $G$)}\\
1-C_{n,k,r,L}\binom{n}{r/2}2^{rL/2}\left(\frac{P_q(X,L-1)}{|S|-n}\right)^{r/2},&\quad \text{\em (when repeated columns are {\bf not} allowed in $G$)}
\end{align*} 

where $P_q(X,\ell)$ is the maximum number of zeros a degree $\ell$ polynomial generically non-vanishing on $X$ can have over $X\cap \F_q^k$ and 
$$C_{n,k,r,L}=\min\left[2^{Ln},\binom{(n-k-r)L(L-1)}{\le 2(n-k-r)(L-1)}\binom{n}{\le (n-k-r)(L-1)}+\binom{n}{k+r}\right].$$
Note that $C$ being $\MDS^r(L)$ is equivalent to $C'\otimes C$ being a $(1,n-k-r)$-$\rMR(L,n,1,n-k)$ code where $C'$ is a fixed $(L,L-1)$ parity check code.
\end{theorem}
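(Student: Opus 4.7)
The plan is to adapt the Guo-Zhang/Alrabiah-Guruswami-Li strategy to the upper-relaxation setting: I replace the null-intersection rank analysis with the saturation analysis, and I use the generalized GM-MDS theorem (Theorem~\ref{thm-mdsIrred}) together with Bezout's theorem (Lemma~\ref{lem:bezout}) in place of the classical GM-MDS theorem and Schwartz-Zippel. By Definition~\ref{def:upper-mds}, $C$ is $\rMDS^r(L)$ if and only if $\cG_{A_1,\ldots,A_L}[G]$ has full row rank $Lk$ for every tuple $(A_1,\ldots,A_L)$ with the $(k+r)$-dimensional saturation property. By Proposition~\ref{prop:sat-equiv} and Lemma~\ref{lem:mr-pad}, such tuples are in bijection with correctable patterns in $\mathcal{E}^{L,n}_{1,n-k-r}$, and the family is closed under enlargement, so it suffices to check $G$-saturation on the ``minimal'' tuples. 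Their number is bounded by $C_{n,k,r,L}$: the $2^{Ln}$ alternative is the trivial count of all tuples, while the second expression combines Lemma~\ref{lem:noOfMRpatterns} applied to $\mathcal{P}^{L,n}_{1,n-k-r}$ with an additive $\binom{n}{k+r}$ that handles residual boundary configurations where some $A_i$ is close to maximal.

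Fix such a minimal tuple. Since $X$ is an irreducible $\MDS$ variety, Theorem~\ref{thm-mdsIrred} gives that $X$ is $\MDS(L)$. Hence for a $k\times n$ matrix $V$ with columns sampled generically from $X$, Corollary~\ref{cor:rank-equiv} implies that $\cG_{A_1,\ldots,A_L}[V]$ achieves the generic rank, which equals $Lk$ under the saturation hypothesis since $\sum |A_i| \ge (L-1)(k+r) \ge (L-1)k$. Consequently some $Lk \times Lk$ submatrix of $\cG[V]$ has a determinant $D$ that is a polynomial in the entries of $V$, nonzero as an element of the coordinate ring of $X^n$. By choosing the submatrix to include all $k$ identity-block columns and to omit one copy of the most-repeated $V$-column among the chosen $(L-1)k$ columns from $V$-blocks, one ensures that the degree of $D$ in any single column-variable $V_j$ is at most $L-1$.

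I would then show that failure of $G$-saturation forces at least $r/2$ of the sampled columns to be ``bad''. Proceeding iteratively, Lemma~\ref{lem:partInit} provides that whenever $D$ evaluated at the already-frozen columns is not identically zero as a polynomial in some remaining column, we obtain a degree at most $L-1$ polynomial in that column-variable, generically non-vanishing on $X$; the sampled column is bad if its value lies in the zero set of this polynomial. The $(k+r)$-dimensional saturation slack (versus the $k$-dimensional slack in the unrelaxed setting) forces each bad column to account for only one unit of rank loss, so any complete failure of $G$-saturation must expose at least $r/2$ bad columns after selecting fresh witnessing minors each time. Bezout's theorem (Lemma~\ref{lem:bezout}) then bounds the number of valid bad values in $X\cap\F_q^k$ for each such polynomial by $P_q(X,L-1)$. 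A union bound over the $\binom{n}{r/2}$ position sets, the $C_{n,k,r,L}$ minimal tuples, and the $2^{rL/2}$ internal branchings (which column to peel next and which witnessing minor to use at each step) yields the stated probability; the without-replacement case is handled by standard conditioning which replaces $|S|$ by $|S|-n$.

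The main obstacle is the third step: rigorously translating the $(k+r)$-dimensional saturation slack into precisely $r/2$ independent bad columns. This requires maintaining a rank/dimension invariant throughout the iteration so that, after each bad column is peeled, the partially-frozen determinant still admits a fresh degree at most $L-1$ witness polynomial, and simultaneously controlling the combinatorial cost of the witness choices so the total does not exceed the claimed $2^{rL/2}$. A secondary subtlety is tightly bounding the degree of the chosen minor by $L-1$ rather than the naive $L$, which is what separates the $P_q(X,L-1)$ factor here from the $P_q(X,L)$ factor appearing in the lower-relaxation Theorem~\ref{thm:probBoundSlackLower}.
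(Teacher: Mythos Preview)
Your overall strategy matches the paper's: reduce $\rMDS^r(L)$ to checking $G$-saturation on a bounded family of tuples, use Theorem~\ref{thm-mdsIrred} to know the generic rank of $\cG$ is $Lk$, then run the Guo--Zhang/Alrabiah--Guruswami--Li faulty-index argument over the variety $X$. Two concrete points deserve correction, and your admitted obstacle is exactly what the paper's machinery resolves.

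\textbf{The role of $\binom{n}{k+r}$.} This term is not for ``residual boundary configurations.'' Lemma~\ref{lem:noOfMRpatterns} has a hypothesis: the row code must have distance at least $b+1 = n-k-r+1$. The paper therefore \emph{first} spends $\binom{n}{k+r}$ patterns to certify that $G$ is $\rMDS^r(2)$ (Proposition~\ref{prop:upper-mds}(d): every $k+r$ columns span $\F^k$, equivalently distance $\ge n-k-r+1$), and only then invokes Lemma~\ref{lem:noOfMRpatterns} to bound the remaining patterns by $|\mathcal P^{L,n}_{1,n-k-r}|$. Without this, the lemma is not applicable.

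\textbf{The degree bound $L-1$.} Your plan to ``omit one copy of the most-repeated $V$-column'' from the chosen minor is fragile: you cannot freely discard a column and still have a generically nonzero minor. The paper instead observes (Framework~\ref{frmwrk:relax}, property 4) that one may without loss of generality restrict to saturated tuples with $A_1\cap\cdots\cap A_L=\emptyset$, because a common index only increases $\dim\ker\cG$ by one and does not affect whether $\cG$ has full row rank. Under this restriction every index lies in at most $L-1$ of the $A_i$, so each column variable of $G$ appears in at most $L-1$ column blocks of $\cG$, and \emph{any} minor has degree $\le L-1$ in each column variable. This is what yields $P_q(X,L-1)$ rather than $P_q(X,L)$.

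\textbf{The acknowledged obstacle.} The paper packages the faulty-index iteration into Framework~\ref{frmwrk:relax} and Algorithm~\ref{alg:FAULTY} (Theorem~\ref{thm:mainSlackProb}), using the type-based replacement trick of \cite{alrabiah2023randomly}: set aside $r/2^{L+1}$ indices of each type, and after each faulty index swap in a reserved index of the same type to produce a new generically nonzero minor (property 8 of the framework). This is precisely what forces the output sequence $B$ of faulty indices to decompose into at most $2^L$ increasing runs, which is where the $2^{rL/2}$ factor comes from and why the union bound over faulty-index sequences is $\binom{n}{r/2}2^{rL/2}$ rather than $n^{r/2}$. Your sketch gestures at this but does not supply the invariant; the paper's Algorithm~\ref{alg:FAULTY} and Claims~\ref{cl:AlgSucc}--\ref{cl:propAl} do.
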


Note when puncturing a code (that is $G$ does not have repeated columns), for the theorems to make sense, we need $|X\cap \F_q^k|\gg n$, i.e., there should sufficiently many $\F_q$ rational points on the variety $X$.
For our applications to Reed-Solomon codes (which recovers the result of \cite{alrabiah2023randomly}) and AG codes, $X$ will be a curve (that is a dimension $1$ irreducible variety). In this case, we have $P_q(X,\ell)\le \textsf{deg}(X)\ell$ by Bezout's Theorem (Lemma \ref{lem:bezout}). Another application would be to take $X=\overline{\F_q}^k$ which would give $P_q(\overline{\F}_q^k,\ell)\le \ell q^{k-1}$ by the Schwartz-Zippel lemma. In general one could use Bezout's Theorem and explicit Lang-Weil bounds (like the ones here \cite{CAFURE2006155,slavov_2023}) to bound this for general varieties.

We will now briefly describe the strategy of \cite{guo2023randomly} and \cite{alrabiah2023randomly} for Reed-Solomon codes.\footnote{They use a different family of matrices called \emph{reduced intersection matrices} for their result which comes from the list decoding interpretation. Hence, their result will only give us a lower relaxation for Reed-Solomon codes. We use our framework of matrices obtained by the two relaxations.}
Given a collection of sets $A_1,\hdots,A_\ell \subseteq [n]$, to prove Theorems~\ref{thm:probBoundSlackLower} and \ref{thm:probBoundSlackUpper} we need to show  that certain matrices depending on this set collection have some fixed rank when we randomly initialize $G$. By Theorem~\ref{thm-mdsIrred} these matrices have the same rank generically over $\MDS$ varieties as their rank as completely generic matrices.

As the argument for both theorems is identical, let us arbitrarily choose one setting. Say we want to construct $(n,k)$-code which is $\rMDS_r(\ell)$ codes by choosing a generator $G$ with columns sampled uniformly from some subset $S$ of a variety $X$. We can convert the $\rMDS_r(\ell)$ property to a collection of matrices having a prescribed rank.  Let $M(x_1,\hdots,x_n)$, where $x_i$ is a point in $\F^k$, be one such matrix whose rank we want to preserve. Say generically $M$ has rank $t$ ($t$ will have different values depending on the statement we are proving). We want to show that with high probability if we initialize $G$ with columns chosen randomly from some subset $S\subset X$, and substitute the $j^{th}$ column of $G$ for $x_j$, then $M(G)$ has rank $t$ with high probability. The following definition of a faulty index is quite useful for this.

\begin{definition}[Faulty indices]
  Let $X\subset \F^k$ be an irreducible variety.
  Let $P(x_1,\hdots,x_n)$ be a polynomial on $k\cdot n$ variables which is generically non-vanishing over $X^n$, where we substitute the $j$th copy of $X$ in $x_j$. Let $G=[\alpha_1,\hdots,\alpha_n]$ be a $k\times n$ matrix formed by choosing each column from $X$, i.e., $(\alpha_1,\alpha_2,\dots,\alpha_n)\in X^n$. We say $i$ is a faulty index for the pair
   $(P,G)$ if 
   $$P(\alpha_1,\hdots,\alpha_{i-1},x_i,\hdots,x_n)$$ 
  is generically non-vanishing over $X^{n-i+1}$, but
  $$P(\alpha_1,\hdots,\alpha_{i},x_{i+1},\hdots,x_n)$$
  vanishes on $X^{n-i}$. Usually $G$ will be clear from context and we just say $i$ is the faulty index of $P$.
  \end{definition}
  
In \cite{guo2023randomly} we  pick a $t\times t$ minor $M_1$ of $M$ which is a proof of the generic rank of $M$. Suppose $M_1(G)=0$, then we identify a faulty index $j_1$ for $(M_1,G)$. Because of Corollary~\ref{cor:slackLDMDSrank} (we will need something else for the other theorem) we have that deleting columns corresponding to $r$ indices in $[n]$ will not change the generic rank of these matrices. In particular, on deleting $x_{j_1}$ in $M$ we still have the correct generic rank $t$. Now we can find a new $t\times t$ minor $M_2$ and if $M_2(G)=0$, we find a new faulty index $j_2$. We can repeat this process at least $r$ times whenever $G$ causes a failure. The probability of this process given a fixed $d$ sequence of faulty indices can be bounded by $\left(\frac{(k-1)\ell}{|S|-n}\right)^{r}$ (when repetitions are not allowed) with a simple and clever conditioning argument. The problem is that one has to union bound over all possible $d$ tuple of faulty indices which are $n^d$ in number and $d$ is linear in $n$ in applications. This gives them a quadratic field size bound for punctured Reed-Solomon codes achieving list decoding capacity.

In \cite{alrabiah2023randomly} this argument is improved. The idea is to use the symmetries in $M$ to carefully choose the next minor when a failure happens. The symmetries use the following notion.

\begin{definition}[Type of a index]
An index $i\in [n]$ has type $\tau\subseteq [\ell]$ for the family of sets $A_1,\hdots,A_\ell\subseteq [n]$ where $\tau=\{j: i\in A_j\}$.
\end{definition}

In \cite{alrabiah2023randomly}, they note that if $M$ has some non-vanishing minor then permuting variables of the same type will give another non-vanishing minor of $M$. To be precise if $i$ and $j$ have the same type then $M_1(x_i,x_j)$ (assuming other variables as fixed) is a non vanishing minor if and only if $M_1(x_j,x_i)$. Next they set aside $r/2$ indices of each type and remove them from $M$ to get $M'$. They follow the \cite{guo2023randomly} procedure on $M'$ and after they find a faulty index for a minor in $M'$ they replace that with an index of the same type to get a new non-vanishing minor. We repeat by again trying to find a faulty index. This is done until no replacement is possible (replacement will be possible for at least $r/2^{\ell+1}$ steps). At that point they permanently get rid of all faulty indices from $M$ and repeat the process. This process will at least give $r/2$ faulty indices but the sequence of indices are guaranteed to have more structure which leads to a smaller union bound.
 
  We will instantiate this argument for $\MDS$ varieties using more implicit arguments and Bezout's Theorem (Lemma~\ref{lem:bezout}). To get the best bounds we will primarily adapt the framework in \cite{alrabiah2023randomly}.

We will now describe the \cite{alrabiah2023randomly} method in detail. Given a matrix of size $a\times b$ we put in a lexicographical order on its $t\times t$ minors by ordering them first by the list of columns chosen and then by rows (the choice of order does not matter, we just need any order which would stay consistent on deletion of columns).

We will abstract the proof into a framework so that it works for both Theorem~\ref{thm:probBoundSlackLower} and \ref{thm:probBoundSlackUpper}. In the following discussion, for Theorem~\ref{thm:probBoundSlackLower}, $W$ will be the null intersection property and $M$ will be $\cH$ (see \eqref{eq:Hop}) and for Theorem~\ref{thm:probBoundSlackUpper}, $W$ will be the saturation property and $M$ will be $\cG$ (see \eqref{eq:Gop}).

\begin{framework}\label{frmwrk:relax}
 We work with a family of matrices $M_{[A_i]_{i=1}^\ell}$ (later initialized to $\cG$ and $\cH$) dependent on a $k\times n$ matrix $G$ whose columns are initialized by points in $X$ where $X$ is an $\MDS$ irreducible variety in $\overline{\F}_q^k$. The following properties are assumed to hold:
\begin{enumerate}
\item There exists a property $W(n,k,r,\ell)$ on the sets $A_1,\hdots,A_\ell$ which if satisfied implies $M_{[A_i]_{i=1}^\ell}$ has generically rank at least $w(n,k,r,\ell)$ over $X^n$.
 \begin{itemize}
 \item For Theorem~\ref{thm:probBoundSlackLower} $W(n,k,r,\ell)$ is the $n-k-r$ null intersection property, $M=\cH$, and $w(n,k,r,\ell)=n+(\ell-1)k$. This property follows from the fact that $X$ is a $\LDMDS(\ell)$ variety by Theorem~\ref{thm-mdsIrred} and applying Corollary~\ref{cor:slackLDMDSrank} with the fact that a $\MDS(\ell)$ code is also lower relaxed.
 \item  For Theorem~\ref{thm:probBoundSlackUpper} $W(n,k,r,\ell)$ is the $k+r$ saturation property and not sharing an element, $M=\cG$, and $w(n,k,r,\ell)=\ell k$. This property follows from the fact that $X$ is a $\MDS(\ell)$ variety by Theorem~\ref{thm-mdsIrred}, hence is also generically upper relaxed, and noting that in the definition of the upper relaxation it suffices to look at saturated sets which do not share an element as sharing an element does not affect the rank of $\cG$ (as the kernel of $\cG$ exactly tells us the dimension of  $G_{A_1}\cap \hdots\cap G_{A_\ell}$, every shared element increases the kernel dimension by $1$).
 \end{itemize}
\item The method $\getMinor(M_{[A_i]_{i=1}^\ell},w(n,k,r,\ell))$ outputs the lexicographically earliest $w(n,k,r,\ell)\times w(n,k,r,\ell)$ minor of $M_{[A_i]_{i=1}^\ell}$ which generically does not vanish on $X^n$.
\begin{itemize}
 \item For both Theorem~\ref{thm:probBoundSlackLower} and Theorem~\ref{thm:probBoundSlackUpper} as the previous property gives us a rank bound this operation is well defined.
 \end{itemize}
\item Each column of $M_{[A_i]_{i=1}^\ell}$ depends on at most $1$ column of $G$.
 \begin{itemize}
 \item For both Theorem~\ref{thm:probBoundSlackLower} and Theorem~\ref{thm:probBoundSlackUpper} this can be easily checked by inspection of $\cH$ and $\cG$ respectively.
 \end{itemize}
\item For a collection $A_1,\hdots,A_\ell$ satisfying $W(n,k,r,\ell)$, the number of sets in which an index $\beta$ can appear is at most $\max(1,\ell-1)$ and the number of columns of $M_{[A_i]_{i=1}^\ell}$ depending on $\beta$ is exactly equal to the number of $A_i$s in which $\beta$ appears.
\begin{itemize}
 \item For Theorem~\ref{thm:probBoundSlackLower} if a collection of set satisfies the null intersection property then they can not share an element in common. The other property follows by the definition of $\cH$.
 \item For Theorem~\ref{thm:probBoundSlackUpper} $w(n,k,r,\ell)$ includes the property that $A_1,\hdots,A_\ell$ do not share an element. The other property follows by the definition of $\cH$.
 \end{itemize}
\item For a set $S\subseteq [n]$, we let $M_{[A_i]_{i=1}^\ell}^S$ refer to the sub-matrix of $M_{[A_i]_{i=1}^\ell}$ with only the columns corresponding to $S$ and the columns independent of any column in $G$ included.  $M_{[A_i]_{i=1}^\ell}^S$ will equal $M_{[A_i\cap S]_{i=1}^\ell}$ which depends on $G|_S$.  
\begin{itemize}
 \item For Theorem~\ref{thm:probBoundSlackLower} if we look at $\cH^{\oB}_{A_1,\hdots,A_\ell}(G)$ only depends on $G|_{\oB}$ and if we remove zero rows in the identity part depending on $B$ we get $ \cH_{A_1\cap \oB,\hdots,A_\ell \cap \oB}(G|_{\oB})$.
 \item For Theorem~\ref{thm:probBoundSlackUpper} this is a simple check.
 \end{itemize}
\item For any $B\subseteq [n],|B|\le r$, $\getMinor(M_{[A_i]_{i=1}^\ell}^{\overline{B}},w(n-|B|,k,r-|B|))$ will also be a minor of $M_{[A_i]_{i=1}^\ell}$ (note this just means that  determinant of the sub-matrices match, the sub-matrices could be of different sizes)
\begin{itemize}
 \item For Theorem~\ref{thm:probBoundSlackLower} if we look at a minor of $ \cH_{A_1\cap \oB,\hdots,A_\ell \cap \oB}(G|_{\oB})$ if we add an identity part corresponding to points in $B$ we get a minor in $\cH_{A_1,\hdots,A_\ell}(G)$ (by Proposition~\ref{prop:LDMDSrank-puncture}).
 \item For Theorem~\ref{thm:probBoundSlackUpper} this is a simple check.
 \end{itemize}
\item  For any $B\subseteq [n],|B|\le r$, $A_1\cap \oB, \hdots,A_\ell\cap \oB$ satisfies the property $W(n-|B|,k,r-|B|))$.
\begin{itemize}
 \item For Theorem~\ref{thm:probBoundSlackLower} if a collection $A_1,\hdots,A_\ell$  satisfies the $n-k-r$ null intersection property it is easy to check that removing points does not change that.
 \item For Theorem~\ref{thm:probBoundSlackUpper} it is again easy to check that if $A_1,\hdots,A_\ell$ satisfies the $k+r$ saturation property then removing an element changes the rank of $\cG$ by at most $\ell$ which is why even after removing at most $r$ points $B$, $A_1\cap \oB,\hdots,A_\ell\cap \oB$ satisfies the $k$ saturation property.
 \end{itemize}

\item Given any generically non-vanishing minor of $M_{[A_i]_{i=1}^\ell}$ if we swap two columns of $G$ corresponding to two indices of the same type then we will give us another generically non-vanishing minor.
 \begin{itemize}
 \item For both Theorem~\ref{thm:probBoundSlackLower} and Theorem~\ref{thm:probBoundSlackUpper} this can be easily checked by inspection of $\cH$ and $\cG$ respectively.
 \end{itemize}
\end{enumerate}
\end{framework}

\begin{theorem}\label{thm:mainSlackProb}
Let $M_{[A_i]_{i=1}^\ell}$ be a family of matrices, $X\subseteq\overline{\F}_q^k$ be an irreducible $\MDS$ variety $X$ with $W,w$ satisfying Framework~\ref{frmwrk:relax}. $G$ is randomly initialized with points in $S\subset \F_q^k\times X$.
Now, for a given $\ell>0$ and a given $A_1,\hdots,A_\ell$ satisfying property $W(n,k,\ell)$   $M_{[A_i]_{i=1}^\ell}$ has rank $w(n,k,r,\ell)$ with probability (over the choice of $G$) at least,
\begin{align*}
1-\binom{n}{r/2}2^{r\ell/2}\left(\frac{P_q(X,\ell-1)}{|S|}\right)^{r/2},& \text{\em (when repeated columns are allowed in $G$)}\\
1-\binom{n}{r/2}2^{r\ell/2}\left(\frac{P_q(X,\ell-1)}{|S|-n}\right)^{r/2},& \text{\em (when repeated columns are {\bf not} allowed in $G$)}
\end{align*} 

where $P_q(X,\ell)$ is the maximum number of zeros a degree $\ell$ generically non-vanishing polynomial can have over $X\cap \F_q^k$.
\end{theorem}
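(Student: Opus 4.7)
The plan is to prove, for each fixed tuple $(A_1, \ldots, A_\ell)$ satisfying $W(n,k,r,\ell)$, that $\rank M_{[A_i]}(G) \geq w(n,k,r,\ell)$ with the claimed probability. Write $M := M_{[A_i]_{i=1}^\ell}$ and $w := w(n,k,r,\ell)$. By Framework property~1, $M$ has generic rank at least $w$ over $X^n$, so the failure event is that every $w\times w$ minor of $M$ that is generically non-vanishing on $X^n$ actually vanishes at $G$. I will adapt the Alrabiah--Guruswami--Li ``find faulty index and swap by type'' procedure, using Bezout's theorem (Lemma~\ref{lem:bezout}) in place of Schwartz--Zippel so that one can reason directly about evaluations on an $\MDS$ variety.

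\textbf{Setup and iteration.} Partition $[n]$ into types $T_\tau := \{i : \{j : i \in A_j\} = \tau\}$ for $\tau \subseteq [\ell]$, and for each type with $|T_\tau| \geq r/2$ reserve a buffer $R_\tau \subseteq T_\tau$ of size $r/2$ (smaller types contribute $O(r/2)$ indices in total and are absorbed into the union bound at the end). Set $B_0 := \emptyset$ and suppose $\rank M(G) < w$. In round $t$, Framework property~7 yields that $(A_1 \cap \oB_t, \ldots, A_\ell \cap \oB_t)$ still satisfies $W(n - |B_t|, k, r - |B_t|, \ell)$, so property~1 applied to the sub-matrix $M^{\oB_t}_{[A_i]_{i=1}^\ell}$ gives generic rank at least $w(n - |B_t|, k, r - |B_t|, \ell)$. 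Let $\sigma_t$ be the lex-earliest non-vanishing minor of that size produced by $\getMinor$. By Framework property~6, $\sigma_t$ lifts to a minor of the original $M$, and by properties~3--4 it is a polynomial of degree at most $\max(1,\ell-1)$ in each column variable $\alpha_j$ of $G$. Since $\sigma_t(G) = 0$, locate its faulty index $i_t$; then swap column $i_t$ with an as-yet-unused buffer index $j_t \in R_{\tau(i_t)}$, where Framework property~8 guarantees the swapped object is still a generically non-vanishing minor of $M$. Update $B_{t+1} := B_t \cup \{j_t\}$ and repeat for $r/2$ rounds.

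\textbf{Probability per round and union bound.} At round $t$, condition on the columns $\alpha_j$ already revealed, namely those for $j < i_t$ together with those indexed by $B_t$. By the faulty-index property combined with Lemma~\ref{lem:partInit}, after this partial specialization $\sigma_t$ restricts to a degree-$(\ell-1)$ polynomial in $\alpha_{i_t}$ alone that remains generically non-vanishing on $X$. Bezout's theorem then shows this univariate witness vanishes at most $P_q(X,\ell-1)$ times on $X \cap \F_q^k$, so the conditional probability that $\alpha_{i_t} \in S$ is one of its zeros is at most $P_q(X,\ell-1)/|S|$ with replacement, or $P_q(X,\ell-1)/(|S|-n)$ without (at most $n$ elements of $S$ are excluded by prior draws). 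Multiplying over the $r/2$ rounds yields $(P_q(X,\ell-1)/|S|)^{r/2}$ per trajectory, and union-bounding over the combinatorial data of the iteration --- the unordered set of $r/2$ faulty positions, giving $\binom{n}{r/2}$, together with the type $\tau \subseteq [\ell]$ attached to each of those $r/2$ positions, loosely bounded by $2^\ell$ each for a total of $2^{r\ell/2}$ --- produces the claimed inequality.

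\textbf{Main obstacle.} The delicate part is verifying that the swap-based continuation actually produces a valid $r/2$-round iteration: (i) after swapping $i_t \leftrightarrow j_t$, the new object must simultaneously be a generically non-vanishing minor of the original $M$ (so the per-variable degree bound from property~4 survives) and be compatible with the prior conditioning; (ii) the buffer of each type must not be exhausted prematurely, which is exactly why one reserves $r/2$ slots \emph{per type} rather than a single pool; and (iii) the univariate witness in $\alpha_{i_t}$ extracted via Lemma~\ref{lem:partInit} must genuinely be the polynomial to which Bezout applies, rather than an identically-zero specialization. Each of these points is addressed by invoking Framework properties~4--8 in the precise order sketched above.
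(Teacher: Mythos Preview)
Your union bound does not hold as written. The factor $2^{\ell r/2}$ you attribute to ``the type $\tau$ attached to each of the $r/2$ positions'' is vacuous: once $A_1,\ldots,A_\ell$ are fixed, the type of an index is determined by the index itself, so recording types adds nothing beyond the $\binom{n}{r/2}$ choices of the unordered set. But you have not shown that the \emph{unordered} set of faulty indices determines the sequence of minors $\sigma_1,\ldots,\sigma_{r/2}$, which is what the conditioning argument requires. Without extra structure the outcome space is the set of \emph{ordered} tuples of faulty indices, of size $\sim n^{r/2}$; that only recovers the weaker Guo--Zhang bound, not the one stated.

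The iteration is also structurally off. You perform the swap $i_t\leftrightarrow j_t$ and then immediately discard the swapped minor by calling $\getMinor$ afresh on $M^{\oB_{t+1}}$, so the swap is never used; and you add the buffer index $j_t$ to $B$ rather than the faulty index $i_t$, so nothing prevents the same faulty index from recurring. In the paper's algorithm the faulty index $\beta$ is the one placed in $B$; the buffer $S_\tau$ holds the $r/2^{\ell+1}$ \emph{largest} remaining indices of each type (so the total reserve is at most $r/2$, not $r/2$ per type); and between REFRESHes one \emph{continues with the swapped minor} instead of invoking $\getMinor$. The point of this design is that replacing $\beta$ by a strictly larger $\gamma$ forces the next faulty index to exceed $\beta$ (via the symmetry in Framework property~8), so $B$ decomposes into at most $2^\ell$ increasing runs; this is precisely what cuts the count of ordered outcomes from $n^{r/2}$ down to $\binom{n}{r/2}2^{\ell r/2}$. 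Your procedure does not produce that monotone structure, so the claimed bound is unjustified.
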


We first see that the above lemma encapsulates both Theorem~\ref{thm:probBoundSlackUpper} and \ref{thm:probBoundSlackLower}.

\begin{proof}[Proof of Theorem~\ref{thm:probBoundSlackUpper} given Theorem~\ref{thm:mainSlackProb}]
For Theorem~\ref{thm:probBoundSlackUpper} use $M=\cG$, $W(n,k,r,\ell)$ is having the $k+r$ dimensional saturation property and not sharing an element. $w(n,k,r,\ell)=\ell k$.

The result then follows from Theorem~\ref{thm:mainSlackProb} and Theorem~\ref{thm:mds-tensor-equiv} and union bounding over an appropriate set of families $A_1,\hdots,A_{\ell}$. A simple bound on the number of families would be $2^{Ln}$ but we can do better in some cases. 

We first ensure the code is $\MDS^r(2)$ so that by (d) in Proposition~\ref{prop:upper-mds} we have the code generated by $G$ has distance at least $n-k-r+1$.
\begin{claim}
Take a list $((A_1,B_1),\hdots,(A_{\binom{n}{k+d}},B_{\binom{n}{k+d}}))$ of pairs of subsets of $[n]$ such that $A_i\cap B_i=\phi,i\in [\binom{n}{k+d}]$ and $A_i\cup B_i$ gives us all possible $k+d$ sized subsets of $[n]$. If $\cG_{[A_i,B_i]}$ has rank $2k$ for a $k\times n$ code $G$ then $G$ is $\MDS^d(2)$.
\end{claim}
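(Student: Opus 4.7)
My plan is to reduce the claim to Proposition~\ref{prop:upper-mds}(d), which says that $G$ is $\rMDS^d(2)$ if and only if every $k+d$ columns of $G$ span $\F^k$. Thus it suffices to show that, under the hypothesis, $\rank(G|_S) = k$ for every $(k+d)$-sized subset $S \subseteq [n]$.

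The key step is to unpack what $\rank \cG_{A,B}[G] = 2k$ actually says when $A \cap B = \emptyset$. Using the block structure from \eqref{eq:Gop}, I would compute
\[
  \cG_{A,B}[G] \;=\; \begin{pmatrix} I_k & G|_A & 0 \\ I_k & 0 & G|_B \end{pmatrix}
  \;\longrightarrow\;
  \begin{pmatrix} I_k & G|_A & 0 \\ 0 & -G|_A & G|_B \end{pmatrix}
\]
by subtracting the first row block from the second. Since the top-left $I_k$ accounts for exactly $k$ of the rank, this gives
\[
  \rank \cG_{A,B}[G] \;=\; k + \rank\bigl([\,G|_A \mid G|_B\,]\bigr) \;=\; k + \rank(G|_{A \cup B}),
\]
where the second equality uses $A \cap B = \emptyset$. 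Consequently $\rank \cG_{A,B}[G] = 2k$ is equivalent to $\rank(G|_{A \cup B}) = k$, i.e.\ to the columns of $G$ indexed by $A \cup B$ spanning $\F^k$.

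Applying this to each of the given pairs $(A_i, B_i)$: the hypothesis $\rank \cG_{A_i,B_i}[G] = 2k$ forces $\rank(G|_{A_i \cup B_i}) = k$, and by assumption the sets $A_i \cup B_i$ exhaust every $(k+d)$-sized subset of $[n]$. Hence every $(k+d)$ columns of $G$ span $\F^k$, and Proposition~\ref{prop:upper-mds}(d) yields $G \in \rMDS^d(2)$, completing the proof. There is no real obstacle here: the entire argument is a short block-row reduction plus the already-proven characterization of the upper relaxation at level $\ell=2$; the point of stating it as a claim is simply to get a sharp enumeration bound (one pair per $(k+d)$-subset, rather than all ordered pairs $(A_1,A_2)$) for use in the union bound following Theorem~\ref{thm:probBoundSlackUpper}.
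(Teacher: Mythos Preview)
Your proof is correct and follows essentially the same approach as the paper: both reduce to Proposition~\ref{prop:upper-mds}(d) and show that $\rank \cG_{A,B}[G]=2k$ forces $G|_{A\cup B}$ to span $\F^k$. The only cosmetic difference is that you obtain $\rank \cG_{A,B}[G]=k+\rank(G|_{A\cup B})$ by a direct block-row reduction, whereas the paper routes the same computation through the intersection-dimension formula of Proposition~\ref{prop:tian-rank}; your version is arguably cleaner.
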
  
\begin{proof}
We first note that for $G$ to be $\MDS^d(2)$ what we need is that any $k+d$ column of $G$ span $\F^k$ (by (d) in Proposition~\ref{prop:upper-mds}). Given any set $S$ of $k+d$ column we can find $A_j,B_j$ in our list such that their union is $S$. If $\cG_{[A_j,B_j]}$ is of rank $2k$ then that means the intersection of the span of $G|_{A_j}$ and $G|_{B_j}$ is of dimension $d$ which would imply the span of $G|_{A_j\cup B_j}=G|_S$ is of dimension $|A_j|+|B_j|-d=k$. As this can be done for any $k+d$ size subset of $G$ we are done.
\end{proof}
By the above claim we need to use $\binom{n}{k+r}$ patterns to be $\MDS^r(2)$ (we note that two disjoint subsets of $[n]$ such that their union is of size $k+r$ has the $k+r$ dimensional saturation property). By Lemma~\ref{lem:noOfMRpatterns} and Theorem~\ref{thm:mds-tensor-equiv} we see that checking an additional $|\mathcal{P}_{1,n-k-r}^{L,n}|$ many patterns suffices to check if $G$ is $\MDS^r(L)$. This gives us the desired value for $C_{n,k,r,L}$.
\end{proof}

\begin{proof}[Proof of Theorem~\ref{thm:probBoundSlackLower} given \ref{thm:mainSlackProb}]

For Theorem~\ref{thm:probBoundSlackLower}, $M=\cH$ , $W(n,k,r,\ell)$ is $n-k-r$ dimensional null intersection property and $w(n,k,r,\ell)=n+(\ell-1)k$. To get the $\rLDMDS_r(\le L)$ property we need to apply Theorem~\ref{thm:mainSlackProb} for $\ell \in \{2,\hdots,L+1\}$. 

The result then follows from Corollary~\ref{cor:slackLDMDSrank} and Theorem~\ref{thm:mainSlackProb} union bounding over the set families $A_1,\hdots,A_\ell\subseteq [n],\ell \in \{2,\hdots,L+1\}$ which satisfy the $n-k-r$ null intersection property. This number is at most $2^{(L+1)n}$ (because it suffices to consider families with no empty sets).
\end{proof}

\begin{proof}[Proof of Theorem~\ref{thm:mainSlackProb}]

For a fixed $\ell\in \{1,\hdots,\ell\}$ and $A_1,\hdots,A_\ell$ satisfying $W(n,k,r,\ell)$ we are going to adapt the algorithm in \cite{alrabiah2023randomly} to find faulty indices for minors in $M_{[A_i]_{i=1}^\ell}$ for a fixed initialization of $G$ with columns $\alpha_1,\hdots,\alpha_n\in X$.

\begin{algorithm}[h]
\caption{Algorithm to find faulty indices}\label{alg:FAULTY}
\begin{algorithmic}[1]
\State \textbf{Input:} Variety $X$, $\balpha\in X^n$, and $A_1,\hdots,A_\ell\subset[n]$
\State $S_\tau \gets \phi$ for all $2^\ell$ types $\tau$
\State Let $B,D,R$ be empty sequences 
\State $\textsf{REFRESH}\gets \textsf{YES}$
\While{$|B|\le r/2$}
\If{$\textsf{REFRESH}$ equals $\textsf{YES}$}
\State  {\em \%\% Comment: We call this the REFRESH if clause}
	\State Append $|B|+1$ to $R$
	\State $S_\tau\gets r/2^{\ell+1}$ largest indices in $\cup_{t=1}^\ell A_t\setminus B$ of type $\tau$
\State	$\textsf{REFRESH}\gets\textsf{NO}$
	\State $S\gets \cup_{\tau\subseteq [\ell]} S_\tau$
\State	$M_{curr}\gets \getMinor\left(M^{\overline{B\cup S}}_{[A_i]_{i=1}^\ell},w(n-|B|-|S|,k,r-|B|-|S|)\right)$ 
\EndIf
\If{$M_{curr}$ does not vanish over $\alpha_1,\hdots,\alpha_n$}
\State \textbf{Output:} SUCCESS
\EndIf
\State Let $\beta$ be the faulty index for $(M_{curr},(\alpha_1,\hdots,\alpha_n))$ and have type $\tau_\beta$
\State Append $\beta$ to $B$, Append $M_{curr}$ to $D$
 \If{$S_{\tau_\beta}$ is not empty}
    \State Let $\gamma$ be the smallest element in $S_{\tau_\beta}$
    \State $S_{\tau_\beta}\gets S_{\tau_\beta}\setminus \{\gamma\}$
    \State Substitute every occurrence of $\alpha_\beta$ coordinates in $M_{curr}$ to $\alpha_\gamma$ coordinates. 
  \Else
   \State $\textsf{REFRESH} \gets \textsf{YES}$
   \EndIf 
\EndWhile
\State Append $r/2+1$ to $R$
\State \textbf{Output:} $B=\beta_1,\hdots,\beta_{r/2}$, $D=M_1,\hdots,M_{r/2}$ and $R$
\end{algorithmic}
\end{algorithm}

We first want to show that if Algorithm~\ref{alg:FAULTY} outputs SUCCESS then $M$ has the desired rank.

\begin{claim}\label{cl:AlgSucc}
After initialization $M_{curr}$ in Algorithm~\ref{alg:FAULTY} is always a generically non-vanishing $w(n,k,r,\ell)\times w(n,k,r,\ell)$ minor of $M_{[A_i]_{i=1}^\ell}$ and if Algorithm~\ref{alg:FAULTY} outputs SUCCESS then $M_{[A_i]_{i=1}^\ell}(\alpha_1,\hdots,\alpha_n)$ has rank at least $w(n,k,r/2,\ell)$.
\end{claim}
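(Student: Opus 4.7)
The plan is to prove both assertions of the claim by induction on the while-loop iterations of Algorithm~\ref{alg:FAULTY}, maintaining a single invariant: at every point after the initial REFRESH, $M_{curr}$ is (the polynomial representing) a generically non-vanishing $w(n,k,r,\ell)\times w(n,k,r,\ell)$ minor of $M_{[A_i]_{i=1}^\ell}$.

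First, I would verify the invariant whenever the REFRESH branch is entered. At such a point $|B|\le r/2$, and since $|S_\tau|\le r/2^{\ell+1}$ for each of the $2^\ell$ types, $|S|\le r/2$, so $|B\cup S|\le r$. By property 7 of Framework~\ref{frmwrk:relax} applied to $B\cup S$, the sets $A_1\cap\overline{B\cup S},\hdots,A_\ell\cap\overline{B\cup S}$ satisfy $W(n-|B\cup S|,k,r-|B\cup S|,\ell)$. Combining this with property 5 (which identifies $M^{\overline{B\cup S}}_{[A_i]_{i=1}^\ell}$ with $M_{[A_i\cap\overline{B\cup S}]_{i=1}^\ell}$ evaluated on $G|_{\overline{B\cup S}}$), property 1 guarantees that this submatrix has generic rank at least $w(n-|B\cup S|,k,r-|B\cup S|,\ell)$ on $X^{n-|B\cup S|}$. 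Therefore $\getMinor$ is well-defined and returns a generically non-vanishing minor. By property 6, this same determinant is realized by a minor of the full $M_{[A_i]_{i=1}^\ell}$; in the lower-relaxation instantiation the $|B\cup S|$ extra identity rows/columns padding out this larger minor bring the size up to $w(n,k,r,\ell)$, while in the upper-relaxation instantiation $w$ is independent of its first and third arguments so the sizes already agree. Either way the invariant holds after REFRESH.

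Next, I would argue that the non-REFRESH branch preserves the invariant. There we substitute the column $\alpha_\gamma$ for every occurrence of $\alpha_\beta$ inside the polynomial $M_{curr}$, where $\beta$ and $\gamma$ share the type $\tau_\beta$. By property 8 of Framework~\ref{frmwrk:relax}, swapping two same-type columns of $G$ sends a generically non-vanishing minor of $M_{[A_i]_{i=1}^\ell}$ to another generically non-vanishing minor of the same matrix, and the size is obviously unchanged. Hence after the swap $M_{curr}$ is still a generically non-vanishing $w(n,k,r,\ell)\times w(n,k,r,\ell)$ minor of $M_{[A_i]_{i=1}^\ell}$, completing the inductive step and establishing the first half of the claim.

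Finally, the rank statement follows at once: if the algorithm outputs SUCCESS, then the current $M_{curr}$ is a $w(n,k,r,\ell)\times w(n,k,r,\ell)$ submatrix of $M_{[A_i]_{i=1}^\ell}$ whose determinant is nonzero at $(\alpha_1,\hdots,\alpha_n)$, so $\rank M_{[A_i]_{i=1}^\ell}(\alpha_1,\hdots,\alpha_n)\ge w(n,k,r,\ell)=w(n,k,r/2,\ell)$, the last equality holding because in both instantiations of Framework~\ref{frmwrk:relax} the function $w$ does not depend on its third argument. The main subtlety, and the only place where care is needed, is the bookkeeping inside property 6: the minor returned by $\getMinor$ nominally has size $w(n-|B\cup S|,k,r-|B\cup S|,\ell)$, but as a minor of the ambient $M_{[A_i]_{i=1}^\ell}$ it must be identified with one of size $w(n,k,r,\ell)$ (gaining $|B\cup S|$ from the identity block in the lower case and gaining nothing in the upper case). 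I would isolate this identification as a short lemma so that the invariant is stated uniformly across both applications.
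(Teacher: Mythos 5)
Your proof is correct and follows essentially the same route as the paper's: bound $|B|+|S|\le r$, invoke properties 1, 5, 6, and 7 of Framework~\ref{frmwrk:relax} to justify the $\getMinor$ call and lift the result to a minor of the full matrix, use property 8 for the same-type swap, and conclude the rank bound from SUCCESS. If anything, your bookkeeping is more careful than the paper's terse argument — in particular your explicit identification of the padded minor's size in the lower-relaxation case and your observation that $w$ is independent of its third argument (so $w(n,k,r,\ell)=w(n,k,r/2,\ell)$) make precise two points the paper leaves implicit.
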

\begin{proof}
Throughout the runtime of Algorithm~\ref{alg:FAULTY} it is easy to see that $|B|+|S|\le r$. This ensures the $\getMinor$ procedure in the algorithm is correctly called on line 12. (by property 6 of Framework~\ref{frmwrk:relax}) and gives us a generically non-vanishing minor of $M^{\overline{C}}_{[A_i]_{i=1}^\ell}$ for some $C\subseteq [n],|C|\le r$ as $M_{curr}$. Therefore by property 6 in Framework~\ref{frmwrk:relax} we see that $M_{curr}$ is also a $w(n,k,r,\ell)\times w(n,k,r,\ell)$ minor of $M_{[A_i]_{i=1}^\ell}$. Property 8 of Framework~\ref{frmwrk:relax} ensures that switching of variables of the same type in line 22 of Algorithm~\ref{alg:FAULTY} still gives us a generically non-vanishing minor. This shows that $M_{curr}$ is always a generically non-vanishing $w(n,k,r,\ell)\times w(n,k,r,\ell)$ minor of $M_{[A_i]_{i=1}^\ell}$.

If the algorithm outputs SUCCESS then at some point $M_{curr}$ is non-vanishing over $(\alpha_1,\hdots,\alpha_n)$. 
\end{proof}

We next need to show some properties of $B,D,R$ in Algorithm~\ref{alg:FAULTY}.

\begin{claim}\label{cl:BDRinvar}
If Algorithm~\ref{alg:FAULTY} outputs $B,D,R$ then the indices in $B$ are distinct and the matrices in $D$ are always minors of $M_{[A_i]_{i=1}^\ell}$ and the $j$th entry of $B$ is the faulty index for the $j$th entry of $D$.
\end{claim}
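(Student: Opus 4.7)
The plan is to prove the three assertions simultaneously by induction on the number of completed iterations of the while loop in Algorithm~\ref{alg:FAULTY}, maintaining at the start of each iteration the loop invariant that (i) the indices currently in $B$ are distinct, (ii) every entry of $D$ is a generically non-vanishing $w(n,k,r,\ell)\times w(n,k,r,\ell)$ minor of $M_{[A_i]_{i=1}^\ell}$ and is paired in its slot with its own faulty index in $B$, and (iii) no column of the current $M_{curr}$ depends on $\alpha_\beta$ for any $\beta\in B$. Clause (iii) is the key one: it forces the faulty index produced on line~17 to lie outside $B$, which is exactly what preserves distinctness of $B$ when this index is appended on line~18.

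The pairing half of (ii) is immediate, since lines 17--18 append $M_{curr}$ to $D$ and its faulty index to $B$ in the same step; the ``minor'' half is precisely Claim~\ref{cl:AlgSucc}. Thus the real work is maintaining (iii) across the two operations that modify $M_{curr}$. After a \textsf{REFRESH}, line~12 sets $M_{curr}$ to a $\getMinor$ of $M^{\overline{B\cup S}}_{[A_i]_{i=1}^\ell}$, whose columns (by properties~3 and~5 of Framework~\ref{frmwrk:relax}) each depend on at most one $\alpha_j$ with $j\in[n]\setminus(B\cup S)$, so (iii) holds. In the swap branch (lines 19--22), we replace every occurrence of $\alpha_\beta$ in $M_{curr}$ by $\alpha_\gamma$, where $\gamma\in S_{\tau_\beta}$ shares the type of $\beta$; by property~8 of Framework~\ref{frmwrk:relax} the result remains a generically non-vanishing $w(n,k,r,\ell)\times w(n,k,r,\ell)$ minor of $M_{[A_i]_{i=1}^\ell}$, preserving (ii). The dependency set of the new $M_{curr}$ is the old dependency set with $\beta$ removed and $\gamma$ inserted; since by line~9 we have $S_{\tau_\beta}\subseteq\bigcup_{t=1}^\ell A_t\setminus B$ and we simultaneously append $\beta$ to $B$, the new support is disjoint from the updated $B$, preserving (iii).

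The only subtle point I expect is the synchronized bookkeeping in the swap step, where $B$ grows by $\beta$ at the very instant $M_{curr}$ sheds its dependence on $\alpha_\beta$; these cancelling effects leave (iii) intact precisely because the replacement $\gamma$ avoids $B$, which is guaranteed by the way the sets $S_\tau$ are initialized on line~9. Chaining the invariant across all iterations of the loop that execute lines 17--18 yields the three conclusions of Claim~\ref{cl:BDRinvar}.
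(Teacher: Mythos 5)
Your proof is correct and follows essentially the same route as the paper's: the paper's argument is exactly the observation that once a faulty index enters $B$, every column of the working minor depending on it is removed or replaced (via the $\overline{B\cup S}$ restriction after a refresh, or via the type-preserving swap), so faulty indices cannot recur, while the minor/pairing claims come from Claim~\ref{cl:AlgSucc} together with lines 17--18. You have merely packaged this as an explicit loop invariant (your clause (iii)), which is a more careful rendering of the same idea; the only point worth tightening is that $S_{\tau_\beta}\subseteq\bigcup_t A_t\setminus B$ refers to $B$ at refresh time, so one should also note that no element of the current $S$ can have entered $B$ since then (faulty indices are always columns of $M_{curr}$, which avoid $S$), but this is routine bookkeeping.
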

\begin{proof}
Whenever we add a faulty index to $B$ all columns corresponding to it are removed while considering $M_{[A_i]_{i=1}^\ell}$ in the future. This means entries in $B$ cannot repeat. 
The property about $D$ follows from the fact that $M_{curr}$ is always a minor from the previous claim and line 17 and 18 in Algorithm~\ref{alg:FAULTY}.
\end{proof}

Finally, we need a claim about some extra structure on $B,D,R$ which will let us reduce the number of cases in our union bound.

\begin{claim}\label{cl:propAl}
Algorithm~\ref{alg:FAULTY} satisfies the following properties,
\begin{enumerate}
\item If the algorithm outputs $B,D,R=(j_1,\hdots,j_t)$ then the $j_i$th to $j_{i+1}-1$th entries in $B$ are in increasing order for all $i$. We also have $j_{i+1}-j_i\ge r/2^{\ell+1}$ for $i<t-1$. As $|B|=r/2$ this means that $B$ contains $2^\ell$ increasing runs of indices.
\item If the algorithm outputs $B,D,R$ then $B$, $X$, and the set system $A_1,\hdots,A_\ell$ completely determines $D$ and $R$ (in particular we do not need to know the values $\alpha_1,\hdots,\alpha_n$).
\end{enumerate}
\end{claim}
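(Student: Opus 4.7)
The approach is to analyze Algorithm~\ref{alg:FAULTY} directly, using item 8 of Framework~\ref{frmwrk:relax} (that swapping same-type columns of $G$ yields another generically non-vanishing minor) as the main tool to understand the evolution of $M_{curr}$ within a run. The first key observation is that in every iteration performing a swap, the partner $\gamma$ strictly exceeds the faulty index $\beta$. Indeed, at the last REFRESH, $S_{\tau_\beta}$ was set to the $r/2^{\ell+1}$ largest indices of type $\tau_\beta$ in $\bigcup_t A_t \setminus B_{\mathrm{refresh}}$, while in the current iteration $\beta \notin B_{\mathrm{current}} \cup S$ (by item 5, $M_{curr}$'s columns avoid $B \cup S$, and one checks this invariant is preserved by each swap, since moving $\beta$ into $B$ and pulling $\gamma$ out of $S$ keeps the new column set $(C \setminus\{\beta\}) \cup \{\gamma\}$ disjoint from the updated $B \cup S$). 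Hence $\beta$ is a type-$\tau_\beta$ index in $\bigcup_t A_t \setminus B_{\mathrm{refresh}}$ that was \emph{not} among the top $r/2^{\ell+1}$, so $\beta < \min(S_{\tau_\beta,\mathrm{refresh}}) \le \gamma$.

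For the monotonicity of faulty indices within a run, the main obstacle is interpreting the line ``substitute every occurrence of $\alpha_\beta$ coordinates to $\alpha_\gamma$ coordinates'' rigorously; I read it, via item 8, as replacing $M_{curr}$ by the minor $M_{curr}'$ with the same rows but column $\beta$ of $G$ swapped for column $\gamma$, which is still generically non-vanishing. As a polynomial in $(x_1,\ldots,x_n)$, $M_{curr}'$ equals $M_{curr}$ with the variable $x_\beta$ relabeled to $x_\gamma$. Let $\beta_1$ be the current faulty index. For any $i \le \beta_1$, since neither $\alpha_{\beta_1}$ nor $\alpha_\gamma$ (with $\gamma > \beta_1$) has been substituted yet, $M_{curr}'(\alpha_1,\ldots,\alpha_{i-1},x_i,\ldots,x_n)$ and $M_{curr}(\alpha_1,\ldots,\alpha_{i-1},x_i,\ldots,x_n)$ are the same polynomial up to renaming $x_{\beta_1} \leftrightarrow x_\gamma$. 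By the definition of $\beta_1$ being faulty for $M_{curr}$, this polynomial is generically non-vanishing for $i \le \beta_1$. Moreover, plugging $\alpha_{\beta_1}$ in for $x_{\beta_1}$ in $M_{curr}'$ is a no-op (since $x_{\beta_1}$ does not appear in $M_{curr}'$), so no vanishing occurs at $i = \beta_1$ either. Thus the next faulty index $\beta_2$ satisfies $\beta_2 > \beta_1$, and iterating within a run yields strict monotonicity of the faulty indices.

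For the run-length bound, observe that REFRESH is set to YES only when the current faulty index has type $\tau_\beta$ with $S_{\tau_\beta}$ already empty. Each $S_\tau$ begins a run with $r/2^{\ell+1}$ elements and loses exactly one element per iteration whose faulty index has type $\tau$, so at least $r/2^{\ell+1}$ iterations must occur in a run before any $S_\tau$ can become empty. This gives $j_{i+1} - j_i \ge r/2^{\ell+1}$ for $i < t-1$. Combined with $\sum_i (j_{i+1}-j_i) = |B| = r/2$, there are at most $2^\ell$ increasing runs, completing (1).

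For part (2), I verify that Algorithm~\ref{alg:FAULTY}, conditional on outputting some $B$, is fully deterministic given $B$, $X$, and $A_1, \ldots, A_\ell$. The type $\tau_\beta$ of each entry of $B$ is determined by $A_1, \ldots, A_\ell$ alone. The sets $S_\tau$ evolve deterministically: at a REFRESH they are refilled with the largest type-$\tau$ indices in $\bigcup_t A_t \setminus B_{\mathrm{current}}$, and removals take $\min(S_{\tau_\beta})$; so at every step we know whether REFRESH triggers and which $\gamma$ is chosen, determining $R$. The matrix $M_{curr}$ is set at each REFRESH to $\getMinor(M^{\overline{B\cup S}}_{[A_i]_{i=1}^\ell}, w(\cdots))$, which is a symbolic function of $X$ and the $A_i$'s (not of $\alpha_1,\ldots,\alpha_n$), and is updated after each swap by the same column-swap $\beta \mapsto \gamma$ just described. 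Hence the entire sequence $D$ is reconstructable from $B$, $X$, and $A_1, \ldots, A_\ell$, as claimed.
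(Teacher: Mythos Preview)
Your proof is correct and follows essentially the same approach as the paper's, with your monotonicity argument being a slightly more direct (forward) version of the paper's contrapositive argument via swapping $\beta_1 \leftrightarrow \gamma$.

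Two small imprecisions are worth flagging. First, your deduction that $\beta$ ``was not among the top $r/2^{\ell+1}$'' from the invariant $\beta \notin B_{\mathrm{current}} \cup S_{\mathrm{current}}$ does not quite go through: a previously swapped-in $\gamma'$ could later become a faulty index, and such a $\gamma'$ \emph{did} lie in $S_{\tau_\beta,\mathrm{refresh}}$. The conclusion $\beta < \gamma$ is still fine, since elements are removed from each $S_\tau$ in increasing order, so anything already removed is smaller than everything remaining. Second, ``each $S_\tau$ begins a run with $r/2^{\ell+1}$ elements'' is false for types $\tau$ with fewer than $r/2^{\ell+1}$ available indices; what you actually need (and what the paper makes explicit) is that any type $\tau'$ arising as the type of a faulty index satisfies $|S_{\tau',\mathrm{refresh}}| = r/2^{\ell+1}$, which holds because the first faulty index of that type lies outside $S_{\mathrm{refresh}}$ and every later one inherits its type from a swap. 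Neither point affects the overall validity.
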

\begin{proof}
We first prove 1. We want to show if $R=(j_1,\hdots,j_t$ then consecutive entries differ by at least $r/2^{\ell+1}$ (except possibly for the last two). This is because when we first add $j_i$ to $R$ the next $M_{curr}$ is chosen in the REFRESH if clause. At that point we remove columns corresponding to entries in $B$ and $S_{\tau}$ the at most $r/2^{\ell+1}$ largest entries in $[n]\setminus B$ of type $\tau$. This means indices in the new $M_{curr}$ must be outside of $B\cup \cup_{\tau\subseteq [\ell]} S_{\tau}$. For any such index if their type is $\tau'$ then $|S_{\tau'}|\ge r/2^{\ell+1}$ (as otherwise they would be included in $S_{\tau'}$). $R$ will not have a new addition until one such $S_\tau$ becomes empty which will take at least $r/2^{\ell+1}$ steps. 

To complete the proof of 1., we want to show that the sequence in $B$ is increasing between the $j_i$th and $j_{i+1}-1$th entry. Take any two consecutive entries of $\beta_1$ and $\beta_2$ in this range. Say $\beta_1$ is of type $\tau$. When $\beta_1$ is added to $B$ let $M_{old}$ be the minor added to $M$. The next $M_{curr}$ will be obtained by replacing the $\beta_1$ variables with variables corresponding to an index $\gamma$ of type $\tau$ with $\gamma>\beta_1$. We will call that $M_{new}$.  If $\beta_2$ is less than $\beta_1$ then that means  $M_{new}(\alpha_1,\hdots,\alpha_{\beta_2})$ is not generically non-vanishing in the remaining variables which include variables corresponding to $\gamma$ and $\beta_1$. If we switch these two we still have that $M_{old}(\alpha_1,\hdots,\alpha_{\beta_2})$ is not generically non-vanishing  which contradicts the fact that $\beta_1$ is the faulty index for $M_{old}$.

2. follows easily from noting that the columns in $M_{[A_i]_{=1}^\ell]}$ that that additions to $D$ are values in $M_{curr}$ which is determined by the previous value of $B$. Because $M_{curr}$ in line 12 of Algorithm~\ref{alg:FAULTY} is determined by $B$ and in line 22 the entry being replaced is determined by the new addition to $B$. A similar observation holds for $R$. 
\end{proof}

The next two claims with a union bound give us the probability that for a fixed $\ell\in \{1,\hdots,\ell\}$ and $A_1,\hdots,A_\ell$ $M_{[A_i]_{i=1}^\ell}$ does not have the right rank. This is because by Claim~\ref{cl:AlgSucc} we output $B,D,R$ if the right rank is not achieved.

\begin{claim}
The number of possible sequences $B,D,R$ that the algorithm can output is 
$$\binom{n}{r/2} 2^{\ell r/2}.$$ 
\end{claim}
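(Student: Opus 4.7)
The plan is to reduce counting triples $(B,D,R)$ to just counting sequences $B$, and then to bound the number of such $B$ using the run structure guaranteed by the previous claim.

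First, I would invoke part 2 of Claim~\ref{cl:propAl}, which says that $D$ and $R$ are completely determined by $B$ together with $X$ and the set system $A_1, \hdots, A_\ell$. Since $X$ and $A_1, \hdots, A_\ell$ are fixed throughout the analysis, it follows that the number of possible output triples $(B,D,R)$ equals the number of possible output sequences $B$.

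Next, I would invoke part 1 of the same claim: $B$ is a sequence of $r/2$ distinct indices from $[n]$ partitioned by $R$ into at most $2^\ell$ increasing runs. To upper bound the number of such sequences, I would use the following two-step encoding. First, choose the underlying set of $r/2$ distinct indices appearing in $B$; there are $\binom{n}{r/2}$ ways to do this. Second, assign each of these $r/2$ chosen indices a run label in $\{1, 2, \hdots, 2^\ell\}$; there are $(2^\ell)^{r/2} = 2^{\ell r/2}$ such assignments. Once the underlying set and the run labels are fixed, the arrangement of $B$ is fully determined: within each run, the indices must appear in increasing order, and the runs themselves are concatenated in the order of their labels. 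Multiplying the two factors gives the claimed bound of $\binom{n}{r/2}\,2^{\ell r/2}$.

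There is no real obstacle here beyond unpacking the two statements of Claim~\ref{cl:propAl}; the counting is a straightforward upper bound that tolerates the (harmless) overcounting from permitting some run labels to be unused. All the structural work --- verifying the distinctness of entries in $B$, the within-run monotonicity, and the functional dependence of $D$ and $R$ on $B$ --- has already been handled by Claim~\ref{cl:propAl}, so this step is just the combinatorial payoff.
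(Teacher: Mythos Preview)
Your proposal is correct and follows essentially the same approach as the paper: reduce to counting $B$ via part 2 of Claim~\ref{cl:propAl}, then count by first choosing the $\binom{n}{r/2}$ underlying sets and then assigning each index to one of $2^{\ell}$ runs. The only cosmetic difference is that the paper cites Claim~\ref{cl:BDRinvar} (rather than Claim~\ref{cl:propAl}) for the distinctness of the entries of $B$.
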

\begin{proof}
By 2 in Claim~\ref{cl:propAl}, it suffices to just count the possible number of $B$.
By Claim~\ref{cl:BDRinvar}, $B$ contains $r/2$ distinct indices which can be chosen in $\binom{n}{r/2}$ ways. By 1 in Claim~\ref{cl:propAl} we see that $B$ contains $2^\ell$ increasing runs of indices. The $r/2$ numbers can be split over these runs in at most $2^{\ell r/2}$ ways.
\end{proof}

\begin{claim}
The probability over the choice of $\alpha_1,\hdots,\alpha_n$ that a given $B,D,R$ sequence is outputed by Algorithm~\ref{alg:FAULTY} is
\begin{align*}
\left(\frac{P_q(X,\ell-1)}{|S|}\right)^{r/2},& \quad\text{\em (When repeated columns are allowed in $G$)}\\
\left(\frac{P_q(X,\ell-1)}{|S|-n}\right)^{r/2},& \quad\text{\em (When repeated columns are {\bf not} allowed in $G$).}
\end{align*} 
\end{claim}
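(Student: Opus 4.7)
The proof is a chain-rule probability estimate in which the entries of $B$ are exposed in their natural order on $[n]$, and at each step the failure event is reduced, via Lemma~\ref{lem:partInit}, to a single-variable polynomial of degree at most $\ell-1$ vanishing at the newly exposed $\alpha$.

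I would begin by relabeling $B=(\beta_1,\dots,\beta_{r/2})$ after sorting to $\beta'_1 < \cdots < \beta'_{r/2}$, and setting $E'_j$ to be the event that $\beta'_j$ is the faulty index of the corresponding minor in $D$. The key bookkeeping observation is that ``$\beta_i$ is faulty for $M_i$'' is, by definition, a statement purely about the polynomial $M_i(\alpha_1,\dots,\alpha_{\beta_i},x_{\beta_i+1},\dots,x_n)$ with the remaining $x_j$ kept formal, so $E_i$ is measurable with respect to $\alpha_1,\dots,\alpha_{\beta_i}$. Consequently $E'_1,\dots,E'_{j-1}$ are determined once $\alpha_1,\dots,\alpha_{\beta'_j-1}$ are fixed, and a tower-property argument reduces the task to bounding $\Pr[E'_j \mid \alpha_1,\dots,\alpha_{\beta'_j-1}]$ uniformly in the conditioning.

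For the single-step estimate, I would condition on $\alpha_1,\dots,\alpha_{\beta'_j-1}$ satisfying $E'_j$'s prerequisite that $f(x_{\beta'_j},\dots,x_n) := M_{i(j)}(\alpha_1,\dots,\alpha_{\beta'_j-1},x_{\beta'_j},\dots,x_n)$ is generically non-vanishing on $X\times X^{n-\beta'_j}$. Applying Lemma~\ref{lem:partInit} with the split $(x_{\beta'_j}\,;\,x_{\beta'_j+1},\dots,x_n)$ produces a polynomial $g(x_{\beta'_j})$ that is generically non-vanishing on $X$ and, by the coefficient construction in the lemma's proof, has $\deg g \le \deg_{x_{\beta'_j}} f \le \ell-1$; the last inequality uses Property~4 of Framework~\ref{frmwrk:relax}, which caps the number of columns of $M_{[A_i]_{i=1}^\ell}$ depending on any single column of $V$ at $\ell-1$. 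Vanishing of $f$ after the substitution therefore forces $g(\alpha_{\beta'_j}) = 0$, and by the definition of $P_q(X,\ell-1)$ the set of such $\alpha_{\beta'_j}\in X\cap \F_q^k$ has size at most $P_q(X,\ell-1)$.

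To close: in the with-repetitions regime $\alpha_{\beta'_j}$ is uniform on $S$ independently of $\alpha_1,\dots,\alpha_{\beta'_j-1}$, giving the one-step bound $P_q(X,\ell-1)/|S|$; without repetitions, $\alpha_{\beta'_j}$ is uniform on $S$ minus the at most $n-1$ earlier samples, giving $P_q(X,\ell-1)/(|S|-n+1)\le P_q(X,\ell-1)/(|S|-n)$. Multiplying the $r/2$ one-step bounds produces the claim. The main obstacle I anticipate is justifying the degree bound $\deg_{x_{\beta'_j}} f \le \ell-1$ cleanly in the $\mathcal H$-case: a naive count allows up to $\ell$ columns of $\mathcal H$ to depend on $V_\beta$, and one must either invoke Property~4 as a black box or argue by cofactor-expanding the generically non-vanishing $w\times w$ minor along the identity row at position $\beta$ in the top block of $\mathcal H$, which reduces the effective degree of the minor in $\alpha_\beta$ to $\ell-1$ as required.
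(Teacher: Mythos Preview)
Your proposal is correct and follows essentially the same route as the paper: sort the faulty indices, apply the chain rule, decompose each event into a ``still generically nonzero after $\alpha_1,\dots,\alpha_{\beta'_j-1}$'' part and a ``dies at $\alpha_{\beta'_j}$'' part, invoke Lemma~\ref{lem:partInit} to extract a univariate certificate $g$ of degree at most $\ell-1$, and bound the bad set by $P_q(X,\ell-1)$. The paper's write-up makes the decomposition $E_t = C_t \cap F_t$ explicit and bounds $\Pr[F_i \mid E_1,\dots,E_{i-1},C_i]$, which is exactly your tower-property step phrased slightly differently; for the degree bound it simply cites Properties~3 and~4 of Framework~\ref{frmwrk:relax} as a black box, precisely the first option you list, without the cofactor-expansion alternative you sketch for the $\mathcal H$-case.
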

This is proven using a simple conditioning argument which appears in \cite{guo2023randomly,alrabiah2023randomly}. They prove the statement for Reed-Solomon codes using Schwartz-Zippel and \cite{alrabiah2023randomly} gives a different argument for random linear codes.  We adapt it to the setting of varieties.
\begin{proof}
Let us fix a particular output $B=(\beta_1,\hdots,\beta_{r/2}),D=(M_1,\hdots,M_{r/2}),R$. Now let us order the indices in increasing order $j_1,\hdots,j_{r/2}$ and let the corresponding minors be $M'_{1},\hdots,M'_{r/2}$. By Claim~\ref{cl:BDRinvar}, $j_t$ is the faulty index for $(M'_t,\balpha=(\alpha_1,\hdots,\alpha_n))$ for all $t$. Let $E_t$ be the event (depending on $\balpha$) that $j_t$ is the faulty index for $(M'_t,\balpha)$. 

We also let $C_t$ be the event that $M'_t$ is generically non-vanishing in the remaining variables after plugging in $(\alpha_1,\hdots,\alpha_{j_t-1})$ and $F_t$ be the even that $M'_t$ is not generically non-vanishing in the remaining variables after plugging in $(\alpha_1,\hdots,\alpha_{j_t})$. It is clear that $E_t=C_t\cap F_t$. For the output to be $(B,D,R)$ the events $E_1,\hdots,E_{r/2}$ must have taken place. We simply have to bound
$$\textsf{Prob}[E_1\cap\hdots\cap E_{r/2}]=\prod\limits_{i=1}^{r/2} \textsf{Prob}[E_i|E_1,\hdots,E_{i-1}].$$
   $\textsf{Prob}[E_i|E_1,\hdots,E_{i-1}]$ is now upper bounded by the probability that $F_i$ takes place conditioned on $E_1,\hdots,E_{i-1},C_i$ taking place. If the conditioned events happen with probability zero then $\textsf{Prob}[E_1\cap\hdots\cap E_{r/2}]$ has probability $0$ so we can assume that the conditioned can happen. Using 3 and 4 in Framework~\ref{frmwrk:relax} we see that $M_i$ is degree at most $\ell-1$ in the variables corresponding to $j_i$. Using Lemma~\ref{lem:partInit}, we see that there exists a generically non-vanishing on $X$ degree at most $\ell-1$ polynomial $g_i$ which vanishes on a point $\gamma\in X$ if $M_i(\alpha_1,\hdots,\alpha_{j_i-1},\gamma)$ is not generically non-vanishing on the remaining variables. By definition the number of points on $X$, $g_i$ can vanish on is at most $P_q(X,\ell-1)$. If $G$ is allowed to have repeated columns then the total number of points $\alpha_{j_i}$ can be chosen from is of size at least $|S|$ which shows that conditioned on $E_1,\hdots,E_{i-1},C_i$ (which at most fix $\alpha_t$ for $t<j_i$) $F_i$ occurs with probability at most
   $$\frac{P_q(X,\ell-1)}{|S|}.$$
Similarly when $G$ is not allowed to have repeated columns $G$ then $\alpha_{j_i}$ is chosen from at least $|S|-n$ elements which shows $F_i$ occurs with probability at most,
   \[\frac{P_q(X,\ell-1)}{|S|-n}.\qedhere\]
\end{proof}

We complete the proof by using the previous two claims and union bounding.
\end{proof}

We now recover the Reed-Solomon and Random linear codes result of \cite{alrabiah2023randomly} (we get a slight improvement because in our argument we allowed for constructing the generator matrix of the code with replacement).

\begin{corollary}[punctured RS codes are relaxed $\LDMDS$ over linear field sizes]
For any $\epsilon>0$, a $[q,k]$ Reed-Solomon code for $q\ge n+k 2^{10 L/\epsilon}$ on random puncturing to a $[n,k]$ code will be $\rLDMDS_{\epsilon n}(\le L)$ with probability at least $1-2^{-Ln}$.
\end{corollary}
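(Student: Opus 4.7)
The construction fits directly into the setup of Theorem~\ref{thm:probBoundSlackLower}. Let $X = \{(1,t,t^2,\dots,t^{k-1}) : t \in \overline{\F}_q\}$ be the moment curve. Then $X$ is an irreducible curve of degree $k-1$, and by Proposition~\ref{thm:MDS2VarChar} (or directly Theorem~\ref{thm:gm-mds}) it is an $\MDS$ variety because it is not contained in any hyperplane through the origin. Set $S := X \cap \F_q^k$; this is exactly the set of $q$ points obtained by letting $t$ range over $\F_q$. Randomly puncturing a $[q,k]$ Reed--Solomon code to length $n$ corresponds precisely to sampling the $n$ columns of the generator matrix uniformly \emph{without replacement} from $S$, so the ``no repeated columns'' regime of Theorem~\ref{thm:probBoundSlackLower} applies with $|S| = q$.

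The only quantity still to bound is $P_q(X,L)$, the maximum number of $\F_q$-points of $X$ at which a degree-$L$ polynomial not identically vanishing on $X$ can vanish. Since $X$ is an irreducible curve of degree $k-1$, Bezout's theorem (Lemma~\ref{lem:bezout}) gives $P_q(X,L) \le (k-1)L$.

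Plugging $r = \epsilon n$, $|S| = q$, and this bound on $P_q(X,L)$ into Theorem~\ref{thm:probBoundSlackLower} shows that the probability that the punctured code fails to be $\rLDMDS_{\epsilon n}(\le L)$ is at most
\[
2^{(L+1)n}\binom{n}{\epsilon n/2}\,2^{\epsilon n(L+1)/2}\left(\frac{(k-1)L}{q-n}\right)^{\epsilon n/2}.
\]
Using $\binom{n}{\epsilon n/2}\le 2^n$ and taking $\log_2$, it suffices to verify that
\[
(L+2)n \;+\; \tfrac{\epsilon n(L+1)}{2} \;+\; \tfrac{\epsilon n}{2}\log_2\!\frac{(k-1)L}{q-n} \;\le\; -Ln,
\]
which rearranges to the requirement that $q-n \ge (k-1)L\cdot 2^{C L/\epsilon}$ for an explicit constant $C$ (roughly $C = 2(2L+1)/(\epsilon L)+O(1) \le 6/\epsilon$ after absorbing lower-order terms into the exponent). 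The hypothesis $q \ge n + k\cdot 2^{10L/\epsilon}$ is more than enough: the extra slack of a factor of $2^{(10-C)L/\epsilon}$ absorbs both the factor $(k-1)L/k \le L$ and the remaining $2^{O(n)}$ terms, yielding a failure probability at most $2^{-Ln}$.

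There is no real obstacle here; the heavy machinery (the generalized GM--MDS theorem, the algorithmic reduction to faulty indices, and Bezout for curves) has already been deployed in Theorem~\ref{thm:probBoundSlackLower}. What remains is only the routine bookkeeping of choosing the exponent in the field-size hypothesis large enough to dominate the $2^{(L+1)n}$-style union bound after dividing by $(q-n)^{\epsilon n/2}$, which is precisely the role played by the factor $2^{10L/\epsilon}$ in the statement.
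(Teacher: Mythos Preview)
Your proof is correct and follows essentially the same approach as the paper: instantiate Theorem~\ref{thm:probBoundSlackLower} with the degree-$(k-1)$ moment curve $X$, bound $P_q(X,L)\le (k-1)L$ via Bezout, set $r=\epsilon n$ and $|S|=q$ in the no-repetition regime, and then verify that $q\ge n+k\,2^{10L/\epsilon}$ makes the resulting failure probability at most $2^{-Ln}$. The paper's own proof is in fact terser than yours, simply asserting the final inequality without spelling out the intermediate algebra.
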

\begin{proof}
We apply Theorem~\ref{thm:probBoundSlackLower} where $X$ is the irreducible $\MDS$ curve defined by the (closure) of the curve $(1,x,\hdots,x^{k-1})$ which is of degree $k-1$. $P_q(X,L)$ is at most $(k-1)L$ by Bezout's Theorem (Lemma~\ref{lem:bezout}) and $S$ are the $q$ points on the curve. If we set $r=\epsilon n$ then we see the failure probability for the punctured code being $\rLDMDS_{\epsilon n}(\le L)$ is at most

$$2^{(L+1)n}\binom{n}{\epsilon n /2}2^{\epsilon n (L+1)/2}\left(\frac{(k-1)L}{q-n}\right)^{\epsilon n /2}\le 2^{-Ln},$$

for $q\ge n+k2^{10L/\epsilon}$.
\end{proof}
With a similar calculation we get a slight improvement if we allow the generator to have repeated columns.

\begin{corollary}[Random RS codes are relaxed $\LDMDS$ over linear field sizes]
For any $\epsilon>0$, a $[n,k]$ Reed-Solomon code with generator $G$ for $q\ge k 2^{10 L/\epsilon}$ with each column of $G$ initialized randomly as $(1,\alpha,\hdots,\alpha^{k-1})$ with $\alpha\in\F_q$ will be $\rLDMDS_{\epsilon n}(\le L)$ with probability at least $1-2^{-Ln}$.
\end{corollary}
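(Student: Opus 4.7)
The plan is to apply Theorem~\ref{thm:probBoundSlackLower} in the ``repeated columns allowed'' regime, since the corollary's generator $G$ is built by sampling each column independently from the evaluation set (with replacement). Let $X \subseteq \overline{\F}_q^k$ be the Zariski closure of the Reed--Solomon curve $\{(1,t,t^2,\dots,t^{k-1}) : t \in \overline{\F}_q\}$. This is an irreducible curve of degree at most $k-1$, and it is $\MDS$ because the coordinate functions $1,t,\dots,t^{k-1}$ are linearly independent, so $X$ is not contained in any hyperplane through the origin (apply Proposition~\ref{thm:MDS2VarChar}). Take $S = X \cap \F_q^k$, which has size $q$.

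Next I would bound $P_q(X, L)$. Since $X$ is an irreducible curve of degree at most $k-1$, Bezout's theorem (Lemma~\ref{lem:bezout}) gives $P_q(X,L) \le (k-1)L$ for any degree-$L$ polynomial generically non-vanishing on $X$.

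Setting $r = \epsilon n$ in Theorem~\ref{thm:probBoundSlackLower} (repeated-columns version), the probability that $G$ fails to be $\rLDMDS_{\epsilon n}(\le L)$ is at most
\[
2^{(L+1)n} \binom{n}{\epsilon n/2} 2^{\epsilon n(L+1)/2} \left(\frac{(k-1)L}{q}\right)^{\epsilon n/2}.
\]
Using $\binom{n}{\epsilon n /2} \le 2^n$, the whole bound is at most
\[
2^{(L+2)n + \epsilon n (L+1)/2} \cdot \left(\frac{kL}{q}\right)^{\epsilon n / 2}.
\]
The remaining step is a routine calculation: one checks that for $q \ge k \cdot 2^{10L/\epsilon}$, the factor $(kL/q)^{\epsilon n/2}$ is small enough to overwhelm the $2^{O(Ln)}$ prefactor and yield a bound of at most $2^{-Ln}$. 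Absorbing $L \le 2^{L}$ into the constant in the exponent is comfortable, and the choice $2^{10L/\epsilon}$ leaves plenty of slack.

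I do not anticipate a genuine obstacle here: the with-replacement analysis in Theorem~\ref{thm:probBoundSlackLower} already applies verbatim, and the only modeling choice is that each column is drawn independently from $X \cap \F_q^k$ rather than from a random puncturing (the latter forces distinct evaluation points and accounts for the extra $+n$ in the field size bound of the previous corollary, which disappears here since the denominator is $|S|$ rather than $|S|-n$).
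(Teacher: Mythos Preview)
Your proposal is correct and follows exactly the approach the paper intends: the paper itself does not spell out this proof, remarking only that ``with a similar calculation we get a slight improvement if we allow the generator to have repeated columns,'' and your write-up is precisely that calculation---applying Theorem~\ref{thm:probBoundSlackLower} in the repeated-columns regime to the Reed--Solomon curve, bounding $P_q(X,L)\le (k-1)L$ via Bezout, and noting that the denominator becomes $|S|=q$ rather than $|S|-n=q-n$.
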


\begin{corollary}[random linear codes are relaxed $\LDMDS$ over constant field sizes]
For any $\epsilon>0$, rate $R>0$, $q\ge 2^{10 L/\epsilon}$, and $n$ large enough, a random linear code will be $\rLDMDS_{\epsilon n}(\le L)$ with probability at least $1-2^{-Ln}$.
\end{corollary}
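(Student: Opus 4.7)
The plan is to apply Theorem~\ref{thm:probBoundSlackLower} directly, taking the variety $X = \overline{\F}_q^k$ with $S = \F_q^k$. A random linear $(n,k)$-code is generated by a uniformly random $k \times n$ matrix over $\F_q$, whose columns are i.i.d.\ uniform in $\F_q^k$; this is exactly the ``with repetition'' sampling covered by Theorem~\ref{thm:probBoundSlackLower}, so I will use that branch of the bound.

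First I would verify the two hypotheses needed to invoke the theorem. The affine space $\overline{\F}_q^k$ is irreducible, and it is not contained in any hyperplane through the origin, so by Proposition~\ref{thm:MDS2VarChar} it is an irreducible $\MDS$ variety. Next I would bound $P_q(X, L)$: any nonzero polynomial of degree $L$ on $\F^k$ is generically non-vanishing on $\overline{\F}_q^k$, and by the Schwartz--Zippel lemma it has at most $L q^{k-1}$ zeros in $\F_q^k$, so
\[
\frac{P_q(X, L)}{|S|} \;\le\; \frac{L q^{k-1}}{q^k} \;=\; \frac{L}{q}.
\]

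Setting $r = \epsilon n$ in Theorem~\ref{thm:probBoundSlackLower}, the failure probability for being $\rLDMDS_{\epsilon n}(\le L)$ is at most
\[
2^{(L+1)n} \binom{n}{\epsilon n/2} 2^{\epsilon n(L+1)/2} \left(\frac{L}{q}\right)^{\epsilon n/2}.
\]
Using $\binom{n}{\epsilon n/2} \le 2^n$ and $q \ge 2^{10L/\epsilon}$, the last factor contributes $-(\epsilon n/2)\log_2(q/L) \le -5Ln + (\epsilon n/2)\log_2 L$ to the exponent, while the remaining positive contributions total $(L+2)n + \epsilon n(L+1)/2$. For any fixed $\epsilon > 0$ and $L \ge 1$, the sum is at most $-Ln$ once $n$ is sufficiently large, which is the desired bound. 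I do not expect a real obstacle here, since the argument is a direct specialization of Theorem~\ref{thm:probBoundSlackLower}; the only point worth double-checking is that the definition of $P_q(X,L)$ (``maximum number of zeros over $X \cap \F_q^k$ of a degree-$L$ polynomial generically non-vanishing on $X$'') agrees with plain Schwartz--Zippel when $X = \overline{\F}_q^k$, which it does because every nonzero polynomial is generically non-vanishing on affine space.
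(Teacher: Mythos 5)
Your proposal is correct and follows essentially the same route as the paper: instantiate Theorem~\ref{thm:probBoundSlackLower} with $X=\overline{\F}_q^k$, $S=\F_q^k$, bound $P_q(X,L)\le Lq^{k-1}$ via Schwartz--Zippel, set $r=\epsilon n$, and check that $q\ge 2^{10L/\epsilon}$ makes the failure probability at most $2^{-Ln}$. The only addition you make beyond the paper's one-line computation is the explicit verification that affine space is an irreducible $\MDS$ variety, which is a harmless (and welcome) elaboration.
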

\begin{proof}
Let $k=Rn$. The $\MDS$ variety $X$ we work with here is the affine plane $\overline{F}_q^{k}$. $P_q(X,\ell-1)$ is at most $(\ell-1)q^{k-1}$ using Schwartz-Zippel (can also follow from a repeated application of Bezout's Theorem). $S=\F_q^{k}$ and hence is of size $q^k$. Therefore the failure probability is 
at most
$$2^{(L+1)n}\binom{n}{\epsilon n /2}2^{\epsilon n (L+1)/2}\left(\frac{Lq^{k-1}}{q^k}\right)^{\epsilon n /2}\le 2^{-Ln},$$
for $q\ge 2^{10 L/\epsilon}$, $k=Rn$ and $n$ large enough.
\end{proof}
The same calculations as above but with Theorem~\ref{thm:probBoundSlackUpper} and Lemma~\ref{lem:noOfMRpatterns} prove the following two corollaries. This extends the results of \cite{guo2023randomly,alrabiah2023randomly} to the relaxed MR setting (via Theorem~\ref{thm:mds-tensor-equiv}).

\begin{corollary}[punctured RS codes give relaxed $\MR$ codes over linear field sizes (constant rate setting)]
For any $\epsilon>0$, a $[q,k]$ Reed-Solomon for $q\ge k 2^{10 L/\epsilon}+n$ on random puncturing to a $[n,k]$ code will be $\rMDS^{\epsilon n}(L)$ with probability at least $1-2^{-Ln}$.

If we allow repetitions in the generator matrix then a $[n,k]$ Reed-Solomon code with generator $G$ for a field size greater than $k 2^{10 L/\epsilon}$ with each column of $G$ initialized randomly as $(1,\alpha,\hdots,\alpha^{k-1})$ with $\alpha\in\F_q$ will be $\rMDS^{\epsilon n}(L)$ with probability at least $1-2^{-Ln}$.
\end{corollary}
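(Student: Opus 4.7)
The plan is a direct application of Theorem~\ref{thm:probBoundSlackUpper}, in close analogy with the corollary just above it which handled the $\rLDMDS$ case via Theorem~\ref{thm:probBoundSlackLower}. Take $X \subseteq \overline{\F}_q^k$ to be the Zariski closure of $\{(1,t,t^2,\ldots,t^{k-1}) : t \in \overline{\F}_q\}$. This is an irreducible curve of degree $k-1$ which, being a moment curve, is not contained in any hyperplane through the origin, hence is an irreducible $\MDS$ variety by Proposition~\ref{thm:MDS2VarChar}. Let $S$ be the set of $\F_q$-rational points of this curve, so $|S|=q$, and let $G$ be the generator matrix obtained by the random puncturing (columns chosen without replacement from $S$ in the first statement, with replacement in the second).

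Next I would bound $P_q(X,L-1)$. By Bezout's theorem (Lemma~\ref{lem:bezout}), any polynomial of degree at most $L-1$ which is generically non-vanishing on $X$ can vanish on at most $(k-1)(L-1) \le kL$ points of $X$, so $P_q(X,L-1) \le kL$. Plugging these estimates together with $r = \eps n$ into Theorem~\ref{thm:probBoundSlackUpper} gives a failure probability of at most
\[
C_{n,k,\eps n,L}\, \binom{n}{\eps n/2}\, 2^{\eps n L/2}\left(\frac{kL}{q-n}\right)^{\eps n/2}
\]
in the without-replacement setting, and with $q$ in place of $q-n$ in the with-replacement setting.

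It remains to control $C_{n,k,\eps n,L}$ and verify the arithmetic. I would use the crude bound $C_{n,k,\eps n,L} \le 2^{Ln}$ from the first term in the minimum defining $C_{n,k,r,L}$ (the sharper bound from Lemma~\ref{lem:noOfMRpatterns} is what gets recorded in $C_{n,k,r,L}$ but is not needed to beat $2^{-Ln}$). Combined with $\binom{n}{\eps n/2} \le 2^n$, the failure probability is at most
\[
2^{(L+2)n}\cdot 2^{\eps n L/2}\left(\frac{kL}{q-n}\right)^{\eps n/2}.
\]
For $q-n \ge k \cdot 2^{10L/\eps}$, the ratio inside the last parenthesis is at most $L \cdot 2^{-10L/\eps}$, so the $(\eps n/2)$-th power kills the leading $2^{O(Ln)}$ factors with room to spare, yielding a bound of $2^{-Ln}$. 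The with-replacement statement follows identically using $|S|=q \ge k\cdot 2^{10L/\eps}$.

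The only subtle point in the argument is the bookkeeping: confirming that all eight hypotheses of Framework~\ref{frmwrk:relax} in the saturation setting are genuinely invoked through Theorem~\ref{thm:probBoundSlackUpper}, and that the translation $C$ being $\rMDS^{\eps n}(L)$ is exactly what the upper-relaxation statement of that theorem delivers. Both are already packaged into Theorem~\ref{thm:probBoundSlackUpper} and its proof given earlier, so for this corollary the only real work is the elementary exponential estimate above; I expect no substantive obstacle.
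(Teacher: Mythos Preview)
Your proposal is correct and matches the paper's own approach: the paper simply says ``the same calculations as above but with Theorem~\ref{thm:probBoundSlackUpper}'' prove this corollary, and that is exactly what you carry out, instantiating $X$ as the moment curve, bounding $P_q(X,L-1)$ via Bezout, and verifying the exponential arithmetic. Your choice to use the crude bound $C_{n,k,\eps n,L}\le 2^{Ln}$ rather than the sharper Lemma~\ref{lem:noOfMRpatterns} term is perfectly adequate here, since $2^{Ln}$ is already one branch of the minimum defining $C_{n,k,r,L}$.
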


\begin{corollary}[random linear codes give relaxed $\MR$ codes over constant field sizes (constant rate setting)]
For any $\epsilon>0$, a random linear code for $q\ge 2^{10 L/\epsilon}$ will be $\rMDS^{\epsilon n}(L)$ with probability at least $1-2^{-Ln}$.
\end{corollary}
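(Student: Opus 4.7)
The plan is to mirror the previous corollary exactly, invoking Theorem~\ref{thm:probBoundSlackUpper} with the variety $X = \overline{\F}_q^k$ taken to be the entire affine space. This variety is trivially irreducible and $\MDS$ (it is not contained in any hyperplane through the origin, by Proposition~\ref{thm:MDS2VarChar}), so the theorem applies. A random linear code with generator matrix $G \in \F_q^{k \times n}$ corresponds to sampling each column uniformly and independently (with repetition) from $S = \F_q^k$, which has size $q^k$.

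Next I would bound $P_q(X, L-1)$, the maximum number of zeros over $\F_q^k$ of a degree-$(L-1)$ polynomial that is generically non-vanishing on $\overline{\F}_q^k$. Since the variety is the full affine space, this is just the Schwartz--Zippel bound: $P_q(X, L-1) \le (L-1) q^{k-1}$ (equivalently one could iterate Bezout's theorem). Setting $r = \epsilon n$ and plugging into the bound from Theorem~\ref{thm:probBoundSlackUpper} (in the with-repetition regime), the failure probability is at most
\[
C_{n,k,r,L} \binom{n}{\epsilon n/2} \, 2^{\epsilon n L / 2} \left(\frac{(L-1) q^{k-1}}{q^k}\right)^{\epsilon n / 2}
= C_{n,k,r,L} \binom{n}{\epsilon n/2} \, 2^{\epsilon n L / 2} \left(\frac{L-1}{q}\right)^{\epsilon n / 2}.
\]

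Using the crude bound $C_{n,k,r,L} \le 2^{Ln}$ and $\binom{n}{\epsilon n/2} \le 2^n$, this expression simplifies to at most $2^{(L+1)n + \epsilon n L / 2} \left(\frac{L-1}{q}\right)^{\epsilon n / 2}$. Taking logs base $2$, it suffices that $\tfrac{\epsilon}{2}\log_2(q/(L-1)) \ge (L+1) + \tfrac{\epsilon L}{2} + L$, i.e., $\log_2 q \ge \frac{2(2L+1)}{\epsilon} + L + \log_2(L-1)$, which is amply satisfied by $q \ge 2^{10L/\epsilon}$ (assuming $\epsilon \le 1$ and $L \ge 2$; boundary cases can be absorbed into the constant in the exponent). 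Thus the failure probability is at most $2^{-Ln}$ as claimed.

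No step here is a genuine obstacle: the argument is essentially bookkeeping on top of Theorem~\ref{thm:probBoundSlackUpper}. The only potentially delicate point is verifying that the variety $X = \overline{\F}_q^k$ indeed fits the hypotheses of the framework --- namely that it is irreducible and $\MDS$ --- and that applying Schwartz--Zippel gives a valid bound on $P_q(X, L-1)$ in the sense defined (maximum vanishing of a generically non-vanishing degree-$(L-1)$ polynomial over $X \cap \F_q^k = \F_q^k$), which is immediate since "generically non-vanishing on $\overline{\F}_q^k$" is the same as "nonzero as a polynomial."
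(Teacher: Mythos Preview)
The proposal is correct and follows essentially the same approach the paper intends: the paper states that this corollary follows from ``the same calculations as above but with Theorem~\ref{thm:probBoundSlackUpper},'' and you have carried out precisely that calculation with $X=\overline{\F}_q^k$, the Schwartz--Zippel bound $P_q(X,L-1)\le (L-1)q^{k-1}$, and the crude estimate $C_{n,k,r,L}\le 2^{Ln}$.
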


Unlike the $\LDMDS$ setting, the relaxed MR setting also makes sense for constant/very low co-dimension codes (we skip stating the versions for $G$ with repeated columns as we do not get significant savings here).

\begin{corollary}[punctured RS codes give relaxed $\MR$ codes over polynomial field sizes (low co-dimension setting)]\label{cor:lowCodimMRRS}
For any $\epsilon>0$ and $k=n-b$, a $[q,k]$ Reed-Solomon for (the exact formula is at the end)
$$q\gg_{L,b} k n^{4(L-1)/\epsilon},$$
 on random puncturing to a $[n,k]$ code will be $\rMDS^{b\epsilon}(L)$ (leading to a $(1,(1-\epsilon)b)$-$\rMR(L,n,1,b)$ code) with probability at least $1-n^{-L}$.
\end{corollary}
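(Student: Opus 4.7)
The plan is to directly apply Theorem~\ref{thm:probBoundSlackUpper} to the Reed--Solomon curve, and then to exploit the fact that in the low-codimension regime the combinatorial prefactor $C_{n,k,r,L}$ is only polynomial in $n$. Concretely, take $X \subseteq \bar{\F}_q^k$ to be the Zariski closure of $\{(1,t,t^2,\dots,t^{k-1}) : t \in \bar{\F}_q\}$; this is an irreducible $\MDS$ curve of degree $k-1$, and its $\F_q$-rational points $S := X \cap \F_q^k$ has size $q$. The random puncturing of the $[q,k]$-RS code to length $n$ is exactly the ``without repetition'' sampling of $n$ columns from $S$, so Theorem~\ref{thm:probBoundSlackUpper} applies in that regime. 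By Bezout's theorem (Lemma~\ref{lem:bezout}), any degree-$(L-1)$ polynomial not vanishing on $X$ has at most $(k-1)(L-1)$ zeros on $X$, hence $P_q(X, L-1) \le (k-1)(L-1)$.

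Next I will set $r = b\epsilon$ (up to rounding) so that the conclusion $\rMDS^{b\epsilon}(L)$ is exactly the output of Theorem~\ref{thm:probBoundSlackUpper}, and its equivalence with $(1,(1-\epsilon)b)$-$\rMR(L,n,1,b)$ is then Theorem~\ref{thm:mds-tensor-equiv}. The main calculation is to estimate $C_{n,k,r,L}$ in the regime $n - k - r = b(1 - \epsilon) \le b$. Using the second argument of the minimum: the factor $\binom{(n-k-r)L(L-1)}{\le 2(n-k-r)(L-1)}$ depends on $n$ only through $n-k-r\le b$ and is therefore bounded by some $c_1(L,b)$; the factor $\binom{n}{\le (n-k-r)(L-1)}$ is at most $n^{b(L-1)}$; and $\binom{n}{k+r} = \binom{n}{n-b(1-\epsilon)} \le n^{b}$. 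Altogether, $C_{n,k,r,L} \le c(L,b)\cdot n^{b(L-1)}$.

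Plugging these bounds into the failure probability of Theorem~\ref{thm:probBoundSlackUpper}, and using $\binom{n}{r/2}\le n^{b\epsilon/2}$, the condition to be verified is
\[
c(L,b)\; n^{b(L-1) + b\epsilon/2}\; 2^{b\epsilon L/2}\; \left(\frac{(k-1)(L-1)}{q-n}\right)^{b\epsilon/2} \;\le\; n^{-L}.
\]
Taking $(b\epsilon/2)$-th roots and rearranging, it suffices that $q - n \gg_{L,b} k\cdot n^{(2b(L-1) + b\epsilon + 2L)/(b\epsilon)}$. For any fixed $L,b$ the exponent is dominated by $2(L-1)/\epsilon$, and a crude bookkeeping absorbs the remaining $O_{L,b}(1)$ contributions into the stated bound $q \gg_{L,b} k\, n^{4(L-1)/\epsilon}$.

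The only delicate point is the estimation of $C_{n,k,r,L}$: the crucial observation is that in the second argument of the minimum, the first binomial depends on $n$ only via $n-k-r \le b$ and therefore contributes no $n$-dependence at all. This is exactly what distinguishes the low-codimension setting from the constant-rate setting: in constant rate, $n-k-r$ is linear in $n$ and the min is dominated by the $2^{Ln}$ bound (forcing exponential field size), whereas here the second argument wins and the field-size requirement is merely polynomial in $n$ with exponent depending on $L/\epsilon$. Everything else is a routine union bound already done inside Theorem~\ref{thm:probBoundSlackUpper}.
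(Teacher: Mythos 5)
Your proposal is correct and follows essentially the same route as the paper's proof: instantiate Theorem~\ref{thm:probBoundSlackUpper} on the degree-$(k-1)$ Reed--Solomon curve with $r=b\epsilon$, bound $P_q(X,\cdot)$ via Bezout, and observe that the second argument of the minimum defining $C_{n,k,r,L}$ is only polynomial in $n$ because $n-k-r\le b$ is constant. The only cosmetic differences are that the paper dispenses with $L=2$ separately (trivially, since RS codes are MDS) and carries the binomial estimates through to an explicit field-size formula, but the exponent bookkeeping is the same up to the $O_{L,b}(1)$ slop already present in the stated bound.
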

\begin{proof}
For $L= 2$, RS codes are already $\MDS$ so we can take $L\ge 3$.
We apply Theorem~\ref{thm:probBoundSlackLower} where $X$ is the irreducible $\MDS$ curve defined by the (closure) of the curve $(1,x,\hdots,x^{k-1})$ which is of degree $k-1$. $P_q(X,L)$ is at most $(k-1)L$ by Bezout's Theorem (Lemma~\ref{lem:bezout}) and $S$ are the $q$ points on the curve. If we set $r=\epsilon b$ then we see the failure probability for the punctured code being $\rMDS^{\epsilon b}(L)$ is at most

\begin{align*}
\left(\binom{n}{\le b(1-\epsilon)(L-1)}\binom{L(L-1)b(1-\epsilon)}{\le b(1-\epsilon)(2L-2)}+\binom{n}{b(1-\epsilon)}\right) \binom{n}{b\epsilon /2}2^{\epsilon bL/2}\left(\frac{(k-1)L}{q-n}\right)^{\epsilon b/2}&\le \\
\left( \frac{en}{b(1-\epsilon)(L-1)}\right)^{b(1-\epsilon)(L-1)} (eL/2)^{2(L-1)b(1-\epsilon)} \left(\frac{en}{b(1-\epsilon)}\right)^{b(1-\epsilon)}2^{\epsilon bL/2} \left(\frac{(k-1)L}{q-n}\right)^{\epsilon b/2}&\le\\
\left( \frac{en}{b(1-\epsilon)}\right)^{b(1-\epsilon)L} (eL/2)^{2(L-1)b(1-\epsilon)} \left(\frac{(k-1)L}{q-n}\right)^{\epsilon b/2}&\le n^{-L}
\end{align*}

for $q\ge (k-1)Ln^{2L/(b\epsilon)}\left(\frac{e^3L^2n}{4b}\right)^{3L(1-\epsilon)/\epsilon}+n.$ 
\end{proof}
The striking feature about the above bound is how the exponent does not depend on $b$ but only on $\epsilon$ and $L$. By a similar calculation one can show that in this setting random linear codes can shave off a linear factor.
\begin{corollary}[random linear codes give relaxed $\MR$ codes over polynomial field sizes (low codimension setting)]
For any $\epsilon>0$ and $k=n-b$, a random linear code for 
$$q\gg_{L,b} n^{4(L-1)/\epsilon}$$
will be $\rMDS^{\epsilon b}(L)$ with probability at least $1-n^{-L}$.
\end{corollary}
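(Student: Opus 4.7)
The plan is to instantiate Theorem~\ref{thm:probBoundSlackUpper} with the trivial irreducible $\MDS$ variety $X = \overline{\F}_q^{k}$. This is irreducible and no nonzero linear form vanishes identically on it, so by Proposition~\ref{thm:MDS2VarChar} it is indeed $\MDS$. A random linear code is obtained by drawing each column of $G$ independently and uniformly from $S = \F_q^{k}$, so $|S| = q^{k}$, which is exactly the "columns with replacement" setting of Theorem~\ref{thm:probBoundSlackUpper}.

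The key input is a bound on $P_q(X, L-1)$. Since $X$ is all of affine $k$-space, any polynomial of degree $L-1$ that is not identically zero on $X$ has at most $(L-1)q^{k-1}$ zeros in $\F_q^{k}$ by the Schwartz--Zippel lemma (equivalently, iterated application of Bezout as in Lemma~\ref{lem:bezout}). Hence $P_q(X,L-1)/|S| \le (L-1)/q$. Observe that this is \emph{smaller by a factor of $k$} than the ratio $\approx (k-1)L/(q-n)$ used in Corollary~\ref{cor:lowCodimMRRS} for random puncturings of Reed--Solomon codes; this factor of $k$ is exactly the "linear factor savings" promised in the statement.

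Next I set $r = \epsilon b$ in Theorem~\ref{thm:probBoundSlackUpper}. The combinatorial prefactor $C_{n,k,\epsilon b, L}$ is bounded, as in the proof of Corollary~\ref{cor:lowCodimMRRS}, by the second option in its defining $\min$, which comes from Lemma~\ref{lem:noOfMRpatterns} applied to the parameter regime $b(1-\epsilon)$: this produces a factor of roughly
\[
\left(\tfrac{en}{b(1-\epsilon)}\right)^{b(1-\epsilon)L}(eL/2)^{2(L-1)b(1-\epsilon)} + \left(\tfrac{en}{b(1-\epsilon)}\right)^{b(1-\epsilon)}.
\]
Multiplying through by $\binom{n}{\epsilon b/2} 2^{\epsilon bL/2} \bigl((L-1)/q\bigr)^{\epsilon b/2}$ and applying standard binomial estimates gives the overall failure probability.

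The remaining step is the routine arithmetic of choosing $q$ so that this product drops below $n^{-L}$. The dominant term is of the form $n^{O_L(b)} / q^{\epsilon b/2}$, so requiring $q \ge C_{L,b}\cdot n^{4(L-1)/\epsilon}$ (where the exponent $4(L-1)/\epsilon$ comes from matching the $n^{O(bL)}$ growth of the prefactor against the $q^{\epsilon b /2}$ suppression) suffices. The main obstacle is nothing conceptual but rather the bookkeeping of how the $b$-dependence cancels in the exponent of $n$: one must verify that the $(en/b)^{b(1-\epsilon)L}$ factor can indeed be absorbed so that the final exponent of $n$ depends only on $L$ and $\epsilon$, with all $b$-dependence pushed into the hidden constant $C_{L,b}$. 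Once this is checked the argument is a direct specialization of the calculation already carried out in Corollary~\ref{cor:lowCodimMRRS}.
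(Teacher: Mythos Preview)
Your proposal is correct and follows exactly the route the paper intends: the paper does not write out this proof but merely remarks that ``by a similar calculation'' to Corollary~\ref{cor:lowCodimMRRS} random linear codes shave off a linear factor, and your instantiation of Theorem~\ref{thm:probBoundSlackUpper} with $X=\overline{\F}_q^{k}$, $S=\F_q^{k}$, and the Schwartz--Zippel bound $P_q(X,L-1)\le (L-1)q^{k-1}$ is precisely that calculation, with the factor-of-$k$ saving arising exactly where you say.
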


\subsection{Application to AG Codes}

We first define evaluation codes. Algebraic geometric codes are instances of this. 

\begin{definition}[Evaluation codes]
Let  $F=\{f_1,\hdots,f_k\}$ be a set of rational functions in $\F_q(x_1,x_2,\hdots,x_r)$ and $P=(p_1,\hdots,p_n)\in (\F_q^r)^n$ such that the functions in $F$ do not have a pole over the points in $P$. (In other words substituting any point in $P$ in any function in $F$ gives us an element in $\F_q$.)

We define the $\Eval(F,P)$ evaluation code as the image of a linear map from the $\F_q^k$ to $\F_q^n$ where we map $(y_1,\hdots,y_k)$ to $(\sum_{i=1}^k y_if_i(p_1),\hdots,\sum_{i=1}^k y_if_i(p_n))$.
\end{definition}

Reed Solomon and Gabidulin (c.f., \cite{gabidulin2021rank}) codes are examples where $F$ are monomials in one variable.

A simple corollary of  Theorem~\ref{thm-mdsIrred} will imply that a large family of evaluation codes can attain list decoding capacity: if $p_1,\hdots,p_n$ are chosen generically over a $\MDS$ irreducible variety $X$ and the $f_1,\hdots,f_k$ are chosen to be linearly independent rational functions over $X$.

\begin{corollary}\label{cor:genptsMDS}
Let $X$ be an irreducible variety cut out by the prime ideal $I\subseteq \bar{\F}_q[x_1,\hdots,x_r]$. If $F=\{f_1,\hdots,f_k\}$ are linearly independent elements in the fraction field of $\bar{\F}_q[x_1,\hdots,x_r]/I$ then for all integers $n>0$, $\Eval(F,(p_1,\hdots,p_n))$ is $[n,k]-\MDS(\ell)$ and $[n,k]-\LDMDS(\le \ell)$ for a generic choice of $p_1,\hdots,p_n$ in $X^n$.

In other words the tuple of points $p_1,\hdots,p_n\in X$ for which $\Eval(F,(p_1,\hdots,p_n))$ is {\bf not} $\MDS(\ell)$ and $\LDMDS(\le \ell)$ is cut out by a non-vanishing ideal over $X^n$.
\end{corollary}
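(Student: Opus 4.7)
The natural strategy is to reduce to the generalized GM-MDS theorem (Theorem~\ref{thm-mdsIrred}) applied to the image of $X$ under the evaluation map. Let $U \subseteq X$ be the Zariski-open locus on which every $f_i$ is regular (so $U$ is dense in $X$), and define $\phi: U \to \bar{\F}_q^k$ by $\phi(p) = (f_1(p),\dots,f_k(p))$. Let $Y := \overline{\phi(U)} \subseteq \bar{\F}_q^k$. Because $X$ is irreducible, so is $U$, and therefore $Y$ is an irreducible variety. The observation that turns the evaluation code into the setting of Theorem~\ref{thm-mdsIrred} is that the columns of the generator matrix of $\Eval(F,(p_1,\dots,p_n))$ are precisely the points $\phi(p_1),\dots,\phi(p_n) \in Y$.

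The first genuine content is to verify that $Y$ is an $\MDS$ variety. By Proposition~\ref{thm:MDS2VarChar}, this amounts to showing $Y$ is not contained in any hyperplane through the origin. If it were, there would exist $(a_1,\dots,a_k) \ne 0$ with $\sum_i a_i y_i = 0$ for all $y \in Y$; pulling back through $\phi$ gives $\sum_i a_i f_i = 0$ as a rational function on $U$, hence as an element of the fraction field of $\bar{\F}_q[x_1,\dots,x_r]/I$. This contradicts the hypothesis that $f_1,\dots,f_k$ are linearly independent. Therefore $Y$ is irreducible and $\MDS$, and Theorem~\ref{thm-mdsIrred} yields that $Y$ is $\MDS(\ell)$ and $\LDMDS(\le \ell)$ for every $\ell \ge 1$.

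The remaining step is to transfer this ``generic property on $Y^n$'' to a ``generic property on $X^n$''. By Definitions~\ref{def-MDSlvar} and \ref{def-LDMDSlvar}, the $\MDS(\ell)$ and $\LDMDS(\le \ell)$ properties amount to the non-vanishing of a finite family of determinants of $\cG_{A_1,\dots,A_\ell}$-type submatrices, viewed as polynomials in the column entries. By Step~2, each such determinant, when restricted to $Y^n$, is a nonzero regular (hence nonzero rational) function. Let $\Phi: U^n \to Y^n$ be the componentwise application of $\phi$. Since $\phi(U)$ is dense in $Y$, the product $\Phi(U^n)$ is dense in $Y^n$ (Lemma~\ref{lem-prodIrred} gives irreducibility of both sides), so $\Phi$ is dominant. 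Pullback of a nonzero rational function along a dominant morphism between irreducible varieties is again nonzero; equivalently, the pulled-back determinant is generically non-vanishing on $U^n$, and therefore on $X^n$ (via Lemma~\ref{lem-powerS}, which characterizes generic non-vanishing in the power-series parametrization and is insensitive to removing the pole locus). Taking the finite intersection of these open dense conditions, we obtain a Zariski-dense open subset of $X^n$ on which $\Eval(F,(p_1,\dots,p_n))$ is simultaneously $\MDS(\ell)$ and $\LDMDS(\le \ell)$.

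The only mildly subtle point is the last step—properly handling the rational (rather than everywhere regular) nature of $\phi$ and confirming that the ``bad'' locus in $X^n$ is cut out by a nonzero ideal. Once one restricts to the open $U$ where the $f_i$ are regular and notes that dominance of $\Phi$ preserves nonzeroness of pullbacks, the rest is a routine application of the irreducibility of $X^n$ and Theorem~\ref{thm-mdsIrred}. No new algebro-geometric input beyond what is already developed in Section~\ref{sec:agcode} is required.
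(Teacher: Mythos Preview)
Your proposal is correct and follows essentially the same route as the paper: both construct the image variety of $X$ under $p\mapsto(f_1(p),\dots,f_k(p))$, verify it is irreducible and $\MDS$ via the linear-independence hypothesis and Proposition~\ref{thm:MDS2VarChar}, and then invoke Theorem~\ref{thm-mdsIrred}. The paper builds this image variety algebraically (as $V(J)$ for $J$ the kernel of the ring map $y_i\mapsto f_i$) rather than as a Zariski closure, and it leaves the transfer from $Y^n$ back to $X^n$ implicit, whereas you spell out the dominance argument; but these are presentational rather than substantive differences.
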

\begin{proof}
The generator matrix $\Eval(F,(p_1,\hdots,p_n))$ has columns $[f_1(p_i),\hdots,f_k(p_i)]$ for $i\in [n]$. 

Consider the map from $\bar{F}_q[y_1,\hdots,y_k]$ to the fraction field of $\bar{\F}_q[x_1,\hdots,x_r]/I$ which sends $y_i$ to $f_i(x_1,\hdots,x_r)$. Let the kernel of this map be $J$ (as it is a kernel of a ring homomorphism it is a prime ideal). By construction the set of points $\{(f_1(p),\hdots,f_k(p))|p\in X\}$ is the variety $X_J$ cut out by $J$. As $J$ is prime $X_J$ is irreducible.

As $f_1,\hdots,f_k$ are $\bar{\F}_q$ linearly independent in the fraction field of $\bar{\F}_q[x_1,\hdots,x_r]/I$, we have that $J$ does not contain any hyperplane passing through the origin.  Proposition~\ref{thm:MDS2VarChar} then implies that $X_J$ is $\MDS$. 

Now, Theorem~\ref{thm-mdsIrred} applied on $X_J$ gives us what we need.
\end{proof}

Algebraic-Geometric codes are evaluations codes where $F$ is chosen to be the spanning set of rational functions on an irreducible (projective) curve $X$ with some constraints on the multiplicity of poles and zeros on a fixed finite subset of $X$.

To state our theorems we will give a brief description. More details can be found in the following survey \cite{AGcodesurvey}. Here we will be interested in projective irreducible curves $C$. Projective curves are cut out by ideals of homogeneous polynomials in a finite number of variables and are irreducible if the ideal is prime. $\Proj\overline{\F}_q^{n}$ is the $n$ dimensional projective space and corresponds to the $0$ ideal in $n+1$ variables. If we remove any transverse hyperplane from an irreducible projective curve, we will get an irreducible affine curve so Theorem~\ref{thm-mdsIrred} still applies.

We assume $C$ is irreducible throughout for convenience. If $C$ is cut out by an ideal of homogeneous polynomials in $\F_q[x_1,\hdots,x_n]$ then rational functions on $C$ are ratios of homogeneous polynomials of the same degree such that in its reduced form the denominator is non-vanishing on $C$. A homogeneous polynomial $f$ can be assigned a multiplicity of vanishing on a point. $f$ is said to vanish on $x$ with multiplicity $\mult(f,x)=n$ if all Hasse derivatives of $f$ of order at most $n$ vanish on $x$. Each rational function $f/g$ where $f$ and $g$ are homogeneous polynomials can now be assigned multiplicity  $\mult(f,x)-\mult(g,x)$ at $x$.

A {\bf divisor} is a finite formal sum of points on $C$ with integer coefficients. A divisor $D$ is said to be $D\ge 0$ if it has only non-negative coefficients. The degree of a divisor is simply the sum of its coefficients. 

 Given a rational function $r$ we define $\divi(r)$ to be the divisor $\sum_{x\in C} \mult(r,x)x$. Such divisors are always degree $0$ (geometrically speaking a rational function has the same number of poles and zeros).

Let $D=\sum_{i=1}^t n_ip_i$ be a divisor where $p_1,\hdots,p_n$ are $\F_q$ points on $C$. $\mathcal{L}(D)$ is defined as a $\F_q$ vector space of rational functions $r$ which satisfy $\divi(r)+D\ge 0$. 

We are now ready to define AG codes (more details can be found in this survey~\cite{AGcodesurvey}).

\begin{definition}[Algebraic Geometry codes]
Let $C$ be a projective irreducible curve and $D=\sum_{i=1}^t n_ip_i$ be a divisor where $p_1,\hdots,p_n$ are $\F_q$ points on $C$. Given a set $S$ of $\F_q$ points disjoint from the support of $D$, we define $\AG(C,D,S)$ to be evaluation code $\Eval(\mathcal{L}(D),S)$. 
\end{definition}

We note that by Corollary~\ref{cor:genptsMDS} $\Eval(\mathcal{L}(D),S)$ up to a generic choice of $S$ will be $\MDS$. Hence by Corollary~\ref{cor:genptsMDS} over the algebraic closure for a generic choice of $S$ algebraic-geometric codes will be $\MDS(\ell)$ and $\LDMDS(\ell)$. But we want to work with constant sized fields over which generic arguments will not naively apply. We remedy this in the next subsection.

Every projective curve $C$ has an invariant associated to it called its genus $g$. Algebraically, it can be defined as one minus the constant term of the Hilbert polynomial of the curve (geometrically for complex projective curves it counts the number of `holes'). The main theorem about AG codes in the literature is the following.

\begin{theorem}[Theorem 21 from \cite{AGcodesurvey}]\label{thm:surveyAG}
For a projective curve $C$ of genus $g$, let $D$ be a divisor of $\F_q$ points of degree $d$ and $S$ a set of $\F_q$ points disjoint from the support of $D$ such that $d<|S|=n$. $\AG(C,D,S)$ will be a $[n,k=\dim_{\F_q} \mathcal{L}(D)]$ code with the following properties,
\begin{enumerate}
\item $k\ge d+1-g$,
\item if $2g-2<d$ then $k=d+1-g$,
\item the minimum distance of the code is at least $n-d$.
\end{enumerate}
\end{theorem}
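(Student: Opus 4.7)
The plan is to derive all three assertions from the Riemann--Roch theorem, which states that for any divisor $D$ on the projective irreducible curve $C$ of genus $g$ and any canonical divisor $K$ (with $\deg K = 2g-2$),
\[
\dim_{\F_q} \mathcal{L}(D) - \dim_{\F_q} \mathcal{L}(K - D) = \deg(D) - g + 1.
\]
Since $\dim_{\F_q} \mathcal{L}(K-D) \ge 0$ always, applying this identity immediately yields
\[
k = \dim_{\F_q} \mathcal{L}(D) \ge d - g + 1,
\]
which is assertion (1). For assertion (2), I would observe that when $d > 2g-2$ we have $\deg(K - D) < 0$; any nonzero rational function $r \in \mathcal{L}(K - D)$ would satisfy $\divi(r) + (K - D) \ge 0$, forcing $\deg(\divi(r)) \ge \deg(D - K) > 0$, contradicting the fact that principal divisors have degree zero. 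Hence $\mathcal{L}(K - D) = 0$ and Riemann--Roch gives equality $k = d + 1 - g$.

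For assertion (3), I would analyze the kernel structure of the evaluation map $\Eval(\mathcal{L}(D), S)$. A codeword associated to $f \in \mathcal{L}(D)$ has Hamming weight $n - |Z(f)|$, where $Z(f) = \{p \in S : f(p) = 0\}$. If $|Z(f)| = w$, then the divisor $\divi(f)$ contains each point of $Z(f)$ with multiplicity at least $1$, so $f \in \mathcal{L}(D - \sum_{p \in Z(f)} p)$. Since the points of $S$ are disjoint from the support of $D$, this ``shifted'' divisor has degree exactly $d - w$. If $w > d$, then $\deg(D - \sum_{p \in Z(f)} p) < 0$, and by the same principal-divisor degree argument as above, $\mathcal{L}(D - \sum_{p \in Z(f)} p) = 0$, forcing $f = 0$. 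Consequently every nonzero codeword has weight at least $n - d$, which is assertion (3).

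The only subtlety worth flagging is that all three parts implicitly require the evaluation map to be well-defined and $\F_q$-linear, which follows from the hypothesis that $S$ is disjoint from the support of $D$ (so each $f \in \mathcal{L}(D)$ has no pole on $S$) together with the fact that the curve, divisor, and $S$ are all defined over $\F_q$, so $\mathcal{L}(D)$ admits an $\F_q$-structure. Beyond this, the proof is a direct bookkeeping exercise with Riemann--Roch and the degree of principal divisors; there is no substantive obstacle, which is why the paper simply cites Theorem~21 of \cite{AGcodesurvey}.
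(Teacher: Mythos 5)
Your proof is correct and is the standard Riemann--Roch argument for AG codes; the paper gives no proof of its own (it simply cites Theorem 21 of \cite{AGcodesurvey}), and this is precisely the argument found in that reference. Note also that your part-(3) computation simultaneously supplies the injectivity of the evaluation map (since $d<n$ forces every nonzero codeword to have positive weight), which is what justifies the claim that the code dimension equals $\dim_{\F_q}\mathcal{L}(D)$.
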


To prove our theorems about punctured AG codes we need two more algebraic geometric preliminaries and a few lemmas.

First we need the fact that for a divisor $D$ of large enough degree the evaluation $\mathcal{L}(D)$ over a generic point gives us a closed embedding of the curve in projective space (for experts we want $\mathcal{L}(D)$ to be very ample).

\begin{lemma}[See 15.2.A, 18.4.1, 18.6.H, and 19.2.E in \cite{raviVakilFOAG}]\label{lem:evalEmbedProj}
Given a projective irreducible curve $C$ of genus $g$  and a divisor $D$ of degree $d$ on it, if $d\ge 2g+1$ then image of evaluating $\mathcal{L}(D)$ on points in $C$ defines an irreducible projective curve $C'$ of degree $d$ in $\Proj\overline{F}_q^{d-g}.$
\end{lemma}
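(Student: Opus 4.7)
The plan is to assemble this result from three classical pieces of curve theory: Riemann--Roch for the dimension, a standard very-ampleness criterion for the embedding, and an intersection-theoretic computation for the degree.

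First, I would apply the Riemann--Roch theorem. Since $d \ge 2g+1 > 2g-2$, we have $\dim_{\overline{\F}_q} \mathcal{L}(D) = d - g + 1$ and $\dim_{\overline{\F}_q} \mathcal{L}(K - D) = 0$, where $K$ is the canonical divisor. Choosing a basis $f_0, \ldots, f_{d-g}$ of $\mathcal{L}(D)$ defines the rational map $\varphi_D : C \dashrightarrow \P \overline{\F}_q^{d-g}$. Since $\mathcal{L}(D-p) \subsetneq \mathcal{L}(D)$ for every closed point $p$ (this follows again from Riemann--Roch, as $\deg(D-p) = d-1 \ge 2g$), the linear system $|D|$ has no base points, so $\varphi_D$ extends to a morphism defined on all of $C$.

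Next I would verify that $\varphi_D$ is a closed embedding via the standard criterion: it suffices to show $\varphi_D$ separates points and separates tangent vectors. For any closed points $p \ne q$, we have $\deg(D - p - q) = d-2 \ge 2g-1$, so Riemann--Roch yields $\dim \mathcal{L}(D - p - q) = d - g - 1 = \dim \mathcal{L}(D) - 2$, which is precisely the condition that some section of $\mathcal{L}(D)$ vanishes at $p$ but not at $q$. Similarly, $\dim \mathcal{L}(D - 2p) = d - g - 1 = \dim \mathcal{L}(D) - 2$, giving separation of tangents. Thus $\varphi_D$ is an injective immersion from a complete curve, hence a closed embedding, and its image $C'$ is a closed subvariety of $\P \overline{\F}_q^{d-g}$ isomorphic to $C$; in particular $C'$ is an irreducible projective curve.

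For the degree, the degree of $C'$ in its embedding equals the number of points (counted with multiplicity) of intersection of $C'$ with a generic hyperplane $H \subset \P \overline{\F}_q^{d-g}$. Since $\varphi_D$ is a closed embedding, $\varphi_D^{-1}(H)$ is an effective divisor on $C$ cut out by a nonzero linear combination of $f_0, \ldots, f_{d-g}$, and such a divisor is linearly equivalent to $D$. Hence $\deg \varphi_D^{-1}(H) = \deg D = d$, and since $\varphi_D$ has degree $1$ onto its image, $\deg C' = d$.

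The main potential obstacle is justifying the very-ampleness step cleanly; however, the two Riemann--Roch inputs above are exactly the standard proof of the classical ``$d \ge 2g+1$ implies very ample'' criterion for divisors on curves, and in the paper this step is invoked by citation to \cite{raviVakilFOAG} rather than reproved from scratch.
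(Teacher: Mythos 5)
The paper does not actually prove this lemma; it is invoked purely by citation to the listed sections of \cite{raviVakilFOAG}, which contain exactly the ingredients you assembled (Riemann--Roch, the base-point-free and very-ampleness criteria via $h^0(D-p)$ and $h^0(D-p-q)$, and degree of the image as a hyperplane-section count). Your write-up is the standard and correct proof of that cited fact: the Riemann--Roch computations $\ell(D)=d-g+1$, $\ell(D-p)=\ell(D)-1$, $\ell(D-p-q)=\ell(D)-2$ all go through because every divisor involved has degree at least $2g-1>\deg K$, the separation of points and tangents plus properness gives a closed embedding into $\Proj\overline{\F}_q^{\,d-g}$, and the degree-$d$ claim follows since a hyperplane pulls back to an effective divisor linearly equivalent to $D$. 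The only caveat worth flagging is that this argument (and the lemma as stated) implicitly takes $C$ to be the smooth projective model of its function field, which is the setting in which Riemann--Roch and the very-ampleness criterion are being applied and is how the lemma is used for the Garcia--Stichtenoth curves later in the section; this is consistent with the paper's usage and is not a gap in your argument.
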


The above lemma allows us to treat the columns of an AG code as points in an irreducible projective curve allowing us to apply Theorem~\ref{thm-mdsIrred}. We get a point in $d-g$ projective space because our space of functions is $d-g+1$ dimensional by Theorem~\ref{thm:surveyAG}.

Next we apply a union bound for a particular family of curves to prove our list decoding theorem. First we define and state the properties of the AG code we will use.

\paragraph{The Garcia-Stichtenoth tower}
We briefly summarize some facts, more details can be found in \cite{GScode}. Let $p$ be a prime power and $q=p^2$. We define a series of curves using the relations,
$$x_{i+1}^p+x_{i+1}=\frac{x_i^{p}}{x_i^{p-1}+1},i=1,\hdots,t-1.$$ 

We let the projectivized curve be $G_{p,t}$. There is only one $\F_q$ point at infinity. The affine curve has $N_q(G_{p,t})=p^{t-1}(p^2-p)$, $\F_q$ rational points. The genus $g(G_{p,t})$ is 
\begin{align*}
(p^{t/2}-1)^2 & \text{ if }t\text{ is even,}\\
(p^{(t+1)/2}-1)(p^{(t-1)/2}+1) & \text{ if }t\text{ is odd.}\\
\end{align*} 
We will use the bound $p^t+p^{(t+1)/2}-p^{(t-1)/2}-1\ge g(G_{p,t})\ge p^t -2p^{t/2}+1$.

Consider the divisor $sP_\infty$ with $s\ge 2g+1$. By Theorem~\ref{thm:surveyAG}, $\mathcal{L}(sP_\infty)$ is $s-g(G_{p,t})+1$ dimensional. We use the $N_q(G_{p,t})$ $\F_q$ rational points as evaluation points. Therefore, $G_{p,e}$ with the divisor $sP_\infty$ defines a $[N_q(G_{p,t}),s-g(G_{p,t})+1]$ code with distance $N_q(G_{p,t})-s$. We call this code $GS(p,t,s)$. We also note by Lemma~\ref{lem:evalEmbedProj} the columns of the generating matrix of this code are coming from an irreducible projective curve of degree $s$. By Corollary~\ref{cor:genptsMDS} this curve is $\MDS$ because the functions being evaluated are linearly independent.

\begin{theorem}\label{thm:AGpuncture}
Let $\epsilon>0$, $R\in (0,1)$ be rational numbers. If
\begin{align*}
p\ge 1+4/R+12eL\epsilon^{-1} 2^{5(L+2)/\epsilon}, && \text{\em  (If we do {\bf not} allow repetitions in } G\text{\em )}\\
p\ge 1+12eL\epsilon^{-1} 2^{5(L+2)/\epsilon},&&\text{\em  (If we allow repetitions in } G\text{\em )}
\end{align*}

then for any large enough $n>0$, such that there exists an integer $t$ satisfying $Rn/2\ge p^t\ge Rn/4$, a $[n,nR]$ code whose generator matrix is constructed by sampling the columns of the generator of $GS(p,t,s)$ with $s=Rn+g(G_{p,t})-1$ will give a $\rLDMDS_{\epsilon n}(\le L)-$ code over a field of size $p^2$ with probability at least at least $1-1/2^{Ln}$.   
\end{theorem}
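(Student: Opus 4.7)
The plan is to apply Theorem~\ref{thm:probBoundSlackLower} with $X$ equal to the irreducible $\MDS$ curve obtained by realizing the columns of the generator matrix of $GS(p,t,s)$ as points on an algebraic curve. Concretely, since we will verify $s \geq 2g(G_{p,t})+1$, Lemma~\ref{lem:evalEmbedProj} says that evaluating a basis of $\mathcal{L}(sP_\infty)$ on $G_{p,t}$ produces an irreducible projective curve $X \subset \Proj\overline{\F}_q^{s-g}$ of degree $s$; since the basis elements are linearly independent rational functions on $G_{p,t}$, Corollary~\ref{cor:genptsMDS} (combined with Proposition~\ref{thm:MDS2VarChar}) ensures that $X$ is $\MDS$. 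The sampling set consists of $|S| = N_q(G_{p,t}) = p^t(p-1)$ evaluation points, and since $X$ is an irreducible curve of degree $s$, Bezout's theorem (Lemma~\ref{lem:bezout}) yields $P_q(X,L) \leq sL$.

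Next I verify the parameter choices. Setting $s = Rn + g - 1$ gives code dimension $k = s - g + 1 = Rn$ by Theorem~\ref{thm:surveyAG}(2). Using the Garcia--Stichtenoth genus bound $g \leq p^t + p^{(t+1)/2}$ together with $Rn/4 \leq p^t \leq Rn/2$, I get $g \leq Rn/2 + o(n)$, which simultaneously verifies $s \geq 2g+1$ for large $n$ and yields $s \leq 2Rn$; the lower bound on $p^t$ also gives $|S| \geq Rn(p-1)/4$. Substituting $r = \epsilon n$ into Theorem~\ref{thm:probBoundSlackLower}, the key ratio is bounded by
\[
\frac{P_q(X, L)}{|S|-n} \;\leq\; \frac{2RLn}{Rn(p-1)/4 - n} \;=\; \frac{8L}{(p-1) - 4/R},
\]
and the hypothesis $p \geq 1 + 4/R + 12eL\epsilon^{-1} 2^{5(L+2)/\epsilon}$ forces this to be at most $\epsilon/(e \cdot 2^{5(L+2)/\epsilon})$. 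For the ``with repetitions'' case, one simply replaces $|S|-n$ by $|S|$ throughout, which removes the $4/R$ slack in the hypothesis on $p$.

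Plugging in via the standard estimate $\binom{n}{\epsilon n/2} \leq (2e/\epsilon)^{\epsilon n/2}$, the failure probability from Theorem~\ref{thm:probBoundSlackLower} is bounded by
\[
2^{(L+1)n} \cdot (2e/\epsilon)^{\epsilon n/2} \cdot 2^{\epsilon n(L+1)/2} \cdot \left(\frac{\epsilon}{e \cdot 2^{5(L+2)/\epsilon}}\right)^{\epsilon n/2} \;=\; 2^{(L+1)n - (5-\epsilon)(L+2)n/2},
\]
whose exponent is at most $-Ln - 3n$ for $\epsilon \in (0,1]$, delivering the desired $2^{-Ln}$. I do not foresee a major obstacle: the generalized GM-MDS machinery is already packaged inside Theorem~\ref{thm:probBoundSlackLower}, and the only AG-specific input is Bezout's theorem applied to the irreducible $\MDS$ curve $X$ of degree $s$. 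The only delicate point is the genus estimate: the degree $s$ of $X$ must stay linear in $n$, which forces $g \approx p^t \approx Rn/2$, and this is precisely why the near-optimal genus-to-rational-points ratio of the Garcia--Stichtenoth tower is essential for obtaining constant field size $p^2$.
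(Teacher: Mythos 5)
Your proposal is correct and follows essentially the same route as the paper: apply Theorem~\ref{thm:probBoundSlackLower} to the degree-$s$ irreducible $\MDS$ curve given by Lemma~\ref{lem:evalEmbedProj}, bound $P_q(X,L)\le sL$ via Bezout, and plug in the Garcia--Stichtenoth parameters with $r=\epsilon n$. Your intermediate constant ($8L/((p-1)-4/R)$ versus the paper's $6L/(p-1-4/R)$) is marginally looser but still fits within the stated hypothesis on $p$, and the final exponent calculation checks out.
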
 

We note the theorem is only meaningful if an infinite sequence of such $n$ can be found. That is true and easy to check. We also note that consecutive such values of $n$ are at most a $p$ multiplicative factor apart from each other.
 
\begin{proof}
We only prove the statement for the no repetitions allowed case. The calculation is nearly identical when we allow repetitions.

Set $k=s-g(G_{p,e})+1$ with $s= Rn+g(G_{p,t})-1$. We need $s\ge 2g(G_{p,e})+1$ so that code can be treated as coming from an degree $s$ $\MDS$ curve, which means we need $Rn\ge g(G_{p,e})+2$ which is satisfied by our assumption that $Rn\ge 2p^t$ and the fact that $g(G_{p,e})\le p^t+p^{(t+1)/2}-p^{(t-1)/2}-1$.
To apply Theorem~\ref{thm:probBoundSlackLower} we want to bound $P_q(X,L)$. As $X$ is an irreducible curve of degree $s$ (by Lemma~\ref{lem:evalEmbedProj}) we have by Bezout's Theorem (Lemma~\ref{lem:bezout}), $P_q(X,L)\le sL$.
Applying Theorem~\ref{thm:probBoundSlackLower} we see that for the punctured code to be $\rLDMDS_{\epsilon n}(\le L)$ the failure probability is at most
\begin{equation}\label{eq:AGLDSlack1}
2^{(L+1)n}\binom{n}{\epsilon n /2}2^{\epsilon n (L+1)/2} \left(\frac{sL}{N_q(G_{p,t})-n}\right)^{\epsilon n /2}.
\end{equation}
We note,
\begin{align*}
\frac{sL}{N_q(G_{p,t})-n}&\le \frac{L(nR+p^t+p^{(t+1)/2})}{p^{t-1}(p^2-p)-n}\\
&\le  \frac{L(1+p^{-(t-1)/2}+nR/p^{t})}{p-1-n/p^t} \\
&\le \frac{6L}{p-1-4/R}. && (\text{Using }p^t\ge Rn/4)
\end{align*}

Using this in \eqref{eq:AGLDSlack1} we can upper bound the failure probability by

$$2^{(L+2)n}\left(\frac{en}{\epsilon n/2}\right)^{\epsilon n / 2} 2^{\epsilon n (L+1)/2} \left( \frac{6L}{p-1-4/R}\right)^{\epsilon n/2}\le 2^{-Ln},$$
for $p\ge 1+4/R+12eL/\epsilon 2^{5(L+2)/\epsilon}.$
\end{proof}

By choosing the right list size we now show that punctured AG codes can achieve list decoding capacity.

\begin{corollary}\label{cor:AGListcap}
Let $\epsilon,R\in (0,1)$ be rationals. Let $L=\lceil 2\frac{1-R}{\epsilon}\rceil$ and $p$ be a prime such that
$$\log_2p\gg_{R} 1/\epsilon^2, $$
any large enough $n>0$, such that there exists an integer $t$ satisfying $Rn/2\ge p^t\ge Rn/4$, a random puncturing of $GS(p,t,s)$ with $s=Rn+g(G_{p,t})-1$ will give a $(1-R-\epsilon,L)$ average-radius list-decodable $[n,nR]$-code over a field of size $p^2$ with probability at least $1-1/2^{Ln}$.   
\end{corollary}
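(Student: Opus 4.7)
The plan is to derive this corollary directly from Theorem~\ref{thm:AGpuncture} applied with slack parameter $\epsilon/2$ in place of $\epsilon$, and then to convert the resulting $\rLDMDS_{(\epsilon/2)n}(\le L)$ property into the desired $(1-R-\epsilon, L)$ average-radius list-decoding guarantee through an elementary inequality on the generalized Singleton radius.

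First, I would invoke Theorem~\ref{thm:AGpuncture} with slack $\epsilon' := \epsilon/2$ and the same list size $L = \lceil 2(1-R)/\epsilon \rceil$. The field-size hypothesis of that theorem (in the no-repetition regime) becomes
\[
p \;\ge\; 1 + 4/R + 12eL(\epsilon/2)^{-1}\,2^{5(L+2)/(\epsilon/2)} \;=\; 1 + 4/R + (24eL/\epsilon)\,2^{10(L+2)/\epsilon}.
\]
Since $L = O((1-R)/\epsilon)$, the dominant term on the right is $\exp(O((1-R)/\epsilon^2))$. Hence a choice of prime $p$ with $\log_2 p$ sufficiently large compared with $1/\epsilon^2$ (with an implicit constant depending on $R$) satisfies the hypothesis, justifying the quantifier $\log_2 p \gg_R 1/\epsilon^2$ in the statement. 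The conclusion of Theorem~\ref{thm:AGpuncture} then yields, with probability at least $1 - 2^{-Ln}$, that the randomly punctured code is $\rLDMDS_{(\epsilon/2)n}(\le L)$.

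Next I would unpack Definition~\ref{def:relaxed-ld-mds}: the $\rLDMDS_{(\epsilon/2)n}(\le L)$ property at list size $L$ gives $(\rho, L)$ average-radius list-decodability with radius
\[
\rho \;=\; \rho_{n,Rn,(\epsilon/2)n}(L) \;=\; \frac{L}{L+1}\cdot\frac{n - Rn - (\epsilon/2)n}{n} \;=\; \frac{L}{L+1}\bigl(1-R-\tfrac{\epsilon}{2}\bigr).
\]
It then remains to verify the clean inequality $\rho \ge 1-R-\epsilon$. Rearranging, this is equivalent to $(\epsilon/2)(L+1) \ge 1-R-\epsilon/2$, i.e., $L \ge \tfrac{2(1-R)}{\epsilon} - 2$, which holds by our choice $L = \lceil 2(1-R)/\epsilon \rceil$. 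Since average-radius list-decodability is preserved under shrinking the radius, the code is $(1-R-\epsilon, L)$ average-radius list-decodable, as claimed.

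The proof is essentially bookkeeping: no new probabilistic or algebraic-geometric estimates are needed beyond Theorem~\ref{thm:AGpuncture}. The only mild subtlety is threading the constants so that the field-size bound in Theorem~\ref{thm:AGpuncture}, which scales like $\exp(O(L/\epsilon))$, translates into the $\exp(O_R(1/\epsilon^2))$ bound of the corollary after substituting $L = \Theta((1-R)/\epsilon)$ and $\epsilon \mapsto \epsilon/2$; this is the one place where one must be careful that the absorbed constants depend only on $R$ and not on $\epsilon$.
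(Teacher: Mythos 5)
Your proposal is correct and follows exactly the paper's route: the paper's entire proof is ``apply Theorem~\ref{thm:AGpuncture} with $\epsilon/2$ and list size $L=\lceil 2(1-R)/\epsilon\rceil$,'' and you have simply spelled out the bookkeeping (the field-size substitution $L=\Theta((1-R)/\epsilon)$ giving $\log_2 p\gg_R 1/\epsilon^2$, and the radius inequality $\tfrac{L}{L+1}(1-R-\epsilon/2)\ge 1-R-\epsilon$, which indeed reduces to $L\ge 2(1-R)/\epsilon-2$). No gaps.
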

\begin{proof}
Apply the previous theorem for $\epsilon/2$ and list size $L=\lceil 2\frac{1-R}{\epsilon}\rceil$.
\end{proof}

A similar calculation now applies to the upper $\MDS$ relaxation (or equivalently the relaxed $\MR$  tensor code setting).
\begin{theorem}\label{thm:AGpunctureMRRate}
Let $\epsilon>0$, $R\in (0,1)$ be rational numbers. If 
$$p\ge 1+4/R+12e(L-1)\epsilon^{-1} 2^{5(L+1)/\epsilon},$$
then for any large enough $n>0$, such that there exists an integer $t$ satisfying $Rn/2\ge p^t\ge Rn/4$ and $s=Rn+g(G_{p,t})-1$ a random puncturing of $GS(p,t,s)$ will give a $\rMDS^{\epsilon n}(L)-[n,nR]$-code over a field of size $p^2$ with probability at least $1-1/2^{Ln}$.   
\end{theorem}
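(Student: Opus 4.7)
The plan is to mirror the argument of Theorem~\ref{thm:AGpuncture}, but apply Theorem~\ref{thm:probBoundSlackUpper} in place of Theorem~\ref{thm:probBoundSlackLower}. As before, set $k = Rn$ and $s = Rn + g(G_{p,t}) - 1$; the hypothesis $Rn/2 \ge p^t \ge Rn/4$ together with the genus bound $g(G_{p,t}) \le p^t + p^{(t+1)/2} - p^{(t-1)/2} - 1$ ensures $s \ge 2g(G_{p,t}) + 1$, so by Lemma~\ref{lem:evalEmbedProj} the columns of the generator of $GS(p,t,s)$ sit on an irreducible projective curve $X$ of degree $s$, which is $\MDS$ by Corollary~\ref{cor:genptsMDS}.

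With $r = \eps n$, I would invoke Theorem~\ref{thm:probBoundSlackUpper} (no-repetition version). Bezout's theorem (Lemma~\ref{lem:bezout}) applied to $X$ gives $P_q(X, L-1) \le s(L-1)$, and the same arithmetic used in the proof of Theorem~\ref{thm:AGpuncture} yields
\[
  \frac{s(L-1)}{N_q(G_{p,t}) - n} \;\le\; \frac{6(L-1)}{p - 1 - 4/R}.
\]
For the combinatorial prefactor $C_{n,k,r,L}$, the crudest choice $C_{n,k,r,L} \le 2^{Ln}$ from the $\min$ in Theorem~\ref{thm:probBoundSlackUpper} already suffices; a tighter bound is available from the $\binom{n}{\le (n-k-r)(L-1)}\binom{\cdots}{\le 2(n-k-r)(L-1)} + \binom{n}{k+r}$ alternative, but it is not needed to reach the stated field size.

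Substituting these estimates, the failure probability is at most
\[
  2^{Ln} \cdot \binom{n}{\eps n/2} \cdot 2^{\eps n L/2} \cdot \left( \frac{6(L-1)}{p - 1 - 4/R} \right)^{\eps n / 2}
  \;\le\; 2^{Ln}\left(\frac{2e}{\eps}\right)^{\eps n/2} 2^{\eps n L/2}\left( \frac{6(L-1)}{p - 1 - 4/R} \right)^{\eps n / 2},
\]
and a direct calculation shows this is $\le 2^{-Ln}$ under the hypothesis $p \ge 1 + 4/R + 12e(L-1)\eps^{-1} 2^{5(L+1)/\eps}$. No serious obstacle arises: every ingredient (the Bezout bound, the point count $N_q(G_{p,t}) = p^{t-1}(p^2-p)$, the genus bound, and the $\MDS$ property of $X$) was already established for Theorem~\ref{thm:AGpuncture}; the only change is replacing $\cH$ by $\cG$ and the null-intersection bookkeeping by saturation bookkeeping, which is packaged inside Theorem~\ref{thm:probBoundSlackUpper}. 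The combinatorial count of patterns from Lemma~\ref{lem:noOfMRpatterns} is the main conceptual difference from the lower-relaxation argument, but it only enters through $C_{n,k,r,L}$ and is harmlessly dominated by $2^{Ln}$ in this constant-rate regime.
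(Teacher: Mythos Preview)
Your proposal is correct and follows exactly the approach the paper intends: the paper itself does not spell out a proof of Theorem~\ref{thm:AGpunctureMRRate} but merely says ``A similar calculation now applies to the upper $\MDS$ relaxation,'' and you have carried out precisely that calculation---swapping Theorem~\ref{thm:probBoundSlackLower} for Theorem~\ref{thm:probBoundSlackUpper}, replacing $P_q(X,L)$ by $P_q(X,L-1)$, and taking $C_{n,k,r,L}\le 2^{Ln}$ for the union bound. The arithmetic and the supporting ingredients (Bezout, genus bound, point count, $\MDS$ property of $X$) are reused verbatim from Theorem~\ref{thm:AGpuncture}, as you note.
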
 

We note that we can not prove a similar result in the constant co-dimension setting. This is because we need $Rn/p^t$ to be bounded away from $0$ which is not possible if $p$ is already polynomial in $n$.

\section*{Acknowledgments}
We would like to thank Omar Alrabiah, Zeyu Guo, Venkatesan Guruswami and Ray Li for discussions that inspired this work. We also thank Zeyu Guo for clarifying that the constructions of \cite{guo2021efficient} and \cite{guruswami2022optimal} are non-linear. We thank anonymous referees for helpful comments.

The first author was supported in part by a Microsoft Research PhD fellowship while a student at Stanford University. The second author was supported by NSF grant DMS-1953807 while this author was a graduate student at the Department of Computer Science, Princeton University.

\appendix

\section{Omitted Proofs}\label{app:omit}

\subsection{Lemma~\ref{lem:dual-inter-rank}}

\lemdual*

\begin{proof}
  By the rank-nullity theorem, we have that
  \[
    \dim\coker  \cH_{A_1,\hdots,A_\ell}[G] = n + \ell k - \rank M,
  \]
  where $\coker$ denotes the left null space of a matrix; i.e.,
  \[
    \coker M := \{x \in \F^m : x^T M = 0\}.
  \]

  Thus, it suffices to prove that
  \begin{align}
    \dim(H_{A_1} \cap \cdots \cap H_{A_\ell}) &= \dim\coker  \cH_{A_1,\hdots,A_\ell}[G] - \sum_{i=1}^{\ell} (k - \rank(G|_{\oA_i})) - k\nonumber\\
    &= \dim\coker  \cH_{A_1,\hdots,A_\ell}[G] - \sum_{i=1}^{\ell} \dim\coker(G|_{\oA_i}) - k.\label{eq:684}
  \end{align}
  Consider $(y, x_1, \hdots, x_\ell) \in \F^{n+\ell k}$ where $y \in \F^n$ and each $x_i \in \F^k$. Note that $(y, x_1, \hdots, x_\ell) \in \coker  \cH_{A_1,\hdots,A_\ell}[G]$ if and only if for all $i \in [\ell]$, we have that
\[
  (G^Tx_i + y)|_{\oA_i} = 0.
\] 

Let $Y \subset \coker  \cH_{A_1,\hdots,A_\ell}[G]$ denote the vector space of tuples for which $x_1 = \cdots = x_\ell$ and $y = -G^{T}x_1.$ Note that $\dim Y = k$.

For all $i \in [\ell]$, let $X_i \subset \coker  \cH_{A_1,\hdots,A_\ell}[G]$ denote the vector space of tuples for which $y = 0$ and $x_j = 0$ for all $j \in [\ell] \setminus \{i\}$. Then, the defining condition is that $(G^Tx_i)|_{\oA_i} = 0$. In other words, $x_i \in \coker(G^T|_{\oA_i})$. Therefore, to prove (\ref{eq:684}), it suffices to establish an isomorphism
\[
  H_{A_1} \cap \cdots \cap H_{A_\ell} \cong (\coker  \cH_{A_1,\hdots,A_\ell}[G]) / (Y + X_1 + \cdots + X_\ell)
\]
Consider the map $\psi : \coker  \cH_{A_1,\hdots,A_\ell}[G] \to H_{A_1} \cap \cdots \cap H_{A_\ell}$ defined by \[\psi(y, x_1, \hdots, x_\ell) := Hy.\] To finish, it suffices to prove the following three properties about $\psi$.
\begin{enumerate}
  \item[(a)] $\psi$ is well-defined
  \item[(b)] $\ker \psi = Y + X_1 + \cdots + X_\ell$.
  \item[(c)] $\psi$ is surjective.
\end{enumerate}

To prove (a), consider $(y, x_1, \hdots, x_\ell) \in \coker  \cH_{A_1,\hdots,A_\ell}[G]$. Let $c_i := G^{T}x_i$ and $z = Hy$. Note that $Hc_i = 0$, since $H$ is a parity check matrix. Thus, since $(c_i + y)|_{\oA_i} = 0$,
\[
  z = Hy = H(c_i + y) = H|_{A_i}(c_i + y)|_{A_i}.
\]
That is, $z \in H_{A_i}$, so $z \in H_{A_1} \cap \cdots \cap H_{A_\ell}$. Thus, the image of $\psi$ is contained in $H_{A_1} \cap \cdots \cap H_{A_\ell}$.

To prove (b), consider $(y, x_1, \hdots, x_\ell) \in \coker  \cH_{A_1,\hdots,A_\ell}[G]$ with $Hy = 0$. Thus, $y$ must be a codeword, so there exists a unique $x \in \F^k$ with $y = -G^Tx$. For all $i \in [\ell]$ we have that $(G^T(x_i - x))|_{\oA_i} = 0$. This holds if and only if $x_i \in x + \coker G^T|_{\oA}$. This establishes that $\ker \psi = Y + X_1 + \cdots + X_\ell$.

To prove (c), consider any $z \in H_{A_1} \cap \cdots \cap H_{A_\ell}$. For each $i \in [\ell]$, since $z \in H_{A_i}$, there is at least one $y_i \in \F^n$ with $(y_i)|_{\oA_i} = 0$ and $Hy_i = z$. Let $c_i = y_i - y_1$ (so $c_1 = 0$). Since $Hc_i = 0$, we have that there exists $x_i \in \F^k$ with $c_i = G^{T}x_i$. We claim that $( y_1, x_1, \hdots, x_\ell) \in \coker  \cH_{A_1,\hdots,A_\ell}[G]$. To see why, for all $i \in [\ell]$, we have that
\[
  (G^{T}x_i + y_1)|_{\oA_i} = (c_i + y_1)|_{\oA_i} = (y_i)|_{\oA_i} = 0.
\]
This completes that $\psi$ is an isomorphism, so the dimension formula holds.
\end{proof}

\subsection{Proposition~\ref{prop:mr-tensor}}

\propmrtensor*

\begin{proof}
We prove (a) iff (b) and then (b) iff (c).

\paragraph{(a) iff (b).} Note that $E$ is correctable if and only if the map $x \mapsto x|_{\bar{E}}$ is injective for $x \in C_1 \otimes C_2$. This map is injective if and only if $(U \otimes V)|_{\bar{E}}$ has rank $\dim (C_1 \otimes C_2) = (m-a)(n-b)$. Since $U \otimes V$ has $(m-a)(n-b)$ rows, this is true if and only if the columns of $(U \otimes V)|_{\bar{E}}$ span $\F^{m-a} \otimes \F^{n-b}$.

\paragraph{(b) iff (c).} We now assume $a=1$ and $C_1$ is MDS. Thus, up to scaling, there is a unique vector $\lambda \in \F^m \setminus \{0\}$ such that $\sum_{i=1}^{m} \lambda_i U_i = 0$. Further, since $C_1$ is MDS, each $\lambda_i \neq 0$. Note that (b) is equivalent to.
\begin{align}
\rank \begin{pmatrix}
I_{n-b} & \\
& (U \otimes V)|_{\bar{E}}
\end{pmatrix} = m(n-b).\label{eq:rank-34}
\end{align}
Thus, it suffices to identify row/column operations which transforms the matrix (\ref{eq:rank-34}) into $\cG_{A_1, \hdots, A_m}[V]$.

First, consider an invertible linear map $\psi : \F^{m-1} \to \F^{m-1}$ for which $\psi(U_i) = e_i$ for $i \in [m-1]$. Since, $\sum_{i=1}^{m} \lambda_i U_i = 0$, we have that $\psi(U_{m}) = \sum_{i=1}^{m-1} -\lambda_i e_i / \lambda_m$. In particular, by suitable row operations, we can transform (\ref{eq:rank-34}) into
\begin{align}
\rank \begin{pmatrix}
I_{n-b} & & & &\\
& V_{A_1} & & & -(\lambda_1 / \lambda_m) V_{A_m}\\
& & \ddots & & \vdots \\
& & & V_{A_{m-1}} & -(\lambda_{m-1} / \lambda_m) V_{A_m}.
\end{pmatrix} = m(n-b).\label{eq:rank-56}
\end{align}
Since each $\lambda_i \neq 0$, we can suitably scale the row/columns blocks of (\ref{eq:rank-56}) to get.
\begin{align*}
\rank \begin{pmatrix}
I_{n-b} & & & &\\
& V_{A_1} & & & -V_{A_m}\\
& & \ddots & & \vdots \\
& & & V_{A_{m-1}} & -V_{A_m}.
\end{pmatrix} = m(n-b).
\end{align*}
Next, we add a copy of the first row block to each of the other row blocks to get
\begin{align*}
\rank \begin{pmatrix}
I_{n-b} & & & &\\
I_{n-b} & V_{A_1} & & & -V_{A_m}\\
\vdots & & \ddots & & \vdots \\
I_{n-b} & & & V_{A_{m-1}} & -V_{A_m}.
\end{pmatrix} = m(n-b).
\end{align*}
Finally, we add suitable linear combinations of the columns in the first column block to the columns in the last column block to get $\cG_{A_1, \hdots, A_m}[V]$ (up to permutation of the row blocks), as desired.
\end{proof}

We show an upper bound on the number of patterns which needed to be verified to be correctable in order to verify that all patterns within $\mathcal E^{m,n}_{a,b}$ are correctable. The particular bound we prove here is essentially the upper bound proved in \cite{bgm2021mds}. We adopt the notation $\binom{a}{\le b}=\sum_{i=0}^b \binom{a}{i}.$

\subsection{Lemma~\ref{lem:noOfMRpatterns}}

\lemnoOfMRpatterns*

\begin{proof}
The $2^{mn}$ upper bound is trivial as $|\mathcal E^{m,n}_{a,b}| \le 2^{mn}$.

Given a pattern $E \in \mathcal E^{m,n}_{a,b}$, let $A_1, \hdots, A_m \subseteq [n]$ be the rows of $E$ and $B_1, \hdots, B_n \subseteq [m]$ be the columns of $E$. Note that if some $|A_i| \le a$, then using the decoding properties of $C_1$, we can replace WLOG that $A_i = \emptyset$. Likewise, if $|B_j| \le b$, then we can replace WLOG with $B_j = \emptyset$. We call such a pattern \emph{reduced} (c.f., \cite{Gopalan2016}).

Let $\mathcal P^{m,n}_{a,b}$ be the set of all reduced patterns. For a given $E \in \mathcal P^{m,n}_{a,b}$. Let $m'$ be the number of nonempty rows and $n'$ the number of nonempty columns. If $E = \emptyset$, there is nothing to check. Otherwise, $E$ has at least $a$ rows and $b$ columns. Therefore, by Theorem~\ref{thm:regularity}, we have that $|E| \le bm'+an'-ab$. Further, we can bound that, $|E| \ge \max(m'(b+1),n'(a+1))$. Thus, we get that $n'(a+1) \le bm'+an'-ab$ or $n' \le b(m'-a) \le b(m-a)$. Therefore, we can count all of $\mathcal P^{m,n}_{a,b}$ in the following way:

\begin{itemize}
\item Choose at most $b(m-a)$ columns $C \subseteq [n]$.
\item Pick $E \subseteq C \times [n]$ of size at most $bm + a(b(m-a)) -ab = b(a+1)(m-a)$. 
\end{itemize}

This proves the second upper bound. The third upper bound follows by swapping $m$ and $n$.
\end{proof}

\subsection{Proposition~\ref{prop:lower-mds-alt}}

\propA*

\begin{proof}
  For the ``only if'' direction, we need to prove (1) and (2). To prove (1), for each $i \in [n]$ consider $A_1 = \{i\}$ and $A_2 = \cdots = A_\ell = \{\}$. Then, one can easily verify that $\mathcal G_{A_1, \hdots, A_\ell}[W] = k+1$, so $\mathcal G_{A_1, \hdots, A_\ell}[G] = k+1$ by assumption. This latter condition is only possible if $G_i$ is nonzero. To prove (2), consider  $A_1, \hdots, A_\ell \subseteq [n]$ of size at most $k-d$ with the $k-d$-dimensional null intersection property. By Proposition~\ref{prop:tian-rank}, we have that $\cG_{A_1, \hdots, A_\ell}[W]$ has full column rank. Thus, $\cG_{A_1, \hdots, A_\ell}[G]$ has full column rank, so by Proposition~\ref{prop:tian-rank} again, we have that $G_{A_1} \cap \cdots \cap G_{A_\ell} = 0$.

  For the ``if'' direction, consider $A_1, \hdots, A_\ell$ such that $\mathcal G_{A_1, \hdots, A_\ell}[W]$ has full column rank. By Proposition~\ref{prop:tian-rank}, we must have each $|A_i| \le k-d$ and $A_1, \hdots, A_\ell$ have the $k-d$-dimensional null intersection property. Thus, $G_{A_1} \cap \cdots \cap G_{A_\ell} = 0$. Therefore, by Proposition~\ref{prop:tian-rank}, it suffices to show that $\rank G_{A_i} = |A_i|$ for all $i$.

  We shall prove by induction on $|A|$ that $\dim(G_{A}) = |A|$. If $|A| \le 1$, then the columns $G|_{A}$ are linearly independent by assumption that every column is nonzero. Otherwise, $|A| \ge 2$, so we partition $A = A_1 \cup A_2$ with each $|A_1|, |A_2| \le |A|-1$. Then, observe that
\[
  \dim(G_A) = \dim(G_{A_1} + G_{A_2}) = \dim(G_{A_1}) + \dim(G_{A_2}) - \dim(G_{A_1} \cap G_{A_2}) = |A_1| + |A_2| - 0 = |A|,
\]
where we use the induction hypothesis and that $A_1, A_2$ have the $k-d$-dimensional null intersection property.
\end{proof}

\subsection{Proposition~\ref{prop:lower-mds}}

\propB*

\begin{proof}
\begin{itemize}
\item[(a)] Follows directly from Proposition~\ref{prop:2.2} and Definition~\ref{def:lower-mds}.
\item[(b)] Follows from the fact that if $A_1, \hdots, A_\ell \subseteq [n]$ have the $k-d'$-dimensional null intersection property, then $A_1, \hdots, A_\ell$ have the $k-d$-dimensional null intersection property.
\item[(c)] Consider any $A_1, \hdots, A_{\ell'} \subseteq [n]$ with the $k-d$-dimensional null intersection property. Let $A'_1, \hdots, A'_\ell \subseteq [n]$ be a family of sets with $A'_i = A_i$ for all $i \in [\ell'-1]$ and $A'_{\ell'} = \cdots = A'_\ell = A_{\ell'}$. Clearly $A'_1, \hdots, A'_\ell$ have the $k-d$-dimensional null intersection property. Since $C$ is $\rMDS_d(\ell)$, we have that
\[
   0 = G_{A'_1} \cap \cdots \cap G_{A'_\ell} = G_{A_1} \cap \cdots \cap G_{A_{\ell'}},
\]
as desired.
\item[(d)] Consider $A_1, A_2 \subseteq [n]$ of size at most $k-d$ with the $k-d$-dimensional null intersection property. By Proposition~\ref{prop:null}, we have that $|A_1| + |A_2| \le k-d$ and $|A_1 \cap A_2| = 0$. Thus, $|A_1 \cup A_2| \le k-d$.

If every $k-d$ columns of $G$ are linearly independent, then
\begin{align*}
  \dim(G_{A_1} \cap G_{A_2}) &= \dim(G_{A_1}) + \dim(G_{A_2}) - \dim(G_{A_1} + G_{A_2})\\
                             &= |A_1| + |A_2| - |A_1 \cup A_2|\\
                             &= 0.
\end{align*}
Therefore, $C$ is $\rMDS_d(2)$. Conversely, assume that $C$ is $\rMDS_d(2)$. Consider $A \subseteq [n]$ of size at most $k-d$. We shall prove by induction on $|A|$ that $\dim(G_{A}) = |A|$. If $|A| \le 1$, then the columns $G|_{A}$ are linearly independent by assumption that every column is nonzero. Otherwise, $|A| \ge 2$, so we partition $A = A_1 \cup A_2$ with each $|A_1|, |A_2| \le |A|-1$. Then, observe that
\[
  \dim(G_A) = \dim(G_{A_1} + G_{A_2}) = \dim(G_{A_1}) + \dim(G_{A_2}) - \dim(G_{A_1} \cap G_{A_2}) = |A_1| + |A_2| - 0 = |A|,
\]
where we use the induction hypothesis and that $A_1, A_2$ have the $k-d$-dimensional null intersection property.
\end{itemize}
\end{proof}

\subsection{Proposition~\ref{prop:upper-mds}}

\propC*

\begin{proof}
\begin{itemize}
\item[(a)] Follows directly from Proposition~\ref{prop:MDS-sat} and Definition~\ref{def:upper-mds}.
\item[(b)] Consider any $A_1, \hdots, A_\ell \subseteq [n]$ have the $k+d'$-dimensional saturation property. Let $W$ be a generic $(k+d') \times n$ matrix. Then, we have that $\rank \cG_{A_1, \hdots, A_\ell}[W|_{[k+d] \times [n]}] = \ell (k + d')$. If we delete the last $d'-d$ rows from each of the $\ell$ row blocks of $M$, we are left with \[
    \rank \cG_{A_1, \hdots, A_\ell}[W|_{[k+d] \times [n]}] \ge \ell (k+d') - \ell(d'-d) = \ell (k+d).
  \]
  Thus, $A_1, \hdots, A_\ell$ have the $k+d$-dimensional saturation property. (Note the rank of $M$ cannot exceed saturation.) Thus, $C$ is $\rMDS^{d'}(\ell)$.
\item[(c)]  Consider any $A_1, \hdots, A_{\ell'} \subseteq [n]$ with the $k+d$-dimensional saturation property. Set $A_{\ell'+1} = \cdots = A_{\ell} = [n]$. One can verify by Proposition~\ref{prop:tian-rank} that $A_1, \hdots, A_\ell$ have the $k+d$-dimensional saturation property. Thus, $\rank \cG_{A_1, \hdots, A_\ell}[G] = \ell (k+d)$. By deleting the last $(\ell - \ell')(k+d)$ rows, we get that $\rank \cG_{A_1, \hdots, A_{\ell'}}[G] \ge \ell' (k+d)$. Thus, $A_1, \hdots, A_{\ell'}$ is $G$-saturated so $C$ is $\rMDS^d(\ell')$.

\item[(d)] First, assume for some $A \subseteq [n]$ of size $k+d$, $G_A \neq F^k$. It is clear that $A_1 = A_2 = A$ has the $k+d$-dimensional saturation property, but by Proposition~\ref{prop:tian-rank},
\[
  \rank \begin{pmatrix}
I_k & V|_{A} &\\
I_k & & V|_{A}
\end{pmatrix} = k + \dim V|_A < 2k,
\]
a contradiction. Thus, if $C$ is $\rMDS^d(2)$, every $k+d$ columns of $G$ span $\F^k$.

For the other direction, assume every $k+d$ columns of $G$ span $\F^k$. Consider any $A_1, A_2 \subseteq [n]$ with the $k+d$-dimensional saturation property. Let $W$ be a generic matrix, by Proposition~\ref{prop:tian-rank}, we then have that $\rank \cG_{A_1, A_2}[W] = 2k$ is equivalent to
\[
  k = \dim(W_{A_1}) + \dim(W_{A_2}) - \dim(W_{A_1} \cap W_{A_2}) = \dim(W_{A_1} + W_{A_2}) = \min(|A_1 \cup A_2|, k).
\]
Thus, $|A_1 \cup A_2| \ge k$, so $\dim(V_{A_1 \cup A_2}) = k$ by assumption. Therefore,
\[
  \rank \cG_{A_1, A_2}[V] = k + \dim(V_{A_1}) + \dim(V_{A_2}) - \dim(V_{A_1} \cap V_{A_2}) = k + \dim(V_{A_1 \cup A_2}) = 2k.
\]
Hence, $C$ is $\MDS^d(2)$, as desired.
\end{itemize}
\end{proof}

\bibliographystyle{alpha}
\bibliography{references}

\newcommand{\etalchar}[1]{$^{#1}$}
\begin{thebibliography}{KRSW18}

\bibitem[AGL24a]{alrabiah2023ag}
Omar Alrabiah, Venkatesan Guruswami, and Ray Li.
\newblock {{AG Codes Have No List-decoding Friends}}: {{Approaching}} the
  {{Generalized Singleton Bound Requires Exponential Alphabets}}.
\newblock {\em IEEE Transactions on Information Theory}, 2024.

\bibitem[AGL24b]{alrabiah2023randomly}
Omar Alrabiah, Venkatesan Guruswami, and Ray Li.
\newblock Randomly {{Punctured Reed}}--{{Solomon Codes Achieve List-Decoding
  Capacity}} over {{Linear-Sized Fields}}.
\newblock In {\em Proceedings of the 56th {{Annual ACM Symposium}} on
  {{Theory}} of {{Computing}}}, pages 1458--1469, Vancouver BC Canada, June
  2024. ACM.

\bibitem[BDG24a]{bdg2023a}
Joshua Brakensiek, Manik Dhar, and Sivakanth Gopi.
\newblock Generalized {{GM-MDS}}: {{Polynomial Codes Are Higher Order MDS}}.
\newblock In {\em Proceedings of the 56th {{Annual ACM Symposium}} on
  {{Theory}} of {{Computing}}}, pages 728--739, Vancouver BC Canada, June 2024.
  ACM.

\bibitem[BDG24b]{bdg2023size}
Joshua Brakensiek, Manik Dhar, and Sivakanth Gopi.
\newblock Improved {{Field Size Bounds}} for {{Higher Order MDS Codes}}.
\newblock {\em IEEE Transactions on Information Theory}, pages 1--1, 2024.

\bibitem[BGM22]{bgm2021mds}
Joshua Brakensiek, Sivakanth Gopi, and Visu Makam.
\newblock Lower {{Bounds}} for {{Maximally Recoverable Tensor Codes}} and
  {{Higher Order MDS Codes}}.
\newblock {\em IEEE Transactions on Information Theory}, 68(11):7125--7140,
  November 2022.

\bibitem[BGM24]{bgm2022}
Joshua Brakensiek, Sivakanth Gopi, and Visu Makam.
\newblock Generic {{Reed}}--{{Solomon Codes Achieve List-Decoding Capacity}}.
\newblock {\em SIAM Journal on Computing}, pages STOC23--118, November 2024.

\bibitem[CHL07]{chen2007maximally}
Minghua Chen, Cheng Huang, and Jin Li.
\newblock On the {{Maximally Recoverable Property}} for {{Multi-Protection
  Group Codes}}.
\newblock In {\em 2007 {{IEEE International Symposium}} on {{Information
  Theory}}}, pages 486--490, June 2007.

\bibitem[CM06]{CAFURE2006155}
Antonio Cafure and Guillermo Matera.
\newblock Improved {{Explicit Estimates on the Number of Solutions of Equations
  over a Finite Field}}.
\newblock {\em Finite Fields and Their Applications}, 12(2):155--185, April
  2006.

\bibitem[CR21]{AGcodesurvey}
Alain Couvreur and Hugues Randriambololona.
\newblock Algebraic {{Geometry Codes}} and {{Some Applications}}.
\newblock In {\em Concise {{Encyclopedia}} of {{Coding Theory}}}. {Chapman and
  Hall/CRC}, 2021.

\bibitem[CZ24]{cz24}
Yeyuan Chen and Zihan Zhang.
\newblock Explicit folded reed-solomon and multiplicity codes achieve relaxed
  generalized singleton bound.
\newblock {\em CoRR}, abs/2408.15925, 2024.

\bibitem[DL12]{dvir2012subspace}
Zeev Dvir and Shachar Lovett.
\newblock Subspace {{Evasive Sets}}.
\newblock In {\em Proceedings of the Forty-Fourth Annual {{ACM}} Symposium on
  {{Theory}} of Computing}, {{STOC}} '12, pages 351--358, New York, NY, USA,
  May 2012. Association for Computing Machinery.

\bibitem[DSY14]{dau2014gmmds}
Son~Hoang Dau, Wentu Song, and Chau Yuen.
\newblock On the existence of {{MDS}} codes over small fields with constrained
  generator matrices.
\newblock In {\em 2014 {{IEEE International Symposium}} on {{Information
  Theory}}}, pages 1787--1791, June 2014.

\bibitem[Eli57]{elias1957list}
P~Elias.
\newblock List {Decoding} for {Noisy} {Channels}.
\newblock In {\em IRE WESCON Convention Record, 1957}, volume~2, pages 94--104,
  1957.

\bibitem[Gab21]{gabidulin2021rank}
Ernst~M. Gabidulin.
\newblock {\em Rank {{Codes}}}.
\newblock Technical University of Munich, March 2021.

\bibitem[GHK{\etalchar{+}}17]{Gopalan2016}
Parikshit Gopalan, Guangda Hu, Swastik Kopparty, Shubhangi Saraf, Carol Wang,
  and Sergey Yekhanin.
\newblock Maximally {{Recoverable Codes}} for {{Grid-like Topologies}}.
\newblock In {\em Proceedings of the 2017 {{Annual ACM-SIAM Symposium}} on
  {{Discrete Algorithms}} ({{SODA}})}, Proceedings, pages 2092--2108. {Society
  for Industrial and Applied Mathematics}, January 2017.

\bibitem[GHSY12]{gopalan2012locality}
Parikshit Gopalan, Cheng Huang, Huseyin Simitci, and Sergey Yekhanin.
\newblock On the {{Locality}} of {{Codeword Symbols}}.
\newblock {\em IEEE Transactions on Information Theory}, 58(11):6925--6934,
  November 2012.

\bibitem[GK16]{guruswami2016explicit}
Venkatesan Guruswami and Swastik Kopparty.
\newblock Explicit subspace designs.
\newblock {\em Combinatorica}, 36(2):161--185, 2016.

\bibitem[GR08]{guruswami2008explicit}
Venkatesan Guruswami and Atri Rudra.
\newblock Explicit {{Codes Achieving List Decoding Capacity}}:
  {{Error-Correction With Optimal Redundancy}}.
\newblock {\em IEEE Transactions on Information Theory}, 54(1):135--150,
  January 2008.

\bibitem[GR21]{guo2021efficient}
Zeyu Guo and Noga {Ron-Zewi}.
\newblock Efficient {List-decoding with Constant Alphabet and List Sizes}.
\newblock In {\em Proceedings of the 53rd {{Annual ACM SIGACT Symposium}} on
  {{Theory}} of {{Computing}}}, {{STOC}} 2021, pages 1502--1515, New York, NY,
  USA, June 2021. Association for Computing Machinery.

\bibitem[GRS23]{guruswami2012essential}
Venkatesan Guruswami, Atri Rudra, and Madhu Sudan.
\newblock {\em Essential {{Coding Theory}}}.
\newblock October 2023.

\bibitem[GS96]{GScode}
Arnaldo Garcia and Henning Stichtenoth.
\newblock On the {{Asymptotic Behaviour}} of {{Some Towers}} of {{Function
  Fields}} over {{Finite Fields}}.
\newblock {\em Journal of Number Theory}, 61(2):248--273, December 1996.

\bibitem[GST24]{goldberg2021singleton}
Eitan Goldberg, Chong Shangguan, and Itzhak Tamo.
\newblock {Singleton-Type Bounds for List-Decoding and List-Recovery, and
  Related Results}.
\newblock {\em Journal of Combinatorial Theory, Series A}, 203:105835, April
  2024.

\bibitem[GW13]{guruswami2013linear}
Venkatesan Guruswami and Carol Wang.
\newblock Linear-algebraic list decoding for variants of reed--solomon codes.
\newblock {\em IEEE Transactions on Information Theory}, 59(6):3257--3268,
  2013.

\bibitem[GX12]{guruswami2012folded}
Venkatesan Guruswami and Chaoping Xing.
\newblock Folded {Codes from Function Field Towers and Improved Optimal Rate
  List Decoding}.
\newblock In {\em Proceedings of the Forty-Fourth Annual {{ACM}} Symposium on
  {{Theory}} of Computing}, {{STOC}} '12, pages 339--350, New York, NY, USA,
  May 2012. Association for Computing Machinery.

\bibitem[GX13]{guruswami2013list}
Venkatesan Guruswami and Chaoping Xing.
\newblock {List Decoding Reed-Solomon, Algebraic-Geometric, and Gabidulin
  Subcodes up to the Singleton Bound}.
\newblock In {\em Proceedings of the Forty-Fifth Annual {{ACM}} Symposium on
  {{Theory}} of {{Computing}}}, {{STOC}} '13, pages 843--852, New York, NY,
  USA, June 2013. Association for Computing Machinery.

\bibitem[GX22]{guruswami2022optimal}
Venkatesan Guruswami and Chaoping Xing.
\newblock Optimal {{Rate List Decoding}} over {{Bounded Alphabets Using
  Algebraic-geometric Codes}}.
\newblock {\em J. ACM}, 69(2):10:1--10:48, January 2022.

\bibitem[GZ23]{guo2023randomly}
Zeyu Guo and Zihan Zhang.
\newblock Randomly {{Punctured Reed-Solomon Codes Achieve}} the {{List Decoding
  Capacity}} over {{Polynomial-Size Alphabets}}.
\newblock In {\em 2023 {{IEEE}} 64th {{Annual Symposium}} on {{Foundations}} of
  {{Computer Science}} ({{FOCS}})}, pages 164--176, November 2023.

\bibitem[HRW19]{hemenway2019local}
Brett Hemenway, Noga {{Ron-Zewi}}, and Mary Wootters.
\newblock Local list recovery of high-rate tensor codes and applications.
\newblock {\em SIAM Journal on Computing}, 49(4):FOCS17--157, 2019.

\bibitem[HSX{\etalchar{+}}12]{huang2012erasure}
Cheng Huang, Huseyin Simitci, Yikang Xu, Aaron Ogus, Brad Calder, Parikshit
  Gopalan, Jin Li, and Sergey Yekhanin.
\newblock Erasure {{Coding}} in {{Windows Azure Storage}}.
\newblock In {\em 2012 {{USENIX Annual Technical Conference}} ({{USENIX ATC}}
  12)}, pages 15--26, 2012.

\bibitem[Kop15]{kopparty2015list}
Swastik Kopparty.
\newblock List-decoding multiplicity codes.
\newblock {\em Theory of Computing}, 11(1):149--182, 2015.

\bibitem[KRSW18]{kopparty2018improved}
Swastik Kopparty, Noga {Ron-Zewi}, Shubhangi Saraf, and Mary Wootters.
\newblock Improved {{Decoding}} of {{Folded Reed-Solomon}} and {{Multiplicity
  Codes}}.
\newblock In {\em 2018 {{IEEE}} 59th {{Annual Symposium}} on {{Foundations}} of
  {{Computer Science}} ({{FOCS}})}, pages 212--223, October 2018.

\bibitem[Lov21]{lovett2018gmmds}
Shachar Lovett.
\newblock Sparse {{MDS Matrices}} over {{Small Fields}}: {{A Proof}} of the
  {{GM-MDS Conjecture}}.
\newblock {\em SIAM Journal on Computing}, 50(4):1248--1262, January 2021.

\bibitem[PV05]{parvaresh2005correcting}
F.~Parvaresh and A.~Vardy.
\newblock Correcting {Errors beyond the {{Guruswami-Sudan}} Radius in
  Polynomial Time}.
\newblock In {\em 46th {{Annual IEEE Symposium}} on {{Foundations}} of
  {{Computer Science}} ({{FOCS}}'05)}, pages 285--294, October 2005.

\bibitem[Rot22]{roth2021higher}
Ron~M. Roth.
\newblock Higher-{{Order MDS Codes}}.
\newblock {\em IEEE Transactions on Information Theory}, 68(12):7798--7816,
  December 2022.

\bibitem[RS60]{reed1960polynomial}
I.~S. Reed and G.~Solomon.
\newblock Polynomial {{Codes Over Certain Finite Fields}}.
\newblock {\em Journal of the Society for Industrial and Applied Mathematics},
  8(2):300--304, June 1960.

\bibitem[RW17]{rudra2017average}
Atri Rudra and Mary Wootters.
\newblock Average-radius list-recovery of random linear codes: it really ties
  the room together.
\newblock {\em arXiv preprint arXiv:1704.02420}, 2017.

\bibitem[Sch80]{Sch80}
J.~T. Schwartz.
\newblock Fast {{Probabilistic Algorithms}} for {{Verification}} of
  {{Polynomial Identities}}.
\newblock {\em Journal of the ACM}, 27(4):701--717, October 1980.

\bibitem[Sin64]{singleton1964maximum}
R.~Singleton.
\newblock Maximum {{Distance Q-Nary Codes}}.
\newblock {\em IEEE Transactions on Information Theory}, 10(2):116--118, April
  1964.

\bibitem[Sla23]{slavov_2023}
Kaloyan Slavov.
\newblock Nearly {Sharp {{Lang}}--{{Weil}} Bounds for a Hypersurface}.
\newblock {\em Canadian Mathematical Bulletin}, 66(2):654--664, June 2023.

\bibitem[Sri25]{srivastava2025improved}
Shashank Srivastava.
\newblock Improved list size for folded reed-solomon codes.
\newblock In {\em Proceedings of the 2025 Annual ACM-SIAM Symposium on Discrete
  Algorithms (SODA)}, pages 2040--2050. SIAM, 2025.

\bibitem[ST23]{shangguan2020combinatorial}
Chong Shangguan and Itzhak Tamo.
\newblock Generalized {{Singleton Bound}} and {{List-Decoding Reed}}--{{Solomon
  Codes Beyond}} the {{Johnson Radius}}.
\newblock {\em SIAM Journal on Computing}, 52(3):684--717, June 2023.

\bibitem[Tia19]{tian2019formulas}
Yongge Tian.
\newblock Formulas for calculating the dimensions of the sums and the
  intersections of a family of linear subspaces with applications.
\newblock {\em Beitr{\"a}ge zur Algebra und Geometrie / Contributions to
  Algebra and Geometry}, 60(3):471--485, September 2019.

\bibitem[Vak24]{raviVakilFOAG}
Ravi Vakil.
\newblock {\em The {Rising} {Sea:} {Foundations} of {Algebraic} {Geometry}}.
\newblock July 2024.

\bibitem[Woz58]{wozencraft1958list}
John~M Wozencraft.
\newblock List {D}ecoding.
\newblock {\em Quarterly Progress Report}, 48:90--95, 1958.

\bibitem[YH19]{yildiz2019gmmds}
Hikmet Yildiz and Babak Hassibi.
\newblock Optimum {{Linear Codes With Support-Constrained Generator Matrices
  Over Small Fields}}.
\newblock {\em IEEE Transactions on Information Theory}, 65(12):7868--7875,
  December 2019.

\bibitem[Zip79]{Zip79}
Richard Zippel.
\newblock Probabilistic {{Algorithms}} for {{Sparse Polynomials}}.
\newblock In Edward~W. Ng, editor, {\em Symbolic and {{Algebraic
  Computation}}}, pages 216--226, Berlin, Heidelberg, 1979. Springer.

\end{thebibliography}

\end{document}